\renewcommand{\@chap@pppage}{%
  \clear@ppage
  \thispagestyle{empty}%
  \if@twocolumn\onecolumn\@tempswatrue\else\@tempswafalse\fi
  \null\vfil
  \markboth{}{}%
  {\centering
   \interlinepenalty \@M
   \normalfont
   \Huge \bfseries \appendixpagename\par}%
  \if@dotoc@pp
    \addappheadtotoc
  \fi
  \vfil\newpage
  \if@twoside
    \if@openright
      \null
      \thispagestyle{plain}%
      \newpage
    \fi
  \fi
  \if@tempswa
    \twocolumn
  \fi
}
\DeclareMathOperator{\SO}{SO}
\DeclareMathOperator{\SU}{SU}
\DeclareMathOperator{\poly}{poly}
\DeclareMathOperator{\diag}{diag}
\newcommand{\bra}[1]{\left\langle{#1}\right\vert}
\newcommand{\ket}[1]{\left\vert{#1}\right\rangle}
\newcommand{\eq}[1]{Eq.~\hyperref[eq:#1]{(\ref*{eq:#1})}}
\newcommand{\eqbeg}[1]{Equation~\hyperref[eq:#1]{(\ref*{eq:#1})}}
\newcommand{\eqs}[2]{Eqs.\ \hyperref[eq:#1]{(\ref*{eq:#1})} and \hyperref[eq:#2]{(\ref*{eq:#2})}}
\renewcommand{\sec}[1]{\hyperref[sec:#1]{Section~\ref*{sec:#1}}}
\newcommand{\appdx}[1]{\hyperref[appdx:#1]{Appendix~\ref*{appdx:#1}}}
\newcommand{\chap}[1]{\hyperref[chapter:#1]{Chapter~\ref*{chapter:#1}}}
\newcommand{\fig}[1]{\hyperref[fig:#1]{Figure~\ref*{fig:#1}}}
\newcommand{\sfig}[2]{\hyperref[fig:#1]{Figure~\ref*{fig:#1}(#2)}}
\newcommand{\thm}[1]{\hyperref[thm:#1]{Theorem~\ref*{thm:#1}}}
\newcommand{\conj}[1]{\hyperref[conj:#1]{Conjecture~\ref*{conj:#1}}}
\newcommand{\lem}[1]{\hyperref[lem:#1]{Lemma~\ref*{lem:#1}}}
\newcommand{\defn}[1]{\hyperref[def:#1]{Definition~\ref*{def:#1}}}
\newcommand{\ex}[1]{\hyperref[ex:#1]{Example~\ref*{ex:#1}}}
\newcommand{\post}[1]{\hyperref[post:#1]{Postulate~\ref*{post:#1}}}
\newcommand{\tabl}[1]{\hyperref[tab:#1]{Table~\ref*{tab:#1}}}
\newcommand{\swap}{\textsc{swap}}
\newcommand{\cnot}{\textsc{cnot}}
\newcommand{\fswap}{\textrm{f-}\swap}
\newcommand{\iswap}{\textrm{i-}\swap}
\newcommand{\fs}{\textrm{f}\textsc{s}}
\newcommand{\is}{\textrm{i}\textsc{s}}
\newcommand{\cz}{\textsc{cz}}
\newcommand{\A}{\mathcal{A}}
\newcommand{\PP}{P.~P.}
\newcommand{\B}{\mathcal{B}}
\newcommand{\mathsym}[1]{{}}
\newcommand{\unicode}[1]{{}}
\newtheorem{lemma}{Lemma}
\newtheorem{theorem}{Theorem}
\newtheorem{postulate}{Postulate}
\newtheorem{conjecture}{Conjecture}
\theoremstyle{definition}
\newtheorem{example}{Example}
\newtheorem{definition}{Definition}
\newtheorem*{definition*}{Definition}
\numberwithin{theorem}{chapter}
\numberwithin{table}{chapter}
\numberwithin{lemma}{chapter}
\numberwithin{definition}{chapter}
\numberwithin{example}{chapter}
\numberwithin{figure}{chapter}
\numberwithin{postulate}{chapter}
\begin{document}

\pagenumbering{Alph}

 \autor{DANIEL JOST BROD}
 \titulo{The Computational Power of Non-interacting Particles}
 \orientador{Prof.\ Dr.\ Ernesto Fagundes Galv\~ao}
 \comentario{Tese apresentada ao Curso de P\'os-Gradua\c{c}\~ao em F\'{i}sica da Universidade Federal Fluminense, como requisito parcial para obten\c{c}\~ao do T\'{i}tulo de Doutor em F\'{i}sica.}
 \instituicao{Universidade Federal Fluminense\par
 Instituto de F\'{i}sica}
\local{Niter\'oi}
 \data{Março - 2014}
 \capa
 \folhaderosto

\newpage
\chapter*{}
\vfill\vfill

\begin{flushright}

"Music washes away from the soul \\ the dust of everyday life."  \\
\textit{Berthold Auerbach}
\end{flushright}

\vfill \ 

\newpage

\chapter*{Agradecimentos}

{\singlespacing
A tarefa de avaliar o impacto, na vida de uma pessoa, de seus amigos, familiares e professores, é intimidadora e infalivelmente injusta. Gostaria de agradecer a todos que, de uma forma ou de outra, tocaram minha vida e a conduziram até este ponto. Já me desculpo por quaisquer omissões, que são devidas apenas à minha memória falível, e não significam muito.

Meus agradecimentos mais que sinceros a meu orientador, Ernesto Galvão, pelo apoio e orientação constantes ao longo dos últimos cinco anos. A sua determinação e confiança me ajudaram a manter o foco tanto nas horas mais frustrantes como nas mais empolgantes. No dia em que eu conseguir demonstrar, na relação com um estudante, uma fração da paciência e clareza que ele demonstra, me considerarei apto para seguir este caminho. Espero também ter ganhado um amigo e colaborador para a vida toda.

Também sou muito grato ao professor Andrew Childs, que me recebeu durante os seis meses que passei no IQC (Institute for Quantum Computing). Seu amplo conhecimento da área e seus insights penetrantes nunca deixaram de me surpreender, e ele dedicou mais tempo e atenção a me orientar do que eu jamais teria o direito de exigir.

Em meu Doutorado, tive a feliz oportunidade de colaborar com os excelentes grupos de óptica quântica de Roma e Milão e gostaria de agradecer em especial a Fabio Sciarrino, Paolo Mataloni, Chiara Vitelli, Nicolò Spagnolo, Andrea Crespi, Roberta Ramponi e Roberto Osellame. Sendo um físico teórico, é difícil para mim imaginar o esforço que precisa ser empregado na preparação e execução de tantos projetos interessantes, de forma que o empenho e a dedicação de todos eles são para mim uma poderosa fonte de inspiração.

Ao longo do meu Doutorado, tive a sorte de conhecer amigos fantásticos e inspiradores. Gostaria de agradecer em especial a Raphael Dias, que foi o primeiro a me apresentar à área de Computação Quântica, e a Tiago Debarba. Eles foram companheiros de aventuras e de inúmeras horas de conversas agradáveis, e ambos deixaram marcas profundas em minha visão de mundo, da minha área de pesquisa e da carreira acadêmica. Também gostaria de agradecer a todos os amigos do IQC e da UFF, incluindo (mas não só) David Gosset, Rajat Mittal, Zak Webb e Ingrid Hammes. Na UFF, também tive diversos excelentes professores, aos quais devo muito da minha formação acadêmica. Gostaria de agradecer em particular a Nivaldo Lemos, Antônio Zelaquett, Marco Moriconi, Luis Oxman e Daniel Jonathan. Um muito obrigado especial a Matt Fries, que se desdobrou para fazer de Waterloo o local mais hospitaleiro possível.

Toda a minha gratidão a Letícia, Vera, Gabriel, Hermano e Rayssa, amigos de longa data que ajudaram a tornar minha vida e esta jornada mais fáceis, divertidas e significativas. Um lugar de destaque para o André, meu amigo desde sempre, que ajudou a moldar uma grande parte do ser humano que eu sou hoje, e com quem atualmente falo muito, muito menos do que deveria, gostaria, ou que ele merece.

Também gostaria de agradecer a minha família pelo amor e apoio que permitiram que eu me tornasse quem eu sou hoje. Meu pai e meu avô apontaram o caminho, e é com um sentimento profundo de honra e humildade que me esforço por seguir seus passos. Minha mãe e minha avó estiveram presentes em cada passo do caminho, oferecendo constantemente amor, apoio e encorajamento, que me deram forças para superar todos os obstáculos.

Finalmente, gostaria de agradecer à Raissa. Esse é o mais injusto de todos os agradecimentos, pois não há palavras que bastem para expressá-lo. Ela é o amor da minha vida, minha alma gêmea, minha crítica mais gentil e a companheira mais amorosa com que eu poderia esperar dividir minha vida. Absolutamente nada disso teria sido possível sem ela.

\textbf{Apoio financeiro:} Agradeço o apoio financeiro total do CNQp (Conselho Nacional de Desenvolvimento Científico e Tecnológico). Os experimentos aqui relatados também foram parcialmente financiados pelo ERC-Starting Grant 3D-QUEST.

}

\chapter*{Acknowledgments}

{\singlespacing 
The task of gauging the impact, on one's life, of friends, families, and teachers, is daunting and almost never fair. I would like to thank everyone who, in one way or another, has touched my life and led it to this point. I apologize in advance for any omissions, they are due only to my fallible memory, please do not read too much into it.

My warmest thanks to my advisor, Ernesto Galvão, for the unwavering support and guidance throughout these last five years. His determination and confidence helped me keep the momentum at both the most exciting and frustrating times. The day that I can display a fraction of the patience and clarity, towards a student, that he does, is the day I will consider myself ready for this path. I hope to have also gained a lifelong friend and collaborator.

I am also very grateful to Andrew Childs, who hosted me during the six months I spent at IQC. Andrew's broad knowledge of the field and sharp insights never ceased to amaze me, and he devoted more time and attention helping me than I ever had the right to ask for.

I had the opportunity to work in collaboration with the outstanding quantum optics groups of Rome and Milan. I would like to specially thank Fabio Sciarrino, Paolo Mataloni, Chiara Vitelli, Nicolò Spagnolo, Andrea Crespi, Roberta Ramponi, and Roberto Osellame. As a theoretician I cannot begin to imagine the effort necessary to bring to fruition so many interesting projects, and I take their drive and dedication as a very powerful inspiration.

During my Ph.D., I was also fortunate enough to meet very amazing and inspiring friends. I would like to specially thank Raphael Dias, who first introduced me to this field, and Tiago Debarba. With both I shared adventures and countless hours of enjoyable conversation, and both have made profound impressions on my views on life, my field, and my academic career. I would also like to thank everyone else at IQC and UFF, including (but not limited to) David Gosset, Rajat Mittal, Zak Webb, and Ingrid Hammes. I also had a host of excellent teachers at UFF, to whom I owe much of my academic formation, and I would particularly like to thank Nivaldo Lemos, Antônio Zelaquett, Marco Moriconi, Luis Oxman, and Daniel Jonathan. Finally, I would especially like to thank Matt Fries, who went to great lengths to make Waterloo seem the most hospitable place possible.

I would also like to thank Letícia, Vera, Gabriel, Hermano, and Rayssa, longtime friends who helped make my life and this journey easier, fun, and more meaningful. A very special place for André, who has been my friend since forever, who has helped shape a major portion of the human being I am today, and who I currently talk to much, much, much less than I should. Or want. Or he deserves.

I also would like to thank my family for the love and support that allowed me to be who I really am. My father and grandfather showed me the way, and it honors and humbles me deeply that I am allowed to try and follow their footsteps. My mother and grandmother were present in every step of the path, providing constant love, support, and encouragement, that gave me strength to overcome all obstacles.

Finally, and most importantly, I would like to thank Raissa. This is the most unfair of all acknowledgments, as no words will ever be sufficient to express it. She is the love of my life, my soul mate, my kindest critic, and the most supporting and loving companion I could have ever hoped to share my life with. Absolutely none of this would have been possible without her.

\textbf{Financial support:} I acknowledge full financial support by CNPq (Conselho Nacional de Desenvolvimento Científico e Tecnológico). The experiments reported here were also partially supported by ERC-Starting Grant 3D-QUEST.

}

\begin{abstract}

We can study the computational power of restricted models of computation in order to shed light on the nature of quantum computational speedup. From a theoretical perspective, it can help determine what resources are necessary and/or sufficient for universal quantum computation. This issue is also relevant in experimental settings where the available operations or resources may be restricted. In this thesis, I study two different restricted models of quantum computation that stem from the behavior of free indistinguishable quantum-mechanical particles.

The dynamics of noninteracting fermions correspond to a restricted set of two-qubit gates known as matchgates. Matchgates are known to be classically simulable when acting on nearest-neighbor qubits on a path, but are universal for quantum computation when the gates can also act on more distant qubits or, equivalently, when SWAP gates are available. Here, I generalize these known results in two ways. First, I show that SWAP is only one in a large family of gates that can uplift matchgates to full quantum universality. More specifically, I show that the set of all matchgates plus any nonmatchgate parity-preserving two-qubit gate is universal, and I give an interpretation of this fact in terms of local invariants of two-qubit gates. Second, I investigate the power of two-qubit matchgates between qubits in an arbitrary connectivity graph, showing that they are universal on any connected graph other than a path or a cycle, and that they are classically simulable on a cycle. This same dichotomy holds for the XY interaction, a proper subset of matchgates that arises naturally in several implementations of quantum computing.

Noninteracting bosons (e.g.\ linear optics) give rise to a recently proposed restricted model known as BosonSampling. The BosonSampling task consists of (i) preparing an initial Fock state of $n$ identical photons, (ii) interfering these photons in an $m$-mode linear interferometer, and (iii) measuring the resulting output distribution in the Fock basis. It can be shown that sampling approximately from the resulting distribution should be classically hard, under reasonable complexity assumptions. Here I show, under similar assumptions, that exact BosonSampling remains hard even if the linear-optical circuit has constant depth. I also report several experiments performed in collaboration with Quantum Optics groups in Rome and Milan, where three-photon interference was observed in integrated interferometers of various sizes, providing some of the first implementations of BosonSampling in this regime. The experiments also focus on the bosonic bunching behavior in multimode interferometers, and on the validation of BosonSampling devices. This thesis also contains detailed descriptions of the numerical analyses done on the experimental data, and which were omitted from the corresponding publications.

\end{abstract}

\begin{resumo}

Podemos estudar o poder computational de modelos restritos de computação para ajudar a esclarecer a natureza do \textit{speedup} computacional. Do ponto de vista teórico, pode ajudar a determinar que recursos são necessários e/ou suficientes para computação quântica universal. Essa questão também é de interesse no caso de implementações experimentais em que haja restrições nas operações ou recursos disponíveis. Esta tese dedica-se ao estudo de dois modelos restritos de computação quântica, provenientes da descrição da evolução de partículas idênticas não interagentes em Mecânica Quântica.

A dinâmica de férmions não interagentes corresponde a um conjunto restrito de portas de dois \textit{qubits} conhecidas como \textit{matchgates}. Circuitos de \textit{matchgates} são simuláveis classicamente se os \textit{qubits} estão organizados em um grafo linear e as portas só atuam entre primeiros vizinhos, e universais para computação quântica se as portas podem atuar entre \textit{qubits} distantes ou, de forma equivalente, se a porta SWAP está disponível. Nesta tese, eu generalizo esses resultados de duas formas. Primeiro, mostro que a SWAP pertence a uma família contínua de portas capazes de tornar \textit{matchgates} universais. Mais especificamente, mostro que qualquer porta de dois \textit{qubits} que preserve a paridade (e não seja um \textit{matchgate}) pode ser adicionada ao conjunto completo de \textit{matchgates} para se obter computação universal e, além disso, dou uma interpretação desse fato em termos de invariantes locais de portas de dois \textit{qubits}. Em seguida, estudo o poder computacional de \textit{matchgates} entre \textit{qubits} em grafos de conectividade arbitrários. Mostro que \textit{matchgates} podem realizar computação universal em qualquer grafo que não seja um ciclo ou um caminho, e que eles são simuláveis classicamente se o grafo é um ciclo. Essa dicotomia persiste se restringimos o conjunto somente à interação XY, um subconjunto de \textit{matchgates} diretamente relacionado a diversas implementações de computação quântica.

Bósons não interagentes (e.g.\ ótica linear) dão origem a um modelo, proposto recentemente, conhecido como amostragem bosônica (\textit{BosonSampling}). A tarefa de amostragem bosônica consiste em: (i) preparar um estado de Fock de $n$ fótons, (ii) evoluí-lo de acordo com um interferômetro linear de $m$ modos e (iii) medir as saídas do interferômetro na base de Fock. Pode-se mostrar que, partindo de algumas conjecturas razoáveis relativas a classes de complexidade, não é possível produzir, de forma eficiente em um computador clássico, uma amostra da distribuição resultante desse sistema, nem de forma aproximada. Nesta tese mostro que, sob conjecturas semelhantes, a versão exata da amostragem bosônica é difícil mesmo se o circuito ótico tem profundidade constante. Também descrevo alguns experimentos, realizados em colaboração com grupos experimentais de Roma e Milão, em que foi observada a interferência de três fótons em \textit{chips} fotônicos de vários tamanhos. Esses experimentos estão entre as primeiras implementações de amostragem bosônica nesse regime. Os experimentos também evidenciam o efeito de agrupamento (\textit{bunching}) bosônico em interferômetros multimodo e a aplicação de protocolos de validação desses dispositivos. Esta tese contém descrições detalhadas de análises numéricas realizadas sobre os dados experimentais, que foram omitidas das respectivas publicações.

\end{resumo}

\newpage
\listoffigures
\newpage
\tableofcontents

\newpage
\pagenumbering{arabic}
\chapter{Introduction and thesis outline} \label{chapter:outline}

Quantum computing is a new computing paradigm, in which the basic components that make up the computer work in a regime (e.g.\ they are sufficiently small) where they behave according to the laws of Quantum Mechanics. One could expect the counterintuitive properties of the quantum world to hinder the engineering of our electronic devices---this is true, in a sense, as it imposes limitations on the miniaturization of electronic components, and conventional computer processors will soon reach a fundamental limit on the number of transistors that can be packed on a silicon chip. However, quantum computing is based on viewing the unusual properties of quantum systems not as a hindrance, but rather as a resource to be exploited for the performance of computational tasks.

The idea of using quantum systems to process information was put forth in the 1980s, most notably by the work of Feynman \cite{Feynman1982}. It was known that classically simulating a quantum system was a very hard problem, especially because (but not only because) the Hilbert space used to describe a system of $n$ particles grows exponentially with $n$, and the runtime of any naive algorithm quickly blows up for systems composed of more than a few particles. It was then suggested that, rather than using a classical computer to simulate a quantum system, it might be simpler to use a suitably-controllable quantum system to simulate another. This became known as a quantum simulator, and was the first candidate of a task in which a quantum processor can, in principle, outperform a classical one. As a matter of fact, to this day quantum simulation remains one of the most prominent expected applications for a quantum computer \cite{Lloyd1996}.

After Feynman's original proposal, quantum computing gathered further momentum with the works of Deutsch \cite{Deutsch1985}, Simon \cite{Simon1997}, Shor \cite{Shor1995, Shor1997}, Grover \cite{Grover1996}, Lloyd \cite{Lloyd1996}, and many others. Most notably, two contributions of Shor in the 1990s helped gather attention to this then-developing field: Shor's algorithm for factoring \cite{Shor1997} and the concept of quantum error-correcting codes \cite{Shor1995}. The factoring algorithm consists of a routine to obtain the prime factorization of an $n$-digit number exponentially faster than any known classical algorithm (and, in fact, faster than any classical algorithm is conjectured to be). This algorithm was the first example of a practical and ``non-academic'' task with a quantum computational speedup, and it remains the quantum algorithm most well-known by the general public, given that it has drastic consequences to modern day cryptography. However, even then there was still a lot of skepticism regarding quantum computing, mostly due to the belief that imperfections in real-world systems would make it impossible to achieve the high level of precision necessary for quantum computing, and that errors would acumulate fast enough to disrupt any useful computation. The answer to these fears came with the works of Shor \cite{Shor1995} and Steane \cite{Steane1996} on quantum error correction, which showed that it is, in fact, possible to measure and correct errors that happen during the quantum computation. The several advances made since then, including improved error correcting techniques \cite{Gottesman2009}, fault-tolerance and the threshold theorem \cite{Aharonov2008b}, and the development of several alternative models of quantum computation, such as adiabatic \cite{vanDam2001,Aharonov2007}, topological \cite{Nayak2008}, and measurement-based \cite{Raussendorf2001, Raussendorf2012} quantum computation, have been making the skeptics' job progressively harder, as one needs to formulate error models that are tailored to rendering all of these results useless whilst not contradicting the numerous experimental observations of quantum mechanics \cite{Kalai2009,Kalai2011}.

However, as much as the motivation and theoretical feasibility of quantum computing have been thoroughly established in the last two decades, experimental and technological limitations still hold practical quantum computing in the distant future---to illustrate this point, note that the largest integer factored with Shor's algorithm to date is $21$ \cite{Martinlopez2012}. In fact, it is still not clear which physical platform will be the most feasible for implementation of quantum computers, nor if this should be done in the circuit model or using an alternative model. These issues naturally lead to the study of \emph{restricted} models of computation, which are the main focus of the results reported in this thesis.

For our purposes, a restricted model of quantum computation consists of some set of operations that do not, a priori, contain the standard ``formula'' for a quantum computer: (i) preparation of a polynomial number of qubits in the computational basis, (ii) a sequence of a polynomial number of arbitrary two-qubit gates, and (iii) final arbitrary single-qubit measurements, on the computational basis, of a suitably large subset of the output. We may define a restricted model by imposing further limitations on one, or all, of these ingredients. By studying restricted models of computation we can better understand the physical origin of the quantum computational speedup, what the minimal and most feasible resources for implementation of a fully scalable quantum computer are, about the computational properties inherent to particular physical systems, etc. For example, one can consider the computational power of a quantum computer that operates with limited amounts of entanglement \cite{Jozsa2003,Nest2013}, or with a restricted set of operations (such as e.g.\ Clifford gates \cite{Gottesman1999a,Jozsa2014} or matchgates \cite{Valiant2002,Terhal2002}), or what happens if it can only perform a fixed number of rounds of operations \cite{Terhal2004}. These restrictions may arise directly from known experimental implementations, such as e.g.\ the hardness of obtaining suitably large and controllable optical nonlinearities that motivated the study of quantum computing with linear optics \cite{Knill2001b}; they may arise from the necessity of implementing a given computational task, such as e.g.\ the hardness of implementing fault-tolerant computation with non-Clifford gates; finally, they may arise from the modeling of more exotic scenarios, such as quantum computing with closed timelike curves \cite{Deutsch1991, Svetlichny2011, daSilva2011,Lloyd2011}. There are too many scenarios and motivations to list at length here, but we will review several examples in \chap{introduction}.

In this thesis, we will be mainly concerned with two restricted models of computation that arise from the description of the evolution of noninteracting particles. The first is the model of \emph{matchgate} quantum computing, and the second is a recent proposal known as \emph{BosonSampling} \cite{Aaronson2013a}. 

Matchgates are a restricted set of two-qubit gates that can be related to free fermions via the Jordan-Wigner transformation, which maps spin operators to fermionic operators. They provide a transition from classical to quantum computational power based on a seemingly ``mild'' change in underlying restrictions: if the qubits are arranged on a path, the output of a circuit composed only of nearest-neighbor matchgates can be efficiently simulated classically \cite{Valiant2002, Terhal2002}, whereas any quantum computation can be efficiently simulated by a circuit of matchgates acting on first and second neighbors \cite{Jozsa2008b,Kempe2002}---alternatively, this change in connectivity can be simulated by suitable use of the $\swap$ gate. In this thesis, I generalize these results in two main ways: (i) I use the theory of local invariants of two-qubit gates to investigate what property the $\swap$ gate has that allows it to enact this transition between classical and quantum computational power on the set of matchgates, and describe a continuous family of two-qubit gates that can replace it; and (ii) I show that changing the underlying qubit connectivity graph can also bridge this gap in computational power, and that matchgates are in fact universal if they can act according to any connected graph other than a path or a cycle, and they are classically simulable on a cycle. I also show that this same dichotomy holds for the XY interaction, a proper subset of matchgates that arises naturally in several implementations of quantum computing \cite{Imamoglu1999,Quiroga1999,Zheng2000,Mozyrsky2001}. It should be noted that, while matchgates are related to the evolution of free fermions, our results treat them as two-qubit gates in their own right, and points (i) and (ii) in fact depart from the correspondence with the free-fermion picture.

BosonSampling is a restricted model directly related to free bosons, and well-suited to recent advances in linear-optical quantum information processing. The BosonSampling task consists of (i) preparing an initial Fock state of $n$ identical photons, (ii) interfering these photons in an $m$-mode linear interferometer, and (iii) measuring the resulting output distribution in the Fock basis. As shown in \cite{Aaronson2013a}, sampling from the distribution obtained at the output of a device that performs this task should be classically hard, under reasonable complexity assumptions. This result is similar in spirit to others concerning quantum circuits built out of a constant number of two-qubit gate layers \cite{Terhal2004}, circuits built only out of commuting quantum gates \cite{Bremner2011}, among others. In this thesis, I show that this result also holds if BosonSampling is performed only with a constant number of beam splitter layers. 

The authors of \cite{Aaronson2013a} go even further, and their main technical contribution is to show that the BosonSampling task should be hard even if we allow the classical simulator to sample only from a distribution that \emph{approximates} the ideal one. This approximate BosonSampling result brings the model closer to real-world implementations, where experimental imperfections mean that any quantum device will also only approximate the ideal output distribution. In fact, since its inception in 2011, several quantum optics groups have reported small-scale experiments related to BosonSampling \cite{Broome2013, Crespi2013b,Spring2013,Tillmann2013,Spagnolo2013b,Carolan2013,Spagnolo2013c}. In this thesis, I report several of these experiments that were performed by the Quantum Optics groups in Rome and Milan \cite{Crespi2013b,Spagnolo2013b, Spagnolo2013c}, in a collaboration that I was part of. The experiments consisted of observation of three-photon interference in integrated interferometers of various sizes. The experiments also focused on other aspects of the photonic behavior in this regime, such as bosonic bunching, and on the certification of the observed output distributions. I describe the setup and results of these experiments, but focus on the theoretical motivation and numerical analyses that were my main contribution, providing a great deal of detail that was omitted from the corresponding publications.

I will elaborate on the motivation, technical definitions, and historical background for matchgates and BosonSampling in the introductions of their respective chapters. For now I just point out the fascinating contrast between these two models: while free fermions are classically simulable and thus presumably cannot display any quantum computational power, bosons outperform classical computers in a particular task that, furthermore, seems to consist only of ``behaving naturally''. This contrast suggests a deep connection between the physical nature of a system and its computational power, arguably rivalled only by the possibility of universal quantum computing itself.

\section{Thesis outline}

By necessity, this thesis contains a great deal of material reviewing previous results. Chapters \ref{chapter:introduction}, \ref{chapter:fermreview} and \ref{chapter:bosonreview} contain mostly revision material pertaining, respectively, to general quantum computing, matchgates, and BosonSampling. Chapters \ref{chapter:fermnew} and \ref{chapter:bosonnew} consist of new results on matchgates and BosonSampling, respectively. \chap{fermnew} is based on my co-authored papers \cite{Brod2011,Brod2012}, published in \textit{Physical Review A}, as well as \cite{Brod2013} recently accepted for publication in \textit{Quantum Information and Computation}. \chap{bosonnew} is based on my co-authored paper \cite{Crespi2013b}, published in \textit{Nature Photonics}, as well as \cite{Spagnolo2013b}, published in \textit{Physical Review Letters} and \cite{Spagnolo2013c}, currently submitted for publication. In more detail, this thesis is organized as follows:

In \chap{introduction}, I review some basic definitions, which are standard in quantum computing research. In \sec{introuniv} I review the notions of quantum universality and encoded universality. In \sec{introsimul} I describe several notions of exact and approximate classical simulation of quantum circuits. In \sec{introduction_b} I describe the formalism of second quantization, the standard physical toolset for describing identical quantum particles, which is also particularly suitable for the subsequent description of the corresponding computational models. In \sec{introduction_d} I give a brief overview of complexity theory, including formal descriptions of the complexity classes that will play the most prominent roles in the remainder of the thesis. Finally, in \sec{introduction_c}, I give a general overview of restricted models of computation, including several that seem to have intermediate computational power between classical and quantum computation. The main purpose of \chap{introduction} is to give some basic definitions that we will need, and which may be taught in a Physics course but not in a Computer Science course and vice-versa. Most quantum computing researchers will be familiar with most of this chapter, but these concepts are included for completeness.

In \chap{fermreview}, I review some previously-known results regarding matchgates. Although these results are not new, their proof details aid in the exposition of our new results in \chap{fermnew}. I begin with the formal definition of this model in \sec{ferm_overview}, as well as a historical overview. In \sec{fermreview_a} I reproduce the proof that matchgates are classically simulable when acting on nearest-neighboring qubits aligned on a path, which uses the Jordan-Wigner transformation to map matchgates into fermionic operators. In \sec{fermreview_b} I show how a small change in the underlying restrictions---namely, allowing matchgates to act between second-neighboring qubits---enables universal quantum computation. In \sec{fermreview_c} I briefly review, for completeness, results connecting the power of nearest-neighbor matchgates to that of other restricted models, such as log-space quantum computing.

In \chap{bosonreview}, I review some previously-known results regarding quantum computing with linear optics. In \sec{boson_overview} I give a brief historical overview, and discuss the advantages and disadvantages of processing quantum information with photons. In \sec{bosonreview_a} I describe the seminal KLM scheme \cite{Knill2001b} that enables universal and scalable quantum computing with linear optics if operations can be adapted on the outcome of intermediate measurements. Although our new results in \chap{bosonnew} will not concern the KLM scheme directly, they use some of its constructions. In \sec{bosonreview_b} I give a simple proof, based on the KLM scheme, that exact BosonSampling is hard up to some reasonable complexity assumptions, which was first shown in \cite{Aaronson2013a}. I discuss how this proof relates to other restricted models, such as quantum computers that have constant depth, or are built out of commuting gates. I discuss the underlying restrictions, as well as general pros and cons of the BosonSampling model. \sec{bosonreview_c} is devoted to the approximate version of BosonSampling. Since the original paper that proposed it is very long and highly technical \cite{Aaronson2013a}, there is no hope of giving a complete review, so we focus on some of the specific technical aspects that are more relevant to our new results in \chap{bosonnew}.

In \chap{fermnew}, I present our new results concerning matchgates. In \sec{fermnew_a} I study how to supplement the computational power of nearest-neighbor matchgates with other two-qubit gates. The $\swap$ gate is one previously-known such example, and I generalize this results to show that any parity-preserving two-qubit gate---that is, a gate that does not connect two-qubit states of different parity---can replace the $\swap$ and provide universal computational power to matchgates. In \sec{fermnew_b} I study the bridging of this gap from classical to quantum computational power in a completely different way. Rather than including new operations, we investigate the power of matchgates acting according to different connectivity graphs. We show how the previous results (i.e.\ classical simulability on nearest-neighbors and quantum universality on nearest and next-nearest neighbors) can be recast in this formalism, and we generalize them to show that: (i) matchgates are universal acting on any graph that is not a path or a cycle; and (ii) matchgates are classically simulable on a cycle. This establishes a dichotomy, showing that the jump in computational power is abrupt, and that there is no connectivity graph for which an intermediate computational power (such as that of BosonSampling) may arise. I also show that this same dichotomy holds for the XY interaction, which forms a subset of matchgates. Although this latter result implies the former, the construction using general matchgates is more explicit, and more efficient in general. Finally, in \sec{fermnew_c} I make some final remarks, discussing possible open questions, as well as the relation between our results and other works.

In \chap{bosonnew}, I present our new theoretical and experimental results concerning BosonSampling. In \sec{bosonnew_a} I show that the exact version of the BosonSampling result, reviewed in \sec{bosonreview_b}, holds even if the linear-optical circuit only has a constant number of layers of beam splitters. In \sec{bosonnew_b} I describe several aspects of the experiments, such as experimental setup, numerical analysis, etc, that were used in the experiments reported in \sec{bosonnew_c}. I give special focus to the numerical analyses, where I show simulations of the expected behavior of different matrix ensembles for the BosonSampling task, as well as a numerical procedure for refining the process tomography for linear interferometers. \sec{bosonnew_c} is devoted to the results of the three recently performed experiments, namely: (i) one of the first implementations of BosonSampling in the 3-photon, 5-mode regime (\sec{BosonSamplingExp}), (ii) observation of bosonic bunching effects on three photons interfering in interferometers of up to 16 modes (\sec{BosonicBunchingExp}), and (iii) validation of BosonSampling devices of 3 photons in interferometers of 5, 7, and 9 modes (\sec{validation}). Finally, \sec{bosonnew_d} consists of some concluding remarks, where I discuss the open questions and major challenges still remaining for scalable implementations of the BosonSampling model, inspired both by theoretical and experimental motivations. 

Finally, \chap{conclusions} is devoted to concluding remarks. I make a brief summary of the results obtained here, and how they fit into the larger picture of current research in quantum computing. I also describe a few more questions that were left open, as well as other directions in which our results can be expanded.

The results of \chap{bosonnew} are supported by \appdx{app1} and \appdx{app2}, which contain the Mathematica$^\copyright$ code used for the numerical simulations, and \appdx{app3} which contains tables describing the specifications of each linear interferometer.

\section{Notation and conventions}

In this section, I summarize the basic notation that will be used in this thesis. My intention is that the notation and graphical representations described here be maintained consistently throughout the thesis. However, it will often prove necessary to locally change the adopted notation, especially in \chap{bosonnew} where we report the experimental results. I opted for using the original figures, published in the respective journals, and so I considered it more instructive to adapt the local notation to reflect that of the figures. To avoid any confusion, these small changes in notation and graphical representations will be accompanied by corresponding remarks, besides being clear from context.

For reference, the recurring single-qubit gates that we will need, which are standard to quantum computing literature, are the Hadamard ($H$) gate, the $\pi/4$ ($P$), the $\pi/8$ ($T$) gates:
\begin{align*}
H = \frac{1}{\sqrt{2}} \begin{pmatrix} 1 & 1  \\ 1 & -1 \end{pmatrix}, \quad
P = \begin{pmatrix} 1 & 0  \\ 0 & i \end{pmatrix}, \quad
T = \begin{pmatrix} 1 & 0  \\ 0 & e^{i \pi/4} \end{pmatrix},
\end{align*}
as well as the Pauli gates:
\begin{align*}
X = \begin{pmatrix} 0 & 1  \\ 1 & 0 \end{pmatrix}, \quad
Y = \begin{pmatrix} 0 & i  \\ -i & 0 \end{pmatrix}, \quad
Z = \begin{pmatrix} 1 & 0  \\ 0 & -1 \end{pmatrix}.
\end{align*}
The recurring two-qubit gates we will need are the well-known controlled-NOT ($\cnot$), controlled-phase ($\cz$), and $\swap$ gates:
\begin{align*} 
\cnot = \begin{pmatrix} 1 & 0 & 0 & 0 \\ 0 & 1 & 0 & 0 \\ 0 & 0 & 0 & 1 \\ 0 & 0 & 1 & 0 \end{pmatrix}, \quad 
\cz = \begin{pmatrix} 1 & 0 & 0 & 0 \\ 0 & 1 & 0 & 0 \\ 0 & 0 & 1 & 0 \\ 0 & 0 & 0 & -1 \end{pmatrix}, \quad 
\swap = \begin{pmatrix} 1 & 0 & 0 & 0 \\ 0 & 0 & 1 & 0 \\ 0 & 1 & 0 & 0 \\ 0 & 0 & 0 & 1 \end{pmatrix},
\end{align*}
as well as the the less common fermionic $\swap$ ($\fswap$) and $\iswap$:
\begin{align*}
\fswap = \begin{pmatrix} 1 & 0 & 0 & 0 \\ 0 & 0 & 1 & 0 \\ 0 & 1 & 0 & 0 \\ 0 & 0 & 0 & -1 \end{pmatrix}, \quad 
\iswap = \begin{pmatrix} 1 & 0 & 0 & 0 \\ 0 & 0 & i & 0 \\ 0 & i & 0 & 0 \\ 0 & 0 & 0 & 1 \end{pmatrix}.
\end{align*}
We will also mention in passing the three-qubit Toffoli gate, given by:
\begin{align*}
\textrm{Toffoli} = \begin{pmatrix} 1 & 0 & 0 & 0 & 0 & 0 \\ 0 & 1 & 0 & 0 & 0 & 0 \\ 0 & 0 & 1 & 0 & 0 & 0 \\ 0 & 0 & 0 & 1 & 0 & 0 \\ 0 & 0 & 0 & 0 & 0 & 1 \\ 0 & 0 & 0 & 0 & 1 & 0 \end{pmatrix}.
\end{align*}
We will also refer to arbitrary parity-preserving two-qubit gates by the shorthand $G(A,B)$, which stands for
\begin{align*} 
G(A,B) := \begin{pmatrix} A_{11} & 0 & 0 & A_{12} \\ 0 & B_{11} & B_{12} & 0 \\ 0 & B_{21} & B_{22} & 0 \\ A_{21} & 0 & 0 & A_{22} \end{pmatrix},
\end{align*}
where $A$ and $B$ are unitary $2 \times 2$  matrices. In \sfig{Notation}{a-c} we display some standard circuit graphical notation.

\begin{figure}
\capstart
\centering
\subfloat[]{\centering \includegraphics[width=0.5\textwidth]{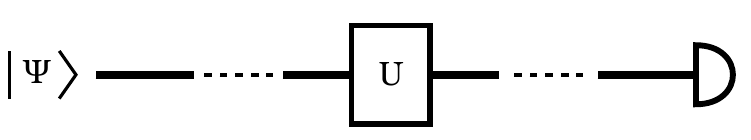}} \\
\subfloat[]{\centering \includegraphics[width=0.2\textwidth]{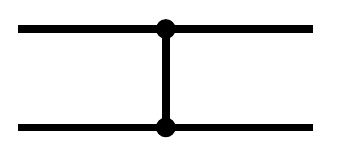}} \quad
\subfloat[]{\centering \includegraphics[width=0.2\textwidth]{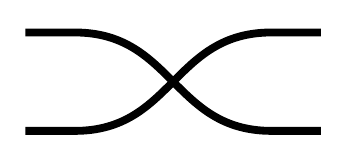}} \\
\subfloat[]{\centering \includegraphics[width=0.5\textwidth]{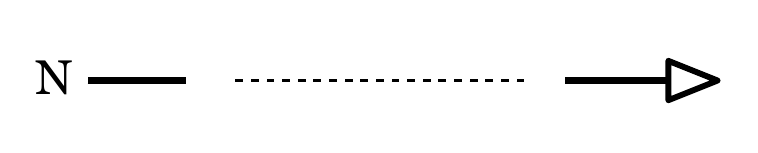}} \\
\subfloat[]{\centering \includegraphics[width=0.2\textwidth]{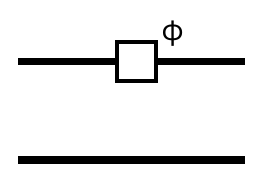}} \quad
\subfloat[]{\centering \includegraphics[width=0.23\textwidth]{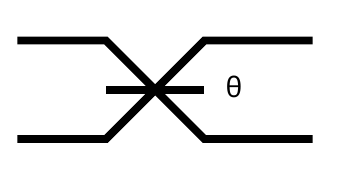}}
\caption[Graphical circuit representation]{(a) A qubit prepared in the state $\ket{\psi}$, then subject to the single-qubit gate $U$ at some point in the circuit, and measured at the end of the circuit in the computational basis. (b) The two-qubit $\cz$ gate. (c) The two-qubit $\swap$ gate. (d) Similar to (a), but for an optical circuit. An optical mode is prepared in a Fock state of $N$ photons, and subject to a final photon-number-resolving detection. (e) A phase shifter by angle $\phi$. (f) A beam splitter with angle $\theta$. Often, it $\theta$ is omitted, it represents a 50:50 beam splitter (i.e.\ $\theta = \pi/4$). Notice that there are marked differences between the two representations, mainly in the preparation, measurement and application of two-qubit/two-mode operations. In this thesis, it will always be clear by context which system we are talking about.}
\label{fig:Notation}
\end{figure} 

In most of the thesis we will consider continuous families of single- or two-qubit gates, so it will be convenient to denote the family of $n$-qubit unitary gates generated by the $n$-qubit Hamiltonian $K$ as $R_K (\theta) = \exp{(i \theta K)}$.

In several sections in this thesis, we will refer to linear-optical circuits rather than standard qubit circuits. The main elements of these circuits are phase shifters and beam splitters. If there are only two modes, a phase shift by an angle $\phi$ (on the first mode) and a beam splitter by an angle of $\theta$ correspond respectively to the $2 \times 2$ matrices:
\begin{align*}
U_{PS}(\phi) = \begin{pmatrix} e^{i \phi} & 0  \\ 0 & 1 \end{pmatrix}, \quad
U_{BS}(\theta) = \begin{pmatrix} \cos (\theta) & i \sin (\theta)  \\ i \sin (\theta) & \cos (\theta) \end{pmatrix},
\end{align*}
If these optical elements are embedded in larger interferometers, the unitary matrices that describe them are the straightforward generalization that acts as above on the involved modes and the identity on the rest. A beam splitter can be alternatively parameterized by the transmissivity $t:=\cos(\theta)$ or the transmission probability $T:=t^2$, which are parameters more natural to the quantum optics literature. The graphical notation for a linear optical circuit is very similar to that for a standard quantum circuit, but in the latter the lines correspond to qubits whereas in the former they correspond to optical modes. In \sfig{Notation}{d-f} we display some standard graphical representation for optical circuits, and the figure legend makes explicit the differences between the two representations. We will deviate from this graphical representation from optical circuits in \sec{bosonnew_c}, where we will give preference to a representation that more closely resembles the integrated interferometer architectures, but that will be clear from context.

For a single-qubit state, the basis $\{ \ket{0}, \ket{1} \}$ is the computational basis, and the basis $\{ \ket{+}:= H \ket{0}, \ket{-}:= H \ket{1} \}$ is the $X$ basis (because it is the basis of eigenvectors of the Pauli $X$ gate). The computational basis of an $n$-qubit state is the corresponding set of all tensor product of elements of the computational bases of each qubit. 

Finally, we denote $\{0,1\}^{*}$ as the set of all $n$-bit strings for every $n \in \mathbb{N}$. We will also say that a quantity has poly$(n)$ growth if there is some constant $k$ such that the referred quantity grows as O$(n^k)$.

\newpage
\chapter{Preliminary definitions} \label{chapter:introduction}

Quantum Computation is an extensively multidisciplinary research area and, as such, brings together researchers with vastly different backgrounds and formations. As a consequence, this thesis contains Physics material that is not typically found in a Computer Science course and vice versa. With this in mind, I decided to include an introductory chapter to ``lay the groundwork'', so to speak, for the remainder of the thesis. This chapter is not intendend as a complete review on any particular subject, but rather focuses on specific concepts from several areas that will play major roles throughout the subsequent chapters, whilst also fixing important terminology. 

This chapter is organized as follows. \sec{introuniv} discusses the concept of universal quantum circuits, including a generalization of the concept known as encoded universality. \sec{introsimul} reviews several notions of classical simulations of quantum circuits, namely strong simulation, and exact and approximate weak simulation. \sec{introduction_b} is devoted to identical particles in Quantum Mechanics and the formalism of second quantization. In \sec{introduction_d} I discuss complexity theory, and define the most prominent complexity classes that will be important in later chapters. Finally, in \sec{introduction_c} I discuss restricted models of quantum computation, reviewing some important known results about a few such models.

\section{Quantum universality} \label{sec:introuniv}

Let us begin this introduction with a brief review on quantum circuits and quantum universality (most of the information contained here can be found in any standard textbook, such as \cite{LivroNielsen}).

A classical circuit can be seen as a mapping from bit strings to bit strings. In \sec{introduction_d}, when we review complexity theory, we will address the more interesting matter of using circuits to solve problems, in which case a circuit is interpreted as mapping an input bit string encoding a \emph{question} to an output bit string encoding an \emph{answer}, and the central question is whether this can be done in a feasible amount of time and/or space. In a similar spirit, a quantum circuit can be viewed simply as a mapping between quantum \emph{states}. Fortunately, the formalism for describing transformations between different states was developed decades ago, in the context of quantum mechanics, to describe the dynamical evolution of physical systems. Here, we use a simplified version of this formalism, which is standard in the theory of quantum computing, where we only consider composite quantum systems consisting of a collection of two-level systems (i.e.\ qubits).

Consider the $n$-bit string $x=x_1 x_2 x_3 \ldots x_n$. The corresponding computational state is the tensor product $\ket{x} = \ket{x_1} \ket{x_2} \ldots \ket{x_n}$, and the collection of these states for all $x$ forms a basis of the Hilbert space, known as the computational basis. A quantum circuit can then described by some $2^n \times 2^n$ unitary matrix $U$, which takes as input some state $\ket{\psi_{in}}$ and maps it to an output state $\ket{\psi_{out}}$ as
\begin{equation*}
\ket{\psi_{out}}=U \ket{\psi_{in}}.
\end{equation*}
It is a well-known fact (see e.g.\ \cite{Hoffman1972, Reck1994, LivroNielsen}) that any such unitary matrix $U$ can be decomposed in terms of matrices acting on only two qubits at a time---each of these building blocks is denominated a quantum gate. We could, more generally, consider quantum gates acting on any small constant number of qubits at a time (e.g.\ the three-qubit Toffoli gate), but these will not appear in important roles throughout this thesis. The set of all two-qubit gates is a particular example of a \emph{universal} set. More generally, a set of quantum gates is universal for quantum computation if it densely generates the group of all $2^n \times 2^n$ unitary matrices.

Throughout most of this thesis, we will consider continuous families of two-qubit gates (or two-mode optical elements, in the case of linear optics). This is not the most realistic scenario: no real-word implementation of quantum computers will have perfectly controllable parameters, and it becomes necessary to search for discrete sets of gates, defined up to some experimental error tolerance, that are also universal. For example, it is well-known that products of $T$ and $H$ gates form a set that is dense in the set of all single-qubit gates, $\SU(2)$. It is also known that the two-qubit \textsc{cnot} gate together with arbitrary single-qubit gates generates the set of two-qubit gates, $\SU(4)$. By previous considerations, the set of all two-qubit gates densely generates the set of $n$-qubit gates, $\SU( 2^n )$, for any $n$. Concatenating these claims, we conclude that the set $\{ H, T, \textsc{cnot} \}$ is universal for quantum computation in the sense defined above.

The need for a universal discrete set of gates concerns mostly the theory of fault-tolerant quantum computing, that is, the ability to measure and correct errors that occur during the process sufficiently fast so that they do not completely disrupt the computation. As mentioned previously, in most of this thesis we will consider families of gates parameterized by continuous parameters, and so will not be concerned with fault-tolerance \textit{per se}. This choice is based on two main reasons. First, our results adopt a more conceptual point of view (e.g.\ what resources make a particular restricted set of operations universal) rather than a practical one (e.g.\ what is the most efficient way to perform a particular computation). In this spirit, proving non-fault-tolerant universality is a good first step towards proving its fault-tolerant version. Second, the well-known Solovay-Kitaev theorem (\cite{Kitaev1997}, see also \cite{LivroNielsen}) guarantees that, if we have a quantum circuit built out of a specific universal set of gates, we can rewrite it in terms of any other universal set with a modest overhead in the number of operations. Importantly, this is true even if one or both of the sets are discrete, in which case the Solovay-Kitaev theorem guarantees that any gate from one of the sets can be approximated within accuracy $\epsilon$ by sequences of length O$\left[\textrm{log}^c(1/\epsilon)\right]$ of gates of the other set, for some constant $c$. As such, all universal sets are equivalent, in the sense that any problem that can be solved efficiently (in a sense to be defined precisely later) by one can also be solved efficiently by any other.

It is important to point out that, while a universal set of gates generates a set dense in $\SU(2^n)$, by definition, this does not say anything about efficiency. In fact, a simple counting argument (see e.g.\ \cite{LivroNielsen}) shows that an \emph{exponential} number of two-qubit gates is needed to approximate an arbitrary $\SU(2^n)$ matrix, and this is considered highly unfeasible. In light of this, a unitary transformation is only considered feasible if it can be implemented by a circuit of polynomially-many quantum gates\footnote{In this sense, the Solovay-Kitaev theorem is fundamental since the polylogarithmic overhead it provides results in a mapping between universal sets that preserves the polynomial \textit{versus} exponential gap.} (we will return to this distinction between exponential and polynomial growth in \sec{introduction_d}).

Another standard definition we will need is that of a \emph{uniform} family of quantum circuits. For our purposes, a uniform family of (quantum or classical) circuits is a set of circuits $\{ C_n | n \in \mathbb{N} \}$, where $n$ is the number of qubits the circuit takes as input, and whose description can be classically computed from $n$ in poly($n$) time. In practice, this means that the circuits do not depend on the inputs, only on their size, and that the gate sequences can be computed efficiently (including all matrix elements of each gate to some desired precision, if the gate set is continuous). This is a slightly technical condition, but if uniformity is not imposed, some computational properties of the devices could be ill-defined. For example, we could somehow hide away the answer to some problem on the decimal expansion of the matrix elements of the gates, and it would even be possible to compute functions which are known to be uncomputable.

Another concept that will be central to most of our results is that of \emph{encoded universality}. Often, a certain restricted gate set clearly cannot be universal in the sense defined before for some fundamental reason. For example, in \chap{fermreview} we will encounter a class of two-qubit gates that is parity-preserving---that is, it not does connect two-qubit states of different parity. Any circuit composed only of these gates cannot take a computational state of even parity (say, $\ket{0 0 \ldots 0}$) into another of odd parity (say $\ket{1 0 \ldots 0}$), and thus it trivially does not generate SU($2^n$) when acting on $n$ qubits. However, we will see that in some cases these gates can still perform universal quantum computation in a generalized sense. By encoding each qubit of information into more than one physical qubit, we can circumvent this limitation. For parity-preserving gates, for example, it will be convenient to interpret the two-qubit state $\ket{00}$ as encoding bit 0, and $\ket{11}$ as encoding bit 1, and simply disregard states $\ket{01}$ and $\ket{10}$. As we will see, it will be possible to construct a gate set that: (i) preserves this encoding, and (ii) is universal on the encoded space (i.e., generates any single- and two-qubit gate on the encoded qubits), and thus performs any desired quantum computation at the cost of doubling the number of qubits. Throughout this thesis we will occasionally denote an encoded state or gate by a subscript $L$ (for logical) if there is chance for ambiguity. We will also denominate the encoded states as encoded, logical or computational qubits, whereas the actual qubits that compose them will be denoted as physical qubits.

Finally, we point out that this model of quantum computation based on circuits, which is called simply the circuit model, while being the most common setting in which to describe quantum computers, is not the only one. An alternative model we will occasionally need is the measurement-based model of quantum computation (MBQC). MBQC is a very successful and extensively studied model, and a complete discussion is unfortunately beyond the scope of this thesis, so we will restrict ourselves to a description of what the computation consists of, and further discussion can be found e.g.\ in \cite{Raussendorf2001, Raussendorf2012}. In essence, an MBQC protocol can be implemented in two steps: first, prepare a highly-entangled state, and then sequentially measure each qubit in a specific single-qubit basis. One important requirement is that the measurements must be \emph{adaptive}, in the sense that the basis on which each measurement is performed, as well as the meaning of the outcome, depends on the results of previous measurements. By measuring the state in a well-defined (and efficiently-computable) order, one ``forces'' the information to propagate from qubit to qubit, and the final output qubits encode the result of the computation. A more explicit description will be given in \chap{bosonnew}.

\section{Efficient classical simulability} \label{sec:introsimul}

One way to show that a physical system or restricted model of quantum computation has limited computational power is to show that it is efficiently classically simulable\footnote{By efficient, we mean in polynomial time, as it is always possible to do a classical simulation of a quantum system if unbounded time is allowed, just by keeping track of the vectors in Hilbert space. In this thesis, ``classically simulable'' will always refer to efficient simulation.}. Naturally, this conclusion relies on the conjecture that quantum computers are more powerful than classical computers---often, in the study of computational complexity, results must be conditioned on plausible assumptions, and unconditional proofs are remarkably rare. As we will see in \sec{introduction_c}, some restricted classes of quantum computation may display features considered essential for a quantum speedup, such as unbounded entanglement \cite{Jozsa2003}, but nonetheless be classically simulable. In this case, they cannot offer a quantum speedup, no matter how entangled or otherwise complex the states of the system may look a priori. Since classical simulability is such a central concept throughout this thesis, let us define it formally.

There are several possible definitions of classical simulability, with varying degrees of strength. I will now give a description of a few that will be useful later on, with some remarks about where each type of simulation arises naturally and how they relate to each other. Since these are common concepts in quantum computing literature, the references provided just refer to those where the definition was taken from, stated in the most convenient form for our purposes. The most restrictive notion of simulation is the aptly named:

\begin{definition} \label{def:strongsimul}
\textbf{Strong simulation} \cite{Jozsa2008b,Nest2011b}: Let $C_n$ be any uniform family of quantum circuits, defined together with a specified class of (i) input states and (ii) output measurements. $C_n$ is classically \emph{strongly simulable} if any probability or conditional probability of the output measurements can be computed by classical means to $m$ digits of accuracy in poly$(n,m)$ time.
\end{definition}

In most cases, input states are taken to be either product states or just computational basis states, and output measurements are taken as single-qubit measurements in the computational basis\footnote{Technically, the class $C_n$ is strongly simulable \emph{relative to the choices of (i) and (ii)}, as there are numerous examples of families of circuits where the gap between classical and quantum computational power is bridged just by a change in (i) or (ii). Throughout this thesis, however, this choice will always be made explicit.}. If the output consists on several disjoint measurements, it may be necessary to calculate both the probabilities and conditional probabilities---that is, the probability of observing a measurement outcome conditioned on results of previous measurements---in order to also capture their correlations. This condition will guarantee that strong simulation always implies weak simulation, as we will see shortly.

To understand why this is known as strong simulation, note that computing the probabilities to $m$ digits of accuracy in poly$(n,m)$ time corresponds to \emph{exponential} precision. Quantum computers are probabilistic in nature, and outcome probabilities are accessed only by performing repeated measurements and compiling the results, which cannot achieve exponential precision in polynomial time. To illustrate this, consider a quantum experiment repeated 1000 times. The table of experimental results provides each outcome probability with precision no larger than 1 in 1000, i.e., just 3 digits of accuracy. If the number of digits of accuracy provided by the quantum computation itself only grows logarithmically with the elapsed time, clearly \defn{strongsimul} makes for an unfair comparison. To drive the point home further: even the quantum computer cannot strongly simulate itself. Nonetheless, this is an unfair fight that classical computers often win---as we will see in \sec{introduction_c} and \chap{fermreview}, there are several restricted classes of quantum circuits where strong classical simulation can be achieved.

This discussion naturally suggests a less stringent notion of classical simulation:

\begin{definition}
\textbf{(Exact) Weak simulation} \cite{Bremner2011,Nest2011b}: Let $C_n$ be any uniform family of quantum circuits under the same conditions as \defn{strongsimul}. $C_n$ is classically \emph{weakly simulable} if the probability distribution over the measurement outcomes can be sampled by purely classical means in poly$(n)$ time.
\end{definition}

The main difference between this notion of simulation and the previous one is that, rather than asking the classical computer to compute the outcome probabilities, we ask only that it sample from the same output distribution. By previous considerations, this simulation only provides the probabilities with a logarithmic number of digits (i.e.\ polynomial precision). As the names suggest, strong simulation implies weak simulation \cite{Terhal2004}. This is not entirely obvious---in general, the output distribution we wish to simulate may be defined on an exponentially large space, and it would not be possible to compute and store the probabilities of all outcomes. Nonetheless, a weak simulation can be obtained from a strong simulation as follows: (i) choose one measurement and calculate its associated probabilities; (ii) fix the value of the corresponding output bits by a classical ``coin toss'' according to the computed probabilities; (iii) choose another measurement and repeat steps (i) and (ii), calculating probabilities conditioned on previously fixed outcomes, until all outcomes have been fixed. It is clear that the final fixed bit string has been sampled from the same distribution as that generated by the quantum computer. 

As strong simulation implies weak simulation, one may wonder whether the latter is relevant. In fact it is, as there are certain quantum computations that are very hard to simulate strongly, but can be weakly simulated (see \cite{Nest2011b} and references therein).

Weak simulation is clearly a fairer comparison between classical and quantum computers, as we now only require the classical computer to ``behave'' like the quantum computer, in a well-defined sense. For example, suppose we are given a black-box device and told it is a quantum computer that performs some particular calculation. If the black box was actually a classical computer, a weak simulation of the quantum computer would be sufficient to trick us.

However, we can define even weaker notions of simulation, based on the fact that a real-world quantum computer will be subject to noise, experimental imperfections, etc. Let us now define notions of \emph{approximate} weak simulation, where the classical simulator is only required to sample from a distribution close to the ideal quantum one---presumably, that is the best a real-world quantum computer itself can do. First, let us define two notions of approximate distributions: let $p(x)$ and $q(x)$ be two distributions defined over the same sample space. We say $p$ is close to $q$ with multiplicative error $c \geq 1$ if, for every $x$ in the sample space,
\begin{equation}
\frac{1}{c} q(x) \leq p(x) \leq c q(x).
\end{equation}
And $p$ and $q$ are $\epsilon$-close in total variation distance if
\begin{equation}
||p-q|| = \frac{1}{2} \sum_x | p(x)-q(x) | < \epsilon
\end{equation}

We now define two natural notions of weak simulation based on these definitions of error \cite{Bremner2011}:

\begin{definition}
Let $C_n$ be any uniform family of quantum circuits under the same conditions as \defn{strongsimul}. $C_n$ is classically \emph{weakly simulable with multiplicative error $c$} if there is a family of probability distributions parameterized by $n$, close to the ideal quantum output distribution with multiplicative error $c$, and that can be sampled by purely classical means in poly$(n)$ time.
\end{definition}

\begin{definition}
Let $C_n$ be any uniform family of quantum circuits under the same conditions as \defn{strongsimul}. $C_n$ is classically \emph{weakly simulable within $\epsilon$ total variation distance} if there is a family of probability distributions parameterized by $n$, $\epsilon$-close in total variation distance to the ideal quantum output distribution, and that can be sampled by purely classical means in poly$(n)$ time.
\end{definition}

These two definitions are clearly weaker than exact weak simulation. Furthermore, it is not hard to show that the first implies the second: if two probability distributions are close in multiplicative error $c$, then they are $\epsilon$-close in total variation distance with $\epsilon=c-1$. However, the converse is not true. Suppose we have a distribution with a few extremely unlikely outcomes, and we approximate it by setting the probability of these outcomes to 0 (and normalizing the rest). The resulting distribution will be $\epsilon$-close in total variation distance to the original one, for some suitable $\epsilon$ bounded by the discarded probabilities, but it will not be multiplicatively close for any value of $c$. 

For usual quantum computation, these notions of approximate simulation are often unnecessary, due to the power of quantum error correction. If the error rates are smaller than a certain value, the threshold theorem (see e.g.,\ \cite{Gottesman2009} for an introduction to this subject) guarantees, under some reasonable assumptions, that we can correct the errors faster than they happen. However, as we will see in \chap{bosonreview} and \chap{bosonnew}, there are restricted models of computation that do not (yet) have a notion of error correction, and the notion of approximate simulation is more meaningful.

Finally we point out that, from a computational complexity perspective, all these definitions of simulation might already be too strong---conceivably, a classical computer might be able to solve the same \emph{decision problems} as a particular class of quantum computers, without ever needing to explicitly simulate them. We will return to this point on \sec{introduction_d}, when we review computational complexity classes.

\section{Identical particles in QM} \label{sec:introduction_b}

In the classical world, we are used to describing the states and trajectories of objects as if they were perfectly distinguishable. It is natural to assume that one can simply pin labels to particles and follow their dynamics, and their ``identity'' will somehow be preserved. However, in the quantum world the situation is not so simple. Consider a system of two particles, identical in every aspect (charge, spin orientation, etc.), evolving according to some dynamics. To distinguish these particles, one would need to follow their trajectories individually---but quantum mechanics clearly forbids this, as the particles do not even \emph{have}	 well-defined trajectories. In fact, quantum mechanical particles are fundamentally \emph{indistinguishable}. 

To enforce this indistinguishability, quantum theory is supplemented with the following symmetrization postulate (see any standard textbook on quantum mechanics, such as \cite{LivroCohen}):

\begin{postulate} \label{post:symm}
When a system includes several identical particles, only certain elements of its state space can describe physical states. Physical states are, depending on the nature of the particles, either completely symmetric or completely antisymmetric with respect to permutation of these particles. Those particles for which the physical states are symmetric are called bosons, and those for which they are antisymmetric, fermions.
\end{postulate}

In other words, if $\Psi \left( x_1,\ldots,x_N \right)$ is the wave function describing the state of an $N$-particle system, it must additionally satisfy
\begin{equation} \label{eq:symmet}
\Psi \left( x_1,\ldots,x_i,\ldots,x_j,\ldots,x_N \right) = \pm \Psi \left( x_1,\ldots,x_j,\ldots,x_i,\ldots,x_N \right)
\end{equation}
for all $i$ and $j$. The meaning of \post{symm} is simple: since the particles are indistinguishable, their collective state can only change by a global phase upon permutations of the particles, which has no observable consequence. Furthermore, this global phase is only $\pm 1$\footnote{Actually, this restriction only holds in three dimensions, and naturally applies to any fundamental particle. In two-dimensional systems, however, we may obtain other phases under permutations, giving rise to quasi-particles known as anyons. Anyons are the basis for the very elegant model of topological quantum computation which, unfortunately, is beyond the scope of this thesis.}, with each sign defining a class of particles. A very well-tested empirical rule is the following: every known particle of integer spin, such as photons or He$^4$ nuclei, is a boson, whereas every known particle of half-integer spin, such as electrons and nucleons, is a fermion. This rule is also known as the spin-statistics theorem, as it can be proven to follow from some very general assumptions in quantum field theory---for a discussion on this, see e.g.\ \cite{LivroCohen}. \eqbeg{symmet}, despite its apparent simplicity, also predicts strikingly different behaviors for bosons and fermions.

For fermions, \eq{symmet} leads to the well-known \emph{Pauli exclusion principle}, which states that two fermions may not occupy the same state (this follows trivially from \eq{symmet} by setting, e.g.,\ $x_i = x_j$). Another well-known consequence is that an ensemble of fermions in thermodynamical equilibrium satisfies the \emph{Fermi-Dirac statistics} (hence, the name fermion). This, in turn, is central for understanding the behavior of various physical systems, from the electronic theory of metals to neutron stars.

In contrast to fermions, an ensemble of bosons in thermodynamical equilibrium satisfies the \emph{Bose-Einstein statistics} (again, hence the name boson). The most remarkable difference is that bosons are not limited to one particle per mode---in fact, bosons have a greater tendency to occupy modes ``in groups'' (a behavior known as bosonic bunching). Examples of this bunching behavior are the laser, and Bose-Einstein condensation and the related phenomena of superfluidity and superconductivity. This bunching behavior will also be important on \chap{bosonnew}, where we will report some new experimental and theoretical results concerning the bunching of photons at the output of linear-optical devices.

This classification of particles in bosons and fermions raises a natural question: does the Fermi-Dirac/Bose-Einstein statistics have any consequence for quantum computation? Since quantum computers rely on very precise control of microscopic systems, it is conceivable that the very fermionic/bosonic nature of the particles could help or hinder the experimental efforts. The answer to this question is in fact affirmative---there is a fundamental relation between the computational power of (noninteracting) particles and their statistics. Throughout this thesis we will investigate aspects of both noninteracting fermions, which correspond to a class of computations that cannot outperform classical computers (Chapters \ref{chapter:fermreview} and \ref{chapter:fermnew}), and noninteracting bosons, the behavior of which cannot be simulated classically (in a precise sense to be defined later, see Chapters \ref{chapter:bosonreview} and \ref{chapter:bosonnew}). Before that we must introduce the formalism of second quantization, which will be very convenient for the description of identical particles from a computational point of view.

\subsection{Second quantization} \label{sec:intro_secondquant}

The description of multi-particle states in terms of symmetric/anti-symmetric wave functions, as in \eq{symmet}, is known as first quantization. By contrast, second quantization is a formalism that describes system states by the number of particles that occupy each mode, and will be much more convenient for our purposes. An introduction to second quantization and the equivalence of the formalisms can be found, e.g., in \cite{livroBallentine}.

Consider a set of modes, either bosonic or fermionic, labeled by some index $i=1, \ldots, m$ (that can be discrete or continuous) encoding the set of relevant dynamical properties of the particles: direction of propagation, polarization, spin, frequency, atomic orbital, etc. The basis of the state space (or Fock space) consists of a vacuum state $\ket{~}$, containing no particles, together with all states of the form $\ket{n_1, n_2, n_3, \ldots}$, where each $n_i$ represents the occupation number for mode $i$ and $\sum_i n_i = N$, for all $N \in \mathbb{N}$. For bosonic modes, each $n_i$ can assume any nonnegative integer value, whereas for fermionic modes each $n_i$ can only be 0 or 1, in accordance with Pauli's exclusion principle. 

For fermions, we define the creation and annihilation operators $f_i$ and $f_i^{\dagger}$ by their action on the Fock states:
\begin{subequations} \label{eq:fcrea}
\begin{align} 
f_i^{\dagger} \ket{n_1, n_2, \ldots, 0_i, \ldots} &= \ket{n_1, n_2, \ldots, 1_i, \ldots}, \\ 
f_i \ket{n_1, n_2, \ldots, 1_i, \ldots} &= \ket{n_1, n_2, \ldots, 0_i, \ldots}, \\
f_i \ket{n_1, n_2, \ldots, 0_i, \ldots} &= 0 = f_i^{\dagger} \ket{n_1, n_2, \ldots, 1_i, \ldots}.
\end{align}
\end{subequations}
where $1_i$ and $0_i$ represent mode $i$ being occupied or empty, respectively. The fermionic operators satisfy the anti-commutation relations
\begin{subequations}\label{eq:fcommut}
\begin{align} 
\{ f_i , f_j \} &= 0 = \{ f_i^{\dagger} , f_j^{\dagger} \}, \\
\{ f_i , f_j^{\dagger} \} &= \delta_{ij},
\end{align} \end{subequations}
which are a consequence of the anti-symmetrization required by \post{symm}. From the anti-commutation relations we also obtain that $(f_i^{\dagger})^2=0$, which is nothing more than the Pauli exclusion principle.

The bosonic creation and annihilation operators $b_i$ and $b_i^{\dagger}$ are defined analogously
\begin{subequations} \label{eq:bcrea}
\begin{align} 
b_i^{\dagger} \ket{n_1, n_2, \ldots, n_i, \ldots} &= \sqrt{n_i+1} \ket{n_1, n_2, \ldots, n_i+1, \ldots}, \\ 
b_i \ket{n_1, n_2, \ldots, n_i, \ldots} &= \sqrt{n_i} \ket{n_1, n_2, \ldots, n_i-1, \ldots}. 
\end{align}
\end{subequations}
These bosonic operators satisfy the following commutation relations, which are also a consequence of the symmetrization required by \post{symm}:
\begin{subequations} \begin{align} \label{eq:bcommut}
[ b_i , b_j ] &= 0 =  [ b_i^{\dagger} , b_j^{\dagger} ], \\
 [ b_i , b_j^{\dagger} ] &= \delta_{ij}.
\end{align} \end{subequations}

\subsubsection{Free-particle dynamics} \label{sec:dynamics}

Most of the results presented throughout this thesis relate to the computational power of noninteracting particles. In view of this, we will only describe in detail the formalism of free-particle dynamics, rather than considering the most general case of fermionic and bosonic interactions. In terms of the second quantization formalism this assumption is overly restrictive, of course, but it will provide a more simplified and instructive discussion, while still sufficient for our purposes.

We will also restrict ourselves to discrete-time transformations, where the evolution of the system is not described by the action of some Hamiltonian during a given time, but directly by the action of some unitary operator. This can be done without loss of generality, since every unitary matrix can be written as the complex exponential of some Hermitian matrix---the point is only that we will not consider continuous-time evolution explicitly. This choice will be more convenient both when we study fermions, which will correspond to unitary gates in a quantum circuit, and when we study bosons, where the evolution will be due to discrete linear-optical devices.

Consider now a collection of identical particles evolving according to some (possibly infinite-dimensional) unitary matrix $U_F$ acting on the Fock space. In the Heisenberg representation, an operator $A_{in}$ will evolve into an operator $A_{out}$ according to
\begin{equation*}
A_{out} = U_F A_{in} U_F^{\dagger}.
\end{equation*}
Since $U_F$ is a matrix acting on the Fock space, to define it completely we would need, in principle, to describe its action on every basis element of this space or, equivalently, on every possible monomial of creation and annihilation operators. However, if $U_F$ describes the evolution of noninteracting particles, we can make a major simplification in its description, reducing it to a linear transformation of the modes themselves. The crucial point is that, since the particles do not interact, the evolution must be completely determined \emph{on the single-particle sector of the Fock space}. More explicitly, consider a single particle initially in mode $i$ (whenever the equations are the same for bosons and fermions, we denote annihilation and creation operators generically by $a_i$ and $a_i^{\dagger}$). The final state of the system must be a linear combination of single-particle states or, equivalently,
\begin{equation} \label{eq:Bogoliubov}
U_F a_i^{\dagger} U_F^{\dagger} = \sum_{j=1}^m U_{ij} a_j^{\dagger},  \quad i = {1,2,\ldots, m}
\end{equation}
for some matrix $U$. We made the additional assumption that, besides being linear, the transformation is also particle-number-preserving---in the most general case, the right-hand side of \eq{Bogoliubov} could contain a linear combination of both creation and annihilation operators\footnote{In this case, the evolution would also describe particle creation or absorption by the media, but not interaction between the particles.}. One can also check that $U$ must be unitary to preserve the (fermionic or bosonic) commutation relations. The action of $U_F$ on an arbitrary Fock state can be easily obtained by mapping every particle operator that makes up the state via \eq{Bogoliubov}.

The transformation, as described by \eq{Bogoliubov}, also known as a Bogoliubov transformation \cite{LivroKok}, can arise both as a passive or an active transformation. As a passive transformation, \eq{Bogoliubov} represents simply a change of basis for the modes. It arises, for example, when we shift the description of photonic polarization from linear to circular, or change the orientation of the $Z$ axis of the electronic spin. In this sense, it does not represent a dynamical evolution at all. On the other hand, as an active transformation\footnote{We use the word active to describe the fact that it is a dynamical transformation, much like rotations in classical mechanics, which are denominated passive when they consist of a rotation of the frame of reference, but active when they describe an actual rotation of some physical object. However, in the quantum optics literature the devices that implement transformations of the type of \eq{Bogoliubov}, such as phase shifters and beam splitters, are often called passive optical elements to distinguish them from active elements, such a nonlinear media, that mediate interactions between the particles.} \eq{Bogoliubov} describes, for example, an optical interferometer, where photons can enter any of $m$ input modes and exit in a superposition of the output modes. The correspondence between these two types of transformation only holds in the free-particle setting.

\subsubsection{Elementary two-mode transformations} \label{sec:twomode}

Let us now work out some simple examples of linear transformations that will be important later on. For simplicity, suppose first that there are only two modes, which for now may be bosonic or fermionic---we will give preference to a linear-optical terminology, as this formalism will be needed more explicitly when we discuss linear optics in \chap{bosonreview} and \chap{bosonnew}, but the equations will describe transformations valid for both types of particles. First, consider the following single-mode unitary matrix $U_{PS}$:
\begin{equation*}
U_{PS}(\phi ):= e^{i \phi a_i^{\dagger} a_i}.
\end{equation*}
It induces the transformation
\begin{equation} \label{eq:phaseshifter}
a_i^{\dagger} \rightarrow e^{i \phi a_i^{\dagger} a_i} a_i^{\dagger} e^{-i \phi a_i^{\dagger} a_i} =  e^{i \phi} a_i^{\dagger}.
\end{equation}
That is, $U_{PS}$ describes a phase shifter. Physically, it arises whenever particles in one mode gain a phase relative to the particles in other modes, for example due to a difference in the optical lengths of two paths, or due to a difference in the local magnetic field acting on distant electrons. 

Another important linear transformation is given by %\textbf{(*Check this*)}
\begin{equation*}
U_{BS}(\theta):= e^{i \theta \left( a_1^{\dagger} a_2 + a_1 a_2^{\dagger} \right ) }.
\end{equation*}
Its action can be written as
\begin{subequations} \label{eq:beamsplitter} 
 \begin{align} 
a_1^{\dagger} & \rightarrow \cos{\theta} a_1^{\dagger} + i \sin{\theta} a_2^{\dagger}, \\
a_2^{\dagger} & \rightarrow i \sin{\theta} a_1^{\dagger} + \cos{\theta} a_2^{\dagger}.
\end{align} \end{subequations}
The operator $U_{BS}$ describes a beam splitter, and arises as a mechanism allowing a particle to jump between modes. For photons, it may correspond to an actual beam splitter, or a waveplate if $a_1$ and $a_2$ correspond to polarization modes, while for fermions it may correspond to the hopping term between different sites of a lattice. In \chap{bosonnew}, when we use \eq{beamsplitter} to describe the beam splitter transformations, we will refer to $T:=\sin{\theta}$ as the transmission probability (or fraction), and $t:=\sqrt{T}$ as the transmissivity.

It is well-known that the transformations described by \eqs{phaseshifter}{beamsplitter} suffice to construct an arbitrary number-preserving two-mode linear transformation (see, e.g., \cite{Simon1990}). Furthermore, in an $m$-mode system, an arbitrary transformation such as the one described by $U$ in \eq{Bogoliubov} (which we from now on call a multimode interferometer, or an $m$-port) can be decomposed in terms of two-mode elements only \cite{Reck1994}. From an experimental perspective this is a very powerful result, as it allows the construction of arbitrary multimode transformations from network of simpler elements. These results are mentioned here only in passing, as they will be reviewed in more detail when we discuss the computational models associated with fermionic (\chap{fermreview}) and bosonic (\chap{bosonreview}) linear optics.

Note also that both $U_{BS}$ and $U_{PS}$ are generated by Hamiltonians quadratic in the particle operators. This is, in fact, a general feature: unitaries describing the evolution of noninteracting particles precisely correspond to Hamiltonians quadratic in the particle operators. This is true because, for any quadratic Hamiltonian $H$, it is always possible to enact a change of basis (that is, a \emph{passive} Bogoliubov transformation) that takes $H$ into a sum of terms each acting on a single mode---in a sense, this amounts to diagonalizing the $m$-port unitary $U$.

Let us consider now an example of the evolution of a multi-particle state. Suppose, for simplicity, that two particles initially occupy modes $1$ and $2$, and evolve according to some arbitrary $m \times m$ unitary $U$ such as that of \eq{Bogoliubov}. We write the initial state as
\begin{equation*}
\ket{in}=\ket{1 1 0 0 \ldots 0} = a_1^{\dagger} a_2^{\dagger} \ket{~}.
\end{equation*}
By \eq{Bogoliubov}, the final state after the action of $U$ can then be written as
\begin{equation}
\ket{out}= U_F \ket{in} = \sum_{i,j=1}^m U_{1i} U_{2j} a_i^{\dagger} a_j^{\dagger} \ket{~},
\end{equation}
Suppose now we want the amplitude associated with particles exiting in modes $1$ and $2$. At this point we obtain different results for fermions and bosons, so let us first assume that the particles are bosons. In this case, we have
\begin{align} \label{eq:bpermanent1}
\bra{1 1 0 0 \ldots 0} U_F \ket{1 1 0 0 \ldots 0} & = \bra{1 1 0 0 \ldots 0} \sum_{i,j=1}^m U_{1i} U_{2j} b_i^{\dagger} b_j^{\dagger} \ket{~} \notag \\
& = U_{11} U_{22} + U_{12} U_{21}
\end{align}
The two terms that contribute correspond to the two possible combinations of bosons exiting in the output modes, and the plus sign is due to the bosonic commutation relations. Suppose now that the particles are fermions. In this case, we have analogously
\begin{align} \label{eq:fdeterminant1}
\bra{1 1 0 0 \ldots 0} U_F \ket{1 1 0 0 \ldots 0} & = \bra{1 1 0 0 \ldots 0} \sum_{i,j=1}^m U_{1i} U_{2j} f_i^{\dagger} f_j^{\dagger} \ket{~} \notag \\
& = U_{11} U_{22} - U_{12} U_{21}
\end{align}
We obtain a similar result, but with a minus sign due to the fermionic anti-commutation relations. Note that \eq{fdeterminant1} equates the desired amplitude to the \emph{determinant} of a $2 \times 2$ sub-matrix of $U$, whereas \eq{bpermanent1} equates this amplitude to a similar matrix function known as the \emph{permanent}. The determinant and permanent of an $n \times n$ square matrix $A$ are defined, respectively, by
\begin{align}
\textrm{det}(A) & = \sum_{\sigma \in S_n} \textrm{sgn}(\sigma) \prod_{i=1}^{n} A_{i,\sigma_i}, \\
\textrm{perm}(A) & = \sum_{\sigma \in S_n} \prod_{i=1}^{n} A_{i,\sigma_i}.
\end{align}
The sums are taken over all permutations $\sigma$ of the set $\left \{ 1, 2, \ldots, n \right \}$, and sgn$(\sigma)$ is the signature of the permutation, equal to $+1$ if the permutation is even and $-1$ if it is odd. The difference between the definitions of determinant and permanent lies only on the minus signs introduced in the determinant for odd permutations---this resembles very closely the distinction between the commutation and anti-commutation relations in bosons and fermions. This seemingly trivial observation is in fact a particular case of more general rules, which we now state without proof (see e.g.\ \cite{Troyansky1996, Scheel2004, Aaronson2013a}).

Let $\ket{T}=\ket{t_1 t_2 t_3 \ldots t_m}$, with $\sum_i^m t_i = N$, be the initial state of an $N$-particle system. Suppose the system evolves according to some $m$-port unitary $U$, which induces a unitary $U_F$ on the Fock state. Let also $\ket{S}=\ket{s_1 s_2 s_3 \ldots t_m}$, with $\sum_i^m s_i = N$, and let $U_{S,T}$ be the sub-matrix of $U$ obtained by taking $t_i$ copies of the $i^\textrm{th}$ column of $U$ and $s_j$ copies of its $j^\textrm{th}$ row. We then have the following lemmas

\begin{lemma} \label{lem:bosonperm}
If the particles are bosons, the amplitude that they exit in the state $\ket{S}$ is given by
\begin{equation} \label{eq:bpermanent2}
\bra{S}U_F\ket{T}= \frac{\textrm{perm}(U_{S,T})}{\sqrt{s1! \ldots s_m! t_1! \ldots t_m!}}.
\end{equation}
\end{lemma}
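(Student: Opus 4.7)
The strategy is to write $\ket{T}$ and $\ket{S}$ explicitly as normalized strings of creation operators acting on the vacuum, propagate them through the Bogoliubov transformation of \eq{Bogoliubov}, and identify the resulting combinatorial sum with a permanent. Iterating \eq{bcrea} gives $\ket{T}=(t_1!\cdots t_m!)^{-1/2}\prod_i(b_i^{\dagger})^{t_i}\ket{~}$, and analogously for $\bra{S}$. Since $U_F\ket{~}=\ket{~}$ (the vacuum is annihilated by every $b_i$, hence also by $U_F b_i U_F^{\dagger}$) and $U_F$ is unitary, inserting $U_F^{\dagger}U_F$ between every pair of creation operators yields
\begin{equation*}
U_F\ket{T}=\frac{1}{\sqrt{t_1!\cdots t_m!}}\prod_{i=1}^{m}\left(\sum_{j=1}^{m}U_{ij}\,b_j^{\dagger}\right)^{t_i}\ket{~}.
\end{equation*}

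Next, I would expand the product. Labelling the $N=\sum_i t_i$ factors by $k=1,\ldots,N$ through a fixed ``input listing'' $r:[N]\to[m]$ with $|r^{-1}(i)|=t_i$, the product becomes $\sum_{J:[N]\to[m]}\bigl(\prod_k U_{r(k),J(k)}\bigr)\prod_k b_{J(k)}^{\dagger}$. Projecting onto $\bra{S}$ selects only the ``valid'' $J$'s, i.e.\ those whose image has multiplicity exactly $s_j$ on each output mode $j$; for any such $J$, the bosonic commutation relations \eq{bcommut} give $\bra{~}\prod_j b_j^{s_j}\prod_k b_{J(k)}^{\dagger}\ket{~}=\prod_j s_j!$, independently of the ordering of the $b_{J(k)}^{\dagger}$'s since they commute. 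Combining everything,
\begin{equation*}
\bra{S}U_F\ket{T}=\frac{\prod_j s_j!}{\sqrt{s_1!\cdots s_m!\,t_1!\cdots t_m!}}\sum_{\text{valid }J}\prod_{k=1}^{N}U_{r(k),J(k)}.
\end{equation*}

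The final step is to recognise the sum over valid $J$'s as $\textrm{perm}(U_{S,T})/\prod_j s_j!$: each valid $J$, together with the listing $r$, encodes a transport plan between the $N$ input-mode copies (with multiplicities $t_i$) and the $N$ output-mode copies (with multiplicities $s_j$), and of the $N!$ bijections summed in $\textrm{perm}(U_{S,T})$ exactly $\prod_j s_j!$ produce the same plan, namely those that merely reshuffle rows carrying the same output label. Cancelling the $\prod_j s_j!$ factors then reproduces the claimed identity. The main obstacle is precisely this last combinatorial step: keeping track of how the symmetry overcounting of the permanent (due to repeated rows), the vacuum-overlap multiplicity $\prod_j s_j!$, and the Fock-state normalisations all conspire to leave exactly $1/\sqrt{s_1!\cdots s_m!\,t_1!\cdots t_m!}$. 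A good sanity check is $T=S=(1,1,0,\ldots,0)$, which recovers the direct two-particle expansion of \eq{bpermanent1}.
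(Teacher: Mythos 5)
Your proof is correct. Note that the paper itself offers no proof of \lem{bosonperm}: it is stated explicitly ``without proof,'' with references to Troyansky--Tishby, Scheel, and Aaronson--Arkhipov, so there is no in-paper argument to compare against; what you have written is essentially the standard derivation found in those references. Each step checks out: the normalization $\ket{T}=(t_1!\cdots t_m!)^{-1/2}\prod_i(b_i^{\dagger})^{t_i}\ket{~}$, the conjugation through \eq{Bogoliubov}, the vacuum overlap $\prod_j s_j!$ for a valid assignment $J$, and the key counting identity $\textrm{perm}(U_{S,T})=\prod_j s_j!\sum_{\text{valid }J}\prod_k U_{r(k),J(k)}$, since exactly $\prod_j s_j!$ of the $N!$ bijections collapse onto a fixed $J$ (reshuffling the $s_j$ identical output copies), independently of the input multiplicities. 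Two cosmetic remarks: the statement $U_F\ket{~}=\ket{~}$ holds exactly (not just up to phase) once one adopts the standard convention that $U_F$ is generated by a normal-ordered, number-preserving quadratic Hamiltonian, which is implicitly assumed in \eq{Bogoliubov}; and your index convention $U_{r(k),J(k)}$ is the transpose of the row/column convention used in the definition of $U_{S,T}$, which is harmless because the permanent is invariant under transposition. Your sanity check against \eq{bpermanent1} is exactly the consistency the paper's two-mode example provides.
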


\begin{lemma} \label{lem:fermdet}
If the particles are fermions,  the amplitude that they exit in the state $\ket{S}$ is given by
\begin{equation}  \label{eq:fdeterminant2}
\bra{S} U_F \ket{T}= \textrm{det}(U_{S,T}),
\end{equation}
\end{lemma}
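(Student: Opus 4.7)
The plan is a direct computation using the Bogoliubov transformation from \eq{Bogoliubov}, together with the fermionic anti-commutation relations from \eq{fcommut}, which together produce the signed sum that defines the determinant.

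First, since the particles are fermions and each $t_i \in \{0,1\}$, I would label by $i_1 < i_2 < \ldots < i_N$ the occupied input modes and by $j_1 < j_2 < \ldots < j_N$ the occupied output modes, so that
\begin{equation*}
\ket{T} = f_{i_1}^{\dagger} f_{i_2}^{\dagger} \cdots f_{i_N}^{\dagger} \ket{~}, \qquad \ket{S} = f_{j_1}^{\dagger} f_{j_2}^{\dagger} \cdots f_{j_N}^{\dagger} \ket{~}.
\end{equation*}
Then I would use $U_F^{\dagger} U_F = \mathbb{1}$ to insert the identity between each creation operator, giving $U_F \ket{T} = (U_F f_{i_1}^{\dagger} U_F^{\dagger}) \cdots (U_F f_{i_N}^{\dagger} U_F^{\dagger}) \, U_F \ket{~}$. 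Because $U_F$ preserves particle number, $U_F \ket{~} = \ket{~}$ up to an irrelevant global phase, and applying \eq{Bogoliubov} to each factor yields
\begin{equation*}
U_F \ket{T} = \sum_{k_1, \ldots, k_N = 1}^{m} U_{i_1, k_1} U_{i_2, k_2} \cdots U_{i_N, k_N} \, f_{k_1}^{\dagger} f_{k_2}^{\dagger} \cdots f_{k_N}^{\dagger} \ket{~}.
\end{equation*}

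Next I would take the inner product with $\bra{S}$. The vacuum expectation value $\bra{~} f_{j_N} \cdots f_{j_1} f_{k_1}^{\dagger} \cdots f_{k_N}^{\dagger} \ket{~}$ vanishes unless $(k_1, \ldots, k_N)$ is a permutation of $(j_1, \ldots, j_N)$, since otherwise either two creation operators coincide (giving zero by $(f^{\dagger})^2 = 0$) or an unmatched operator hits the vacuum. When $(k_1, \ldots, k_N) = (j_{\sigma(1)}, \ldots, j_{\sigma(N)})$ for some $\sigma \in S_N$, anti-commuting the creation operators into the canonical increasing order produces exactly the sign $\mathrm{sgn}(\sigma)$, after which they contract against the annihilation operators to give $1$. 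Collecting terms,
\begin{equation*}
\bra{S} U_F \ket{T} = \sum_{\sigma \in S_N} \mathrm{sgn}(\sigma) \prod_{k=1}^{N} U_{i_k, j_{\sigma(k)}},
\end{equation*}
which is precisely $\det(U_{S,T})$, since in the fermionic case $U_{S,T}$ is the submatrix of $U$ with rows indexed by the occupied output modes $\{j_1, \ldots, j_N\}$ and columns by the occupied input modes $\{i_1, \ldots, i_N\}$.

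The main obstacle is purely bookkeeping: carefully tracking the signs arising when reordering a product of fermionic creation operators into the canonical order corresponding to $\ket{S}$. I would make this rigorous by arguing once and for all that, for any sequence of distinct indices $(k_1, \ldots, k_N)$ that is a permutation $\sigma$ of $(j_1, \ldots, j_N)$, the anti-commutation relations imply $f_{k_1}^{\dagger} \cdots f_{k_N}^{\dagger} \ket{~} = \mathrm{sgn}(\sigma)\, f_{j_1}^{\dagger} \cdots f_{j_N}^{\dagger} \ket{~}$; everything else is mechanical. An alternative route would be to invoke the fact that the map $U \mapsto U_F$ is a representation of $\mathrm{U}(m)$ on Fock space whose restriction to the $N$-particle sector is the $N$-th antisymmetric tensor power, whose matrix elements in the occupation basis are by definition the appropriate minors of $U$; but the direct calculation above is self-contained and keeps the parallel with the bosonic permanent proof of \lem{bosonperm} transparent.
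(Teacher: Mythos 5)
Your proof is correct, and it is essentially the paper's own approach: the paper states \lem{fermdet} without proof (citing references), but your calculation is exactly the $N$-particle generalization of the two-fermion computation carried out in \eq{fdeterminant1}, conjugating each creation operator through $U_F$ via \eq{Bogoliubov} and extracting $\mathrm{sgn}(\sigma)$ from the anti-commutation relations. The only fine points are the convention $U_F\ket{~}=\ket{~}$ (implicit in how the paper lets $U_F$ act on Fock states through \eq{Bogoliubov}), and that your signed sum is literally $\det$ of the transpose of $U_{S,T}$ as defined in the paper (rows indexed by outputs, columns by inputs), which is immaterial since $\det A = \det A^{T}$.
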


Note that, in \lem{fermdet}, each $s_i$ and $t_j$ can only be 0 or 1, so the product of their factorials is simply 1. 

This distinction between bosons and fermions may seem unremarkable, but in fact it has very profound implications for the computational power of these particles, and is the cornerstone of the results presented in this thesis. The permanent and the determinant, although they seem similar, are vastly different with respect to the complexity of their calculation---more specifically, the determinant is easy to compute efficiently in a classical computer, whereas the best-known classical algorithm for the permanent takes exponentially long, and it is not expected to be efficiently computable even on quantum computers. We will return to this distinction in \sec{introduction_d}, when we talk about computational complexity classes. For now, it suffices to say that, even if a collection of particles is noninteracting, the mere fact that they are identical quantum particles and thus obey one statistics or the other may imply an intrinsic difficulty in simulating their behavior.

\subsubsection{The Hong-Ou-Mandel effect} \label{sec:HOMeffect}

We end this section with a simple illustration of \eqs{bpermanent2}{fdeterminant2} of great historical interest. Consider two photons impinging on different ports of a balanced beam splitter (also denominated a 50:50 beam splitter), which is a simple two-mode interferometer described by
\begin{equation*} 
U = \frac{1}{\sqrt{2}} \begin{pmatrix}
1 & i  \\
i & 1
\end{pmatrix}.
\end{equation*}
The amplitude that these photons exit in different output ports is
\begin{equation*}
\bra{1 1} U_F \ket{1 1} = \textrm{perm} \begin{pmatrix}
\frac{1}{\sqrt{2}} & \frac{i}{\sqrt{2}}  \\
\frac{i}{\sqrt{2}} & \frac{1}{\sqrt{2}}
\end{pmatrix} = 0
\end{equation*}
On the other hand, the amplitude that they exit both in the first mode is
\begin{equation*}
\bra{2 0} U_F \ket{1 1} = \frac{1}{\sqrt{2!}} \textrm{perm} \begin{pmatrix}
\frac{1}{\sqrt{2}} & \frac{i}{\sqrt{2}}  \\
\frac{1}{\sqrt{2}} & \frac{i}{\sqrt{2}}
\end{pmatrix} = \frac{i}{\sqrt{2}},
\end{equation*}
with the same value for $\bra{0 2} U_F \ket{1 1}$. Thus, two photons incident on different modes of a balanced beam splitter \emph{always exit together}, and furthermore in equal superposition of each output mode (see \sfig{HOMdip}{a}). This is the well-known HOM (Hong-Ou-Mandel) effect \cite{Hong1987}, a manifestation of the bunching behavior of bosons mentioned in the previous section, and that has several applications in quantum optics.

\begin{figure}[t]
\capstart
\centering
\subfloat[]{\centering \raisebox{0.6in}{\includegraphics[width=0.4\textwidth]{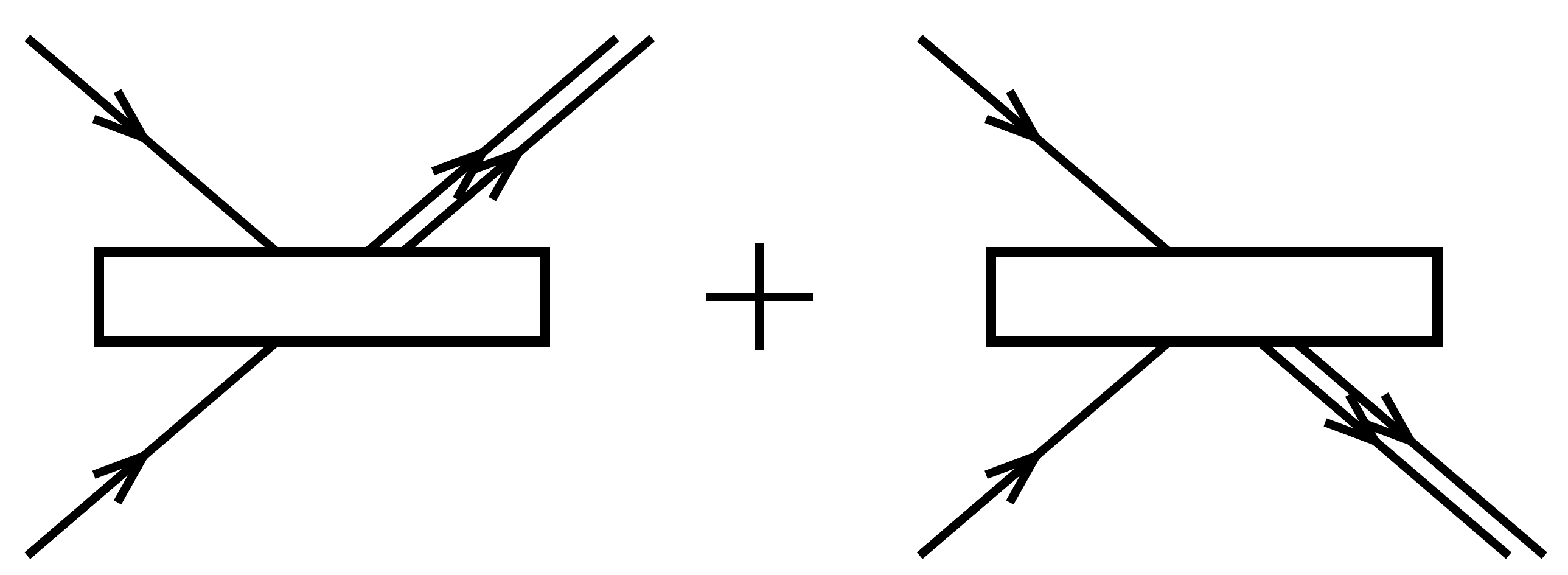}}} \qquad
\subfloat[]{\centering \includegraphics[width=0.5\textwidth]{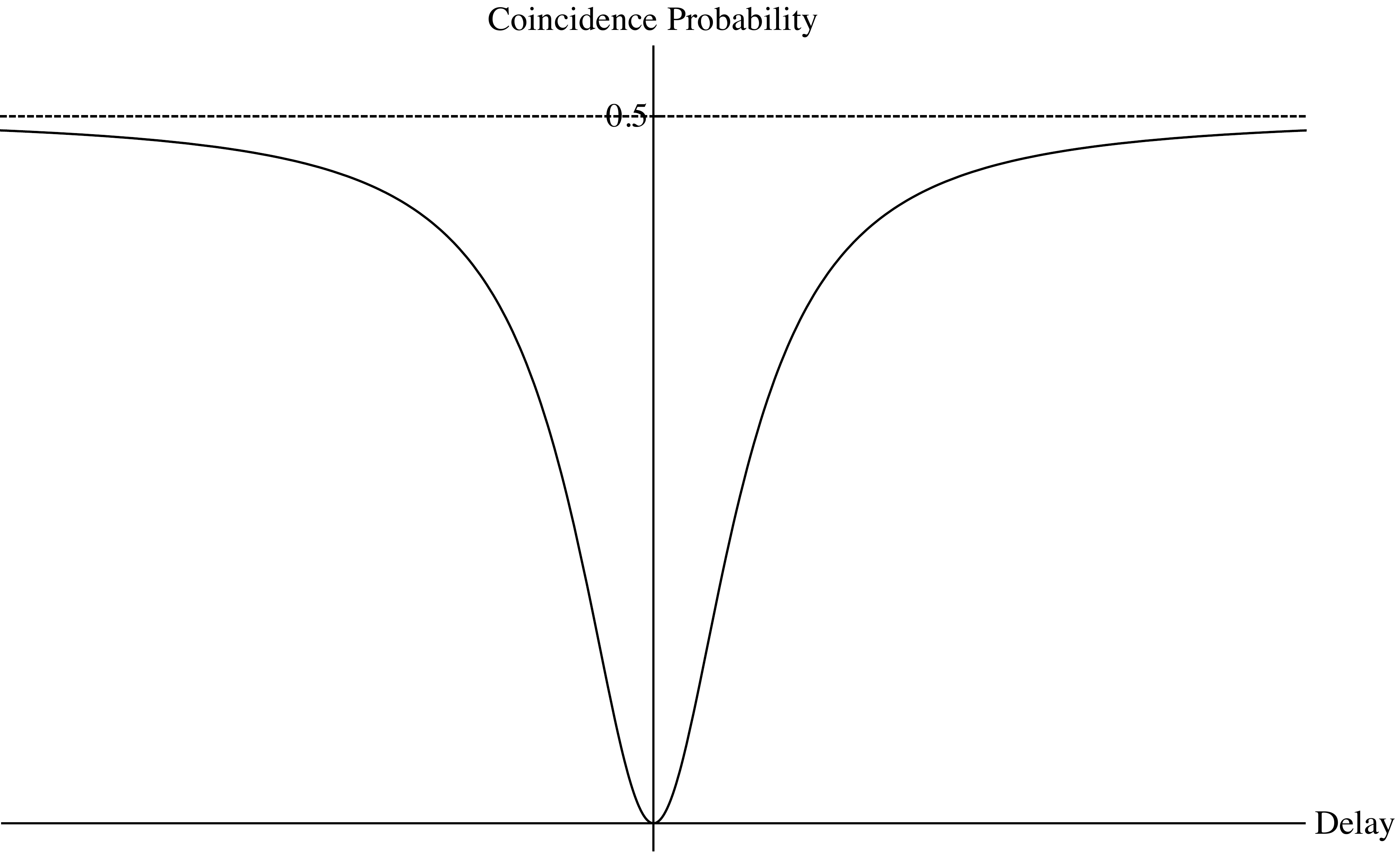}}
\caption[The HOM effect]{The HOM effect. (a) Two photons incident on different ports of a balanced beam splitter exit in equal superposition of states $\ket{20}$ and $\ket{02}$. (b) Typical profile of a HOM curve. By introducing a delay between the photons, it is possible to make them partially distinguishable. The HOM curve interpolates continuously between two extremal regimes: completely indistinguishable (quantum), where the coincidence probability is 0, and completely distinguishable (classical), where the coincidence probability is 1/2.}
\label{fig:HOMdip}
\end{figure} 

One application of the HOM effect that is particularly relevant to the experimental results reported in \chap{bosonnew} is the characterization of single-photon sources. More specifically, we know that perfectly indistinguishable photons always exit the balanced beam splitter together. However, several experimental imperfections (collectively known as mode mismatch) introduce some level of distinguishability---the photon wave packets might not be produced and/or arrive at the detectors at the exact same time, or the apparatus may induce some difference in polarization, frequency, etc. In this case, the fraction of events where photons do not exit together provides a measure of these imperfections. In \sfig{HOMdip}{b} we show the typical profile of a HOM curve, used in this type of characterization.

In \chap{bosonreview} and \chap{bosonnew} we will also consider generalizations of the HOM effect, where the bunching behavior is observed for more than two photons in larger interferometers.

In contrast to the HOM effect, two fermions incident on different modes of a beam splitter must always exit in different modes. This is, of course, nothing more than the Pauli exclusion principle, but can also be seen from \eq{fdeterminant2}. The determinant of any matrix with two identical rows or columns is always 0, while the determinant of the $U$ itself must be a phase, since $U$ is unitary---thus the only allowed output state for fermions is $\ket{11}$, which furthermore does not depend on $U$.

\section{Computational Complexity classes} \label{sec:introduction_d}

So far, we saw some computational tasks which are of interest to us, loosely classifying them as ``easy'' (i.e.\ efficient) or ``hard''.  More formally, we define that a certain task is easy for a computational model (i.e.\ it can be efficiently solved in that model) if there exists a procedure in that model, such as an algorithm, to solve it with only a polynomial amount of resources. Conversely, a task is hard for that computational model if the best possible procedure for solving it demands an exponential amount of resources\footnote{This choice of polynomial \textit{versus} exponential is not unique and, while natural and convenient, is not without criticism. In particular, an algorithm that takes, say, $n^{1000}$ time steps for an $n$-sized input would certainly not be considered efficient \emph{in practice}. However, the discovery of such algorithms is often followed by drastic improvements in the degree of the polynomial, which has given rise to the general belief that every natural problem solvable in polynomial-time is also solvable in a ``reasonably efficient'' manner. For further discussion, see e.g.\ \cite{LivroGoldreich}.}. Intuitively, these definitions capture the distinction between that which may be a matter of developing new theoretical and experimental techniques, and that which is inherently unfeasible. The term ``resource'' has been left intentionally ambiguous since it can be, in principle, anything relevant for a practical realization of that computational model, such as time, space, energy, etc. Most notably, we will be interested in efficiency in terms of time (or number of computational steps of an algorithm) and space (number of bits, qubits, or particles). The classification of the hardness of computational tasks, the relationships between them and different computational models is the subject of computational complexity theory.

We stated that problems can be easy or hard, but that of course is a major over-simplification---in fact, complexity theory is a very rich and diverse field, based partly on the fact that not only there are seemingly hard problems, but that there seem to be several different levels and types of ``hardness'', and tasks are separated in what are called \emph{complexity classes}. Complexity classes group problems by some shared characteristic, commonly restrictions in the definition of the problem or the underlying computational model (we will see several examples shortly). Unfortunately, it is notoriously difficult to prove that a particular problem is hard, as this must take into account all possible procedures for solving it that anyone could ever come up with. Instead, what is often obtained is a \emph{reduction} from one problem to another---that is, some problem $A$ reduces to another problem $B$ if an efficient algorithm for $B$ can be used to efficiently solve $A$. In this way we know that $B$ is at least as hard as $A$, even if we cannot show that either $A$ or $B$ are hard in the first place. This allows us, for example, to reduce the hardness of vast complexity classes to the hardness of a couple of natural problems, and then conjecture these to be hard based on some well-justified intuition. We already saw this implicitly in \sec{introsimul}: we stated that there are classes of quantum circuits believed not to be universal for quantum computation simply because they can be simulated classically, thus conditioning their limitations on the conjectured separation between classical and quantum computers.

We now give brief descriptions of some computational complexity classes. We will restrict our attention to those classes relevant later in this thesis, describing them with the corresponding level of detail. A complete survey of complexity theory, with the proper amount of technical details, can be found e.g.\ in \cite{LivroGoldreich,LivroPapadimitriou}. For a compendium of complexity classes and known relations between them, see also the Complexity Zoo \cite{CompZoo}.

$\bullet$ \textbf{P, BPP, BQP, and NP}

We begin by defining classes P and BPP. Informally, these classes capture precisely the capabilities of classical computers---that is, P (resp.\ BPP) corresponds to those set of problems efficiently solvable by a deterministic (resp.\ probabilistic) classical computer in polynomial time. For now we restrict our attention to \emph{decision} problems, which are problems that, for any input $x$, have a simple yes or no answer. More formally we will define a decision problem as a subset of all strings L $\subseteq \{0,1\}^{*}$ (also known as a language) such that, for every polynomial-sized bit-string $x$,  $x \in$ L if and only if $x$ is an input to the problem for which the outcome is yes. As an example, consider the problem of deciding whether a number is prime: $x$ could be the binary representations of integers, in which case L would be the the set of all $x$ encoding prime numbers. We can then define

\begin{definition}
P (Polynomial-time) is the set of all languages L $\subseteq \{0,1\}^{*}$ for which there exists a uniform family (cf.\ \sec{introuniv}) of polynomial-size deterministic classical circuits $\{C_n\}$ such that, for all inputs $x$, 
\begin{itemize}
\item[(i)] if $x \in$ L, the first output bit is 1;
\item[(ii)] if $x \notin$ L, the first output bit is 0.
\end{itemize}
\end{definition}

\begin{definition}
BPP (Bounded-Error Probabilistic Polynomial-Time) is the set of all languages L $\subseteq \{0,1\}^{*}$ for which there exists a uniform family of polynomial-size probabilistic classical circuits $\{C_n\}$ such that, for all inputs $x$, 
\begin{itemize}
\item[(i)] if $x \in$ L, the first output bit is 1 with probability at least 2/3;
\item[(ii)] if $x \notin$ L, the first output bit is 1 with probability at most 1/3.
\end{itemize}
\end{definition}

The values $1/3$ and $2/3$ are irrelevant---as long as the gap between the yes and no instances is not small, we can efficiently amplify the success probability by repeating the computation and taking a majority vote\footnote{This, of course, relies on the majority vote function itself being efficiently computable. This is the case for P and BPP, but in \chap{fermreview} we will come across a complexity class of sub-classical computational power that cannot implement the majority vote function deterministically, and thus this remark does not hold.}. It is an important open question whether P = BPP, although the general belief it that this is true.

Using the notion of quantum circuit defined in \sec{introuniv}, we can define the quantum analogue of BPP, called BQP, in a similar manner:

\begin{definition}
BQP (Bounded-Error Quantum Polynomial-Time) is the set of all languages L $\subseteq \{0,1\}^{*}$ for which there exists a uniform family of polynomial-size quantum circuits $\{Q_n\}$ such that, for all inputs $x$, after applying $Q_n$ to the state $\ket{0 \ldots 0} \otimes \ket{x}$, 
\begin{itemize}
\item[(i)] if $x \in$ L, the probability of measuring the first qubit in the state $\ket{1}$ is at least 2/3;
\item[(ii)] if $x \notin$ L, the probability of measuring the first qubit in the state $\ket{1}$ is at most 1/3.
\end{itemize}
\end{definition}

Note that this definition of BQP contains both a polynomial-size quantum circuit and a polynomial-size classical circuit, the latter implicit in the definition of the uniform family $\{Q_n\}$. In simpler terms, there must be a classical computer capable of efficiently finding the description of the quantum circuit. Also remark that quantum computers are inherently probabilistic, which is why the definition of BQP is more closely related to that of BPP rather than P. One fundamental difference is that BPP can be defined as a deterministic computer supplied with an additional random input string, whereas for BQP no such alternative definition is possible (i.e.\ its randomness is intrinsic). We can also formalize a remark made at the end of \sec{introsimul}: even if quantum computers cannot be simulated classically in the weakest sense, it is still conceivable that BQP = BPP, that is, all decision problems in BQP (such as factoring) would also be in BPP.

Finally, for completeness, we define the class NP. Intuitively, NP consists of those decision problems which may be hard to solve, but for which a solution can be efficiently checked. To illustrate, consider the problem of integer factoring\footnote{The standard formulation of the factoring problem is not technically a decision problem, but this issue can easily be sidestepped.}: given an arbitrary integer $x$, there is no known (classical) algorithm to efficiently factor it but, if we are given a set of numbers and told they are the factors of $x$, we can efficiently check whether this is true. We can formally define NP in the following manner:

\begin{definition}
NP (Nondeterministic Polynomial-Time) is the set of all languages L $\subseteq \{0,1\}^{*}$ for which there exists a function $V \in$ P (called a verifier) such that, for all inputs $x$, 
\begin{itemize}
\item[(i)] if $x \in$ L, there exists some polynomial-size witness $y$ such that $V(x,y)=1$;
\item[(ii)] if $x \notin$ L, for all witnesses $y$ we have $V(x,y)=0$.
\end{itemize}
\end{definition}

Also important are NP-complete problems: a problem is NP-complete if it is in NP and if it is NP-hard, by which we mean that any other NP problem can be reduced to it. These problems, in a sense, capture the essence of the class NP---not only they are the hardest among the NP problems, but an efficient solution for any one of them would collapse the whole class NP and make it equal to P. It is astounding that there are hundreds of NP-complete problems \cite{LivroGarey}, stemming from various areas of Computer Science, Mathematics, and Physics, and not a single efficient algorithm has been developed for any one of them. This is taken as a strong argument towards the conjecture that P $\neq$ NP, which remains one of the most famous open problems in Mathematics.

Between these four classes, several relations are known or conjectured. It is trivial that P $\subseteq$ NP, P $\subseteq$ BPP and BPP $\subseteq$ BQP. It is also believed that BQP $\subsetneq$ BPP, and arguably the most well-known evidence towards this conjecture is Shor's quantum algorithm for factoring. Factoring is believed to be an NP-intermediate problem---that is, a problem neither in P nor NP-complete---and thus, conditioned on this belief, Shor's algorithm places BQP strictly greater than BPP, but not necessarily as large as NP. As we will see, one of the main motivations for the study of the computational complexity of linear optics (with which half of this thesis is concerned) is that it is based on even weaker assumptions than factoring $\notin$ P. 

$\bullet$ \textbf{PP, $\#$P, and PH}

We now enter the domain of some less usual complexity classes\footnote{Less usual for physicists, anyway, but not anywhere close to the most exotic classes studied by complexity theory.}. In a sense, PP, $\#$P and PH are all different levels of ``incredibly hard'', much beyond what classical or quantum computers are expected to achieve. The reason we are interested in them is mostly proof techniques. Recall that it is very hard to prove that a certain problem cannot be efficiently solved in a classical (or quantum) computer, and often we must resort to reductions. One way to provide evidence that some computational task is hard is to show that, if that task could be performed efficiently, this would have consequences deemed unlikely for the structure of larger complexity classes such as PH. We now proceed to define these classes, although we will omit the formal definitions of $\#$P and PH, as they are too cumbersome and not very enlightening, so we give preference to a conceptual discussion of their relevance to this thesis.

We begin with the following definition:

\begin{definition}
PP (Probabilistic Polynomial-Time) is the set of all languages L $\subseteq \{0,1\}^{*}$ for which there exists a uniform family of polynomial-size probabilistic classical circuits $\{C_n\}$ such that, for all inputs $x$, 
\begin{itemize}
\item[(i)] if $x \in$ L, the first output bit is 1 with probability greater than 1/2;
\item[(ii)] if $x \notin$ L, the first output bit is 1 with probability smaller than 1/2.
\end{itemize}
\end{definition}

Remark the similarity with the definition of BPP. The difference now is that there is no bounded gap between the yes or no cases---for a given problem, if the probabilities for both cases are exponentially close to 1/2, we would need an exponential number of repetitions to amplify this gap and distinguish the correct answer. Thus, it is clear that the class PP does not correspond to any notion of ``feasible''. 

Closely related to PP is the class $\#$P that, rather than decision problems, concerns \emph{counting} problems. Informally, $\#$P is the class of problems associated with counting the number of solutions to an NP problem. As an example, consider the problem of finding a perfect matching\footnote{A perfect matching in a graph G is a set of edges E such that every vertex in G is the endpoint of exactly one edge in E.} in a graph---this problem is in NP, whereas the related problem of \emph{counting} the number of perfect matchings in a graph is in $\#$P. Curiously, even though finding a perfect matching is actually easy (in P), its counting version is $\#$P-complete, and thus among the hardest problems in $\#$P\footnote{In fact, although this problem is $\#$P-complete in general \cite{LivroPapadimitriou}, it is in P if the graph is planar \cite{Kasteleyn1961,Temperley1961}.}. 

Another very important $\#$P-complete problem is exactly calculating the permanent of a $\{0,1\}$-matrix \cite{Valiant1979}. Recall that the permanent function already appeared in the previous section, in the calculation of transition amplitudes in photons in a linear-optical network. Its $\#$P-hardness is the basis of the results of the computational complexity of linear optics---although there are several subtleties, most notably that \emph{linear optical devices cannot solve $\#$P-complete problems}, which we return to in \chap{bosonreview}.

Finally, we define the polynomial hierarchy PH, which is a generalization of NP. Recall that NP was the class of problems of the form: ``given an input $x$, does there exist $y$ such that $V(x,y)=1$?''. In this spirit, PH can be defined as the following hierarchy of classes: for each natural number $k$, the $k^{\textrm{th}}$ level of PH, denoted $\Sigma_k$, corresponds to problems of the form ``given an input $x$, does there exist $y_1$ such that for all $y_2$, there exists a $y_3$, such that \ldots $y_k$ such that $V(x,y_1,y_2,y_3,\ldots,y_k)=1$?''. As a particular case, $\Sigma_1$=NP. 

The polynomial hierarchy is strongly believed to be infinite, so much so that several hardness results for specific problems are of the form ``if such and such problem were in P, the polynomial hierarchy would collapse to a certain level'', including the ones we review in \chap{bosonreview}. Informally, it would be very surprising if all problems described by any number of existential and universal quantifiers could be reduced to problems with only a few quantifiers. However, the belief that PH is infinite is not as strong as the belief that P $\neq$ NP. More specifically, even if P$\neq$NP, it is still possible for the polynomial hierarchy to collapse to some upper level.

The main result we will need relating these classes is Toda's Theorem \cite{Toda1989}, which states that PH is contained in P with a PP oracle or, equivalently, P with a $\#$P oracle. An oracle, also called a ``black box'', is an abstract device capable of solving some computational problem in a single step, and is an extremely useful concept in complexity theory. When we say ``A with an oracle for B'', which is denoted $\textrm{A}^\textrm{B}$, we mean a device that can solve problems in A with the added capability of making queries to a black box device that instantly solves problems in B\footnote{Among several other applications, oracles formalize the idea of reduction. For instance, NP-hardness, which we defined previously, can be given the following alternative characterization: a problem L is NP-hard if NP$\subseteq \textrm{P}^\textrm{L}$.}. Since $\#$P consists essentially of counting problems, it may seem strange to formally compare its computational power with other classes based on decision problems. Thus, for our purposes, $\textrm{P}^\textrm{\#P}$ merely formalizes this comparison: it consists of decision problems that could be solved efficiently if we had a device for solving counting (i.e.\ $\#$P) problems. In this sense, Toda's theorem informally states that the computational power of PH is smaller than that of PP or $\#$P. In the light of this statement, the difference between fermions and bosons described in \sec{intro_secondquant} is remarkable: while fermions evolve according to the determinant, which is in P, bosons evolve according to the permanent, which is $\#$P-complete and thus expected to be a drastically harder problem.

$\bullet$ \textbf{Post-selected classes}

Finally, we define a model of computation based on an operation known as post-selection. In essence, post-selection consists on the ability to condition a probabilistic computation on the outcome of a subset of the output register, no matter how unlikely that outcome may be. To illustrate, consider there is some number $m$ we wish to factor, and we apply the following naive probabilistic algorithm: we sample two random numbers $a$ and $b \in [1,\sqrt{m}]$, and check whether $a.b=m$. If they are, we assign one output bit $p$ to $1$ to flag that we found the correct answer, otherwise we assign it to $0$ and try again. This algorithm is obviously very inefficient, as it must repeat an exponential number of times, on average, until the right answer happens to appear. But suppose now that we had the astonishing ability of, in one computational step, parse through all random possibilities of $a$ and $b$ and pick one for which $p$ is 1. Informally, that is the power of post-selection.

For now, we will define classes postBQP and postBPP, corresponding respectively to quantum and randomized classical computers with post-selection, but other post-selected classes will arise in \chap{bosonreview} and \chap{bosonnew}. For these definitions, it is convenient to define the output of the circuit as consisting of a single-line output register ($o$), which encodes the answer to the decision problem, and a poly$(n)$-sized post-selection register ($p$), on which the success of the circuit will be conditioned\footnote{Often, the post-selection register is defined with only one bit \cite{Aaronson2005}. For postBPP and postBQP this can be done without loss of generality, since at the end of the circuit we can simply append a few extra operations to encode whether $p$ is the desired bit string or not on a single bit. However, for some of the classes to be defined later this is not always possible.}. We then define

\begin{definition}
postBPP (BPP with post-selection, also known as BPP$_\textrm{PATH}$) is the set of all languages L $\subseteq \{0,1\}^{*}$ for which there exists a uniform family of polynomial-size probabilistic classical circuits $\{C_n\}$ such that, for all inputs $x$
\begin{itemize}
\item[(i)] The probability of $p=00\ldots0$ is nonzero;
\item[(ii)] if $x \in$ L then, conditioned on $p=00\ldots 0$, the probability of  $o=1$ is at least 2/3;
\item[(iii)] if $x \notin$ L then, conditioned on $p=00\ldots 0$, the probability of  $o=1$ is at most 1/3.
\end{itemize}
\end{definition}

\begin{definition} \label{def:postBQP}
postBQP (BQP with post-selection) is the set of all languages L $\subseteq \{0,1\}^{*}$ for which there exists a uniform family of polynomial-size quantum circuits $\{Q_n\}$ such that, for all inputs $x$, after applying $Q_n$ to the state $\ket{0 \ldots 0} \otimes \ket{x}$, 
\begin{itemize}
\item[(i)] The probability of measuring $p$ in the state $\ket{00\ldots 0}$ is nonzero;
\item[(ii)] if $x \in$ L then, conditioned on measuring $p$ on state $\ket{00\ldots 0}$, the probability measuring $o$ on state $\ket{1}$ is at least 2/3;
\item[(iii)] if $x \notin$ L then, conditioned on measuring $p$ on state $\ket{00\ldots 0}$, the probability measuring $o$ on state $\ket{1}$ is at most 1/3.
\end{itemize}
\end{definition}

Clearly, post-selection is a very powerful (and unrealistic) resource, since it allows us to single out exponentially-unlikely outcomes. It is easy to see, for example, that postBPP contains NP: similar to the algorithm we described for factoring, for any NP problem we can just post-select on all possible random strings as witnesses for the solution. However, this relies on the property of NP problems of having a witness to begin with: it is not so trivial whether other problems in, say, higher levels of PH, could be contained in postBPP. Nonetheless, it is known that postBPP is contained \emph{within} the third level of the polynomial hierarchy \cite{Han1997}. PostBQP, on the other hand, was shown to be equal to PP \cite{Aaronson2005}. Now recall that PP is considered a larger class than the whole polynomial hierarchy. This shows that, from the point of view of post-selection, classical computers have a very modest power compared to quantum computers and, if postBPP=postBQP, PH collapses to its third level. It is tempting to think that BPP=BQP also implies a collapse of the polynomial hierarchy, but in fact this is not true. It would be true if BPP were equal to BQP due to some simulation scheme as defined in \sec{introsimul}: if probabilistic classical computers could efficiently sample from the same distributions as quantum computers we could post-select on obtaining the same results, and thus postBPP would be equal to postBQP. But, as remarked previously, if BPP is equal to BQP simply because they happen to contain the same set of decision problems, this reasoning no longer applies.

As we will review in the next section, the hardness of simulating some restricted classes of quantum computers, such as constant-depth quantum circuits or linear optics, follows precisely from this conjectured separation between postBPP and postBQP. 

\section{Restricted models of computation} \label{sec:introduction_c}

The results reported in this thesis fit into the larger program of understanding the gap between classical and quantum computing. Rather than searching for quantum algorithms to efficiently solve some important problem, our goal is to better characterize the resources needed to build a quantum computer in the first place. In other words, our focus is not on characterizing BQP itself, which is of course a different but extremely important research program, but rather on understanding what resources bridge the gap between BPP and BQP in different contexts, and whether there are intermediate computational classes. To that end, we must investigate the computational capabilities of \emph{restricted models} of quantum computation.

Consider a model of computation Q, consisting of quantum computers subject to some restriction (motivated e.g.\ by some real-world physical implementation). For the sake of discussion, let us classify our knowledge about Q into one of three (oversimplified) scenarios:

\begin{itemize}
\item[(i)] We can show that Q is classically simulable (cf.\ \sec{introsimul}),
\item[(ii)] we can show that Q is universal for quantum computation (cf.\ \sec{introuniv}), or
\item[(iii)] we can show that the ability to classically simulate Q contradicts some plausible conjecture (cf.\ \sec{introduction_d}), thus providing evidence that Q has some supra-classical computational power. 
\end{itemize}

Suppose that we modify the underlying restrictions defining Q to obtain another model, Q', and ask what can we prove about the computational power of Q'. For example, Q may be in regime (i), and Q', obtained by adding some missing ingredient to Q, may be in regime (ii) or (iii), in which case that ingredient is somehow fundamental for the computational power of Q'. On the other hand, if Q is in regime (ii) and Q', obtained by imposing additional restriction on Q, is also in regime (ii), we conclude that this restriction is irrelevant for the computational power of Q. Let us give some examples to illustrate these concepts.

The jump from (i) to (ii) is the most sought-after, as it provides direct practical applications for Q', such as Shor's algorithm. For example, it is known that a circuit of single-qubit gates, acting on a separable input state and followed by single-qubit measurements is classically simulable, but by the addition of any entangling two-qubit gate we obtain quantum universality \cite{Bremner2002}, thus suggesting that entanglement is somehow a necessary resource for (pure-state) quantum computation\footnote{If the global state of the quantum computer is mixed, however, this does not hold \cite{Knill1998}.}. This intuition is strengthened by the result that, for any quantum circuit acting on pure states, if the entanglement at every intermediate step is bounded (in the sense that only a small number of qubits is entangled) the output is classically simulable \cite{Jozsa2003}. Another interesting example is that of circuits composed only of the Toffoli gate. Not only are these circuits classically simulable, they are in fact universal for classical computation \cite{Fredkin1982}, whereas circuits composed of Toffoli and any non-basis-preserving single-qubit gate \cite{Shi2003} are universal for quantum computation. In this case, clearly what the Toffoli gate is missing is the ability to create quantum superpositions.

However, the claim that a certain resource is essential for quantum computation must also be made with care. Consider the family of well-known Clifford gates\footnote{The Clifford group consists of the unitaries that preserve the Pauli group under conjugation.}. A computation consisting of a computational basis input state, a circuit of Clifford gates and a final computational basis measurement may display unbounded entanglement throughout, but it is nonetheless classically simulable by the Gottesman-Knill theorem \cite{Gottesman1999a}. Despite the presence of entanglement, there must be some other resource missing for Clifford circuits. In fact, Clifford gates become universal when supplemented by non-Clifford gates \cite{Shi2003} or special input states \cite{Shor1996,Gottesman1999a}. A recent collection of results on the classical simulability of Clifford circuits under the addition of different ingredients can be found in \cite{Jozsa2014}. Another well-known class of quantum computations that is classically simulable despite the large amounts of entanglement generated is that of nearest-neighbor matchgates (acting on qubits arranged on a path). Since matchgates are the topic of \chap{fermreview} and \chap{fermnew}, we defer the definition and discussion of this model to those chapters. For now it suffices to say that, curiously enough, matchgates become universal when supplemented by the $\swap$ gate, which does not seem a particularly ``quantum'' resource---in \chap{fermnew} we will identify the property that makes the $\swap$ gate special, and prove several other scenarios where matchgates become universal.

While (ii) is the regime of most general interest, there are also several results along the lines of (iii). More specifically, in this thesis we will be interested in four restricted models in this regime:

\begin{itemize}
\item[(a)] Depth-4 quantum circuits \cite{Terhal2004}: This model consists of a computational basis input, followed by three rounds of arbitrary two-qubit gates acting on arbitrary pairs of qubits, and ending in a measurement of a polynomial number of qubits in the computational basis. Curiously enough, if we restrict this model further to only two rounds of two-qubit gates, it becomes classically simulable.
\item[(b)] Circuits of commuting gates \cite{Bremner2011}: Also called IQP (for Instantaneous Quantum Polynomial-time\footnote{This is a joke from the authors, of course. The point is that all gates commute and thus can be done in any order, not at the same time.}), this model consists of a computational basis input, followed by a circuit of gates diagonal in the $\{ \ket{+},\ket{-}\}$ basis, with a final measurement of a polynomial number of qubits in the computational basis. 
\item[(c)] Non-adaptive measurement-based quantum computation \cite{Hoban2013}: As the name suggests, consists of a subset of measurement-based quantum computations where there is no adaptation in the measurements, and so they can all be performed at once.
\item[(d)] Non-adaptive linear optics, or BosonSampling \cite{Aaronson2013a}: The main topic of \chap{bosonreview} and \chap{bosonnew}. Consists of preparing an $n$-photon Fock state, evolving it through some $m$-mode interferometer, and doing a final round of number-resolving measurements. The qualifier ``non-adaptive'' distinguishes it from the KLM scheme \cite{Knill2001b}, that uses adaptive measurement to perform universal linear optical quantum computation.
\end{itemize}

For all of these models, the following is true: if an efficient weak classical simulation of its output is possible, the polynomial hierarchy collapses to its third level. Furthermore, these proofs all follow the same recipe. The starting point is some trick to show that the model is capable of arbitrary quantum computations if post-selection is allowed. As a consequence, the post-selected version of the model contains BQP, and trivially also postBQP, as nothing is gained by more than one round of post-selection. Finally, by the discussion at the end of the previous section, it is clear that if we could weakly simulate the model by classical means we could solve the same set of problems simply by post-selecting on the corresponding outcomes of the classical simulator. This would imply that the model is contained within postBPP, which in turn would imply postBPP = postBQP and the collapse of the polynomial hierarchy. For a proof that this remains true even if the classical simulation is approximate within multiplicative error, see \cite{Bremner2011}. Notably, the complexity classes defined by these devices comprise only sampling problems, rather than decision problems. Furthermore, an additional limitation  is that the sampling must be done over a polynomial number of output qubits---it can be shown, at least for models (a) and (b), that the distribution over any subset of the output comprising only a logarithmic number of qubits can in fact be simulated classically\cite{Terhal2004, Bremner2011}.

This hand-waving argument outlines how the hardness-of-simulation of these models reduces to the conjectured separation between postBPP and postBQP (and conjectured infinitude of PH). We omit further details for now since, in \chap{bosonnew}, we will construct a class of quantum circuits that is, at the same time, a subset of (a), (b) and (c), and use it to prove that the aforementioned hardness reduction also follows through for \emph{constant-depth linear optics}.

One drawback of these proofs is that they do not provide a practical application for these devices, such as solving some important problem (say, factoring), and the only nontrivial computational task they can perform a priori is \emph{simulating themselves}. In this respect, Shor's algorithm has two notable advantages: first, factoring is in NP, meaning that, when we have sufficient technology to build a full-blown quantum computer, we will be able to \emph{verify} whether it is actually running Shor's algorithm, whereas the tasks in reach of these restricted models are not known to be in NP, making their validation highly nontrivial. The second is that factoring is an inherently useful task, given its cryptographic applications, and this alone can drive the experimental efforts (and funding) to build an universal, fault-tolerant quantum computer\footnote{Unless someone eventually develops an efficient classical factoring algorithm, of course.}.

However, despite this, restricted models such as (a)-(d) have been drawing increasing interest, for two main reasons. The first is that results concerning their complexity are based on milder assumptions than ``factoring $\notin$ P''---in other words, even if a classical algorithm for factoring is found, these restricted models will still provide evidence that quantum devices can perform some nontrivial computational task. The second reason is that, precisely because they are restricted quantum computers, they might require less stringent experimental control of some physical system, providing an intermediate milestone for experimentalists that may be feasible in a much nearer future. 

In terms of simplifying experimental efforts, BosonSampling (d) is ahead in the race. The post-selection argument given above provides evidence that linear-optical devices cannot be weakly classically simulated with multiplicative error, but in fact most of the work done in \cite{Aaronson2013a} was to prove a stronger result: classical simulability remains unlikely even for approximation \emph{close in total variation distance} (recall that this is a weaker requirement), thus more faithfully describing real-world experimental devices. Crucially, what separates linear optics from the other restricted models (a)-(c) is that there is a second way to prove its hardness-of-simulation reduction, based on the $\#$P-completeness of approximating the permanent. This provides a more robust result, but also leads to a direct prescription of an experiment expected to be hard to simulate classically\footnote{At the cost of some extra assumptions, most notably a conjecture that the permanent of a random matrix is, typically, as hard to compute as the permanent of an arbitrary matrix. We will return to these technicalities in \chap{bosonreview}.}, especially since the model itself is inspired on a physical system (noninteracting bosons). Indeed, this result sparked a flurry of experimental interest that culminated on four quantum optics groups reporting, within two days of each other, demonstrations of small-scale implementations of BosonSampling devices on integrated photonic chips---one of these results is reported in \chap{bosonnew}, as it is part of an ongoing collaboration between us and quantum optics groups in Rome and Milan.

To summarize, we see that there is an inherent richness in the study of restricted models of quantum computation, with different sets of restrictions highlighting the role of different resources. In \chap{fermnew} we will report our new results in the context of matchgates, where we show that several resources have the capability of uplifting matchgates from classically simulable to quantum universal and, furthermore, this jump will be abrupt in every case. In \chap{bosonnew}, we report several theoretical and experimental results related to linear optics and BosonSampling.
\newpage
\chapter{Review: Matchgates} \label{chapter:fermreview}

In this chapter, I will review some previously known results regarding the computational power of matchgates. In \sec{ferm_overview} I give a brief historical overview on matchgates, their connection to noninteracting fermions, and various aspects of their computational power. In the subsequent sections, I revisit some of these results in more detail, with special focus on those definitions and proofs that will be important for our main results in \chap{fermnew}. 

\section{Definitions and historical background} \label{sec:ferm_overview}

Matchgates are a restricted class of quantum operations originally defined by Valiant \cite{Valiant2002} in graph-theoretical terms, and shown to be closely related to systems of noninteracting fermions  \cite{Terhal2002}. We define matchgates as follows:

\begin{definition} \label{def:matchgates}
Let $G(A,B)$ denote the unitary gate that acts as unitaries $A$ and $B$, respectively, on the even- and odd-parity subspaces of a 2-qubit Hilbert space:
\begin{equation} \label{eq:Matchgate}
G(A,B) = \begin{pmatrix}
A_{11} & 0 & 0 & A_{12} \\
0 & B_{11} & B_{12} & 0 \\
0 & B_{21} & B_{22} & 0 \\
A_{21} & 0 & 0 & A_{22}
\end{pmatrix}.
\end{equation}
The gate $G(A,B)$ is a \emph{matchgate} if $\det A = \det B$. 
\end{definition}

Throughout this chapter we will cover known results pertaining to the computational properties of matchgates in several different contexts. Often, in the literature, the definition of matchgate is more general (e.g., by including additional non-unitary gates such as in \cite{Valiant2002}), or more restrictive (e.g., by restricting the gates to only act on nearest-neighboring qubits, such as in \cite{Terhal2002}) than the one we give here, depending on what the author intends to investigate. Since most of the results presented in \chap{fermnew} are based on variations of the underlying assumptions, it will be convenient to take \defn{matchgates} as a working definition and add or remove further restrictions as we go. Hopefully, no confusion will arise with other definitions found elsewhere in the literature. Additionally, throughout this chapter it should be implicit that the qubits in a circuit are arranged on a path (i.e., a one-dimensional array with open boundary conditions), as this is the standard setting for the results which we will revisit here. Variations of this setting will be considered in \chap{fermnew}.

Matchgate circuits were originally introduced by Valiant in \cite{Valiant2002} as a class of restricted classically simulable quantum computations, derived from the problem of counting the number of perfect matchings (recall the definition of perfect matching from \sec{introduction_d}) in planar graphs. We mention Valiant's result in passing for historical interest, but our approach will follow that of subsequent work, and we will not concern ourselves with the technical aspects of the original definition. It suffices to say that the class of computations defined by Valiant corresponds to circuits of matchgates (as per our definition) acting on nearest-neighboring qubits, together with some non-unitary gates and a broader set of two-qubit gates on the first two qubits. From hereon we consider only the restricted class of circuits composed of unitary matchgates as defined by \eq{Matchgate}.

Valiant's result was soon after reinterpreted, by Terhal and DiVincenzo \cite{Terhal2002} and Knill \cite{Knill2001a}, in terms of the evolution of systems of noninteracting fermions. More specifically, the authors use the Jordan-Wigner transformation \cite{Jordan1928} to show that the Hamiltonians that generate the group of nearest-neighbor matchgates correspond precisely to Hamiltonians quadratic in fermionic creation and annihilation operators. Recall from \sec{dynamics} that such quadratic Hamiltonians describe the evolution of noninteracting fermions---as this correspondence between quadratic Hamiltonians and noninteracting particles is analogous for bosons (i.e.\ linear optics), matchgates are occasionally referred to by the expression ``fermionic linear optics''. However, in terms of computational power, the parallel between fermionic linear optics and its bosonic counterpart breaks down when nontrivial measurements are allowed. As shown in \cite{Knill2001a}, the action of particle detectors can be described as the limit of (non-unitary) linear optical operators for fermions, but not for bosons. Hence, noninteracting bosons become universal for quantum computation when augmented with adaptive measurements via the KLM scheme \cite{Knill2001b} (incidentally, that is topic of \chap{bosonreview}), whereas fermions remain simulable, as shown by \cite{Terhal2002}, becoming universal only with nondestructive two-qubit measurements \cite{Beenakker2004}.

We may also study matchgates as the subgroup of two-qubit unitaries $G(A,B)$ in its own right, with many results obtained from their algebraic properties with no explicit reference to their fermionic nature. For instance, it was shown by Jozsa and collaborators in \cite{Jozsa2010} that circuits of nearest-neighbor matchgates on $n$ qubits are equivalent to general quantum circuits on $O(\log n)$ qubits. It was also shown by Van den Nest that the class of Boolean functions computable by circuits of nearest-neighbor matchgates corresponds to that of so-called linear threshold gates \cite{Nest2011a}. These results connect the computational power of nearest-neighbor matchgates to that of other computational classes, irrespective of their underlying fermionic nature. This approach, which we also take here, focuses on the algebraic structure of the matchgate group, and allows us to investigate its properties beyond the fermionic formalism.

One can also consider deviations from the original setting by relaxing some of the (occasionally implicit) restrictions, and ask whether the computational power of the system changes. For instance, it was shown that matchgates become universal for quantum computation if allowed to act on both nearest and next-nearest neighbors---or, equivalently, by allowing a modest use of the $\swap$ gate (remark that the $\swap$ gate, while being of the form $G(I,X)$, is not a matchgate since det$I \neq$ det$X$). This was first explicitly stated in the formalism of matchgates by Jozsa and Miyake in \cite{Jozsa2008b}, although it was already implicitly known in the context of the XY interaction \cite{Kempe2002}. Jozsa and Miyake also extended the original simulability result by showing that a circuit of matchgates remains simulable if conjugated by a suitable circuit of Clifford gates. These kinds of results often push the boundaries of our understanding of matchgates in directions which do not arise naturally in the fermionic formalism. All contributions of our work to the understanding of matchgates are of this sort, and are described in \chap{fermnew}. 

Finally, note that we can also investigate the computational power of proper subsets of matchgates. Any such subset acting only on nearest neighbors must be classically simulable, as implied by the simulability of matchgates themselves. However, in those regimes where matchgates become universal (say, when allowed to act on more distant neighbors) we may ask whether proper subsets of matchgates retain this computational power. The most notable of these subsets is the one defined by the XY (or Heisenberg anisotropic) interaction. This interaction is an idealized model of the interactions present in several proposed physical implementations of quantum computing, such as quantum dots \cite{Imamoglu1999, Quiroga1999}, atoms in cavities \cite{Zheng2000}, and quantum Hall systems \cite{Mozyrsky2001}. The XY interaction was shown to be as powerful as general matchgates\footnote{In fact, this result preceded the one for general matchgates.} if allowed to act on both nearest and next-nearest neighbors \cite{Kempe2002}. In \chap{fermnew} we will show that several of our results for matchgates also follow through for the XY interaction.

For the remainder of this chapter I will review the proofs for some of these results in detail, with special emphasis on those techniques that will be useful for our purposes in \chap{fermnew}. In \sec{fermreview_a}, I review the proof of the simulability of nearest-neighbor matchgates, following the formalism of Jozsa and Miyake in \cite{Jozsa2008b}. In \sec{fermreview_b}, I review the proof of the universality of matchgates when complemented with the $\swap$ gate, also showing how this translates to matchgates acting on more distant qubits. In that section I also review what is known about the XY interaction in this context. Finally, in \sec{fermreview_c} I provide, for completeness, a brief outline of the relations between matchgates and other quantum/classical computational complexity classes. Throughout this chapter, by ``matchgates'' we will mean nearest-neighbor matchgates, unless stated otherwise, and we reiterate the assumption that the qubits in the circuit are arranged on a path (in other words, each qubit has at most two neighbors). 

\section{Simulability of nearest-neighbor matchgates} \label{sec:fermreview_a}

The starting point of this section is the Jordan-Wigner transformation \cite{Jordan1928}. It is a result of great historical interest, that originated as a mapping between fermionic operators and spin operators, and which often helps in obtaining solutions for solid-state models, such as the 1D Ising or XY spin-chains. For our purposes, the role of spins will be taken by qubits, and the spin operators will be replaced by logical unitary gates (more specifically, matchgates).

We begin by defining the Jordan-Wigner operators \cite{Jordan1928} acting on $n$ qubits:
\begin{subequations} \label{eq:JWc}
\begin{align} 
c_{2j-1} &:= \left( \prod_{i=1}^{j-1} Z_i \right) X_j \notag \\
c_{2j} &:= \left( \prod_{i=1}^{j-1} Z_i \right) Y_j
\end{align}
\end{subequations}
for $j \in \{1,\ldots,n\}$. From the commutation relations of the Pauli operators, one can obtain the (anti-)commutation relations for the $c_{i}$ operators:
\begin{equation} \label{eq:commutc}
\left \{ c_i, c_j \right \} = 2 \delta_{i j} I, \qquad \qquad i,j \in \{1,\ldots,2n\}
\end{equation}
To see how these operators relate to fermions, it's helpful to rewrite them in the following combinations:
\begin{align*}
f_{k} &:= (c_{2k-1}+i c_{2k})/2 \notag \\
f_{k}^{\dagger} &:= (c_{2k-1}-i c_{2k})/2. 
\end{align*}
Using \eq{commutc} one obtains the commutation relations for the $a_i$'s:
\begin{align*}
\{ f_i, f_j \} & = 0 = \{ f_i^{\dagger}, f_j^{\dagger} \}, \qquad \textrm{and} \notag \\
\{ f_i, f_j^{\dagger} \} & = \delta_{i j},
\end{align*}
with $i,j \in \{1,\ldots,n\}$. The above relations are precisely what one expects for fermionic creation and annihilation operators [cf.\ \eq{fcommut} and \sec{introduction_b}]. Another way to understand this correspondence is to notice that the Pauli $X$ gate is just a bit-flip (i.e.\ exchanges states $\ket{0}$ and $\ket{1}$)---if we interpret the qubit states $\ket{0}$ and $\ket{1}$ as occupation numbers for a fermionic mode\footnote{Since a fermionic mode cannot be occupied by more than one particle, identification between qubit states and fermionic modes is well-defined.}, the $X$ gate just creates a fermion if the mode is empty and annihilates a fermion if it is occupied, and we expect it to be a simple linear combination of $a$ and $a^{\dagger}$. This is the content of \eq{JWc}, with the caveat that we have to include some $Z$ gates to obtain the correct commutation relations between operators on different fermionic modes.

Now consider the following products of the $c_i$ operators:
\begin{align}
c_{2k-1} c_{2k} & = i Z_k \label{eq:JWquad1}
\end{align}
for $k \in \{1,\ldots,n\}$ and
\begin{subequations} \label{eq:JWquad2}
\begin{align}
c_{2k} c_{2k+1} & = i X_k X_{k+1}  \\
c_{2k-1} c_{2k+2} & = -i Y_k Y_{k+1} \\
c_{2k-1} c_{2k+1} & = -i Y_k X_{k+1} \\
c_{2k} c_{2k+2} & = i X_k Y_{k+1} 
\end{align}
\end{subequations}
for $k \in \{1,\ldots,n-1\}$. Although each $c_i$ operator can correspond to Pauli operators acting on any number of qubits, in these particular quadratic combinations the $Z$ gates present in \eq{JWc} cancel out, and we are left with two-qubit Hamiltonians between nearest neighbors. Also recall from \sec{introduction_b} that Hamiltonians quadratic in the fermionic operators describe the evolution of noninteracting fermions, as we can always apply a trivial transformation to take such a Hamiltonian into a sum of terms acting each on a single fermionic mode.

Now consider the unitary operators generated by these quadratic Hamiltonians---since we know these have to be nearest-neighbor two-qubit gates, let us first assume that there are only two qubits and drop the qubit labels. Then, by explicit exponentiation, we obtain the following unitary matrices:

\begin{subequations} \label{eq:JWtoMG}
\begin{align}
e^{i (a X \otimes X + b Y \otimes Y)} & = \left(\begin{smallmatrix}
\cos (a-b)  & 0 & 0 & i \sin{(a-b)} \\
0 & \cos (a+b)  & i \sin{(a+b)} & 0 \\
0 & i \sin{(a+b)} & \cos{(a+b)}  & 0 \\
i \sin{(a-b)} & 0 & 0 & \cos (a-b)
\end{smallmatrix}\right) = G[R_x(a-b),R_x(a+b)],  \qquad \\
e^{i (c X \otimes Y + d Y \otimes X)} & =  \left(\begin{smallmatrix}
\cos (c+d)  & 0 & 0 & \sin{(c+d)} \\
0 & \cos (c-d)  & -\sin{(c-d)} & 0 \\
0 & \sin{(c-d)} & \cos{(c-d)}  & 0 \\
-\sin{(c+d)} & 0 & 0 & \cos (c+d)
\end{smallmatrix}\right)  = G[R_y(c-d),R_y(-c-d)],  \\
e^{i (f Z \otimes I + g I \otimes Z)} & =  \left(\begin{smallmatrix}
e^{i(f+g)}  & 0 & 0 & 0 \\
0 & e^{i(f-g)} & 0 & 0 \\
0 & 0 & e^{-i(f-g)} & 0 \\
0 & 0 & 0 & e^{-i(f+g)}
\end{smallmatrix}\right) = G[R_z(e+f),R_z(e-f)].
\end{align}
\end{subequations}

All matrices in Eq.~\hyperref[eq:JWtoMG]{(\ref*{eq:JWtoMG})} are of the form $G(A,B)$ with det$A$=det$B$---that is, they are matchgates. Not only that but, since $G(A,B)G(C,D)=G(AC,BD)$, by a specific choice of the parameters and by composing gates from \eq{JWtoMG} we can obtain any single-qubit gates $A,B \in \SU(2)$ in $G(A,B)$. This hand-waving argument shows that the group generated by the six quadratic Hamiltonians of \eqs{JWquad1}{JWquad2} is exactly the group of matchgates, up to some irrelevant global phases.

We saw that nearest-neighbor matchgates correspond to Hamiltonians quadratic in the $c_i$ operators, and these in turn correspond to noninteracting fermions. We can now show that a poly-sized circuit of such gates is classically simulable. Suppose that the circuit being simulated has an initial product state input $\ket{\psi}=\ket{\psi_1} \ket{\psi_2}\ldots\ket{\psi_n}$, a sequence of nearest-neighbor matchgates, and a final single-qubit measurement in the computational basis. To simulate the final measurement of qubit $k$, it suffices to calculate the expectation value $\langle Z_k \rangle$ = $-i \langle c_{2k-1} c_{2k} \rangle = -i \bra{\psi} U^{\dagger} c_{2k-1} c_{2k} U \ket{\psi}$, where $U$ is the unitary corresponding to the action of the matchgate circuit. This suffices because, if $p_i$ is the probability of qubit $k$ being measured in state $\ket{i}$, we have that $\langle Z_k \rangle = p_0 - p_1 = 2p_0 -1$. This is a case of strong simulation, as defined in \sec{introsimul}, although with the restriction that the output of the circuit being simulated consists of one single-qubit measurement. For a more general set of output measurements, we would also need to compute conditional probabilities, as done e.g.\ in \cite{Terhal2002}. However, here we follow along the lines of \cite{Jozsa2008b}, considering only a single-qubit output, with the justification that this suffices, for example, to subsume any decision problem solvable by a matchgate circuit. So, to show that $\langle Z_k \rangle$ can be computed efficiently, we begin with the following Theorem (cf.\ \cite{Knill2001a,Terhal2002,Jozsa2008b}, as proved in \cite{Jozsa2008b}):

\begin{theorem} \label{thm:quadratic}
Let H be any Hamiltonian given by a sum of terms quadratic in the $c_k$ operators and let $U = e^{iH}$ be the corresponding unitary. Then, for all $i \in \{1,\ldots,2n\}$,
\begin{equation}
U^{\dagger} c_i U = \sum_{j=1}^{2n} R_{i,j} c_j,
\end{equation}
where $R \in\SO(2n)$, and we obtain all of $\SO(2n)$ this way.
\end{theorem}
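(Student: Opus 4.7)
The plan is to derive the claim in three steps: put every quadratic $H$ into a canonical antisymmetric form, extract the induced linear action on the $c_l$'s from the Heisenberg equation, and then read off orthogonality and surjectivity from the structure of the generator. Concretely, I would first write
\begin{equation*}
H = \tfrac{i}{2}\sum_{j,k=1}^{2n} h_{jk}\, c_j c_k
\end{equation*}
with $h$ a real antisymmetric $2n\times 2n$ matrix. By \eq{commutc} the diagonal contributions $h_{jj}c_j^2 = h_{jj} I$ are scalar and can be absorbed into a global phase of $U$, so only the off-diagonal part matters; hermiticity of $H$ then forces $h$ to be real and antisymmetric, using $(c_j c_k)^\dagger = c_k c_j = -c_j c_k$ for $j\neq k$.

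Next, I would compute $[c_l,H]$ via the one-line identity $[c_l,c_j c_k] = 2\delta_{lj} c_k - 2\delta_{lk} c_j$, which follows directly from \eq{commutc}. Substituting into the canonical form and using antisymmetry of $h$ collapses the double sum to $[c_l,H] = 2i\sum_k h_{lk}\, c_k$. Defining the Heisenberg-evolved operators $c_l(t) := e^{-iHt}\, c_l\, e^{iHt}$, this yields the closed linear ODE $\dot c_l(t) = i[c_l(t),H] = -2\sum_k h_{lk}\, c_k(t)$ on the real span of $\{c_1,\ldots,c_{2n}\}$. Integrating to $t=1$ gives $U^\dagger c_l U = \sum_k R_{lk}\, c_k$ with $R := e^{-2h}$.

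Finally, since $h$ is real antisymmetric, one has $R^T R = e^{-2h^T} e^{-2h} = e^{2h} e^{-2h} = I$ and $\det R = \exp(\mathrm{tr}(-2h)) = 1$, so $R \in \SO(2n)$. For surjectivity, I would invoke the standard facts that $\SO(2n)$ is a connected compact Lie group with Lie algebra $\mathfrak{so}(2n)$ equal to the real antisymmetric matrices, and that its exponential map onto $\SO(2n)$ is surjective; given any target $R \in \SO(2n)$, writing $R = e^{-2h}$ for antisymmetric $h$ and plugging this $h$ back into the canonical form produces an $H$ realizing it. I do not expect a serious obstacle here---the theorem is essentially the Lie-algebra isomorphism between the quadratic subspace of the Clifford algebra generated by the $c_i$'s and $\mathfrak{so}(2n)$---but care is needed in bookkeeping the signs and factors of two in the passage from commutators to $R$, and in verifying that the image of $h\mapsto e^{-2h}$ lies in the identity component of $\mathrm{O}(2n)$ rather than the other connected component.
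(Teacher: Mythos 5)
Your proposal is correct and follows essentially the same route as the paper: write $H$ in canonical form with a real antisymmetric coefficient matrix $h$, use the Heisenberg equation of motion to get a linear ODE on the span of the $c_i$, and integrate to obtain $R$ as an exponential of (a multiple of) $h$. The only differences are immaterial normalization conventions and the fact that you make explicit two points the paper leaves implicit, namely the verification that $R^{T}R=I$ with $\det R=1$ and the surjectivity of the exponential map onto $\SO(2n)$.
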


\begin{proof}
First, let us write
\begin{equation}
H = i \sum_{i\neq k=1}^{2n} h_{ij} c_i c_j,
\end{equation}
where $h_{ij}$ is a real antisymmetric matrix, since $H$ is hermitian and the operators $c_i,c_j$ anti-commute if $i \neq j$. Also notice that terms with $i=j$ can be omitted as $c_i$ squares to the identity. We now go explicitly to the Heisenberg representation to write $c_i$ as $c_{i}(0)$, $c_{i}(t)=U(t) c_i(0) U(t)^{\dagger}$ and, consequently,
\begin{equation*}
\dot{c}_{i}(t)= i \left [H, c_{i}(t) \right ] = \sum_j 4 h_{ij} c_{j}(t).
\end{equation*}
The solution to this differential equation is simply
\begin{equation*}
c_{i}(t)= \sum_j R_{ij}(t) c_{j}(0),
\end{equation*}
where $R=e^{4ht}$. By setting $t=1$, we obtain a direct expression for $R$ in terms of the Hamiltonian which generates the unitary $U$. Also notice that, in principle, $h$ can be any antisymmetric matrix, and so $R$ can be any matrix in $\SO(2n)$.
\end{proof}

Since the Hamiltonians that generate nearest-neighbor matchgates are quadratic in the $c_i$ operators, the expectation value for $Z_k$ after the application of the circuit is
\begin{align} \label{eq:expectedZ}
\langle Z_k \rangle & = -i \bra{\psi} U^{\dagger} c_{2k-1} c_{2k} U \ket{\psi} \\ \notag
& = -i \sum_{a,b=1}^{n} R_{2k-1, a} R_{2k, b} \bra{\psi} c_a c_b \ket{\psi}.
\end{align}

If $t$ is the number of matchgates in the circuit, $R \in \SO(2n)$ can be calculated in $\poly(n,t)$ time as the product of the $\SO(2n)$ rotations corresponding to each matchgate. Also notice that the sum in \eq{expectedZ} has only $O(n^2)$ terms. Finally, note that $\ket{\psi}$ is a product state, and any monomial $c_a c_b$ is a tensor product of Pauli matrices, as is clear from \eq{JWc}. Thus, each term in the sum factors as a product of the form $\prod_{i=1}^{n} \bra{\psi_i} \sigma_i \ket{\psi_i}$, which involves $n$ efficiently computable terms. Since $\langle Z_k \rangle$  is a sum of a polynomial number of efficiently computable terms, it can be computed efficiently, which completes the proof of classical simulability of nearest-neighbor matchgates. We suggested, in \chap{introduction}, that the reason why noninteracting fermions are classically simulable is that they evolve according to determinants (cf.\ \lem{fermdet}), which marks a strong distinction with their bosonic counterparts. This did not appear explicitly in the above proof, since we opted for a presentation of the result that is more convenient for future discussions, but this simulability can in fact be traced to \lem{fermdet}, see e.g.\ \cite{Terhal2002}.

Before I finish this section, I would like to detail two ways in which this simulability result can be generalized, both discussed in \cite{Jozsa2008b}. First, the operators found in \eqs{JWquad1}{JWquad2} are the only quadratic operators corresponding to one- or two-qubit gates, but there are other quadratic operators which translate to gates acting on more than two qubits. For example, the operator $c_2 c_5$ translates to $X_1 Z_2 X_3$---in fact, it is easy to see that all such quadratic operators\footnote{In \cite{Jozsa2008b} the authors denote the unitaries generated by these Hamiltonians as Gaussian operators.} have the form $A_i Z_{i+1} \ldots Z_{j-1} B_{j}$ with $A,B \in \{ X, Y \}$ and $1 \leq i < j \leq n$. Thus these can correspond to gates acting nontrivially on many qubits at a time. However, there is a sense in which these gates do not add anything qualitatively new to circuits of matchgates, as they can be implemented by a circuit of nearest-neighbor matchgates. To see that, consider the $G(Z,X)$ matchgate, which acts as a $\swap$ gate followed by a $\cz$ gate. We dub this entangling gate the fermionic $\swap$ (or $\fswap$), and it will play a central role in our results in \sec{fermnew_b}. For now, it suffices to state that this gate acts on the matchgate Hamiltonians as e.g.,
\begin{equation} \label{eq:fswaponXX}
\fs_{23} (X_1 X_2) \fs_{23} = X_1 Z_2 X_3,
\end{equation}
where $\fs$ is a shorthand for the $\fswap$. This can be easily generalized for other matchgates and qubit pairs, and thus the operators $A_i Z_{i+1} \ldots Z_{j-1} B_{j}$ mentioned earlier can be implemented using regular nearest-neighbor matchgates and repeated use of the $\fswap$ gate.

The second generalization of the result concerns a normal form for matchgate circuits. Recall from \thm{quadratic} that each matchgate in the circuit effectively corresponds to a $\SO(2n)$ rotation $r$ in the space of the $c_i$ operators, and that the action of the whole circuit is obtained as some rotation $R$ which is the product of these individual rotations. We can now decompose $R$ into a product of $O(n^2)$ rotations which act nontrivially only on two dimensions (e.g.\ the space spanned by $c_i$ and $c_j$). An algorithm for this decomposition can be found in \cite{Hoffman1972}---it follows the same lines of the decomposition of general $n$-qubit unitaries into $2$-qubit gates \cite{LivroNielsen}, or of a general linear-optical transformation into two-mode interferometers \cite{Reck1994} which we will use in \chap{bosonnew}.

More specifically, let $r_{i,j}$ denote the matrix appearing in the decomposition of $R$ acting nontrivially only on coordinates $i$ and $j$. That is, we have

\begin{align*}
R = r_{2n-1,2n} \, r_{2n-2,2n} \, \ldots \, r_{2,2n} \, r_{1,2n} & \\ 
		 \times \, r_{2n-2,2n-2} \, \ldots \, r_{2,2n-1} \, r_{1,2n-1}&\\
		 \ldots &\\
							\times \, r_{2,3} \, r_{1,3} & \\
							\times \, r_{1,2} &.
\end{align*}
Let $r_{i,j} = e^{h_{ij}}$, where $h_{ij}$ is an antisymmetric matrix with nonzero values only on rows and columns with coordinates $i$ and $j$. Recall then from \thm{quadratic} that this rotation on the space of the $c_i$'s corresponds to a unitary $u_{i,j} = e^{i \theta_{ij} c_i c_j}$ on the space of the qubits, for some parameter $\theta_{ij}$. This naturally defines a unitary matrix $U$ corresponding to the rotation $R$ and with the structure:

\begin{align}\label {eq:normalform} 
U = u_{2n-1,2n} \, u_{2n-2,2n} \, \ldots \, u_{2,2n} \, u_{1,2n} & \notag \\ 
		 \times \, u_{2n-2,2n-2} \, \ldots \, u_{2,2n-1} \, u_{1,2n-1}& \notag\\
		 \ldots & \notag\\
							\times \, u_{2,3} \, u_{1,3} & \notag \\
							\times \, u_{1,2} &.
\end{align}
By construction, this unitary matrix acts on the $c_i$ operators in the same way as the unitary $V$ defined by the matchgate circuit. Hence the actions of $U$ and $V$ must be the same on any monomial $c_{i_1} \ldots c_{i_k}$. Since these monomials span the space of all $n \times n$ matrices, $U$ and $V$ must be the same matrix up to some irrelevant global phase. We thus say that $U$ is a normal form for $V$, in the sense that it is the minimal form which only needs $O(n^2)$ gates [or $O(n^3)$, if we want to use only nearest-neighbor matchgates and the trick of \eq{fswaponXX}], regardless on the number of matchgates in the original circuit. In \fig{normalform} we show what such a normal form would look like for a 4-qubit circuit\footnote{There is one omitted step in obtaining the circuit of \fig{normalform}: each rectangle actually involves two operations from \eq{normalform} that have been brought together by exploiting the commutation of some of the $u$'s, such as $u_{1,4}$ and $u_{2,3}$}. Remark that the labels in \eq{normalform} do not refer to the qubits, but to the $c_i$ operators, thus both labels $2j-1$ and $2j$ refer to gates acting on qubit $j$.

\begin{figure}
\capstart
\centering
\includegraphics[width=0.7\textwidth]{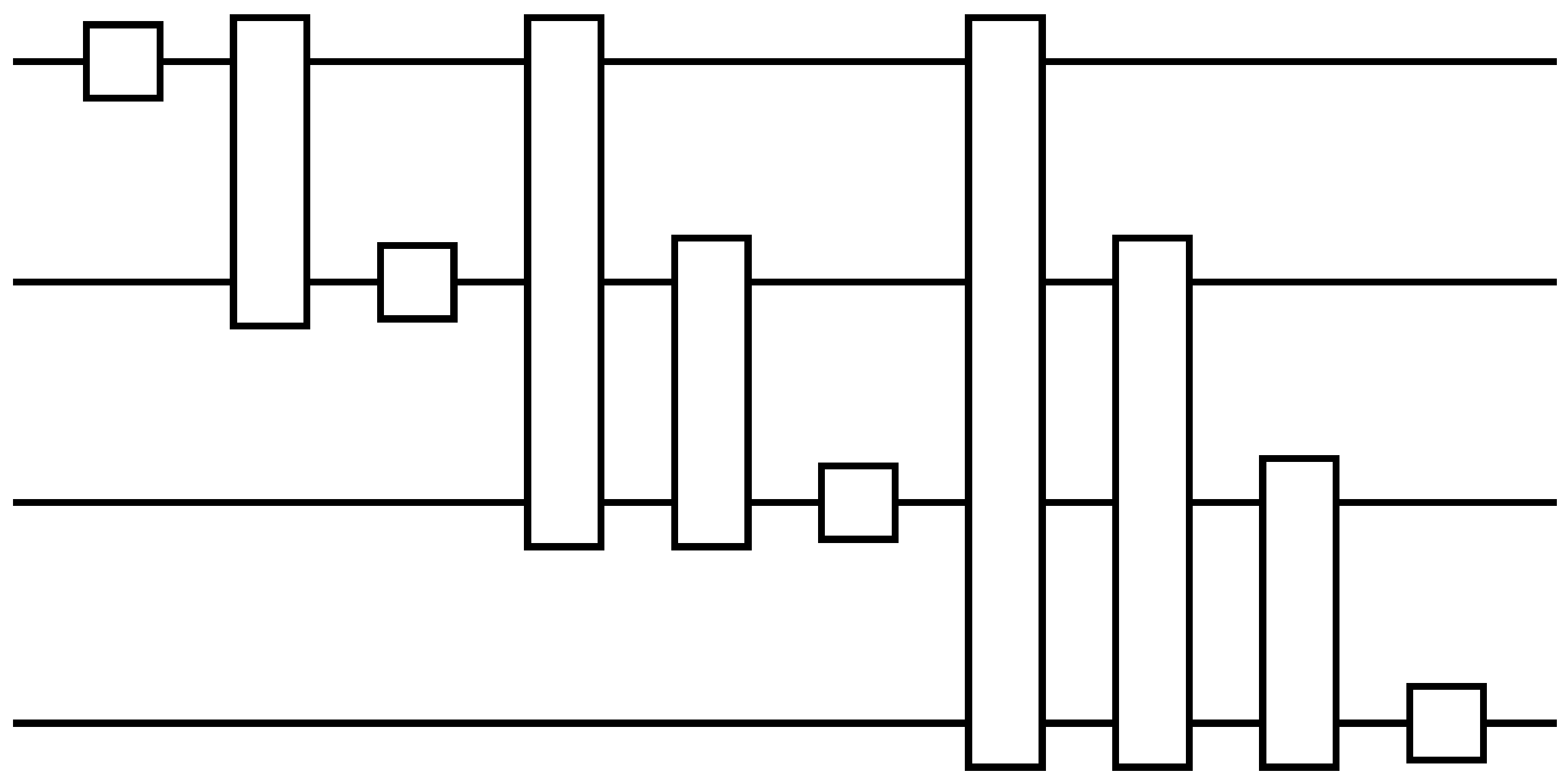}  
\caption[A 4-qubit matchgate circuit in normal form]{A 4-qubit matchgate circuit in normal form. We represent each gate by a simple rectangle, as their expressions and labels can be easily worked out from \eq{normalform}. Squares represent single-qubit $Z$ rotations, and rectangles spanning many lines represent unitaries acting nontrivially on all these qubits, but that can be implemented using a ladder of $\fswap$ gates and a nearest-neighbor matchgate [cf.\ \eq{fswaponXX}].}	
\label{fig:normalform}
\end{figure}

\section{Positive universality results for matchgates} \label{sec:fermreview_b}

In this section, I review the positive results for universal quantum computation with matchgates, in contrast to the previous section. Clearly some restriction must be relaxed, or some extra ingredient must be added, in order to obtain this universality. For instance, it was shown that nondestructive two-electron charge measurements---that is, in the qubit picture, a measurement that only distinguishes the numbers of 1's in the two-qubit state---enables universal computation with fermionic linear optics\footnote{Evidently, such a measurement cannot be implemented in the qubit picture with nearest-neighbor matchgates and computational basis measurements.}. Here, we focus on specific quantum gates and/or changes in the connectivity restrictions that can uplift matchgates to universal quantum computation. I begin by showing, along the lines of \cite{Jozsa2008b}, how matchgates become universal when complemented with the $\swap$ gate. 

First note that matchgates are parity-preserving operations. This means that they do not connect $n$-qubit states with different overall parity and so it is not possible, for instance, to even approximate an arbitrary two-qubit gate without using some ancilla qubits. One way to sidestep that issue is to consider only encoded universality, as discussed in \sec{introuniv}---to my knowledge, all universality results for matchgates adopt some similar definition. Consider then an encoding of a logical qubit into two physical qubits, given by
\begin{align}
\ket{0}_L = \ket{00},  \notag \\
\ket{1}_L = \ket{11}. \label{eq:evenencoding}
\end{align}

Clearly an encoded single-qubit gate $A_L$ can be implemented simply by applying the matchgate $G(A,A)$ to the pair of physical qubits that encode the logical qubit (see \sfig{JozsaDem}{a}).

To obtain a universal set we also require an entangling 2-qubit gate, such as the $\cz$ gate. Consider two adjacent logical qubits encoded in physical pairs of qubits labeled $\{1,2\}$ and $\{3,4\}$, respectively. Then a $\cz_L$ between the logical qubits can be implemented simply by a $\cz$ between the neighboring qubits 2 and 3. Note that this is not a matchgate, as this would contradict the simulability results of the previous section. Therefore the entangling gate must be implemented with the aid of some non-matchgate operation. One such example is the sequence (see \sfig{JozsaDem}{b})
\begin{equation} \label{eq:CZL}
\cz = \fswap \cdot \swap.
\end{equation}
Recall that $\swap = G(I,X)$ is not a matchgate, while the closely related gate $\fswap=G(Z,X)$ is a matchgate that swaps the two qubits and induces a minus sign when both are in the $\ket{1}$ state (we already used this gate with another purpose in the previous section). In \eq{CZL} we can interpret the $\swap$ as undoing an undesired interchange of the qubits induced by the $\fswap$ during the entangling operation.

\begin{figure}[t]
\capstart
\centering
\subfloat[]{\centering \raisebox{0.45in}{\includegraphics[width=0.3\textwidth]{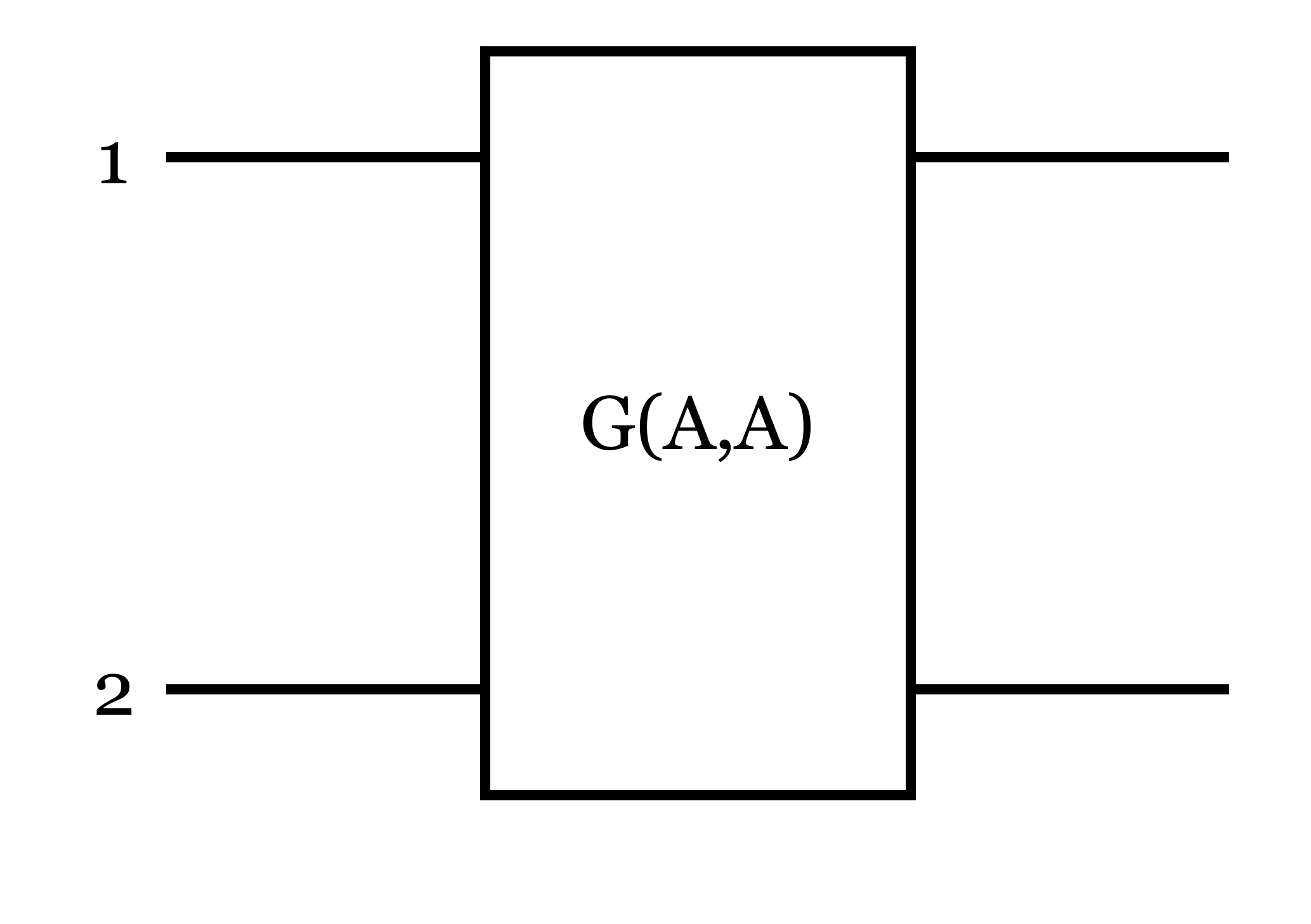}}} \qquad
\subfloat[]{\centering \includegraphics[width=0.5\textwidth]{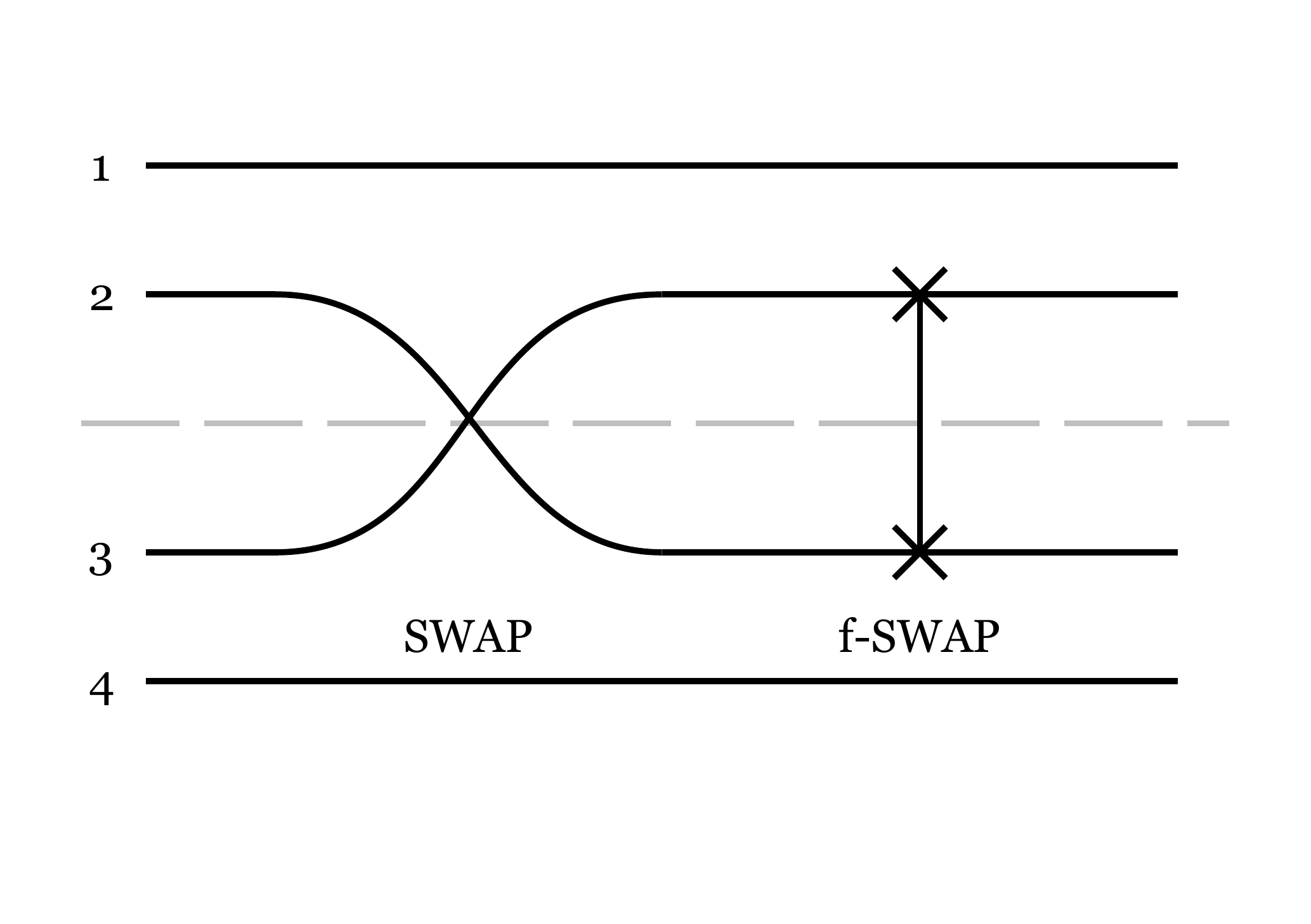}} \\
\caption[Universal set of gates built from matchgates and the $\swap$]{Universal set of gates built from matchgates and the $\swap$: (a) a logical single-qubit $A_L$, and (b) an entangling logical $\cz$ gate between encoded pairs $\{1,2\}$ and $\{3,4\}$. Note that the symbol we used for the $\fswap$ is occasionally used for the $\swap$ gate in the literature, but we opted to use a more intuitive graphical representation for the $\swap$.}
\label{fig:JozsaDem}
\end{figure} 

As we have obtained arbitrary (encoded) single-qubit gates and a two-qubit gate, we thus conclude that matchgates, when supplemented by the $\swap$, form a universal set. Furthermore, the $\swap$ is only applied on disjoint sets of physical qubits (i.e., $\{2i, 2i+1\}$ for $1 \leq i \leq n$, where $n$ is the total number of logical qubits), so no qubit is swapped more than one position away from its original place \cite{Jozsa2008b}. We can then commute all the $\swap$ gates to the end of the circuit, at the cost of allowing some of the matchgates to act on more distant neighbors. Since no qubit is moved farther than one position away, this means that the $\swap$ gate in this construction can be replaced by allowing matchgates to also act on second and third neighbors. In fact, matchgates between only nearest and next-nearest neighbors are already universal as shown in \cite{Jozsa2008b} using an alternative encoding of one logical qubit into $4$ physical qubits.

The universality result, as described, was first put forth in the context of matchgates by Jozsa and Miyake in \cite{Jozsa2008b}. However, it was already known since the work of \cite{Kempe2001b} in the context of the XY (or anisotropic Heisenberg) interaction. I now make a brief digression to introduce the XY interaction, as it provides a proper subset of matchgates that arises more naturally in real-world implementations of quantum computing, while still retaining most of the computational properties of the whole set, as we will see in \chap{fermnew}.

\subsection{The XY interaction} \label{sec:ferm_reviewXY}

Consider the following Hamiltonians

\begin{subequations} \label{eq:HeisenHamil}
\begin{align}
H_A & = X \otimes X + Y \otimes Y, \\
H_I & = X \otimes X + Y \otimes Y + Z \otimes Z.
\end{align}
\end{subequations}

These are idealized models of interactions present in several proposed solid state implementations of quantum computing. $H_A$ is known as the XY interaction, or Heisenberg anisotropic interaction, and arises in systems such as quantum dots \cite{Imamoglu1999, Quiroga1999}, atoms in cavities \cite{Zheng2000}, and quantum Hall systems \cite{Mozyrsky2001}. $H_I$ is known as the exchange or isotropic Heisenberg interaction \cite{DiVincenzo2000a}. Notice that $H_A$ is spanned by Hamiltonians which generate matchgates [cf.\ \eq{JWquad2}], while $H_I$ is not, and this reflects directly on their computational power: unitaries generated by $H_I$ are universal for quantum computation acting on nearest-neighboring qubits \cite{DiVincenzo2000a,Kempe2001a}, whereas the XY interaction is simulable on nearest neighbors (as we know it should be from \sec{fermreview_a}), but universal when acting also on next-nearest neighbors. This can be shown using an encoding of a logical qubit into $3$ physical qubits \cite{Kempe2001b,Kempe2002}, but we do not review this proof here as it is not qualitatively different from that of general matchgates in the previous section, and our results of \chap{fermnew} will subsume it. For now we just point out that, although the XY interaction generates a proper subset of matchgates (acting nontrivially only on the odd parity subspace of the two-qubit state), it bridges the gap between (sub-)classical and quantum computational power under curiously similar conditions as matchgates.

To bring the idealized models closer to real-world implementations, work has also been done on generalizations of the XY interaction, where other spurious anisotropic terms such as $X\otimes Y$ are present \cite{Vala2002}, arising for example from surface effects or spin-orbit coupling \cite{Dzyaloshinsky1958,Moriya1960}, or where there is the presence of a background field $\sum_i{\epsilon_i Z_i}$ \cite{Lidar2001}. In both situations it was found that such effects can be canceled out by clever sequences of the allowed operations. One such example is the technique known as encoded selective recoupling, used in \cite{Lidar2001}. Intuitively, this technique works by finding one controllable Hamiltonian $A$ such that, for some background Hamiltonian $B$, we have exp$(-i A \tau)$ exp$(iB)$ exp$(i A \tau)=$ exp$(-i B)$\footnote{Note that this is true whenever $A$ and $B$ anti-commute, $A$ squares to the identity and $\tau=\pi/2$, thus these are easy criteria to apply for Pauli matrices.}. Then, by alternating pulses of $A$, one can cancel out the effect of $B$ whenever necessary. A more detailed description of such results is beyond our purposes, so we refer to \cite{Lidar2001,Vala2002} and references therein.

\section{Other computational power results} \label{sec:fermreview_c}

In the last section of this chapter, I would like to give a brief description of other computational results concerning matchgates. 

The first of these results shows an equivalence between matchgate circuits on $n$ qubits and general quantum circuits on $O(\log n)$ qubits \cite{Jozsa2010}. More specifically, Jozsa and collaborators show the following

\begin{theorem} \label{thm:logspace}
The following equivalence between matchgate circuits and general quantum circuits holds:
\begin{itemize}
\item Given a circuit MG of nearest-neighbor matchgates with an $n$-qubit input $\ket{x_1 \ldots x_n}$, $N$ gates, and final measurement on qubit $k$, there exists an equivalent quantum circuit QC with an input of $\lceil \log n \rceil + 3$ qubits initialized in the 0 state, composed of $O(N \log n)$ gates, and with final measurement on the first qubit. Moreover, the encoding of the circuit QC can be computed from the encoding of the matchgate circuit MG by means of a classical space $O(\log n)$ computation.
\item Conversely, given any quantum circuit QC with an $m$-qubit input $\ket{y_1 \ldots y_m}$, $M$ gates, and final measurement on the first qubit, there exists an equivalent matchgate circuit MG with an input of $2^m-1$ qubits initialized in the $0$ state, composed of $O(M 2^{2m})$ gates and with final measurement on the first qubit. Moreover, the encoding of matchgate circuit MG can be computed from the encoding of the circuit QC by means of a classical space $O(m)$ computation.
\end{itemize}
\end{theorem}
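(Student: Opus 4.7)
The plan is to prove each direction of the equivalence by routing through the $\SO(2n)$ action on the Jordan-Wigner operators supplied by \thm{quadratic}, and through the normal-form decomposition displayed in \eq{normalform}. For the direction matchgates $\to$ small quantum circuit, I would exploit the fact that each of the $N$ matchgates acts on the $2n$-dimensional space of $c_i$ operators as an elementary two-dimensional Givens rotation supported on at most four coordinates. I would introduce a ``coordinate register'' of $\lceil \log 2n \rceil$ qubits to index the $c_i$'s, together with a constant number of workspace qubits, and compile each matchgate into a multiply-controlled elementary rotation on this register; standard constructions implement such a controlled rotation in $O(\log n)$ elementary gates, yielding the claimed $O(N \log n)$ total gate count. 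To reproduce the single-qubit measurement on qubit $k$, I would use the identity $\langle Z_k \rangle = -i \sum_{a,b} R_{2k-1,a} R_{2k,b} \bra{\psi} c_a c_b \ket{\psi}$ derived in \sec{fermreview_a}. Since $\ket{\psi}$ is a computational-basis product state, the matrix elements $\bra{\psi} c_a c_b \ket{\psi}$ take values $0$, $\pm 1$, or $\pm i$ according to a log-space-computable rule, so the sum is an inner product between two compiled $O(\log n)$-qubit states that can be estimated by a Hadamard-test-style subroutine.

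For the reverse direction, I would use the standard matchgate encoding of $m$-qubit states into a suitable sector (for instance, the vacuum-plus-single-fermion sector) of a matchgate system on $n = 2^m - 1$ qubits, so that each target unitary $U \in \SU(2^m)$ embeds into an $\SO$-element of the Jordan-Wigner rotation group. Decomposing this $\SO$-element via the normal form of \eq{normalform} yields $O(n^2) = O(2^{2m})$ elementary generators of the form $\exp(i \theta c_i c_j)$; each such generator is implementable by conjugating a single nearest-neighbor matchgate by a ladder of $\fswap$ gates, exactly as in the construction following \eq{fswaponXX}. Applying this compilation to each of the $M$ gates of the input quantum circuit, and emitting the output one Givens factor at a time, gives an $O(m)$-classical-space construction of a matchgate circuit with the claimed $O(M \cdot 2^{2m})$ gate count.

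The main obstacle I anticipate is engineering the readout in the first direction so that the probability of obtaining the ``$1$'' outcome on the designated output qubit of the compiled quantum circuit faithfully reproduces the magnitude \emph{and sign} of $\langle Z_k \rangle$, rather than merely its absolute value; this likely requires a carefully instrumented Hadamard-test gadget with an extra controlled-$\SO$ application, and one must moreover check that preparing the weighted superposition over coordinate pairs $(a,b)$ weighted by $\bra{\psi} c_a c_b \ket{\psi}$ can itself be performed in $O(\log n)$ classical space. A secondary technical nuisance, for the reverse direction, is verifying that the specific matchgate encoding (which must send the all-zero input on $2^m - 1$ qubits to the image of the $m$-qubit input $\ket{y_1 \ldots y_m}$, and allow a final single-qubit computational-basis measurement to reproduce the one-qubit output of the original QC) can be implemented with only a constant number of auxiliary matchgate layers, so that it does not disturb the $O(M \cdot 2^{2m})$ gate count or the $O(m)$ space bound on the compiler.
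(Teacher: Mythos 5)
First, note that the thesis does not prove \thm{logspace} at all: it is quoted as review material and the proof is deferred to \cite{Jozsa2010}, with only the one-line sketch that the rotation $R\in\SO(2n)$ of \thm{quadratic} is reinterpreted as a unitary on $O(\log n)$ qubits. Your first direction follows exactly that route and is essentially the cited argument: each matchgate contributes an $\SO(2n)$ factor supported on four coordinates (an $\SO(4)$ block, not a single two-dimensional Givens rotation as you write, though this only changes constants), which compiles into a constant number of two-level unitaries on a $\lceil\log 2n\rceil$-qubit register at $O(\log n)$ gates each, with a Hadamard-test gadget recovering the signed quantity $\langle Z_k\rangle$ (this is what the three extra qubits are for). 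One simplification you missed, and which dissolves the obstacle you flag about preparing a superposition over pairs $(a,b)$: for the computational-basis input $\ket{x_1\ldots x_n}$ in the statement, $\bra{x}c_a c_b\ket{x}$ vanishes unless $\{a,b\}=\{2j-1,2j\}$, where it equals $i(-1)^{x_j}$, so $\langle Z_k\rangle=\sum_j (-1)^{x_j}R_{2k-1,2j-1}R_{2k,2j}$ is a single matrix element of a product of three log-space-constructible orthogonal matrices, directly amenable to the Hadamard test.

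The reverse direction contains a genuine gap. The proposed encoding of the $m$-qubit Hilbert space into the ``vacuum-plus-single-fermion sector'' cannot work: matchgates preserve fermionic parity, so the vacuum (even sector) never coherently mixes with one-particle states (odd sector), and even for number-preserving matchgates the induced action on that $2^m$-dimensional span is block-diagonal, at best $\mathrm{U}(1)\oplus \mathrm{U}(2^m-1)$ --- never a generic element of $\SU(2^m)$. In the cited proof the embedding lives where your forward direction lives: in the $2n$-dimensional space of the $c_i$ operators. One realifies the $2^m$-dimensional state space of QC inside $\mathbb{R}^{2n}$, implements each gate of QC as an element of $\SO(2n)$ via \thm{quadratic}, and compiles the input $\ket{y_1\ldots y_m}$ and the $Z_1$ readout into additional rotations so that $\langle Z_k\rangle$ of MG on the all-zeros input reproduces the acceptance probability of QC; engineering this readout is the heart of the argument, not the ``secondary nuisance'' you defer. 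Your gate counting also does not deliver the stated bound: decomposing each gate's $\SO$-element by the generic normal form \eq{normalform} gives $O(n^2)$ generators $\exp(i\theta c_i c_j)$, and each generator with distant $i,j$ costs an $O(n)$-long $\fswap$ ladder of nearest-neighbor matchgates [cf.\ \eq{fswaponXX} and the $O(n^3)$ remark following \eq{normalform}], i.e.\ $O(M2^{3m})$ gates overall rather than $O(M2^{2m})$. To reach the claimed count one must exploit that a realified one- or two-qubit gate is block diagonal with $O(2^m)$ constant-size blocks, hence only $O(2^m)$ two-coordinate rotations per gate, each paid for with an $O(2^m)$ ladder; your instinct to emit the factors gate-by-gate is, however, the right way to respect the $O(m)$ space bound, since forming the whole-circuit normal form would require storing a $2n\times 2n$ matrix.
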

 
A proof for this Theorem can be found in \cite{Jozsa2010}. In this Theorem, two circuits are equivalent if their output measurements have the same probability distribution. Also, the feature that the encoding of the circuits be computable in a classical computer with limited space guarantees that the computational power is due to the circuits themselves, not ``hidden'' in their classical description. 

In order to use \thm{logspace} to make claims about actual complexity classes, there are some technicalities involved in defining families of computational tasks that can be solved by such circuits. The complete formal treatment can be found in \cite{Jozsa2010}, we will only concern ourselves with the significance of this result. In essence, \thm{logspace} states that a polynomially-sized circuit of nearest-neighbor matchgates can be simulated by a logarithmically-sized quantum circuit---this is done by directly interpreting the rotation $R \in \SO(2n)$ of \thm{quadratic} as a unitary matrix describing some circuit and, since this unitary matrix has only dimensions $2n \times 2n$, it corresponds to a transformation on a space of $O(\log n)$ qubits. Conversely \thm{logspace} states that an arbitrary quantum circuit on $O(\log n)$ qubits can be simulated by a circuit of nearest-neighbor matchgates on $n$ qubits. Thus, these results together not only provide an equivalence between these two types of restricted quantum circuits, but suggest that real-world physical systems where the evolution is described by matchgates could be simulated by very small quantum computers. In fact, building on this result other works have been published describing compressed quantum simulations of properties of the 1D Ising \cite{Kraus2011} and 1D XY \cite{Boyajian2013} models.

Finally, I would like to mention in passing that the computational power of matchgates has also been related to a \emph{classical} complexity class. In \cite{Nest2011a}, Van den Nest showed that the class of Boolean functions computable by circuits of nearest-neighbor matchgates coincides to the so-called linear threshold gates\footnote{A Boolean function $f$ on $n$ bits is called a linear threshold gate if there exists an $n$-dimensional real vector $w$ and a real constant $\theta$, such that $f(x)$ equals 0 if and only if $w^T+\theta$ is strictly positive \cite{Nest2011a}.}. The author also shows that, if the computation is to succeed with probability greater than $3/4$, then the only computable functions are trivial in the sense that they only depend on one bit of the input. Note that, while a probabilistic computation that succeeds with probability greater that $1/2 + \epsilon$, for constant $\epsilon$, can usually be amplified to probability arbitrarily close to 1, this requires the use of the majority vote function (cf. \sec{introduction_d}). While the majority vote is in principle computable by a matchgate circuit, there is a caveat: to be used for this amplification, the majority vote itself must succeed with high probability, a condition unsatisfied in the case of matchgates \cite{Nest2011a}. For a complete discussion of these issues, of linear threshold gates and a comparison between the results of Van den Nest and of Jozsa \textit{et al.}\ of the beginning of the section, see \cite{Nest2011a} and references therein.
\newpage
\chapter{Review: Linear optics and BosonSampling} \label{chapter:bosonreview}

In this chapter, I will review several known results regarding the computational power of linear optics. In \sec{boson_overview} I give a brief historical overview the development of linear optical quantum computing, from its inception in the seminal KLM paper \cite{Knill2001b}, to the current state-of-the-art and the BosonSampling model \cite{Aaronson2013a}. In subsequent sections I describe some of these results in more detail, particularly those important for \chap{bosonnew}. Unfortunately, a complete account of the BosonSampling model is beyond the scope of this thesis, as there is an unbounded amount of technical details one could choose to include, both from the complexity-theoretical and the experimental aspects of the model. Our focus is precisely on this interface between theory and experiment, and in \chap{bosonnew} I report new results of both types. In view of this, I will restrict the level of detail of this chapter to only that necessary for our purposes, at the risk of omitting important technical discussions which can, nonetheless, be found in the provided references.

\section{Historical background} \label{sec:boson_overview}

Photons are considered excellent information carriers---they hardly interact with their environment, and thus loss of the information through decoherence is a relatively minor problem in optical devices, in contrast to other physical implementations of quantum information processing. As such, photons are among the best candidates for quantum \emph{communication}, given that they can carry information over dozens or hundreds of kilometers, either of free space \cite{Yin2012, Ma2012} or optical fibers \cite{Jouguet2013}. However, for more general quantum information processing tasks, their main advantage also becomes a serious hindrance: not only do they hardly interact with their environment, but they also hardly interact with \emph{each other}. Until 2001, scalable linear-optical quantum computing was regarded as unlikely due to precisely this reason, since the intuition was that strongly nonlinear media would be required for the implementation of two-qubit gates.

In 2001, Knill, Laflamme and Milburn (KLM) published their seminal paper \cite{Knill2001b}, proving that it is in fact possible to implement scalable quantum computing using only linear optics. What they realized was that measurement is also an inherently nonlinear process, and thus photon interaction can be replaced by measurement-induced nonlinearities \cite{Scheel2003}. Unfortunately, measurements are also \emph{probabilistic}, and consequently any attempt to implement a two-qubit gate via measurements has some nonzero chance of failure. For instance, in \cite{Knill2002}, Knill proposed a linear-optical $\cz$ gate between two optical modes that works with probability $2/27$ (of which we give an explicit construction in \sec{bosonreview_a}). 

Clearly, a naive construction based only on probabilistic gates is not scalable, as the success probability of the overall circuit decays exponentially with the number of gates. To remedy this, the KLM scheme also includes an error-correction step. By a clever combination of encoding, gate teleportation, and error correction, it is possible to increase the probability of success of the two-qubit gates to arbitrarily close to one, and to protect the information from being destroyed when the gate fails. The crucial feature of these gates is that they are heralded---that is, the success of the gate is signaled by the measurement outcome of some auxiliary mode. Thus, an essential ingredient for universal linear-optical quantum computing is \emph{measurement feedforward}, or adaptive measurements, which consists on performing operations conditioned on the outcomes of previous measurements (recall that this is also the defining feature of measurement-based quantum computation, described in \sec{introuniv}).

Photons also have additional advantages over other physical systems. For example, they can encode information in several different degrees of freedom: spatial modes, polarization, orbital angular momentum, frequency, time-bin, etc\footnote{Calling these degrees of freedom of a photon is an abuse of terminology, as technically what we have are different \emph{modes}, each labeled by a set of quantum numbers, and each populated by a certain number of photons (cf.\ \sec{intro_secondquant}). However, it is often a good approximation to consider them independent degrees of freedom, and ignore some of them completely if they do not couple dynamically to the others in a given experiment, which is the approach we take here.}. This provides great flexibility for encoding information, and for coupling photons with other physical systems into hybrid proposals. An interesting example is the encoding of more than one qubit into a same photon---for example, a photon can encode a qubit in its polarization, another in the orbital angular momentum, and so on. Of course, this only provides a well-defined model if we can manipulate each degree of freedom independently, and there are obvious complications with the fact that photonic measurements are usually destructive. However, it also gives rise to curious phenomena, such as entanglement between different modes of a single photon\footnote{There is some controversy on whether this should be denominated entanglement. In this thesis we take the approach that, if the degrees of freedom are sufficiently independent, and if the computational complexity results can be trivially adapted to treat all degrees of freedom as equivalent, then the specific denomination is not very important.}, or between several degrees of freedom of different photons (what is known as hyperentanglement \cite{Barreiro2005}). These reasons, among others, have driven the field of linear-optical quantum computing, both in improvements of the original KLM scheme and in the development of new experimental techniques. 

Unfortunately, the current state-of-the-art in linear optical experiments is still far from the required for a practical large-scale implementation of the KLM scheme, and one of the most experimentally-challenging steps is measurement feedforward. Given the magnitude of the speed of light in optical media, very fast electronics is required to measure an optical mode, decide the next computational step conditioned on the result and change the optical network accordingly, all before the remaining photons arrive at the next stage of the computation. One could then ask what the computational power of linear optics is \emph{without} adaptive measurements. Even if they seem necessary for arbitrary quantum computations, are there any nontrivial tasks that can be performed without them? These could provide intermediate milestones to drive the development of the field in the near future. A partial affirmative answer to this question can be found in the \emph{BosonSampling} model, which is the main subject of this chapter and \chap{bosonnew}.

The BosonSampling model was recently proposed by Aaronson and Arkhipov \cite{Aaronson2013a}. BosonSampling consists essentially of non-adaptive linear-optical quantum computing: one starts with an $n$-photon Fock state, evolves it according to some random $m$-port interferometer and measures the output distribution. The task then is to sample from a distribution that's close in total variation distance to the ideal output distribution predicted by Quantum Mechanics. Sampling \emph{exactly} from the ideal distribution is presumably harder, of course, but it may too hard even for the quantum device itself, due to experimental imperfections. The more realistic task of approximate sampling cannot be performed efficiently by classical computers, modulo some plausible conjectures, unless the polynomial hierarchy collapses to its third level (cf.\ \sec{introduction_c}). The authors report that this model did not arise from experimental motivations, but rather it was driven by the $\#$P-hardness of the permanent function and a way to harness this to propose a task in which quantum systems could outperform classical computers---that it can be adapted so naturally to a particular physical system came as a bonus.

Since its inception, BosonSampling has received quite a lot of attention both from theoretical and experimental sides. Several quantum optics groups have reported small-scale BosonSampling (and related) experiments \cite{Broome2013, Crespi2013b,Spring2013,Tillmann2013,Spagnolo2013b,Carolan2013,Spagnolo2013c}, some of which are reported in \chap{bosonnew} as I was part of the collaboration that co-authored these results. From the theoretical side, several results were reported attempting to relax even more the conditions on the model, by e.g.\ investigating to what extent faulty source/detectors or other experimental errors can drive the model into a computationally uninteresting regime \cite{Leverrier2013,Rohde2012,Motes2013}, by proposing alternative BosonSampling models with different types of sources \cite{Lund2013,Shchesnovich2013}, or even by proposing implementations of the model in completely different physical system, such as phonon scattering in trapped ions \cite{Shen2014}.  

The model has also suffered criticism, mainly due to the fact that \emph{certifying} whether a BosonSampling device is performing as desired might be as hard as simulating it. This criticism questions the interest of large-scale implementations, as it might be impossible to verify whether the device is operating in a computationally-interesting regime \cite{Gogolin2013}. This criticism has since been (partially) answered by the original authors \cite{Aaronson2013b}, although the possibility of a complete classically-efficient certification of BosonSampling devices remains an open question since the original paper \cite{Aaronson2013a}.

For the remainder of this chapter, I will describe the KLM scheme and the BosonSampling model in some detail. Although the new results reported in \chap{bosonnew} do not focus on the KLM scheme per se, in \sec{bosonreview_a} I will review some of its constructions, as they provide a simple and straightforward way to prove some of the subsequent results. In \sec{bosonreview_b} (resp.\ \sec{bosonreview_c}) I will describe the exact (resp.\ approximate) version of the BosonSampling model in more detail, also highlighting its pros and cons relative to other quantum computational tasks of interest. 

\section{The KLM scheme}  \label{sec:bosonreview_a}

There are several variants to the original KLM scheme \cite{Knill2001b}. Here we will describe the most convenient one for our purposes in the following sections, even if it is not the best one in terms of efficiency.	

We begin with what is known as dual-rail encoding\footnote{In contrast, single-rail encoding consists in directly encoding the qubits states 0 and 1 in a mode being empty or occupied with one photon, respectively. However, in this case single-qubit gates rely on creating superpositions of states with different photon numbers, and thus are not implementable only with linear optics.}. In this case, each qubit state corresponds to a photon being in a different mode, as 
\begin{align}
\ket{0}_L = \ket{10},  \notag \\
\ket{1}_L = \ket{01}, \label{eq:KLMencoding}
\end{align}
which we represent as in \sfig{KLMscheme}{a}. Remark the similarity to \eq{evenencoding}, the encoding used for matchgates. The main difference is that here the right-hand side of \eq{KLMencoding} does not represent qubit states, but rather \emph{mode} states. In particular, states such as $\ket{20}$ are physically allowed, as they represent two photons (or any other number, in the most general case) occupying a single mode. A collection of $n$ qubits then corresponds to $n$ photons in $2n$ modes, with the modes paired two-by-two. To exemplify, suppose we pair the modes sequentially, such that the state $\ket{011001}$, of 3 photons in 6 modes, corresponds the logical state $\ket{101}_L$. In this case, as long as the system is in a state where each pair of modes $\{i,i+1\}$ , for every odd $i$, contains a single photon, it is in a valid computational state. If the photons ``leak'' into the wrong modes, producing states such as $\ket{110010}$ or $\ket{020010}$, the system is no longer in the encoding space, signaling a failure in the computation. Finally, note that we did not specify what degrees of freedom these modes represent. The notation of \eq{KLMencoding} is more usual for an encoding in which-path degrees of freedom---i.e., where each mode corresponds to a different spatial direction the photons can travel in (a beam splitter is an example of a transformation that connects two such modes). We stick to this notation, as the experiments reported in \chap{bosonnew} are all included in this case. However, the protocol would be perfectly well-defined if the modes represented different states of polarization, orbital angular momentum, etc. In this case, the difference would be the actual physical device implementing the required transformations: for e.g.\ polarization, beam splitters would be replaced by wave plates.

\begin{figure}[t]
\capstart
\centering
\subfloat[]{\centering \includegraphics[width=0.3\textwidth]{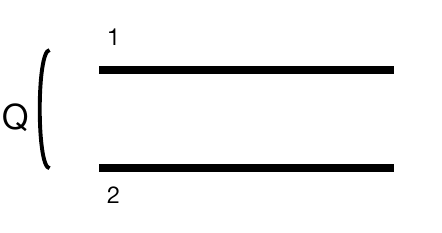}}
\subfloat[]{\centering \includegraphics[width=0.4\textwidth]{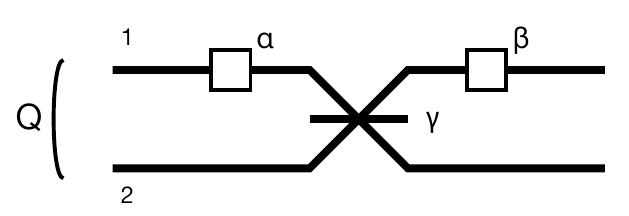}}\\
\subfloat[]{\centering \includegraphics[width=0.7\textwidth]{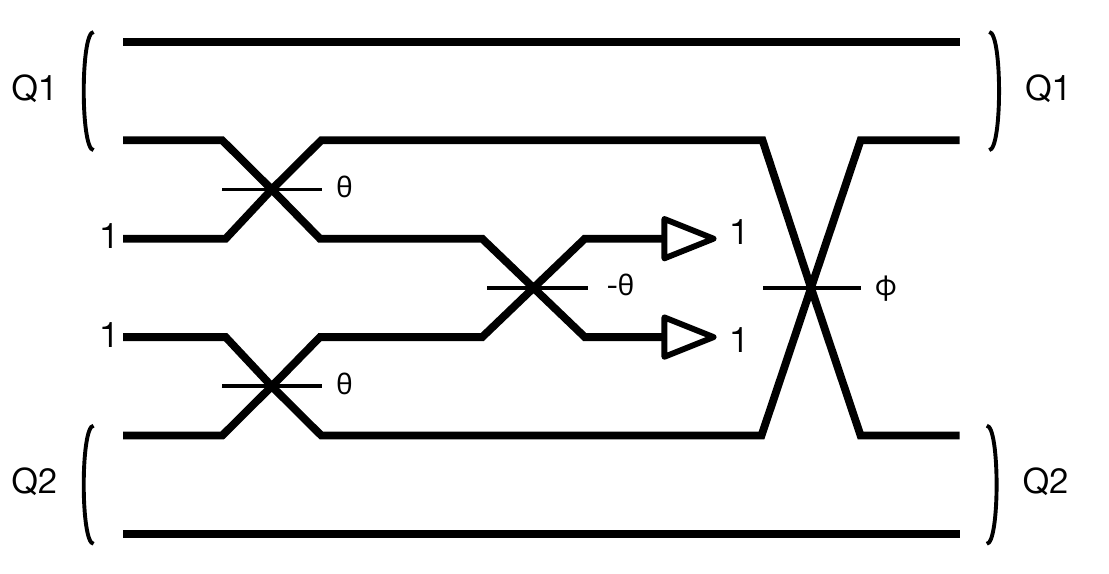}} 
\caption[The KLM scheme]{The KLM scheme. (a) Two modes encoding a qubit Q: mode 1 corresponds to qubit state $\ket{0}$, and mode 2 corresponds to state $\ket{1}$. (b) An arbitrary single-qubit gate, up to a global phase, on qubit Q. (c) A probabilistic two-qubit gate. Q1 and Q2 encode the qubit, ordered such that the outermost modes encode states $\ket{0}$ and the innermost modes encode states $\ket{1}$ of each qubit. The two central modes are ancilla modes, initialized each with a single photon. If $\phi \approx 17.63^{\circ}$ and $\theta \approx 54.74^{\circ}$,  and one photon is measured in the output of each ancilla mode, the overall action of the circuit is a $\cz$ gate.}
\label{fig:KLMscheme}
\end{figure} 

Using this encoding, single-qubit gates are easy to perform deterministically. An arbitrary unitary transformation connecting the states of \eq{KLMencoding} is nothing more than an arbitrary two-mode transformation, which can be implemented by beam splitters and phase shifters, as described in \sec{twomode}. We depict such an arbitrary transformation in \sfig{KLMscheme}{b}. One can easily see that the corresponding logical transformation is described by the matrix
\begin{equation*}
S = \begin{pmatrix}
\cos (\gamma) e^{i (\alpha+\beta)} & i \sin (\gamma) e^{i \beta}\\
i \sin (\gamma) e^{i \alpha} & \cos (\gamma) 
\end{pmatrix},
\end{equation*}
which is one possible parameterization of an arbitrary single-qubit gate (up to a global phase). Of course, as explained in \sec{twomode}, the action of the optical circuit of \sfig{KLMscheme}{b} on the complete Fock space is much more complicated, but as long as the system does not leave the encoded space, the single-qubit gates will never act in a multi-particle state, and thus the matrix $S$ correctly describes the transformation.

Finally, we need an entangling logical two-qubit gate. The simplest candidate is the \cz, since it only acts nontrivially on the $\ket{11}_L$ state, and furthermore only by adding a $-1$ phase. Thus, it clearly suffices to implement a \emph{physical} $\cz$ gate, that is, some two-mode operation that adds a $-1$ phase only if both modes are occupied by a single photon, and does nothing otherwise. Applying such an operation on the second and fourth modes of a four-mode state, for example, would only induce a phase on the $\ket{0101}=\ket{11}_L$ state. This seems to require an interaction between the photons: informally, the physical transformation would affect one photon conditioned on the other photon ``being there''. However, as we alluded to before, there is a way to do this \emph{probabilistically}. 

In \sfig{KLMscheme}{c} we depict an optical circuit due to Knill \cite{Knill2002}\footnote{This is not the original gate in the KLM paper, but it is more convenient for our purposes.}. The two qubits are represented by lines Q1 and Q2, and we consider that the innermost modes represent the $\ket{1}_L$ state of the respective qubits. We then add two extra ancilla modes, each carrying a photon, and measure them at the output of the circuit. If we measure one photon in each of the two ancilla modes, as depicted in the figure, the circuit is successful, and the overall action is a physical $\cz$ gate. If we measure any different configuration, the gate has failed. This gate succeeds with probability 2/27.

A scalable scheme for linear-optical quantum computation requires, of course, gates which are not probabilistic, or at least which have high enough probability for the well-known threshold theorem to apply \cite{Aharonov2008b}. The success probability of a quantum algorithm implemented directly with the gate of \sfig{KLMscheme}{c} decays exponentially with the number of gates. To deal with this, the KLM scheme also includes an error-correcting step. As described above, the gate of \sfig{KLMscheme}{c} works whenever we observe a particular measurement outcome of the ancilla modes. But this means that, when we observe a different outcome, we know that the gate has failed, and we know what state the system is left in---in other words, the gate's success is \emph{heralded}. One important advantage is that this allows the use of a gate teleportation trick, as first defined by Gottesman and Chuang in \cite{Gottesman1999b}, that consists of the following: whenever we want to perform some probabilistic $\cz$ gate, rather than performing it directly on the computational qubits (that presumably carry information we want to preserve), we can perform it ``off-line'', on some ancilla qubits, and, if it succeeds, use the standard trick to teleport the computational qubits onto those ancillas. This allows us to make repeated attempts of the probabilistic gate until it succeeds, and only then use it for the computation.

The heralded errors\footnote{Note that not all errors are of the type described above. Other types of errors include e.g.\ photon loss or imperfections in the optical elements. These errors are not heralded and so cannot be detected without destroying the coherence of the computation, consequently requiring more sophisticated error correction schemes usual to standard circuit-model quantum computing.} and the gate teleportation trick are the reasons why adaptive measurements seem necessary in this protocol. Whenever we detect an outcome corresponding to a failed two-qubit gate, the measurement result indicates what error-correcting steps should follow. Even if we restrict the protocol and only perform two-qubit gates off-line, the decision to use these ancillas must be conditioned on the gate's heralded success. The complete scalable construction, including the encoding and error-correction steps, can be found in the original KLM paper \cite{Knill2001b}, and we omit further discussion on this subject. We will only use the KLM construction as an intermediate step in other results related to BosonSampling, in which case the circuits of \fig{KLMscheme} will be sufficient.

One drawback of the KLM scheme is that, albeit scalable, the overhead induced in the number of optical elements is not practical. It can be estimated that, to achieve a $\cz$ gate with 95\% success probability in this protocol, it would be required of the order of $10^4$ linear-optical elements \cite{Kok2007, Hayes2004}. Since then, several improvements have been developed, most notably inspired on measurement-based quantum computing \cite{Yoran2003, Nielsen2004, Browne2005}, although there is a proposal more closely related to the circuit model \cite{Gilchrist2007}, which reduce the number of required operations by two orders of magnitude. Unfortunately, a review of these results is beyond the scope of this thesis. We would just like to point out that the circuits of \fig{KLMscheme} are not anywhere close to the state-of-the-art, but they are more convenient for our purposes. In particular, in \sec{bosonnew_a} we will be interested in the depth-complexity of the BosonSampling model---that is, how many layers of parallel elements are necessary for the linear-optical circuit to display some interesting computational power. For that purpose, the $\cz$ gate depicted in \sfig{KLMscheme}{c} is the most convenient as it displays the minimal depth for a two-qubit gate.

For an extensive review of the landscape of (experimental and theoretical) linear-optical quantum computing prior to the advent of the BosonSampling model, see e.g., the review article \cite{Kok2007}, and the textbook \cite{LivroKok}. 

\section{Exact BosonSampling} \label{sec:bosonreview_b}

In the previous section, I described one possible set of circuits that enables quantum computation within the KLM scheme. The circuits presented are manifestly not scalable, since the gates are probabilistic and I did not include details on error-correction. Nonetheless, in this section I will show how that construction can be adapted to give rise to the exact version of the BosonSampling model. In \sec{bosonreview_c} we will address some more technical aspects of the approximate version of BosonSampling, which is a proposal intended to provide more robust results, closer to what can be done with real-world imperfect physical systems.

We begin by recalling the definition of the complexity class postBQP, or BQP with post-selection, from \sec{introduction_d}. Recall that, in this definition, $p$ is a poly-sized post-selection register, and $o$ is a single-qubit output register.

\begin{definition*}
postBQP (BQP with post-selection) is the set of all languages L $\subseteq \{0,1\}^{*}$ for which there exists a uniform family of polynomial-size quantum circuits $\{Q_n\}$ such that, for all inputs $x$, after applying $Q_n$ to the state $\ket{0 \ldots 0} \otimes \ket{x}$, 
\begin{itemize}
\item[(i)] The probability of measuring $p$ in the state $\ket{00\ldots 0}$ is nonzero;
\item[(ii)] if $x \in$ L then, conditioned on measuring $p$ on state $\ket{00\ldots 0}$, the probability measuring $o$ on state $\ket{1}$ is at least 2/3;
\item[(iii)] if $x \notin$ L then, conditioned on measuring $p$ on state $\ket{00\ldots 0}$, the probability measuring $o$ on state $\ket{1}$ is at most 1/3.
\end{itemize}
\end{definition*}

We can also define the analogous class postLO, consisting of (non-adaptive) linear optics with post-selection. A subtlety is that now we must consider, as input, states of $n$ photons in $m$ modes, where $m=$poly$(n)$, and which include both the computational modes as encoded per \eq{KLMencoding} and the ancilla modes necessary for two-qubit gates as in \sfig{KLMscheme}{c}. Similarly, the output of the circuit (both the output and the post-selection registers) now consists of measurements of occupation numbers of photonic modes, and the post-selection is done on particular configurations of the photon states. Taking these two factors into account, however, the definition for postLO is straightforward. 

We can now define the \emph{exact BosonSampling} task\footnote{In this thesis, we incur in a slight abuse of terminology: the word ``BosonSampling'', for which we follow the spelling of the original authors, will occasionally be used to represent the task, as defined here, occasionally to represent the natural linear-optical device capable of performing this task, and occasionally to denote the corresponding model of computation. Each use of the word, however, will be made clear by the context.}: given a unitary matrix $U$, describing a linear-optical interferometer, and given a specific input state $\ket{S}=\ket{s_1 s_2 s_3 \ldots s_m}$ of  $n=\sum_i^m s_i$ photons in $m$ modes, sample from the output distribution obtained by measuring the occupation number of each output mode. In other words: do a weak simulation of the linear-optical device, as defined in \sec{introsimul}. By patching together the discussions of \sec{introduction_c} and \sec{bosonreview_a}, we arrive at the following result:

\begin{theorem} \label{thm:exactbs}
\cite{Aaronson2013a} If the exact BosonSampling task could be performed efficiently by a classical computer, then postBPP=PP and the polynomial hierarchy would collapse.
\end{theorem}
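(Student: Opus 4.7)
The plan is to reduce efficient classical simulation of exact BosonSampling to an unlikely collapse of complexity classes, following the post-selection recipe sketched in \sec{introduction_c}. I would proceed in three stages: (i) show that non-adaptive linear optics with post-selection is at least as powerful as postBQP; (ii) argue that an efficient exact classical sampler for linear-optical output distributions would lift, by exactly the same post-selection, to a postBPP algorithm for every language in postBQP; (iii) combine this with Aaronson's theorem postBQP = PP and the fact that postBPP sits inside PH to deduce postBPP = PP and the collapse.

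For stage (i), I would use the KLM construction of \sec{bosonreview_a} in the dual-rail encoding of \eq{KLMencoding}. Single-qubit gates are implemented deterministically by two-mode interferometers, so only the entangling gate is problematic. The probabilistic gate of \sfig{KLMscheme}{c} is heralded: it enacts a $\cz$ precisely when each of the two ancilla modes is measured to contain one photon. Bundling every such pair of heralding outcomes into a global post-selection register $p$ and post-selecting on the all-success value makes each $\cz$ effectively deterministic, so one round of post-selection suffices to simulate any poly-size BQP circuit. Appending the usual postBQP post-selection on a designated sub-register of $p$ and designating one optical mode as the output register then yields postBQP $\subseteq$ postLO; the reverse inclusion is immediate since linear optics is a restricted quantum model, so in fact postLO = postBQP.

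For stage (ii), suppose a classical randomized algorithm $C$ runs in $\poly(n,m)$ time and samples exactly from the output distribution of any $m$-mode linear interferometer acting on an $n$-photon Fock input. Given any postBQP language $L$, I would compile its defining circuit family through the gadget of stage (i) into a uniform family of linear-optical networks, then on input $x$ run $C$ on the corresponding network and post-select classically on exactly the same combined register of heralding outcomes and final readout that defined the postBQP acceptance condition. Because this register has polynomial size and the $2/3$--$1/3$ gap is preserved by the conditioning, this exhibits a postBPP algorithm for $L$, giving postBQP $\subseteq$ postBPP.

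For stage (iii), Aaronson's theorem postBQP = PP together with stage (ii) gives PP $\subseteq$ postBPP, while the trivial inclusion postBPP $\subseteq$ postBQP = PP supplies the reverse direction, so postBPP = PP. Since postBPP is contained in the polynomial hierarchy (in fact in its third level, as noted in \sec{introduction_d}), PP itself is now forced into PH, and then Toda's theorem PH $\subseteq$ P$^{\mathrm{PP}}$ collapses PH into a finite level. The most delicate point is stage (i): one must check that all heralding outcomes, ancilla mode initializations, and the final computational-basis readout can be rolled into the official postLO format without spoiling uniformity of the circuit family or the dual-rail encoding, and that the overall post-selection probability remains nonzero. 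Once that bookkeeping is done, the rest of the argument is a chain of standard containment results.
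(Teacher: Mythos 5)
Your proposal is correct and follows essentially the same route as the paper's proof: use the KLM gadgets with post-selection on the heralded $\cz$ outcomes to show postBQP $\subseteq$ postLO, then convert an exact classical sampler into a postBPP simulation of postBQP, and finish with postBQP = PP and the containment of postBPP in the third level of PH (plus Toda) to get the collapse. The only cosmetic difference is that the paper spells out postLO $\subseteq$ postBQP via a qudit-to-qubit mapping, whereas you assert it as immediate, but that direction is not needed for the conclusion anyway.
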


\begin{proof}[Proof sketch]
We begin showing that postLO=postBQP, i.e., that post-selected linear optics is equivalent to arbitrary post-selected quantum computing. 

That postLO $\subseteq$ postBQP is trivial: for an $n$-photon, $m$-mode linear optical circuit, we simply treat each mode as an $n$-level system, and use a standard mapping from $m$ qudits to O$(m \; \textrm{log} n)$ qubits \cite{Aaronson2013a,LivroNielsen}. 

For postBQP $\subseteq$ postLO, we use the KLM scheme from \sec{bosonreview_a}. For any quantum circuit, decomposed in terms of the standard universal set of single-qubit gates and $\cz$ gates, just translate it gate-by-gate to linear optics using the circuits of \fig{KLMscheme}. For every $\cz$ in the circuit, post-select the particular outcome that heralds its success, according to \sfig{KLMscheme}{c}. After this post-selection round, the remaining photonic modes will correctly encode the same output as the quantum circuit. This proves that BQP $\subseteq$ postLO but, since there is obviously no advantage in having two distinct rounds of post-selection, this also proves that postBQP $\subseteq$ postLO.

Now suppose that the BosonSampling task could be performed efficiently by a classical computer. Then, for any linear-optical circuit, there is some randomized classical circuit that samples from the exact same output distribution. This means that any quantum circuit that can be implemented in postLO, as described above, can also be implemented in postBPP, simply by running the randomized classical algorithm and post-selecting on the corresponding output. Thus, this would imply postBPP = postBQP.

Finally, recall from \sec{introduction_d} that postBQP = PP. Thus, BosonSampling $\subseteq$ BPP implies that postBPP=PP which, by the discussion of \sec{introduction_d}, implies a collapse of the polynomial hierarchy to its third level.
\end{proof}

One possible extension of this theorem is to allow for weak simulation with multiplicative error. A rigorous proof of this extended version can be found in \cite{Bremner2011} for the alternative restricted model IQP (cf.\ \sec{introduction_d}), but a simple adaptation provides the analogous result for BosonSampling. The intuition is the following: imagine we can classically sample from an approximate distribution where we know that both (i) the probability of obtaining the correct output for $p$ and (ii) the conditional probabilities for $o$ given the right outcome for $p$ are sufficiently close to the ideal ones. This suggests that the success probabilities of 1/3 and 2/3, for the yes and no instances (cf.\ the definition of postBQP), might not be altered too much. But, as discussed in \sec{introduction_c}, these values are not rigid: as long as the yes and no instances have some nonvanishing gap, for example of $2 \epsilon$ for some constant $\epsilon$, the complexity class remains unchanged. It is possible to show, then, that as long as the weak simulation is approximate within multiplicative error $c$ for some $1<c<1/\sqrt{2}$, \thm{exactbs} still holds (see e.g.\ \cite{Bremner2011}). This does not seem to hold, however, for weak simulation that is only close in total variation distance, which is the subject of the next section.

As previously mentioned, the proof of \thm{exactbs} is analogous to an equivalent result obtained for IQP. In fact, the reasoning behind it establishes a recipe of sorts: if a given restricted model, when imbued with the ability to post-select on some output register, can implement arbitrary quantum computation, then efficient weak simulation of that model would imply a collapse of the polynomial hierarchy. This recipe was used not only for BosonSampling and IQP, but also for constant-depth quantum circuits \cite{Terhal2004} and non-adaptive measurement-based quantum computation \cite{Hoban2013}, which are the four types of restricted models described in \sec{introduction_c}. In \chap{bosonnew}, we will again follow along these lines to give a unified proof of some of these results, and show that the same results hold for \emph{constant-depth linear optics}.

Exact BosonSampling has several advantages over other, more usual, quantum information processing tasks. The most prominent task a quantum computer can solve efficiently is factoring, via Shor's algorithm. However, the hardness-of-simulation of BosonSampling relies on milder assumptions than that factoring is hard. More specifically, even if factoring turns out to be in BPP, BosonSampling might remain an example of a task where quantum devices outperform their classical counterparts\footnote{Of course, the drawback is that we are restricted to sampling tasks, rather than decision problems.}. At the same time, BosonSampling is presumably easier to implement physically than Shor's algorithm, at least in optical systems, since it does not require measurement feedforward, which seems crucial to the KLM scheme. As we will see later on, computationally-interesting regimes for BosonSampling are indeed closer to the reality of nowadays experiments.

However, there are also some disadvantages. The most notable is that, while Shor's algorithm solves a problem with several real-world applications (e.g.\ cryptography), to date there is no known practical task solvable by a BosonSampling computer\footnote{One could ask whether there is any hope for non-adaptive linear optics to be universal for quantum computing. As is usual in complexity theory, this cannot be proven, but is considered highly unlikely. A plausibility argument for this intuition is that, while arbitrary $n$-qubit quantum circuits live in an operator space of dimension $4^n$, the most general $m$-mode interferometer lives in an operator space of dimension $m^2$, despite the exponentially-large Hilbert space it acts on.}. The only task achievable by such a device is simulating itself, that is, generating an output distribution that cannot be simulated by a classical computer, which of is purely academic interest---it provides evidence that quantum computers have some nontrivial computational power, and it may drive the experimental efforts in the near future, but it has no practical application. Another notable disadvantage is that there is no known efficient way to certify the operation of the device. That is, contrary to factoring\footnote{And, more generally, any problem in BQP $\cap$ NP. More general tasks in BQP suffer from the same certification problem.}, where the problem may be hard to solve but the answer is easy to verify, there is no straightforward way to determine what distribution a given sample produced by the BosonSampling device came from. We will return to this in the next section, when we review some criticism (and its rebuttal) to the approximate BosonSampling model.

Finally, we point out two disadvantages of BosonSampling, as defined so far, which can in fact be dealt with. The first is that \thm{exactbs} shows that an arbitrary linear-optical circuit should be hard to simulate classically, but does not provide a prescription for a direct (and feasible) experimental implementation of a hard instance of the problem. The second is that a multiplicative error in the simulation is still very stringent. To illustrate, suppose we constructed a linear-optical instance of Shor's algorithm via the KLM scheme, and we replace the adaptive measurements with post-selection, as described previously. The resulting circuit will only work with exponentially small probability, since it will be conditioned on observing the heralded success of every $\cz$ gate. Given that this probability of success is exponentially small, the experimental imperfections would probably drown it out completely, which is incompatible with a multiplicative-error approximation (cf.\ \sec{introsimul}). An approximation close in total variation distance would correspond to a more realistic noise model, since in this case the device could get several of the exponentially-small probabilities completely wrong and still be sufficiently close to the computationally-interesting regime. The total variation distance also has better composability properties, akin to gate fidelity in usual quantum computing: since it satisfies the triangle inequality, if each operation in a linear-optical circuit (e.g., each beam splitter) has some error $\epsilon$, the total variation distance between the ideal and the real distributions will only scale linearly in $\epsilon$. In the next section, we review the approximate BosonSampling model defined by \cite{Aaronson2013a} that deals with these two issues. 

\section{Approximate BosonSampling}\label{sec:bosonreview_c}

In this section, I give a general overview of the approximate BosonSampling model of \cite{Aaronson2013a}, with the goal of setting the motivation for the experiments reported in \chap{bosonnew}. As stated previously, the original paper is heavy in technical details. While most of these technicalities do not concern us here, some are important in order to better justify the choices made during the aforementioned experiments. Thus, for the sake of clarity, this section has been further subdivided. In \sec{bosonreview_c_motiv} I delineate the motivation and formal definition of the result, and in subsequent sections I give an overview of its technical aspects that had some impact in the choices reported in \chap{bosonnew}. 

\subsection{Motivation and Formal definition} \label{sec:bosonreview_c_motiv}

The previous discussions in this chapter should provide enough motivation for the computational complexity of linear optics in general. However, the results reviewed so far only concern the hardness of exact classical simulation of a BosonSampling device, or at most a simulation within multiplicative error. As suggested previously, allowing only for a multiplicative error is too stringent a requirement, likely beyond even the reach of realistic quantum devices, thus we reserve the title of \emph{approximate} BosonSampling to the case where we only require the simulation to be close in total variation distance to the ideal distribution. We now proceed to define the BosonSampling task, as we will use it here.

Let $\Phi_{m,n}$ be the set of all tuples $ S = (s_1, \ldots, s_m)$ such that $s_i \geq 0$ and $\sum_{i=1}^m s_i = n$. As described in \sec{intro_secondquant}, the set of all Fock states $\ket{S}$, for $S \in \Phi_{m,n}$, forms a basis for the Hilbert space of a $n$-boson, $m$-mode system, the dimension of which is $\binom {m+n-1} {n}$. For future discussion, let us also define $\Phi_{m,n}^{*}$ as the restriction of $\Phi_{m,n}$ over no-collision outcomes, that is, over states for which each $s_i$ is 0 or 1. Consider now the following procedure:

\begin{itemize}
\item[(i)] Initialize the system in a fixed Fock state $\ket{T}$, which can be taken, without loss of generality, to be $\ket{1, \ldots, 1, 0, \ldots, 0}$, i.e.\ the state where each boson enters in one of the first $n$ modes.
\item[(ii)] Evolve these bosons according to some Haar-random (i.e.\ sampled from the uniform distribution) $m$-port interferometer $U$ (cf.\ \sec{dynamics}).
\item[(iii)] Measure the system at the output of the interferometer, and register the observed outcome $\ket{S}$.
\end{itemize}

These measurement outcomes are distributed according to an output probability distribution, $\mathcal{D}_U$, over the sample space $\Phi_{m,n}$, defined by (cf.\ \lem{bosonperm})
\begin{equation} \label{eq:DUquantum}
\underset{\mathcal{D}_U}{\textrm{Pr}}[S] = \left | \frac{\textrm{perm}(U_{S,T})}{\sqrt{s1! \ldots s_m! t_1! \ldots t_m!}} \right |^2,
\end{equation}
where $U_{S,T}$ denotes the sub-matrix of $U$ obtained by taking $s_i$ copies of the $i^{th}$ row of $U$ and $t_j$ copies of its $j^{th}$ column. Note that $\mathcal{D}_U$ depends both on $U$ and $T$, although we omit the latter dependence as we will assume $\ket{T}=\ket{1, \ldots, 1, 0, \ldots, 0}$ fixed unless stated otherwise. The BosonSampling task, then, is simply defined as sampling from $\mathcal{D}_U$. In the regime that $m=\textrm{O}(n^k)$, for some constant $k$ (the best known lower bound for $k$ is 6, although the authors conjecture that it can be reduced to 2---see the discussion in \cite{Aaronson2013a}), the main result of \cite{Aaronson2013a} reads as follows:
\begin{conjecture} \label{conj:BSishard}\cite{Aaronson2013a}
Suppose there exists a classical algorithm that takes as input a description of $U$ and $T$, as well as an error bound $\epsilon$, and efficiently samples from a probability distribution $\mathcal{D}^{'}_U$ such that $|| \mathcal{D}_U - \mathcal{D}^{'}_U ||< \epsilon$. Then, the polynomial hierarchy collapses.
\end{conjecture}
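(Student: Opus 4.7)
The plan is to follow the Aaronson--Arkhipov strategy, which reduces approximate BosonSampling to a hardness assumption about permanents of Gaussian matrices. The guiding idea is that, contrary to the exact case (\thm{exactbs}) where one could simply post-select, an $\epsilon$-close-in-variation-distance simulator is too weak to faithfully reproduce exponentially small amplitudes. So instead of post-selecting, one must (i) use the simulator to \emph{estimate} individual probabilities via classical randomness and an NP oracle, and (ii) argue that those probabilities are precisely quantities whose estimation is already $\#$P-hard.

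The first step is an application of Stockmeyer's approximate counting theorem. If $\mathcal{D}'_U$ can be sampled in BPP and $\lVert\mathcal{D}_U-\mathcal{D}'_U\rVert<\epsilon$, then, using an NP oracle, one can obtain in $\textrm{BPP}^{\textrm{NP}}$ a multiplicative-$(1+1/\poly)$ approximation to $\Pr_{\mathcal{D}'_U}[S]$ for any $S$. By a Markov-type argument, for a $1-\delta$ fraction of outcomes $S$ drawn from $\mathcal{D}_U$ itself, this is also an additive approximation to $\Pr_{\mathcal{D}_U}[S]$ of the form $\Pr_{\mathcal{D}_U}[S]\pm O(\epsilon/N)$, where $N=\binom{m+n-1}{n}$ is the support size. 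By \eq{DUquantum}, this is an additive approximation to $|\textrm{perm}(U_{S,T})|^2/n!$ up to normalization.

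The second step is the ``hiding'' lemma: when $m=\Omega(n^5\log^2 n)$, for Haar-random $U$ and $S$ drawn uniformly from $\Phi_{m,n}^{*}$, the submatrix $\sqrt{m}\,U_{S,T}$ is close in total variation distance to a matrix $X$ of i.i.d.\ complex Gaussians. This gives the crucial symmetry: approximating $|\textrm{perm}(U_{S,T})|^2$ for a random $S$ and Haar $U$ is essentially equivalent to approximating $|\textrm{perm}(X)|^2/m^n$ for a Gaussian matrix $X$. Combined with step one, a classical approximate BosonSampler therefore yields, inside $\textrm{BPP}^{\textrm{NP}}$, additive approximations to $|\textrm{perm}(X)|^2$ for a $1-O(\delta)$ fraction of Gaussian $X$'s, with additive error $O(\epsilon\, n!/m^n)$.

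The third step invokes the two conjectures that do the real work. The Permanent Anti-Concentration Conjecture states that $|\textrm{perm}(X)|^2$ is at least $n!/\poly(n)$ with probability $1-1/\poly(n)$, which upgrades the additive approximation above into a multiplicative one on a large fraction of Gaussian matrices. The Permanent-of-Gaussians Conjecture then asserts that multiplicatively approximating $|\textrm{perm}(X)|^2$ on a $1-1/\poly(n)$ fraction of Gaussian $X$'s is $\#$P-hard under $\textrm{BPP}$ reductions. Putting the three steps together yields $\textrm{P}^{\#\textrm{P}}\subseteq \textrm{BPP}^{\textrm{NP}}$, so by Toda's theorem $\textrm{PH}\subseteq \textrm{BPP}^{\textrm{NP}}$, collapsing PH to its third level.

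The main obstacle is not the sampling-to-counting reduction (Stockmeyer) nor Toda's theorem, both of which are standard. Rather, the crux is the robust, average-case hardness of the Gaussian permanent. The hiding lemma is a careful but essentially technical computation with Haar measure and Gaussian concentration; the truly hard part, and the reason the result must be stated as a conditional, is justifying the Permanent-of-Gaussians Conjecture. Random self-reducibility of the permanent over finite fields (Lipton) extends classical worst-case to average-case hardness there, but transporting that argument to the continuous Gaussian setting with \emph{additive} error, as needed for the simulator, is exactly what PGC asserts and what presently resists unconditional proof.
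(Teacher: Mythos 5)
Your proposal follows exactly the route the thesis attributes to Aaronson and Arkhipov and summarizes in \sec{bosonreview_c}: Stockmeyer approximate counting, the hiding of a Gaussian submatrix inside a Haar-random unitary, the Permanent Anti-Concentration and Permanent-of-Gaussians Conjectures, and Toda's theorem --- which is precisely why the statement is presented here as a conjecture rather than a theorem, being conditional on those unproven ingredients. The only small inaccuracy is that the Markov/averaging step should be taken over a uniformly random no-collision outcome $S$ (the location where the Gaussian matrix is hidden), with additive error $O(\epsilon/(\delta N))$, rather than over $S$ drawn from $\mathcal{D}_U$.
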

The original result actually consists of two major parts. The first is a highly nontrivial theorem that connects the ability to sample from $\mathcal{D}^{'}_U$ and the ability to obtain a suitable approximation to the permanent of Gaussian-random matrices, which already contains important implications for the underlying complexity classes. The second is a set of natural conjectures about properties of the permanents of such random matrices. At the risk of unfairly oversimplifying the authors' achievements, we condensed their main theorem and conjectures into one claim that is most convenient for our purposes\footnote{Which, of course, does not have the status of Theorem as it includes unproven-yet-plausible conjectures as ingredients.}. In the next few sections, we will describe some of these intermediate steps in more details.

Besides its robustness in terms of the allowed type of approximation, this model has the feature of suggesting a very simple and straightforward experimental test. Any ``convincing'' demonstration of a scalable quantum computer performing, say, Shor's algorithm, is far beyond the current experimental capabilities---to this day, the record for factoring a number with a quantum computer is 21 \cite{Martinlopez2012}. On the other hand, as we will explain shortly, \conj{BSishard} suggests that a linear-optical experiment with, say, $\sim 20$ photons in $\sim 400$ modes, would already be in a regime that would take noticeably long for a classical computer to simulate. Although an experiment in this range is also beyond current technology, it is within much closer reach than general scalable quantum computing (especially given its simpler resource requirements), thus providing an intermediate milestone for the field. In fact, that is the content of part of the results reported in \chap{bosonnew}, as we were part of a collaboration that performed this exact experiment with $3$ photons in linear-optical networks of $5-9$ modes.

We now discuss some of the most relevant aspects, for this thesis, of the BosonSampling task and of \conj{BSishard}.

\subsection{Permanents and (inefficient) classical simulation} \label{sec:bosonreview_c_simulation}

Throughout \chap{introduction}, at several points, we suggested a connection between the fact that bosons evolve according to permanents and the computational complexity of BosonSampling. Although the post-selection-based proof given in \sec{bosonreview_b} at no point used this connection explicitly, it is crucial for the approximate case. 

Recall from the previous section that $\mathcal{D}_U$, the output distribution of the BosonSampling experiment, relies on calculating the permanents of $n \times n$ sub-matrices of $U$. Curiously, it is also possible to write the same transition probabilities for classical particles in terms of permanents. It is not hard to show that, if the photons are \emph{distinguishable}, in which case they behave much like classical billiard balls for the purpose of this model, the corresponding distribution is given by
\begin{equation} \label{eq:MUclassical}
\underset{\mathcal{M}_U}{\textrm{Pr}}[S] = \frac{\textrm{perm}(|U_{S,T}|^2)}{s1! \ldots s_m!},
\end{equation}
where $|A|^2$, in this case, represents the matrix obtained from $A$ by taking the element-wise squared absolute value. It is curious that \eqs{DUquantum}{MUclassical} can be written so similarly\footnote{Note that the $t_i!$ terms only appear in \eq{DUquantum}, not in \eq{MUclassical}, but they are all equal to 1 if, as we assume here, the input contains no mode with more than one photon.}, but it is the difference between them that is the crucial point of this model. 

As mentioned before (cf.\ \sec{introduction_d}), it is a well-known fact that exactly computing the permanent of a $\{0,1\}$-matrix \cite{Valiant1979} is a $\#$P-complete problem . However, an algorithm due to Jerrum, Sinclair, and Vigoda, shows that it is possible to \emph{approximate} the permanent of a matrix of nonnegative entries in polynomial time \cite{Jerrum2004}. The same does not hold for complex matrices: it can be shown  \cite{Aaronson2013a} that approximating the permanent of arbitrary complex matrices to within a constant factor is $\#$P-complete. The intuition behind this is the following: each permanent consists, a priori, of a sum of an exponential number of terms. If all terms are positive, it is reasonable to expect that their sum, even if hard to compute exactly, might be easy to approximate. However, if the terms are complex numbers, it is possible that most of them cancel out, and all that is left is an exponentially small residue, which is much harder to approximate. This also has a profound implication in terms of computational complexity of physical systems---as can be checked in \eqs{DUquantum}{MUclassical}, this distinction between complex and nonnegative matrices is precisely what separates the classical and quantum regimes of BosonSampling.

It is tempting to think, at this point, that linear-optical devices can be used to solve $\#$P-complete problems, but that is misleading. As mentioned, the $\#$P-hardness is related to approximating an exponentially-small residue. This will be better formalized in the next sections, in a natural conjecture that matrices of Gaussian random entries provide $\#$P-hard instances of the (approximate) permanent problem. For now we just point out that, typically,  all permanents involved in the proposed experimental setup will be exponentially small. As explained in \sec{introsimul}, since the quantum device is inherently probabilistic, the values of the permanents can only be approximated by repeating the experiment several times, and constructing the relative frequency tables. If the experiment is only repeated a polynomial number of times, this can only approximate each permanent to polynomial precision---but, since they are in fact exponentially small, this clearly is not enough, as $0$ is already also a good approximation up to polynomial precision. Thus, even though the BosonSampling device somehow draws its computational complexity from the permanents that Nature is ``computing'', it cannot actually be used to compute these values itself (this, of course, would be too good to be true).

It is also possible to estimate how hard it would be to classically simulate the proposed experiments, in the interest of providing a rough regime of computational interest. It seems very hard to completely certify the operation of a BosonSampling device (we will return to this matter shortly), and thus the interesting regime for experimental demonstrations of the model would be that where classical computers can still simulate the experiment, but take noticeably long to do so. It is estimated that this corresponds to a regime of $20-40$ photons, evolving in an interferometer of $400-1000$ modes \cite{Aaronson2013a}. A typical desktop computer, running a standard algorithm for the permanent on the Mathematica $\copyright$ software, computes a $15 \times 15$ permanent in 20ms, a $25 \times 25$ permanent in 10s, and a $35 \times 35$ permanent in an estimated 10hrs\footnote{An algorithm developed by the Mathematica community, motivated by a forum user interested in BosonSampling, can be found in \url{http://mathematica.stackexchange.com/questions/38177/can-compiled-matrix-permanent-evaluation-be-further-sped-up}. The estimate I reported combines the reported benchmark values of their algorithm and a fitting for larger values, where computer memory also becomes an issue.}. This might still have room for optimization, but an estimate of $n \approx 50$ will likely remain nontrivial for classical machines in the near future. In an experiment with $n > 100$, for example, it would be impossible to check the working of the device without circularly assuming that the physical laws hold in the first place, at least in the current state of both theory and technology.

However, a classical simulation consists of more than calculating a single $n \times n$ permanent. If we want to algorithmically generate samples from a given distribution, we need, a priori, to be able to calculate all probabilities in the sample space. Recall that, in an $n$-photon, $m$-mode system, for a fixed input, there are $\binom {m+n-1} {n}$ different outputs, thus requiring the calculation of an exponential number of permanents of $n \times n$ sub-matrices of the $m \times m$ unitary $U$. To illustrate this, notice that in the 20-photon and 400-mode regime, we would have $\binom {419} {20} = 7.2 \times 10^{33}$ possible outcomes. This suggests that the task of \emph{simulating} a BosonSampling device might be, itself, exponentially harder than the task of \emph{verifying} it, which is already expected to be hard, and this might bring the computationally-interesting experimental regime a few levels down\footnote{Interestingly, even though the sample spaces are of the same size, the ``classical'' regime does not suffer from this problem. In that case, since the photons are all distinguishable, an efficient algorithm could simply simulate each photon at a time.}. The theoretical standing of such a claim, however, is still very nebulous. On one hand, the result of \cite{Aaronson2013a} only guarantees that simulating BosonSampling is as hard as calculating a single $n \times n$ permanent. It is possible that a simulation algorithm could exploit the constraints between the different probabilities (e.g.\ the unitarity of $U$) to perform the simulation without having to calculate all the individual permanents. On the other hand, no such algorithm seems to be known \cite{Aaronsonprivcomm}. Still, even if an algorithm is developed that requires calculating only a small polynomial number of permanents, it might improve the boundaries of the interesting range of experimental parameters.

\subsection{Random Interferometers} \label{sec:bosonreview_c_interf}

An aspect of BosonSampling that we stealthily introduced without explanation in \sec{bosonreview_c_motiv} was the need for the interferometer unitary $U$ to be Haar-random. The reason for this is a conjectured worst-case/average-case equivalence for the BosonSampling task. One issue with the exact BosonSampling model of \sec{bosonreview_b} was that it did not provide a very natural instance of the problem, amenable to being implemented experimentally. However, if BosonSampling with a random interferometer is typically as hard as one defined by a complex concatenation of the circuits of \fig{KLMscheme}, this greatly simplifies experimental efforts\footnote{Note that there are problems where the random instances are typically easy. One example would be factoring: a random integer has a very high probability of being easy to factor---in fact, there are no NP problems with worst-case/average-case equivalence.}. This result is almost achieved by the authors in \cite{Aaronson2013a}. In fact, what they do is posit the following Permanent-of-Gaussians conjecture:

\begin{conjecture} \label{conj:POG} \cite{Aaronson2013a}
Given as input a random matrix $X$, whose elements are i.\ i.\ d.\ random Gaussian complex variables (with mean 0 and variance 1), together with error bounds $\epsilon$, $\delta >0$, the problem of estimating perm$(X)$ to within error $\pm \epsilon |$perm$(X)|$ with probability at least $1-\delta$ over $X$ in poly$(n, 1/\epsilon,1/\delta)$ time is $\#$P-hard.
\end{conjecture}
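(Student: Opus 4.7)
The plan is to prove \conj{POG} via a worst-case to average-case reduction for the complex permanent, adapting the classical self-correction technique of Lipton that established such a reduction for the permanent over large finite fields. Exact computation of $\textrm{perm}(X)$ over $\mathbb{C}$ is known to be $\#$P-hard in the worst case, via a straightforward embedding of Valiant's $\{0,1\}$-matrix hardness. What remains, therefore, is to show that an average-case multiplicative-approximation algorithm $\mathcal{A}$ for Gaussian random matrices can be leveraged into a worst-case algorithm that estimates $\textrm{perm}(A)$ for an arbitrary $A$ to within a small constant multiplicative factor, which is still $\#$P-hard by standard anti-concentration arguments on $\{0,1\}$-permanents.

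First I would define an interpolating family. Given an arbitrary complex matrix $A$ whose permanent we wish to estimate, sample an i.i.d.\ Gaussian matrix $G$ and consider $B(t) = (1-t)G + tA$ for a real parameter $t \in [0,1]$. The key observation is that $\textrm{perm}(B(t))$ is a univariate polynomial in $t$ of degree exactly $n$, so estimating it accurately at $O(n)$ distinct small values $t_1, \ldots, t_k$ allows polynomial extrapolation to $t=1$ to recover $\textrm{perm}(A)$. For each $t_i$ close to $0$, the distribution of $B(t_i)$ is a Gaussian ensemble translated by $t_i A$, and by a smoothed-analysis style argument one can bound the total variation distance from the pure Gaussian measure by a polynomial in $t_i \Vert A \Vert$, so that the solver $\mathcal{A}$ fails on each sample with only slightly worse probability than the guaranteed $\delta$.

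The hard part will be controlling the amplification of multiplicative error through the extrapolation. At each node $t_i$ one obtains an estimate of $\textrm{perm}(B(t_i))$ good to a factor $(1\pm\epsilon)$, but naive Lagrange interpolation from $t \approx 0$ to $t = 1$ blows the error up by factors that appear to grow super-polynomially in $n$. The technical heart of the proof would therefore be a noise-tolerant extrapolation scheme: either a Berlekamp--Welch-style decoding over the reals, robust against a constant fraction of corrupted samples, or a carefully chosen set of interpolation nodes (perhaps Chebyshev-like) together with a structural bound on $|\textrm{perm}(B(t))|$ that prevents catastrophic cancellation. A complementary ingredient, also nontrivial, is a \emph{hiding lemma} asserting that the joint distribution of the queries $(B(t_1), \ldots, B(t_k))$ is indistinguishable from $k$ independent Gaussians up to total variation $o(1/\textrm{poly})$, so that a union bound over $\mathcal{A}$'s failure events goes through.

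If the interpolation obstacle resists direct attack, I would fall back on an embedding approach: exhibit an explicit coupling, or use rejection sampling, to show that worst-case hard instances occupy a nonnegligible fraction of the Gaussian measure, which by itself would yield average-case hardness up to a polynomial loss in $\delta$. My overall expectation is that a successful proof combines both flavors — an algebraic interpolation backbone for the $\epsilon$-precision part and a probabilistic coupling for the $\delta$-reliability part — and that the extrapolation step, specifically the question of how many noisy samples of a degree-$n$ complex polynomial suffice to recover its value at a distant point to within a constant factor, is the single pivotal technical obstacle that has kept \conj{POG} open.
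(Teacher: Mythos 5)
The first thing to note is that \conj{POG} is not a theorem of this thesis, nor of the work it is cited from: it is the Permanent-of-Gaussians Conjecture of Aaronson and Arkhipov, which both the thesis and the original paper explicitly posit as an unproven assumption on which the approximate BosonSampling hardness result is conditioned. There is therefore no proof in the paper to compare your proposal against, and a correct proof here would be a major open result in complexity theory. Your sketch is in fact a faithful reconstruction of the strategy Aaronson and Arkhipov themselves discuss (and use successfully for the \emph{exact} average-case problem): embed a worst-case matrix $A$ into the Gaussian ensemble along the line $B(t)=(1-t)G+tA$, query the average-case solver at several small $t_i$, and interpolate the degree-$n$ polynomial $\mathrm{perm}(B(t))$ out to $t=1$. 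But you correctly identify, and do not resolve, the step on which this program founders: with only $(1\pm\epsilon)$ multiplicative accuracy at nodes confined near $t=0$, extrapolating a degree-$n$ polynomial to $t=1$ amplifies the error exponentially in $n$. Chebyshev-type node placement controls conditioning only for interpolation \emph{inside} the interval spanned by the nodes, not for extrapolation beyond it (and the nodes cannot be pushed toward $t=1$ without destroying the Gaussian hiding), while Berlekamp--Welch-style decoding tolerates a bounded fraction of arbitrarily corrupted values only when the remaining values are essentially exact, which a uniformly $(1\pm\epsilon)$-noisy oracle never supplies. Since this is precisely the known obstruction, what you have written is a research plan that terminates at the open problem rather than a proof of it.

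The fallback you offer does not repair this. Any fixed family of worst-case-hard matrices (for instance $\{0,1\}$ matrices, or any explicit embedding of them) has measure zero under the i.i.d.\ complex Gaussian distribution, so no coupling or rejection-sampling argument can make worst-case instances ``occupy a nonnegligible fraction of the Gaussian measure''; asserting that typical Gaussian instances are themselves hard is not a reduction but a restatement of the conjecture. Two smaller points: the hiding lemma you ask for (joint indistinguishability of $(B(t_1),\ldots,B(t_k))$ from independent Gaussians) is both false as stated---the queries share the same $G$ and are strongly correlated---and stronger than needed, since a union bound over the solver's failure events only requires each marginal to be close to Gaussian; and even a successful noise-tolerant extrapolation would still need to be combined with an anti-concentration statement (the Permanent Anti-Concentration Conjecture mentioned in \sec{bosonreview_c_interf}) to pass between additive and relative error guarantees. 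None of this faults your diagnosis---identifying robust extrapolation from noisy values of a low-degree polynomial as the pivotal obstacle is exactly right---but it does mean the proposal cannot be credited as a proof of the conjecture.
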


They then prove that, if \conj{POG} is true\footnote{Actually, they need a second conjecture, called the Permanent Anti-Concentration Conjecture, for the purpose of connecting two distinct notions of ``approximating the permanent'', but it is much too technical to discuss here.}, efficient classical simulation of a BosonSampling device with a Haar-random interferometer implies a collapse of the polynomial hierarchy (i.e.\ our statement of \conj{BSishard}). 

What connects the Gaussian and the Haar ensemble is the fact that if $m$ is sufficiently larger than $n$ [more specifically, $m = \textrm{O} \left(n^5 / \delta \log^2 n/\delta \right)$], then $n \times n$ sub-matrices of an $m \times m$ Haar-random unitary are close, in variation distance, to matrices of i.\ i.\ d.\ Gaussian variables. In simpler terms: if we only look at small sub-matrices of a uniformly-drawn $U$, they all look independently Gaussian, and only by looking at larger sub-matrices does the underlying unitarity constraint become apparent. This fact is then used to ``hide'' a given Gaussian matrix into a larger Haar-random matrix, which is given as input to the problem. This may seem like an odd requirement, but is in fact quite natural in complexity theory, where results need to cover even the worst imaginable scenario. Suppose that the BosonSampling device is completely adversarial (e.g., controlled by an untrusted third-party), and tries to trick us into believing that it is performing some classically-hard task. If we provide, as input to the problem, a Gaussian matrix $X$ of interest (i.e., the one described in \conj{POG}) \emph{hidden} in a larger Haar-random matrix $U$, we guarantee that the device has no knowledge of which permanent we care about in the first place and, in order to sample from a distribution that is close in total-variation distance to the ideal one, it must approximate most of the probabilities sufficiently well. 

This adversarial device also has an experimental interpretation, where it is better known as Nature. If an experiment samples from a distribution close in total variation distance to the ideal one, this means Nature must be ``computing'' most of the probabilities sufficiently well and so, with great likelihood, it is computing the permanent we desire sufficiently well too. Of course, there is still the delicate matter of proving, based only on experimental outcomes, whether the device really is operating so well---this question is complicated by the fact that this output typically consists of an exponentially large number of exponentially-unlikely outcomes. As it turns out, the distribution over the outcomes of a Haar-random interferometer is very flat, which will also play a role in the criticism that we will address shortly \cite{Gogolin2013}.

The actual construction of a Haar-random interferometer can be done in terms of $m(m-1)$ random linear-optical elements (i.e.\ beam splitters and phase shifters). In fact, by a procedure due to Reck \textit{et al.}\ \cite{Reck1994}, \emph{any} linear-optical device can be decomposed in terms of $m(m-1)$ two-mode transformations. We will defer an explicit construction to \chap{bosonnew}, when we also show the equivalent decomposition layout for integrated photonic chips, used in the experiments. Besides this, we will also consider other ensembles of random interferometers. For example, some of the experiments reported in \chap{bosonnew} use interferometers built out of sequential layers of balanced beam splitters alternated with layers of random phase shifters. We will present simulations suggesting that this alternative design approximates well the uniform ensemble for several figures of merit of interest. Such simplified interferometers are relevant, from an experimental point of view, as they are easier to construct, allowing us to benchmark several experimental techniques, while still not presenting any obvious reason to suggest that they are easy BosonSampling instances.

\subsection{Bosonic birthday paradox} \label{sec:bosonreview_c_BBP}

One result that shows up as an intermediate step for the main BosonSampling result of \cite{Aaronson2013a}, but which is also of independent interest, is the so-called bosonic birthday paradox \cite{Aaronson2013a, Arkhipov2012}. The ``classical'' birthday paradox relates to the following question: given $n$ people distributed uniformly and independently into $m$ birth dates, what is the probability of observing a collision, i.e., of observing two or more people sharing the same birthday? Note that the question can be analogously phrased in terms of balls thrown uniformly into bins, or distinguishable photons into output modes of a uniformly random interferometer. It is called a paradox due to people's natural tendency to overestimate this probability---as an example, in a calendar of $m=365$ days, 99\% probability is obtained with $n=57$ people, and 50\% is reached with $n=23$ people. A simple combinatorics problem gives the probability of observing at least one collision in the classical case, $P_c$, as
\begin{equation} \label{eq:classicalBP}
P_c = 1 - \prod_{k=1}^{m-1}(1 - k/m) \approx 1 - e^{-n(n-1)/2m}.
\end{equation}
This expression only holds for $n \leq m$, evidently, as otherwise the probability is simply 1. It also shows that, in the asymptotic ($m \rightarrow \infty$, $n \rightarrow \infty$) limit, if $m > n^2 / 2 \ln 2$ then the probability of observing \textbf{no} collisions, which is the regime of greater interest to us, is greater than 1/2. 

The bosonic birthday paradox, then, is the analogous result for indistinguishable bosons. That is, by inputting $n$ photons into a uniformly random interferometer, initially in some standard state (say, $\ket{1, \ldots, 1, 0, \ldots, 0}$), what is the expected probability of observing a collision? Recall, from \sec{introduction_b}, that bosons have a natural tendency to bunch together, and thus one could expect them to behave very differently from classical particles. Curiously, however, it was shown \cite{Aaronson2013a, Arkhipov2012} that their behavior is not that different. More specifically, any input Fock state when evolved according to an ensemble of uniform-random interferometers is taken into the maximally mixed state of all possible configurations. To clarify, we are not referring to a Fock state evolving according to one specific randomly-sampled interferometer---that is always a pure state---but rather the description of a Fock state evolving according to an ensemble of interferometers weighed according to the Haar measure.

Then, a combinatorics problem, similar to the one done in the classical case, gives the probability $p_q$ of observing a collision in the quantum case as
\begin{equation}
p_q = 1 - \prod_{k=1}^{m-1}\frac{1-k/m}{1 + k/m} \approx 1 - e^{-n^2/m}.
\end{equation}
From this expression, we see that, in the asymptotic limit, the no-collision outcomes occur with greater than 1/2 probability if $m > n^2 / \ln 2$. That is, while identical photons tend to bunch more than distinguishable particles, the asymptotic behavior is the same, differing only by a constant factor.

One interesting consequence of this result is that, in the limit $m >> n^2$, the output of the BosonSampling experiment is dominated by $n$-photon coincidence outcomes, that is, outcomes in $\Phi_{m,n}^{*}$ (cf. \sec{bosonreview_c_motiv}). The relevance of this fact is twofold: first, the probabilities of collision outcomes involve calculating permanents of matrices with repeated columns, which are easier to compute than those with no repetitions\footnote{As a limiting case, consider a matrix consisting of one column repeated $m$ times. Its permanent is simply the product of elements in one column, times a constant factor.}. Thus, in this regime the outputs that dominate are precisely those corresponding to the hardest permanents to compute. Second, coincidence outputs correspond to a much simpler experimental setting. Whereas measuring a general output requires detectors that discriminate $0, 1, 2, \ldots, n$ photons, for no-collision outputs it suffices to use only ``bucket detectors'', that is, detectors that only distinguish between states with zero or nonzero photons. Thus, in this regime, one can simply measure the outputs with bucket detectors, and ignore any outcome that does not contain the $n$ photons in $n$ different modes---the bosonic birthday paradox guarantees, in this case, that only a negligible portion of the outcomes will have been discarded.

In \chap{bosonnew}, we report an experimental demonstration of the bosonic birthday paradox, with $3$ photons interfering in photonic chips of up to $20$ modes. We also provide a more refined theoretical result concerning bosonic bunching in these devices, and observe it experimentally.

\subsection{BosonSampling certification} \label{sec:bosonreview_c_certif}

One of the major open questions for BosonSampling, since the original paper, is the matter of certification of the device. More specifically, if one is given a black-box that allegedly samples from distribution $\mathcal{D}^{'}_U$ satisfying the conditions of \conj{BSishard}, is it possible to test that claim? Note that, so far, there is no known practical problem solvable in this model, and thus its motivation is in great part academic, as a demonstration that quantum devices can outperform classical ones in \emph{some} task of ``simple'' experimental implementation. Thus, if for some reason the certification of the device turned out to be impossible, this would deliver a blow to the motivation of model as it stands---a skeptic could simply claim that the laws of physics change for $> 100$ photons, say, in a subtle way as to be impossible to detect but sufficient to make the BosonSampling task easy. Note that arbitrary quantum computing does suffer from this problem, to some extent: any problem in BQP $\cap$ NP can obviously be classically verified, but it is widely believed that BQP is not strictly contained in NP \cite{Aharonov2008a}, and there is no known way to classically verify the solution to an arbitrary BQP-complete problem. An interesting partial solution to this problem was given \cite{Aharonov2008a, Broadbent2009}, in terms of what is known as an (quantum) interactive proof protocol. It was shown that, if the verifier (e.g.\ us) wants to check whether an untrusted prover (e.g.\ the experimental device) is actually performing the correct quantum computation, and if the verifier has access to a very small trusted quantum device, there is a protocol where the two parties exchange rounds of (classical and quantum) information such that the prover convinces the verifier of its authenticity. It is an interesting open question whether the small trusted quantum device can be eliminated and the certification done purely by classical means. It would also be interesting to investigate whether, for BosonSampling, some analogous result holds.

This hardness of certification is the grounds for one criticism made to the model by Gogolin \textit{et al.}\ \cite{Gogolin2013}, that ruled out this possibility for a certain class of classical algorithms known as symmetric algorithms. More specifically, a symmetric algorithm is one that does not make use of the outcomes' labels, only their multiplicities. The authors justify this by saying that, since the output probabilities are $\#$P-hard to calculate, the verifier does not have any information about the output distribution $\mathcal{D}_U$, and thus the outcome's labels (and, with it, the actual expression of $U$) should mean nothing to them. They proceed to argue that, if the interferometer is sampled from the Haar distribution then, with overwhelming probability, the distribution $\mathcal{D}_U$ of the corresponding experiment is $\epsilon$-flat (that is, all of its probabilities are smaller than $\epsilon$), for some $\epsilon$ that is exponentially small in $n$. This reiterates what we mentioned before, that all permanents encoding outcome probabilities are typically exponentially small. Finally, they show that no symmetric algorithm can distinguish whether an output was sampled from $\mathcal{D}_U$, the BosonSampling distribution, or from $\mathcal{U}$, the uniform distribution over $\Phi_{m,n}^{*}$\footnote{In this result, the distributions are only taken over the no-collision outcomes ($\Phi_{m,n}^{*}$), rather than all possible outcomes ($\Phi_{m,n}$), but in the regime of interest of BosonSampling this distinction is not too important, as these are the outcomes that dominate anyway.}, with only a polynomial number of samples. This is, in fact, a quite natural result: if the distribution is exponentially flat and the experiment is only repeated a polynomial number of times, no outcome is expected to happen more than once, and the table of multiplicities consists only of a list of observed outcomes whose labels are ignored by the symmetric algorithm. In this setting, the verifier obviously cannot distinguish the output from a uniform distribution. Note that, after an exponential number of samples, the table of multiplicities will start to resemble the ``shape'' of the BosonSampling distribution, and will start to deviate from the uniform distribution.

However, symmetric algorithms are overly restrictive. The fact that the probabilities in $\mathcal{D}_U$ cannot be computed efficiently does not imply, in any way, that no useful information can be efficiently obtained from the unitary $U$ itself. Furthermore, assuming that the verifier always has information about the matrix elements of $U$ is an arguably realistic modeling of current experimental efforts since, as we will describe in \chap{bosonnew}, there is a procedure to experimentally reconstruct $U$ from a polynomial number of single- and two-photon measurements \cite{Laing2012}. In fact, in a follow-up response to the criticism of \cite{Gogolin2013}, the authors of the original paper reported a procedure \cite{Aaronson2013b} to distinguish the output of a BosonSampling device from the uniform distribution in polynomial time, by using information efficiently computable from $U$. The procedure works as follows:

\begin{itemize}
\item[(i)] Perform a run of the experiment. Suppose that output $\ket{S}$ was observed, for a fixed input $\ket{T}$\footnote{Recall that the distribution $\mathcal{D}_U$ is defined also by the choice of fixed input $\ket{T}$.}.
\item[(ii)] Compute the product of the squared row-norms of $U_{S,T}$. That is, compute the quantity $R(U_{S,T})$, where, for an $n \times n$ matrix $X$, $R(X)$ is defined as
\begin{equation}
R(X) = \prod_{i=1}^{n} \sum_{j=1}^n |x_{ij}|^2.
\end{equation}
\item[(iii)] If $R(U_{S,T})$ is greater than a threshold $t := (n/m)^n$, ``guess'' that the sample was taken from $\mathcal{D}_U$. Otherwise, guess that the sample was taken from $\mathcal{U}$.
\end{itemize}

Note that $R(X)$ can be in fact classically computed in polynomial time. What the authors of \cite{Aaronson2013b} showed is that, if $U$ is taken from the Haar ensemble, and if $m$ is sufficiently larger than $n$ (roughly in the same regime necessary for the sub-matrices of $U$ to be approximately Gaussian, as explained in \sec{bosonreview_c_interf}), then, with high probability, the above procedure only needs a constant number of samples to convince the verifier that the underlying distribution is not $\mathcal{U}$. More specifically, they show that, for any experimental outcome $S$,
\begin{equation}
\underset{S \sim \mathcal{D}_U}{\textrm{Pr}}[R(U_{S,T}) \geq t] - \underset{S \sim \mathcal{U}}{\textrm{Pr}}[R(U_{S,T}) \geq t] \geq \frac{1}{9},
\end{equation}
with high probability over the uniform distribution of $U$'s. This also implies that $\mathcal{D}_U$ and $\mathcal{U}$ have constant total variation distance. The intuition behind this result is that the quantity $R(U_{S,T})$ is mildly correlated to the permanent of $U_{S,T}$. This correlation is not sufficient for one to approximate the actual value of the permanent (nor should we expect it to be), but it is still enough to convince the verifier, after a constant number of samples, that the distribution of outcomes is carrying \emph{some} information about $U$, rather than just some gibberish produced by a uniform distribution. 

This does not prove that one can completely certify the BosonSampling device. In fact, the use of the words certify and validate are a certain abuse of terminology, as there are several different things they can mean. The procedure above only validates the device against the uniform distribution as a null hypothesis, but not against \emph{every} other classically-samplable distribution. One example of ``classical'' distribution for which $R$ cannot be used is $\mathcal{M}_U$ (i.e.\ the corresponding distribution observed for distinguishable photons), as shown in \cite{Aaronson2013b}. This should be expected, as $R$ only depends on the squared absolute value of the matrix elements of $U$, and thus is insensitive to photon distinguishability (cf.\ discussion on \sec{bosonreview_c_simulation}). A very important open question, posited in the original BosonSampling paper, is whether, for every authentic BosonSampling distribution $\mathcal{D}_U$, one can construct a classically-samplable distribution that is indistinguishable from it in polynomial time.

Finally, we point out that the authors of \cite{Gogolin2013,Aaronson2013b} are mostly interested in certification in the asymptotic limit---that is, whether specific certification tasks are in BPP or not. However, experimental BosonSampling will only be interesting in a limited range anyway, of no more than roughly $\approx 50$ photons, where the individual probabilities can still be computed. In this case, a standard likelihood ratio test \cite{LivroCover}  might provide a better rate of convergence for convincing the verifier, especially given the constant total variation distance between $\mathcal{D}_U$ and $\mathcal{U}$, and can in principle be used to rule out any ``classical'' hypothesis. This does not have the same theoretical standing as the results of \cite{Aaronson2013b}, since the likelihood ratio test requires computing the probabilities, but might be of greater interest to experimentalists in small-scale implementations. In \chap{bosonnew} we report an experimental validation of a BosonSampling device using both the estimator $R$ (against the uniform distribution) and likelihood ratio test (against both the uniform distribution and $\mathcal{M}_U$).

\newpage
\chapter{New results: Matchgates} \label{chapter:fermnew}

In this chapter, I present our results for the computational power of matchgates. It is based almost exclusively on my co-authored papers Refs.\ \cite{Brod2011,Brod2012}, published in \textit{Physical Review A}, and Ref.\ \cite{Brod2013}, recently accepted for publication at \textit{Quantum Information and Computation}. 

Recall from \chap{fermreview} that matchgates are unitaries of the form
\begin{equation*} 
G(A,B) = \begin{pmatrix}
A_{11} & 0 & 0 & A_{12} \\
0 & B_{11} & B_{12} & 0 \\
0 & B_{21} & B_{22} & 0 \\
A_{21} & 0 & 0 & A_{22}
\end{pmatrix},
\end{equation*}
satisfying the additional constraint that det$A=$det$B$ (which we call the \emph{determinant condition}, for future use). Recall also that matchgates display the following two computational regimes:

\begin{itemize}
\item[(i)] If the input is an $n$-qubit product state, followed by a circuit of a poly$(n)$ nearest-neighbor matchgates, and the final measurement is a computational basis measurement on any qubit $k$, the output of this measurement can be computed efficiently by \eq{expectedZ}, as shown in \sec{fermreview_a}.

\item[(ii)] An arbitrary quantum circuit can be simulated by a circuit consisting of a computational basis input, a sequence of matchgates on nearest and next-nearest neighbors, and a final computational basis measurement on the first qubit. This simulation induces only a linear overhead in the number of qubits and operations, as shown in \sec{fermreview_b}.
\end{itemize}

Throughout this chapter I show some ways in which the underlying (and often implicit) restrictions delimiting these regimes can be modified, and what the corresponding change in computational power of the model is.

First, in \sec{fermnew_a}, I show how quantum universality can be recovered in (i) by adding certain other unitary operations. Often, the gap from classical to quantum computational power is bridged by adding some particular gate to the set of allowed operations---for example, it is well-known that circuits of single qubit gates\footnote{Acting on input product states and with final measurements on the computational basis.} are classically simulable, but become universal for quantum computation when complemented with any entangling gate \cite{Bremner2002}. Another example is the Toffoli gate, which is universal for classical computation \cite{Fredkin1982}, but becomes universal for quantum computation by adding any non-basis-preserving gate \cite{Shi2003}. In both cases, there is clearly a quantum resource missing (entanglement and quantum superposition, respectively) that is provided by the added operation. It is curious that, for matchgates, this gap is bridged by a limited use of the $\swap$ gate, just enough to allow for interactions between next-nearest neighbors. In contrast to, say, entanglement, the $\swap$ gate does not seem to provide a particularly ``quantum'' resource, and it is often taken for granted, especially in physical implementations of quantum computing with flying qubits (that is, the particles which carry the information can be moved around freely, as is the case e.g.\ with photons). This is even more curious if we recall that $\swap = G(I,X)$---it is strikingly similar to a matchgate, only failing to satisfy the determinant condition. This raises the question of whether the resource provided by the $\swap$ gate is not, in fact, ``hidden'' in this difference between the determinants of the inner and outer matrices. In \sec{fermnew_a} I show that this is indeed the case, and any gate of the type $G(A,B)$ where det$A\neq$det$B$ is sufficient to extend matchgates to quantum universality \cite{Brod2011}. I will analyze this determinant condition from the point of view of nonlocal invariants of unitary gates \cite{Makhlin2002, Zhang2003} and, using the Jordan-Wigner transformation, show how it can be understood in the fermionic formalism.

In \sec{fermnew_b} I study this gap in computational power in terms of the interaction geometry. Note that, as usual in quantum computation, all results mentioned so far in \chap{fermreview} implicitly assume that the qubits of the circuit are arranged on a path\footnote{A path is a graph where each vertex has two neighbors, except for two vertices at the endpoints. It is also often called a 1D array, a linear graph, or a 1D chain. Throughout this thesis, we just call it a path.}. This is often a natural assumption, especially since the $\swap$ gate can be implemented using a sequence of three nearest-neighbor $\textsc{cnot}$ gates, thus there is no loss of generality in restricting a \emph{universal} set of operations to act only on nearest neighbors\footnote{To be fair, the effect of the geometrical arrangement of the qubits has also been studied in the context of distributed quantum computation (see e.g.\ \cite{Hirata2011,Beals2013}), but the goal there is to reduce the overhead induced by having to swap the qubits around, whereas here the role of the geometry is to actually bridge the gap between classical and quantum computational power.}. However, for nearest-neighbor matchgates this clearly is no longer true, as they cannot implement the $\swap$ gate on their own. Hence, we can ask whether the computational power changes if we modify the underlying geometrical arrangement of the qubits. In other words, we know that nearest-neighbor matchgates are classically simulable on a path, but does this remain true if the qubits are arranged on a cycle, a binary tree, a square or hexagonal lattice, and so on? In \sec{fermnew_b} I show that, in fact, matchgates are universal acting on any graph that is not a path or a cycle and, furthermore, that they are classically simulable on a cycle \cite{Brod2012,Brod2013}, thus completely closing this particular question. I also show that, if we restrict the allowed operations only to the XY interaction (cf.\ \sec{ferm_reviewXY}), the exact same dichotomy holds.

Finally, \sec{fermnew_c} is devoted to some concluding remarks, in particular on the relevance of the results proved in this chapter to experimental efforts, as well as possible questions that remain open.

\section{Matchgates vs. \PP\ gates} \label{sec:fermnew_a}

In this section, I address the computational power of matchgates when supplemented by other single- or two-qubit gates, and it is mostly based on the results published in \cite{Brod2011}. Since we want to single out the effect of complementing the set with other gates, our starting assumptions are those under which matchgates are classically simulable---more specifically, throughout this section we only consider computational basis inputs, computational basis measurements, and gates acting on nearest-neighbor qubits arranged on a path.

Recall, from \sec{fermreview_b}, that one possible two-qubit gate enabling universal quantum computation with matchgates is the $\swap=G(I,X)$ gate, that is remarkably similar to a matchgate, failing only to satisfy the determinant condition. From this point on, we refer to general $G(A,B)$ gates as \emph{parity-preserving} (\PP) gates whether they satisfy the determinant condition or not, with matchgates corresponding to those restricted \PP\ gates that do satisfy it. It is clear that the set of \PP\ gates as a whole is, in fact, universal for quantum computation.

In \sec{fermreview_a}, we gave an interpretation of the $\swap$ operation as allowing matchgates to act between more distant qubits, rather than just nearest-neighbors. We can then ask whether this feature can be simulated by other gates similar to the $\swap$, such as the $\fswap = G(Z,X)$\footnote{The $\fswap$ gate already appeared in \chap{fermreview}, in \eq{fswaponXX}.} or the $\iswap = G(I,iX)$. Both gates are similar to the $\swap$ in the sense that they can also be seen as swapping operations when acting on the computational basis (in fact, the $\fswap$ was already used in \sec{fermreview_a} to swap the roles of two fermionic modes). However, they also differ from the $\swap$ in that they are entangling operations, and they are matchgates. This last fact immediately tells us that they cannot replace the $\swap$, as otherwise it would contradict the classical simulability of nearest-neighbors matchgates (we return to this point in \sec{fermnew_b} where we will show that we can replace the $\swap$ by the $\fswap$ or $\iswap$ gates in a limited sense, if we additionally change the underlying arrangement of the qubits). 

To better characterize the set of \PP\ gates and the role of the determinant condition for their computational power, we now review some results about nonlocal parameters of two-qubit gates. 

\subsection{Nonlocal parameters} \label{sec:nlpar}

We begin this section by stating a result fundamental for the discussion that follows:
\begin{theorem} \label{thm:NonlocalTheorem}\cite{Kraus2001, Khaneja2001}
Any two-qubit gate $U \in \SU{(4)}$ can be written as 
\begin{align} \label{eq:NonlocalPar}
U & =  (U_1 \otimes U_2)  U_{NL} (V_1 \otimes V_2) \notag \\
& = (U_1 \otimes U_2) e^{i \left ( a X \otimes X + b Y \otimes Y + c Z \otimes Z \right ) } (V_1 \otimes V_2) 
\end{align}
where $U_i$ and $V_i$ are single-qubit gates on qubit $i$, and $a$, $b$ and $c$ are real parameters in the interval $\left[ 0,\frac{\pi}{2} \right)$.
\end{theorem}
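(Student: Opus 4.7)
The plan is to prove this classical result via a Cartan (KAK) decomposition of $\SU(4)$ adapted to the symmetric pair $(\SU(4), \SU(2)\otimes \SU(2))$. I would begin by introducing the \emph{magic basis} transformation $Q$, the $4\times 4$ unitary mapping the computational basis to (a phase-adjusted version of) the Bell basis. A direct computation shows that conjugation by $Q$ realizes a $2$-to-$1$ surjective homomorphism $\SU(2)\otimes\SU(2) \twoheadrightarrow \SO(4)$: for every pair of single-qubit unitaries $(U_1,U_2)$, the matrix $Q^\dagger(U_1\otimes U_2)Q$ is real orthogonal with determinant $+1$, and the kernel of this map is $\{\pm I\otimes I\}$. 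Equivalently, in the magic basis, a two-qubit gate is local if and only if its matrix representation is real.

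Given $U\in \SU(4)$, I would then set $\tilde U := Q^\dagger U Q$ and study the complex symmetric matrix $M := \tilde U^{T} \tilde U$. By the Autonne--Takagi factorization for complex symmetric matrices, $M = O^{T} D^{2} O$ for some $O\in \SO(4)$ and some diagonal unitary $D = \diag(e^{i\lambda_1},\ldots,e^{i\lambda_4})$ with $\sum_j \lambda_j \equiv 0 \pmod{2\pi}$. Defining $O' := \tilde U\, O^{T} D^{-1}$, a short computation using $\tilde U^{T}\tilde U = M$ shows that $O'$ is real and orthogonal, so (after possibly flipping one sign of $D$ to ensure $\det O' = +1$) we have $O'\in\SO(4)$ and the Cartan-style factorization $\tilde U = O' D O$ holds. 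Pulling back through $Q$ and applying the isomorphism of the previous step gives $U = (U_1\otimes U_2)\,(Q D Q^\dagger)\,(V_1\otimes V_2)$ for suitable single-qubit unitaries $U_i,V_i$.

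It remains to identify $QDQ^\dagger$ with the nonlocal exponential in \eq{NonlocalPar}. The operators $X\otimes X$, $Y\otimes Y$, and $Z\otimes Z$ mutually commute, each squares to the identity, and (as is easy to verify from the defining formula for $Q$) are simultaneously diagonalized by $Q$ with $\pm 1$ eigenvalues. Computing $Q^\dagger(X\otimes X)Q$, $Q^\dagger(Y\otimes Y)Q$, $Q^\dagger(Z\otimes Z)Q$ explicitly yields an invertible linear map from $(a,b,c)$ (together with a global phase fixed by $\det U = 1$) to the tuple $(\lambda_1,\ldots,\lambda_4)$, which is then inverted to recover $a,b,c$. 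Finally, to restrict to the fundamental domain $[0,\pi/2)$, I would use that a shift of any of $a,b,c$ by $\pi/2$, or a sign flip in any of them, can be absorbed into the local factors by premultiplying by suitable Pauli operators, since $e^{i(\pi/2)\,X\otimes X} = i\, X\otimes X$ and analogously for the other two generators, and since $(X\otimes X)(Y\otimes Y)(Z\otimes Z) \propto I\otimes I$.

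The main obstacle I expect is the careful bookkeeping in the Takagi step and in the magic-basis correspondence: showing that $O'$ really lies in $\SO(4)$ rather than merely in $O(4)$, and that when $M$ has repeated eigenvalues the factorization can still be chosen with both $O$ and $O'$ of determinant $+1$. Once these technicalities are in place, both the identification with the exponential form and the reduction to the fundamental domain are finite case analyses using the commutation and squaring properties of $X\otimes X$, $Y\otimes Y$, $Z\otimes Z$.
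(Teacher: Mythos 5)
Your proposal is sound, but note that the thesis itself does not prove this statement: it is quoted as a known result with citations to Kraus--Cirac and Khaneja \textit{et al.}, so there is no in-paper proof to compare against. What you have written is essentially the standard argument of those references---the magic-basis isomorphism $\SU(2)\otimes\SU(2)/\{\pm I\}\cong\SO(4)$ combined with the Autonne--Takagi (equivalently, Cartan KAK) factorization of $Q^\dagger U Q$---and the steps you flag as delicate are indeed fine: replacing $O$ by $SO$ with $S$ a diagonal sign matrix leaves $O^T D^2 O$ unchanged, so you may always take $O\in\SO(4)$, and since $\det D=\pm1$ a single sign flip of one entry of $D$ (which preserves $D^2$) forces $\det O'=+1$, even when $M$ has repeated eigenvalues. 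One small simplification for the last step: because $e^{i(\pi/2)\,X\otimes X}=i\,X\otimes X$ (and analogously for $Y\otimes Y$, $Z\otimes Z$) is local up to phase and commutes with the nonlocal core, each of $a,b,c$ can simply be reduced modulo $\pi/2$ into $[0,\pi/2)$ by absorbing Pauli factors into the local unitaries; no separate sign-flip case analysis is needed.
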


In other words \thm{NonlocalTheorem} states that, out of the 15 parameters that define a general $4 \times 4$ unitary matrix (up to a global phase), only 3 are truly nonlocal in the sense of requiring some interaction between the two qubits, while the remaining 12 can be implemented by two sets of single-qubit gates, before and after this nonlocal ``core''. \eqbeg{NonlocalPar} suggests a connection to the representation of matchgates developed in \sec{fermreview_a}, but this discussion will be postponed until after we obtain an explicit parameterization of general \PP\ gates similar to \eq{NonlocalPar}.

This set of nonlocal parameters is convenient because it gives an explicit way to parameterize two-qubit gates, but we must be careful before ascribing special physical significance to any of them. One reason is the non-uniqueness of the decomposition: we can implement permutations of the type $a \leftrightarrow c$ using only single-qubit gates (in this case, $H \otimes H$). This ambiguity can be eliminated by imposing some restriction such as $a \geq b \geq c$ (alternatively, for a geometrical approach in terms of what is known as the Weyl chamber, see \cite{Zhang2003}). For our purposes this ambiguity will not be relevant, and thus we make no such restriction. This ambiguity also means that $\{ a,b,c \}$ are not, strictly speaking, local invariants\footnote{A local invariant is defined precisely as any property of a two-qubit gate that is invariant under local operations.}. However, any true local invariant can be written as a (symmetric) function of these parameters---one such example is the set of invariants defined by Makhlin in \cite{Makhlin2002}, as shown in \cite{Zhang2003}. Another example of particular interest in the discussion that follows is the entangling power, introduced by Zanardi \textit{et al.} in \cite{Zanardi2000}, which we now define.

Consider a two-qubit Hilbert space $\mathcal{H}=\mathcal{H}_1 \otimes \mathcal{H}_2$, and a state $\ket{\Psi} \in \mathcal{H}$. Also consider the \emph{linear entropy}, an entanglement measure of $\ket{\Psi}$ defined by
\begin{align*}
E(\ket{\Psi}):= & 1-\textrm{tr}_1 \rho^2, \\
\rho := & \textrm{tr}_2 \ket{\Psi} \bra{\Psi}
\end{align*} 
Then, the entangling power $e_p(U)$ of a two-qubit unitary $U$ is defined as the average linear entropy generated by the action of $U$ on all product states $\ket{\psi_1} \otimes \ket{\psi_2}$ \cite{Zanardi2000}:
\begin{equation*}
e_p(U) := \overline{E(U \ket{\psi_1} \otimes \ket{\psi_2})}^{(\psi_1, \psi_2)},
\end{equation*}
where the bar denotes average with respect to some probability distribution $p(\psi_1, \psi_2)$. By using the identity
\begin{equation} \label{eq:tridentity}
\textrm{tr} (AB) = \textrm{tr} (A \otimes B \; \swap)
\end{equation}
we obtain, after some simple manipulations, that
\begin{equation*}
e_p(U) = 2 \; \textrm{tr} \left[ U^{\otimes 2} \Omega_p U^{\dagger \otimes 2} P_{13}^{-} \right ], 
\end{equation*}
where $P_{13}^{-}$ is the projector on the antisymmetric subspace of qubits 1 and 3 [the trace is taken over a doubled Hilbert space as a consequence of \eq{tridentity}], while
\begin{equation*}
\Omega_p = \int d \mu(\psi_1, \psi_2) (\ket{\psi_1} \bra{\psi_1} \otimes \ket{\psi_2} \bra{\psi_2})^{\otimes 2},
\end{equation*}
and $d \mu$ denotes the measure over the space of product states induced by probability distribution $p(\psi_1,\psi_2)$.

It can be shown that, if the average is taken over the uniform distribution, the entangling power is both local invariant (i.e., it remains the same if $U$ is multiplied by single-qubit gates on either side), suggesting that it must be written as some function of the nonlocal parameters $\{ a,b,c \}$, and $\swap$ invariant (i.e., it is the same for $U$ and $U \cdot \swap$). If fact, simply by plugging \eq{NonlocalPar} in the above definition (and rescaling the value of $e_p(U)$ so it goes from 0, for local gates and $\swap$, up to 1, for perfect entanglers) we find that
\begin{equation} \label{eq:entanglement}
e_p(U) = 1 - \textrm{cos}^2 2a \; \textrm{cos}^2 2b \; \textrm{cos}^2 2c - \textrm{sin}^2 2a \; \textrm{sin}^2 2b \; \textrm{sin}^2 2c.
\end{equation}

It is clear from the expression above that any local gate has $e_p(U_1 \otimes U_2) = 0$. It can also be seen that $e_p(\cnot)=1$, since for the $\cnot$, $\{ a,b,c \}=\{\frac{\pi}{4}, 0, 0\}$, and that $e_p(\swap)=0$, since for the $\swap$ we have $\{\frac{\pi}{4}, \frac{\pi}{4}, \frac{\pi}{4}\}$, all results which were expected. The above expression is invariant under permutations of $\{a,b,c\}$, another previously anticipated feature.

Recall that arbitrary single-qubit gates together with any entangling gate form a universal set, and we can check that a gate $U$ is entangling by calculating $e_p(U)$. Can a similar characterization be given for matchgates? The $\swap$ gate, which seems to be responsible for boosting their computational power, cannot be implemented only by single-qubit gates, and thus must clearly have some nonlocal property (even though it is not entangling). We will show in the next section that this property can be characterized in terms of the $\{ a,b,c \}$ parameters above, and that it is central for the computational power of \PP\ gates.

\subsection{Extending matchgates with parity-preserving unitaries}

In this section I present our first result, regarding \PP\ unitaries that extend nearest-neighbor matchgates to universality. I start with an example to build intuition by analyzing a family of gates that interpolates between the $\swap$ and the $\iswap$. I then provide a characterization of general \PP\ gates in terms of nonlocal parameters, identifying their relation to the determinant condition and thus their role in the computational speedup provided by \PP\ gates relative to matchgates.

\begin{example} \label{ex:swapinterpol}

Consider the following family of \PP\ gates:

\begin{equation}
G(I, e^{i \tau} X) =
\begin{pmatrix}
1 & 0 & 0 & 0 \\
0 & 0 & e^{i \tau} & 0 \\
0 & e^{i \tau} & 0 & 0 \\
0 & 0 & 0 & 1
\end{pmatrix},
\end{equation}
where $\tau$ goes from 0 ($\swap$) to $\pi / 2$ ($\iswap$). Besides being a \PP\ gate, this gate is also a matchgate if (and only if) $\tau$ = $\pi/2$. Now recall the universality scheme described in \sec{fermreview_b} (cf. \fig{JozsaDem}): (i) encode each logical qubit state $\ket{0_L}$ ($\ket{1_L}$) into two physical qubits as $\ket{00}$ ($\ket{11}$), as per \eq{evenencoding}; (ii) implement any encoded single-qubit gate $A_L$ by a nearest-neighbor matchgate $G(A,A)$; (iii) implement any encoded $\cz$ gate using a $\swap$ followed by a $\fswap$; and finally (iv) measure the output in the computational basis. The only non-matchgate unitary in this scheme is a $\swap$ in \sfig{JozsaDem}{b}---let us replace it by $G(I, e^{i \tau} X)$. The resulting encoded two-qubit gate is
\begin{equation}
G(I, e^{i \tau} X) \cdot G(Z, X) = \begin{pmatrix}
1 & 0 & 0 & 0 \\
0 & e^{i \tau} & 0 & 0 \\
0 & 0 & e^{i \tau} & 0 \\
0 & 0 & 0 & -1
\end{pmatrix}.
\end{equation}
This matrix is diagonal, and thus preserves the encoding [cf.\ \eq{evenencoding}], as it should. We can also check that the nonlocal parameters [cf.\ \eq{NonlocalPar}] for this gate are $\{ 0 , 0, \frac{\pi}{4} - \frac{\tau}{2} \}$ and so, by \eq{entanglement}, its entangling power is $\cos^2 \tau$. This is consistent with what we expect: the entangling power is maximum when we use the \swap, but goes to 0 as the gate approaches the \iswap, in which case any circuit built out of these building blocks must be classically simulable---since then the \emph{physical} circuit is composed only of matchgates (alternatively, the \emph{logical} circuit is composed only of non-entangling gates).

However, perfect entanglers are not necessary for an universal set \cite{Bremner2002}. If arbitrary single-qubit gates are available, a gate that creates any (nonzero) amount of entanglement is sufficient for universal quantum computation. In view of this, we conclude that $G(I, e^{i \tau} X)$ is sufficient to achieve universal quantum computation, together with matchgates, if and only if $\tau \neq \pi/2$---or, equivalently, \emph{if and only if it is not a matchgate itself}. \qed
\end{example}

We have thus obtained a continuous family of gates akin to the $\swap$ that can replace it. While there is an inverse relation between entangling power of the physical gate and entangling power of the resulting logical gate (as evident by the extreme cases of the $\swap$ and $\iswap$), it will soon become clear that this is just an attribute of this particular gate family, and there is no such relation for \PP\ gates in general.

\ex{swapinterpol} suggests that the parameter $\tau$ in $G(I, e^{i \tau} X)$ is fundamental in obtaining a universal set. We will now formalize and generalize this intuition. We start with a convenient parameterization of general \PP\ matrices. It is well-known \cite{LivroNielsen} that any single-qubit unitary matrix can be parametrized as
\begin{equation*}
\begin{pmatrix}
\cos \theta e^{i (\beta + \alpha )} & i \sin{\theta} e^{i (\beta + \gamma )} \\
i \sin{\theta} e^{i (\beta - \gamma )} & \cos{\theta} e^{i (\beta - \alpha )}
\end{pmatrix},
\end{equation*}
where $\theta$, $\beta$, $\alpha$ and $\gamma$ are real parameters in the interval $[0,2 \pi )$. Note that the determinant of this matrix depends only on $\beta$. By parameterizing $A$ and $B$ in this way, the most general \PP\ matrix can be written as 
\begin{equation} \label{eq:GenPP}
G(A,B) = \begin{pmatrix}
\cos \theta \; e^{i (\beta + \alpha )} & 0 & 0 & i \sin{\theta} \; e^{i (\beta + \mu )} \\
0 & \cos \phi \; e^{i (-\beta + \gamma )} & i \sin{\phi} \; e^{i (-\beta + \nu )} & 0 \\
0 & i \sin{\phi} \; e^{i (-\beta - \nu )} & \cos \phi \; e^{i (-\beta - \gamma )} & 0 \\
i \sin{\theta} \; e^{i (\beta - \mu )} & 0 & 0 & \cos \theta \; e^{i (\beta - \alpha )}
\end{pmatrix}
\end{equation}
in terms of 7 real parameters: $\{ \theta, \alpha, \gamma, \phi, \mu, \nu, \beta \}$. Although $A$ and $B$ each have 4 free parameters, $G(A,B)$ is defined only up to global phase, and we have fixed the determinant such that det$A = 1/\textrm{det}B = e^{2 i \beta}$.

How much ``nonlocal freedom'' remains in the 7-parameter family above? In other words, how does the \PP\ restriction constrain the nonlocal parameters of $G(A,B)$? To answer this, let us explicitly write the nonlocal ``core'' defined in \eq{NonlocalPar}:
\begin{equation} \label{eq:NonlocalCore}
U_{NL} =
\begin{pmatrix}
\cos (a-b) \; e^{i c} & 0 & 0 & i \sin{(a-b)} \; e^{i c} \\
0 & \cos (a+b) \; e^{-i c} & i \sin{(a+b)} \; e^{-i c} & 0 \\
0 & i \sin{(a+b)} \; e^{-i c} & \cos{(a+b)} \; e^{-i c} & 0 \\
i \sin{(a-b)} \; e^{i c} & 0 & 0 & \cos (a-b) e^{i c}
\end{pmatrix}.
\end{equation}

Comparing \eq{GenPP} and \eq{NonlocalCore} we see that $U_{NL}$ is, itself, a \PP\ gate, with nonlocal parameters $\{ \frac{\phi + \theta}{2}, \frac{\phi - \theta}{2}, \beta \}$, which points us to two important facts. First, that general \PP\ gates can have any value of the nonlocal parameters (and, consequently, of any local invariant derived from them), hence any $\SU{(4)}$ matrix is locally equivalent to a \PP\ gate. Second, that the determinant condition defining matchgates ($\beta=0$) is equivalent to fixing one of these nonlocal parameters ($c=0$). Note that in this case there is no ambiguity in the definition of $c$---any permutation between $c$ and either $a$ or $b$ must be implemented by some non-\PP\ single-qubit gate, such as $H$. Thus, if we restrict our attention only to \PP\ gates, parameter $c$ indeed becomes a local invariant, and our characterization of matchgates as the subset of \PP\ gates with $c=0$ is meaningful.

Now compare \eq{GenPP} and \eq{NonlocalCore} again. Three of the independent parameters of $G(A,B)$ have been accounted for, in terms of $\{ a,b,c \}$. The remaining four parameters are given by individual phases, and can be obtained by multiplying whole rows and columns by phases (i.e., by applying $Z$-rotations to the left and to the right). Recalling from \sec{fermreview_a} that single-qubit $Z$-rotations are also matchgates, we can write \eq{GenPP} as
\begin{equation*}
G(A,B) = G \bigl ( R_Z(\tau_1), R_Z(\tau_2) \bigr ) \; U_{NL}  \; G \bigl ( R_Z(\tau_3), R_Z(\tau_4) \bigr ),
\end{equation*}
where $\{\tau_1,\tau_2,\tau_3,\tau_4\}$ are defined as $\{\alpha + \mu, \gamma + \nu , \alpha - \mu , \gamma - \nu \}$. To put this equation in an even more convenient form, recall from \eq{NonlocalCore} that $U_{NL}$ is an exponential involving Hermitian operators $X \otimes X$, $Y \otimes Y$, and $Z \otimes Z$. Since these operators commute, the $Z \otimes Z$ term can factor out in its own exponential. Furthermore, since  $Z \otimes Z$ also commutes with single-qubit $Z$ rotations, and since $X \otimes X$ and $Y \otimes Y$ generate matchgates [cf.\ \eq{JWquad2}], we can write the equation above as
\begin{align}
G(A,B) & = G \bigl ( R_Z(\tau_1), R_Z(\tau_2) \bigr ) \; G \bigl ( R_X(\theta), R_X(\phi) \bigr ) \; G \bigl ( R_Z(\tau_3), R_Z(\tau_4) \bigr ) \; e^{i c Z \otimes Z} \notag \\
& = G \bigl ( R_Z(\tau_1) R_X(\theta) R_Z(\tau_3), R_Z(\tau_2) R_X(\phi) R_Z(\tau_4) \bigr ) \; e^{i c Z \otimes Z}  \label{eq:PPdecomposition}
\end{align}

This decomposition for a general \PP\ gate connects nicely with our initial description of the generators of matchgates in \sec{fermreview_a}. It shows that any \PP\ gate $G(A,B)$ can be written as a matchgate multiplied by $e^{i c Z \otimes Z}$, where $c$ is directly connected to the relative phase between $A$ and $B$, and furthermore these two unitaries commute. It also reiterates our previous claim, that matchgates correspond exactly to the set of \PP\ gates where $c=0$ since, if we impose this condition on \eq{PPdecomposition}, what remains is a completely arbitrary matchgate\footnote{Again, up to an irrelevant global phase.}.

We can now return to the universality scheme of \sec{fermreview_b}. Recall that, there, the only non-matchgate was the $\swap$, and it only appeared in the following sequence:
\begin{equation} \tag{\ref{eq:CZL}}
\cz = \fswap \cdot \swap.
\end{equation}
That is, the $\swap$ was used in conjunction with a matchgate ($\fswap$) to implement an entangling two-qubit gate. This $\cz$ gate on the physical qubits corresponded precisely to a $\cz$ gate between the \emph{logical} qubits, hence supplying the entanglement necessary for the encoded universality. This gate is also diagonal, which is required so as to not disrupt the encoding of \eq{evenencoding}. At that point in \sec{fermreview_b} we interpreted the role of the $\swap$ as undoing an unwanted interchange of the qubit states induced by the $\fswap$. We will now reverse that intuition: in fact, the role of $\swap$ is to induce a relative phase between the odd and even parity subspaces (i.e.\ parameter $c$), but it also induces an unwanted exchange of the qubit states, which the $\fswap$ undoes. This intuition is motivated by the following Theorem (rephrased from \cite{Brod2011}):

\begin{theorem} \label{thm:PPTheorem}
Let $G(A,B)$ be any parity-preserving unitary, and $\mathcal{M}$ be the set of all matchgates together with $G(A,B)$. If $G(A,B)$ is not a matchgate (i.e.\ if det$A \neq$  det$B$), gates from $\mathcal{M}$ acting only on nearest-neighbor qubits are universal for quantum computation.
\end{theorem}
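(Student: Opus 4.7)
The strategy is to leverage the decomposition \eq{PPdecomposition}, which writes any PP gate $G(A,B)$ as a matchgate times $e^{ic Z\otimes Z}$, with $c\neq 0$ precisely when $G(A,B)$ violates the determinant condition. One can check that $Z\otimes Z$ commutes with every matchgate generator listed in \eqs{JWquad1}{JWquad2}, so the two factors in the decomposition commute and, in particular, $e^{ic Z\otimes Z}$ commutes with every matchgate. Consequently, applying $G(A,B)$ on an adjacent pair of qubits and following it with the inverse of its matchgate factor---itself a matchgate, hence in $\mathcal{M}$---effectively implements the pure two-qubit phase gate $e^{ic Z\otimes Z}$ between those same nearest neighbors.

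I then adopt the encoded universality scheme of \sec{fermreview_b}, with the parity encoding \eq{evenencoding}, so that physical qubits $\{2j-1,2j\}$ carry logical qubit $j$. Logical single-qubit gates $A_L$ are supplied directly by the nearest-neighbor matchgate $G(A,A)$ on the corresponding encoding pair, as in \sfig{JozsaDem}{a}. For a logical entangling gate between adjacent logical qubits, supported on physical qubits $\{1,2\}$ and $\{3,4\}$, I apply the extracted $e^{ic Z\otimes Z}$ to the nearest-neighbor physical pair $(2,3)$. Being diagonal in the computational basis, this gate preserves the encoded subspace, and a direct evaluation on the four encoded basis states yields $\mathrm{diag}(e^{ic},e^{-ic},e^{-ic},e^{ic})$ on the logical qubits, i.e.\ $e^{ic Z_L\otimes Z_L}$ up to global phase.

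By \eq{entanglement} this logical gate has entangling power $\sin^2(2c)$. The decomposition \eq{PPdecomposition} restricts $c$ to $[0,\pi/2)$, and the assumption $\det A\neq\det B$ excludes $c=0$, so $c\in(0,\pi/2)$ and the entangling power is strictly positive. By the result of Bremner \emph{et al.}\ \cite{Bremner2002}, arbitrary logical single-qubit gates combined with any fixed entangling two-qubit gate densely generate $\SU(2^n)$ on the encoded qubits, so $\mathcal{M}$ achieves encoded quantum universality using only nearest-neighbor gates.

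The main obstacle I foresee is purely bookkeeping: one must choose the compensating matchgate so that the net effect is exactly $e^{ic Z\otimes Z}$, without any residual matchgate content that would leak a logical state out of the even-parity encoded subspace. The mutual commutation of $Z\otimes Z$ with every matchgate generator makes this clean, but the explicit identification of the matchgate factor in \eq{PPdecomposition} (the product of $R_Z$ and $R_X$ rotations on the two subspaces) has to be tracked carefully to write down the inverting gate. A minor secondary check, that no value of $c\in(0,\pi/2)$ accidentally annihilates the entangling power, is immediate from $\sin^2(2c)>0$ on the open interval.
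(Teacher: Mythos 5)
Your proof is correct and takes essentially the same route as the paper's: cancel the matchgate factor of \eq{PPdecomposition} to isolate the entangler $E=e^{icZ\otimes Z}$ on a nearest-neighbor pair, use the parity encoding with $G(A,A)$ for logical single-qubit gates, and invoke the nonzero entangling power $\sin^2 2c$ together with \cite{Bremner2002}. Your explicit check that the physical $e^{icZ\otimes Z}$ on the middle pair acts as $e^{icZ_L\otimes Z_L}$ on the encoded qubits is merely left implicit in the paper, so nothing essential differs.
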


\begin{proof}
The encoding of the qubits, implementation of single-qubit gates and measurement of final output are all the same as in the proof of universality of \sec{fermreview_b}. We only provide the explicit construction for an entangling two-qubit gate. To that end, begin with the decomposition of \eq{PPdecomposition}, that we write as
\begin{equation*}
G(A,B) = G(A',B') e^{i c Z \otimes Z},
\end{equation*}
where $c$ is such that det$A=1/ \textrm{det}B$ = $e^{2 i c}$ [cf.\ \eq{GenPP}], and $A'$ and $B'$ are the ``$\SU(2)$ versions'' of $A$ and $B$, respectively (i.e., with global phases fixed such that the determinant is 1). Then a suitable entangling gate $E$ can be obtained simply by canceling out the unwanted part of $G(A,B)$
\begin{equation*}
E = G(A,B) \cdot G(A',B')^{\dagger} = e^{i c Z \otimes Z}.
\end{equation*}
From \eq{entanglement}, it is straightforward to see that $e_p(E)=\sin^2{2c}$. Hence, $E$ is (partially) entangling as long as $c \neq 0$, i.e., as long as $G(A,B)$ is not a matchgate, and it is well-known that partially entangling two-qubit gates suffice for universal quantum computation if arbitrary single-qubit gates are available \cite{Bremner2002}. This completes a universal set of encoded operations built only of nearest-neighbor gates in $\mathcal{M}$.
\end{proof}

We now justify our intuition described previously: the role of the $\swap$ in \eq{CZL} is to provide a relative phase between the odd and even parity subspaces, and the role of the $\fswap$ is that of undoing the unwanted exchange in the qubits. As \thm{PPTheorem} shows, this same reasoning applies to any \PP\ gate that we may want---if $G(A,B)$ is a \PP\ gate, its role in the universal scheme is to provide some nonzero value of the parameter $c$, and we must apply a $G(A',B')^{\dagger}$ operation to cancel out any unwanted effect it may have (that could, for example, disrupt the encoding).

\subsection{Discussion} \label{sec:fermnew_a_disc}

\thm{PPTheorem} shows that any non-matchgate \PP\ gate suffices to extend nearest-neighbor matchgates to universality, and the $\swap$ gate is just one particular example. Furthermore, it establishes a certain dichotomy---matchgates jump from classically simulable to quantum universal by relaxing the determinant condition by any amount\footnote{Obviously, any \emph{constant} amount. If the gate $G(A,B)$ has a parameter $c$ that somehow decreases with the size of the input, our results might not hold, but this seems a very unnatural assumption, both in theoretical and experimental terms.}, for any gate, and they do not display any intermediate power similar to that of IQP, BosonSampling, or constant-depth quantum circuits (cf.\ \sec{introduction_c}). In the next section, when we consider changes in the underlying graph, the jump in computational power will also be abrupt in this sense. 

So far, we have restricted our attention to \PP\ gates, as they are a natural generalization of matchgates. Are there other single- or two-qubit gates that can also fulfill this role? It is easy to see that matchgates together with arbitrary single-qubit gates form a universal set, since matchgates include many perfect entanglers \cite{Terhal2002}. The Hadamard gate alone is sufficient for this purpose, since, as mentioned in \sec{nlpar}, it can be used to implement the permutation $a \leftrightarrow c$. Thus, a matchgate with $a \neq 0$ conjugated by $H^{\otimes 2}$ is a non-matchgate \PP\ gate, satisfying the conditions of \thm{PPTheorem}. Other $H$-like gates that implement similar permutations (i.e. $a \leftrightarrow c$ and $b \leftrightarrow c$, but not $a \leftrightarrow b$) can be used in the same fashion. While this use of the $H$ to extend the power of matchgates is natural in our context, it is not particularly surprising: matchgates contain all single-qubit $Z$-rotations which, when added to the $H$ gate, generate arbitrary single-qubit gates, and we obtain a universal set even without encoding. This section's focus on \PP\ gates is natural as we encoded the logical qubits on the even parity subspace of two qubits, but other encodings could lead to different universal constructions, which in turn may help identify other unitaries that can complement the set of matchgates in this manner.

In this section, we have also pointed out that the parameter $c$, in the context of \PP\ gates, is qualitatively different from the other two, since we cannot implement the permutations $a \leftrightarrow c$ and $b \leftrightarrow c$. To further make this point, note that the subset of \PP\ gates with $a=0$ is already universal, since we can use single-qubit $Z$ rotations to implement the permutation $a \leftrightarrow b$, and so this subset trivially generates the whole set of \PP\ gates. This point can be understood easily if we return to the fermionic formalism. As mentioned in \sec{fermreview_a}, if we map matchgates to fermionic operators using the Jordan-Wigner transformations [cf.\ \eqs{JWquad1}{JWquad2}], we obtain Hamiltonians quadratic in the fermionic operators (i.e.\ noninteracting fermions). On the other hand, from \eq{JWc} we see that $Z_i Z_{i+1} = - c_{2i-1} c_{2i} c_{2i+1} c_{2i+2}$, which is a quartic operator, and so describes an interaction between the fermions. Thus, the break in the permutation symmetry among the three nonlocal parameters, which is not obvious in the algebraic formalism, turns out to have a very physical meaning in the fermionic formalism. In fact, in \cite{Bravyi2002}, Bravyi and Kitaev had already showed that the exp$\{ i \frac{\pi}{4} Z \otimes Z \}$ interaction can be used to implement a $\cz$ gate, which is then used to generate a universal set with fermionic modes. 

Our result is closely related to the one in \cite{Bravyi2002}, but also more general as we provide an explicit construction where \textit{any} non-matchgate $G(A,B)$ matrix can be used. This alternative approach can be useful whenever circuits of \PP\ gates arise in physical systems other than fermionic linear optics. For example, in the fermionic picture the $\fswap$ corresponds to a simple exchange of the fermionic modes, whereas the $Z \otimes Z$ Hamiltonian corresponds to an interaction between the fermions. In contrast, in the algebraic formalism followed throughout most of this chapter, these roles are reversed: the $\fswap$ is a maximally entangling operation, and so requires some interaction between the qubits, whereas the $\swap$ gate may require no interaction between the qubits at all, especially if the information carriers are flying qubits (as, for example, in a recent linear-optical implementation of matchgates \cite{Ramelow2010}). This interpretation of $c$ in terms of fermionic interaction is also not very natural in the context of the Heisenberg interactions described in \sec{ferm_reviewXY}. In that context, the $XY$ interaction has $c=0$ [cf.\ \eq{HeisenHamil}] and, if applied only on nearest neighbors, is classically simulable, whereas the exchange interaction has $c \neq 0$ and is universal\footnote{Note however that these two results neither imply nor are implied by our own.}. Just from looking at the Hamiltonians that generate the two interactions, it is not obvious what feature the exchange interaction has that is qualitatively different from the $XY$ interaction and that could grant this computational power---of course, by using the Jordan-Wigner transformation we can see how this relates to fermionic interactions, but \thm{PPTheorem} provides a much more natural and straightforward understanding of this distinction.

\section{Matchgates and XY interaction in graphs} \label{sec:fermnew_b}

So far, in this chapter, we have been concerned with the change in the computational power of matchgates when extra gates are added to the set, with all other conditions kept the same. We now consider an alternative scenario, and ask what the power of matchgates is when acting on other graphs, rather than the path. In the circuit model, qubits are often implicitly assumed to be aligned on a path. As we mentioned before, this is a natural assumption, given that the $\swap$ gate can be implemented using the $\cnot$ gate, and the $\cnot$ is one of the most standard gates one aims to implement when constructing a universal set. However, for matchgates the $\swap$ gate itself plays a central role for their computational power, and variations of this underlying arrangement of the qubits become meaningful, as they provide qualitatively different regimes. This question of interaction graphs and universal quantum computation was already posited in \cite{Kempe2001b}, in the context of the XY interaction, and we will answer it here fully both for general matchgates and for the XY interaction. The results from this section are taken from \cite{Brod2012} and \cite{Brod2013}.

Throughout the next few sections, we always consider matchgates to act on some graph. More specifically, we consider a graph such that its vertices correspond to the qubits, and we can act with a matchgate on a pair of qubits if the graph contains the corresponding edge. Without loss of generality we consider only connected graphs, as qubits in different components of a general graph cannot interact, so the components can be treated separately. Of course, changes in the underlying graph can be alternatively viewed as a very particular set of allowed long-range interactions on qubits aligned on a path. However, in order to provide a complete characterization, it will be much more convenient to consider, from hereon, that matchgates act exclusively on qubits arranged according to some graph.

\begin{figure}[t]
\capstart
\centering
\subfloat[]{\centering \raisebox{0.45in}{\includegraphics[width=0.5\textwidth]{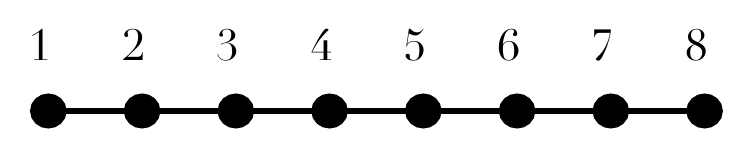}}} \qquad
\subfloat[]{\centering \includegraphics[width=0.4\textwidth]{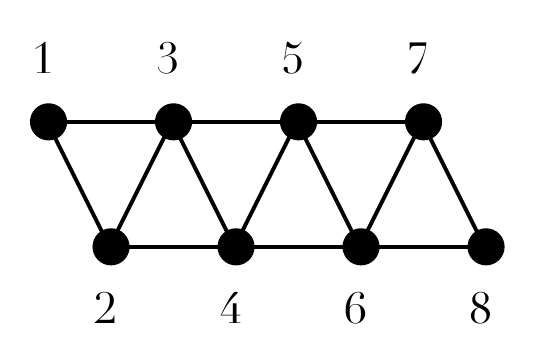}}
\caption[Correspondence between triangular ladder graph and path]{In a triangular ladder graph, vertices have a one-to-one correspondence to vertices of a path such that nearest neighbors on the triangular ladder correspond to nearest and next-nearest neighbors on the path.}
\label{fig:triangladder}
\end{figure} 

In order to further clarify this formalism, let us recast the results from the beginning of the section in terms of these interactions graphs. Recall from \chap{fermreview} that matchgates are classically simulable if allowed to act only on nearest neighbors, but become universal if allowed to act also on next-nearest neighbors. In terms of graphs, this means that matchgates are classically simulable if acting on the graph of \sfig{triangladder}{a}, which is just a path, and universal if acting on the graph of \sfig{triangladder}{b}, which we call a triangular ladder---it is easy to see that neighbors on the latter correspond to nearest and next-nearest neighbors on the former. This correspondence was already observed explicitly in \cite{Kempe2002}.

In this section, we proceed as follows. In \sec{match_arbit} we begin with two instructive examples, which culminate on our proof that matchgates are universal on any connected graph other than a path or a cycle. In \sec{simul_cyc} we build upon the simulability proof of \sec{fermreview_a} to prove the classical simulability of matchgates acting on a cycle. Finally, in \sec{XY} we specialize the result of \sec{match_arbit} and show that the XY interaction is also universal on any graph other than a path or cycle. Although this latter result implies the first, we present them separately as the first proof is easier and develops tools that are useful later on, while the simulation using the XY interaction is less explicit. The basic definitions and nomenclature concerning graphs can be found in any standard textbook on Graph Theory \cite{LivroBollobas}.

\subsection{Universality of matchgates on arbitrary graphs} \label{sec:match_arbit}

Recall, from \sec{fermreview_b}, that matchgates are universal when supplemented by the $\swap$ gate, a result that translates to their universality on the triangular ladder of \sfig{triangladder}{b}. Recall also, from \sec{fermnew_a}, that there are other gates, such as the $\fswap$ or the $\iswap$, that closely resemble the $\swap$ but fail in replacing it in the universal set, given that they are matchgates. We now show that, if we change the underlying graph, there is a sense in which these gates can replace the $\swap$. Before giving the proof for the most general case, it is instructive to work through two cases that exemplify the main ideas. 

\begin{example} \label{ex:path}
Suppose the qubits are arranged according to a graph of the form shown in \fig{appendedline}, obtained by joining a new vertex to some degree-2 vertex of a path. To prove that such a graph is universal, we use the following two tricks.

\begin{figure}
\capstart
\centering
\includegraphics[width=0.5\textwidth]{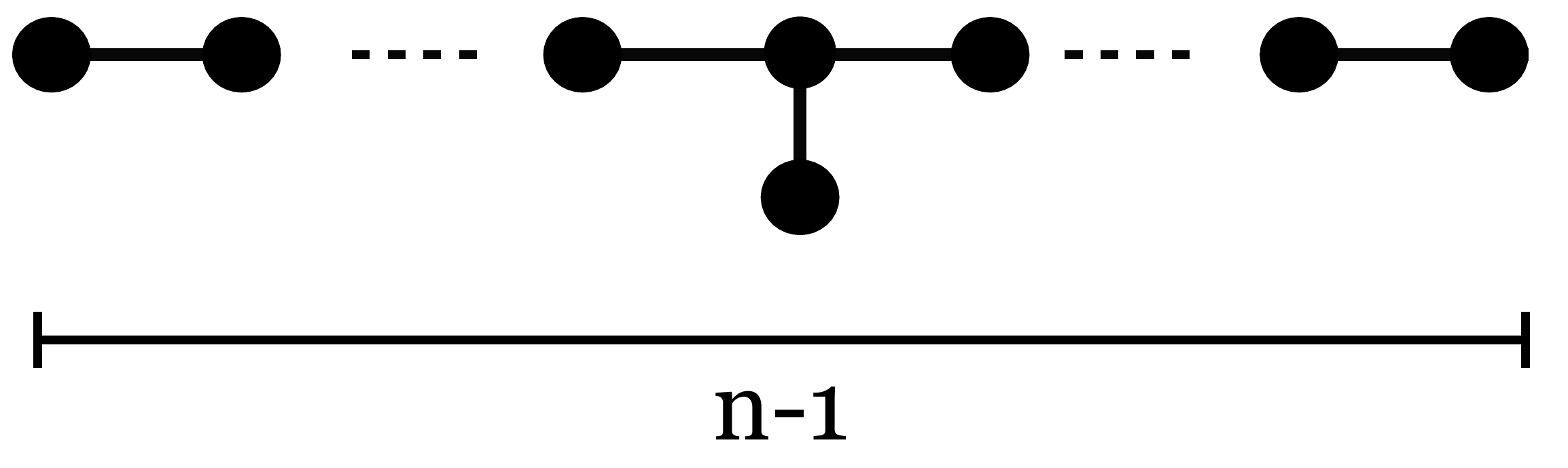}  
\caption[An $n$-vertex graph joining a new vertex to some $(n-1)$-vertex path.]{An $n$-vertex graph obtained from an $(n-1)$-vertex path by joining a new vertex to some degree-$2$ vertex of the original path.}	
\label{fig:appendedline}
\end{figure}

First, suppose we have a logical qubit in an arbitrary state $\ket{\Psi}_L=\alpha \ket{00} + \beta \ket{11}$ and a third physical qubit in any state $\ket{\phi}$. We then have the identity
\begin{equation} \label{eq:logicswap}
\fs_{12} \fs_{23} \ket{\Psi}_L \ket{\phi} = \ket{\phi} \ket{\Psi}_L,
\end{equation}
where $\fs$ is shorthand for the $\fswap$ gate, and subscripts denote the pair being acted on. The above identity follows from the trivial observation that the logical qubit is always a superposition of $\ket{00}$ and $\ket{11}$, so the $\fswap$ gate either does not induce a minus sign, or does so twice. Thus, we can use the $\fswap$ as a swapping operation provided it always exchanges the two qubits that form a logical qubit at a time. Note that, by linearity, this holds even if the logical state of qubits $1$ and $2$ is entangled with other logical qubits, as long as it is a physical state of even parity. 

The second trick is the identity
\begin{equation} \label{eq:0swap}
\fs \ket{0} \ket{\psi} = \ket{\psi} \ket{0}
\end{equation}
where $\ket{\psi}$ is the state of any physical qubit. This can be easily seen from the explicit form of the $\fswap$ gate:
\begin{equation} \label{eq:fS}
\fs = G(Z,X) = \begin{pmatrix}
1 & 0 & 0 & 0 \\
0 & 0 & 1 & 0 \\
0 & 1 & 0 & 0 \\
0 & 0 & 0 & -1
\end{pmatrix}.
\end{equation}
This follows simply because when either of the qubits is in the $\ket{0}$ state, the $\fswap$ does not induce a minus sign, behaving exactly as the $\swap$. We will use this fact to initialize some ancilla qubits in the $\ket{0}$ state and move them around as necessary. 

These two tricks consist essentially of special situations in which the $\fswap$ behaves as the $\swap$. We can now prove universality for \ex{path}. First note that the graph of \fig{branching} is guaranteed to appear as a subgraph of the one in \fig{appendedline} if the number of vertices is greater than $6$. We refer to the degree-$3$ vertex in that graph---and more generally, to any vertex of degree greater than $2$ in a tree\footnote{A tree is a graph with no cycles.}---as a branching point. We initialize two ancilla qubits near the branching point (specifically, at vertices $\alpha$ and $\beta$ in \fig{branching}) as $\ket{0}$ and encode the logical qubits using pairs of adjacent qubits as in \eq{evenencoding}. Depending on the number of vertices and the location of the branching point, some physical qubits might be unpaired, in which case one or two qubits at the endpoints may not be used.

\begin{figure}
\capstart
\centering
\includegraphics[width=0.4\textwidth]{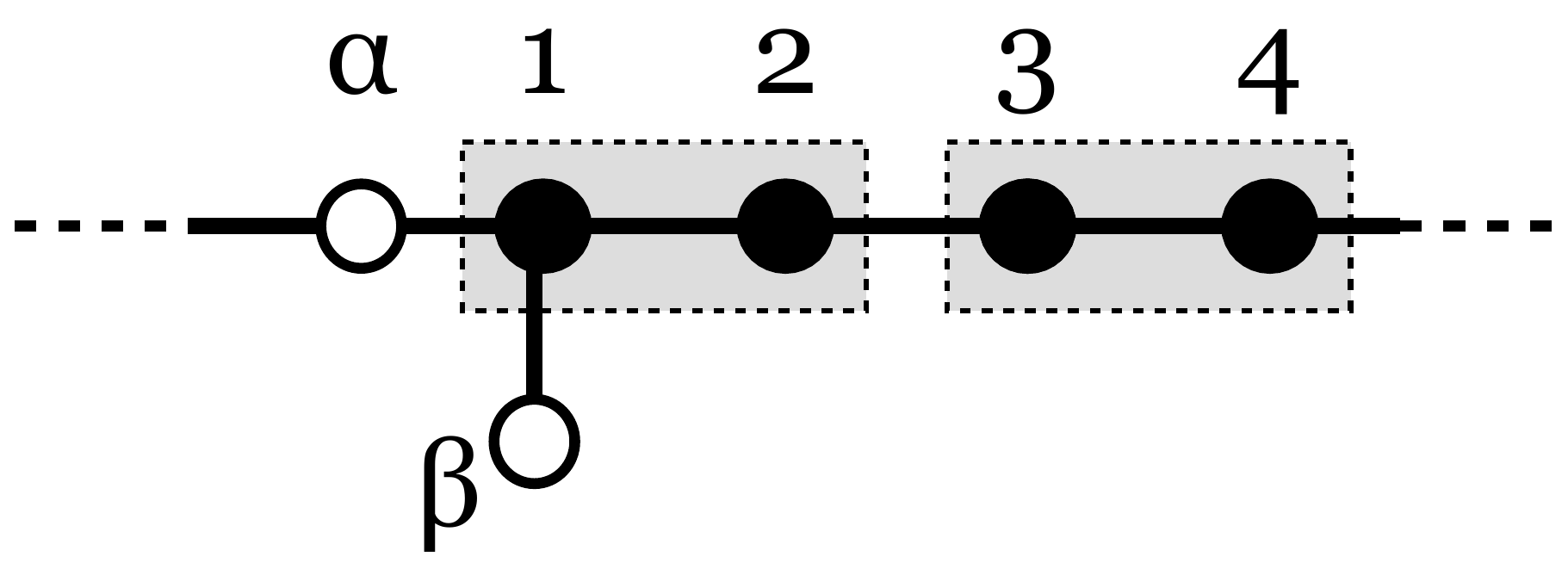}
\caption[Close-up view of the degree-$3$ vertex of the graph in \fig{appendedline}]{Close-up view of the degree-$3$ vertex of the graph in \fig{appendedline}. Vertices labeled $\alpha$ and $\beta$ correspond to ancillas initialized in the $\ket{0}$ state. Vertex pairs $\{1,2\}$ and $\{3,4\}$ correspond to the two logical qubits on which we want to implement a logical $\cz$ gate. The $\alpha$ and $\beta$ ancillas are used to change the order of the state of the other qubits, as per \eq{switch}.}
\label{fig:branching}
\end{figure} 

As discussed in \sec{fermreview_b}, a logical single-qubit gate $A$ can be implemented simply by a $G(A,A)$ matchgate between adjacent qubits. Now say we want to implement a logical $\cz$ gate between two (not necessarily adjacent) logical qubits. We first use the identity of \eq{logicswap} to place the two desired pairs near the branching point, as in \fig{branching}. Recall that the logical $\cz$ can be implemented by a physical $\cz$ between two of the four qubits (e.g., $1$ and $3$, as labeled in \fig{branching}), which in turn is equal to $\swap$ followed by $\fswap$. We can implement this sequence by swapping qubit $2$ with both qubits of pair $\{3,4\}$, which is possible by \eq{logicswap}, and then using the fact that $\alpha$ and $\beta$ are ancillas in the $\ket{0}$ state to switch the order of the qubits placed in vertices $1$ and $2$. This effectively implements the $\swap$ of \eq{CZL}. If we follow this with an $\fswap$ again between qubits $1$ and $2$, the final result is the desired $\cz$ gate. We can then use \eq{logicswap} to return all qubits to their original places. The explicit sequence is
\begin{equation} \label{eq:switch}
\fs_{23} \, \fs_{34} \, \fs_{12} \, \fs_{\beta 1} \, \fs_{12} \, \fs_{\alpha 1} \, \fs_{\beta 1} \, \fs_{12} \, \fs_{\alpha 1} \, \fs_{34} \, \fs_{23}.
\end{equation} 

This sequence uses only matchgates to implement a $\cz$ between the logical qubits which, together with the single-qubit gates mentioned previously, provides a universal set. Since any logical qubit can be moved to any desired location using $O(n)$ $\fswap$ gates, the overhead in the number of such gates grows polynomially with the number of 2-qubit gates in the original circuit. \qed
\end{example}

\begin{example} \label{ex:leaves}
Now suppose the qubits are arranged on a complete binary tree of $m$ levels, as in \fig{binarytree}\footnote{This example also shows why this formalism in terms of graphs is convenient---the description of the graph of \fig{binarytree} in terms of long-range connectivity restrictions of qubits on a path is both unnatural and cumbersome.}. This graph has  $n = 2^{m+1}-1$ vertices. Since the longest path contains only $2m-1=O(\log n)$ vertices, the strategy of \ex{path} cannot be trivially adapted to this case: the number of available logical qubits would not be sufficient.

\begin{figure}
\capstart
\centering
\includegraphics[width=0.5\textwidth]{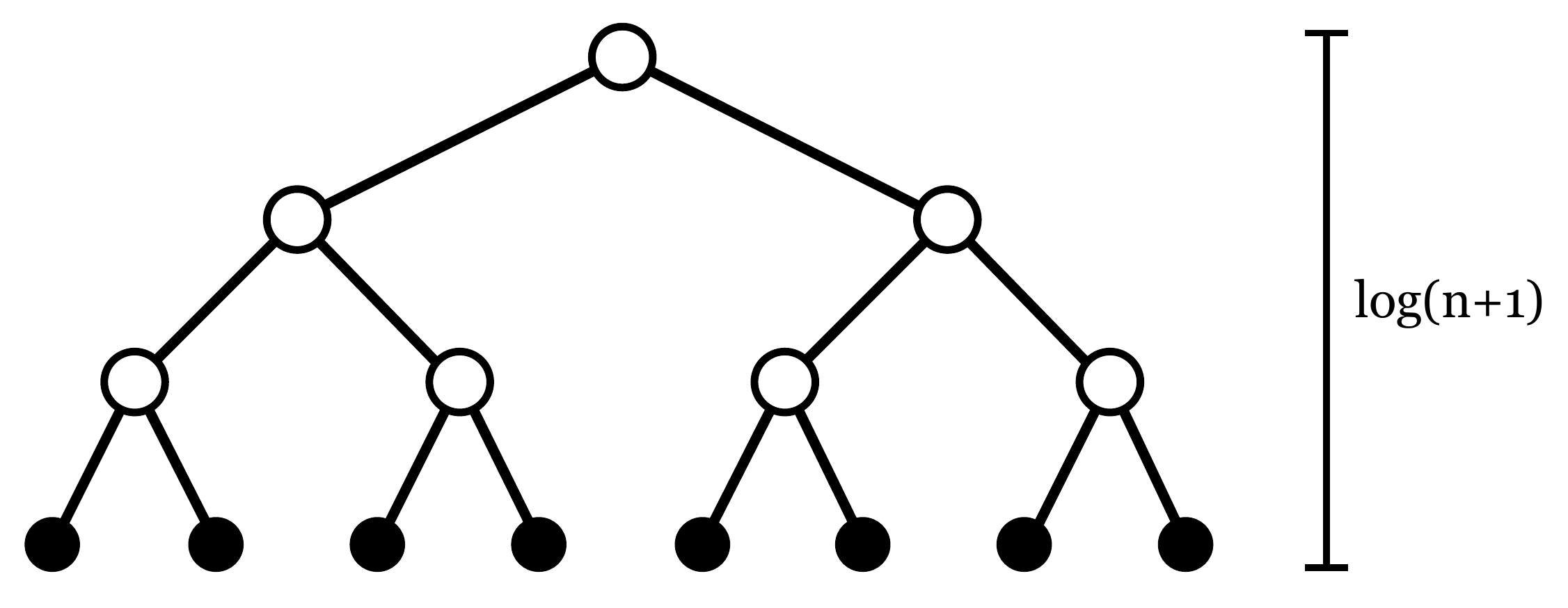}  
\caption[An $n$-vertex complete binary tree]{An $n$-vertex complete binary tree. White vertices represent $\ket{0}$ ancillas and black vertices are used in pairs to store computational qubits. This is one particular arrangement that enables universal computing with matchgates.}
\label{fig:binarytree}
\end{figure}

Instead, we store logical qubits using the $2^m=(n+1)/2$ leaves as shown in \fig{binarytree}. By using the leaves as the computational qubits and filling the paths that connect them with $\ket{0}$ ancillas, we can use the identity of \eq{0swap} to move the state of any qubit to a vertex adjacent to any other desired qubit in less than $2 \log(n/2)$ steps, apply the desired matchgate between them, and return them to their initial positions. This means we can use the $\fswap$ to implement an effective interaction between any pair among the $(n+1)/2$ computational qubits, which clearly is sufficient for universal computation, as per the construction of \sec{fermreview_b}. The overhead of this approach is modest: it requires twice the number of qubits and uses $2 \log(n)$ $\fswap$ operations per 2-qubit gate in the original circuit. Note that this approach works for any pairing of physical into logical qubits. \qed
\end{example}

\begin{figure}
\capstart
\centering
\subfloat[Square lattice]{\includegraphics[width=0.25\textwidth]{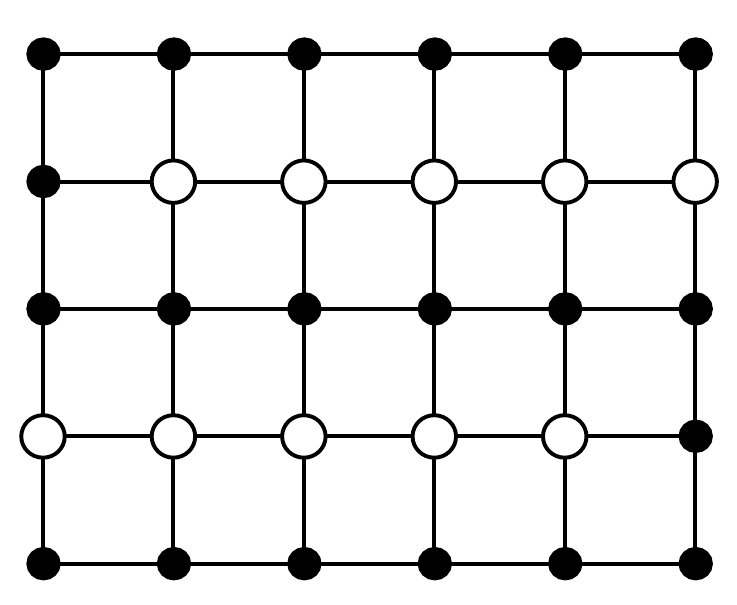}} \qquad
\subfloat[Cycle with extra vertex]{\includegraphics[width=0.25\textwidth]{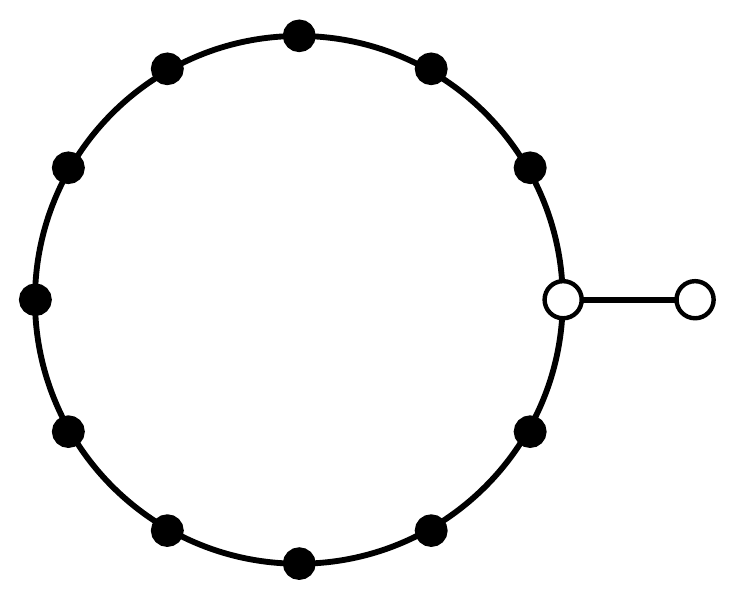}} \\
\subfloat[A $3$-regular graph]{\raisebox{0.45in}{\includegraphics[width=0.25\textwidth]{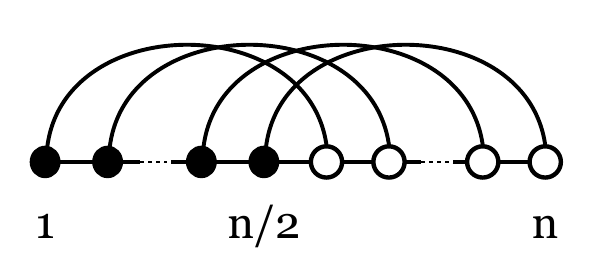}}} \qquad
\subfloat[Star graph]{\includegraphics[width=0.25\textwidth]{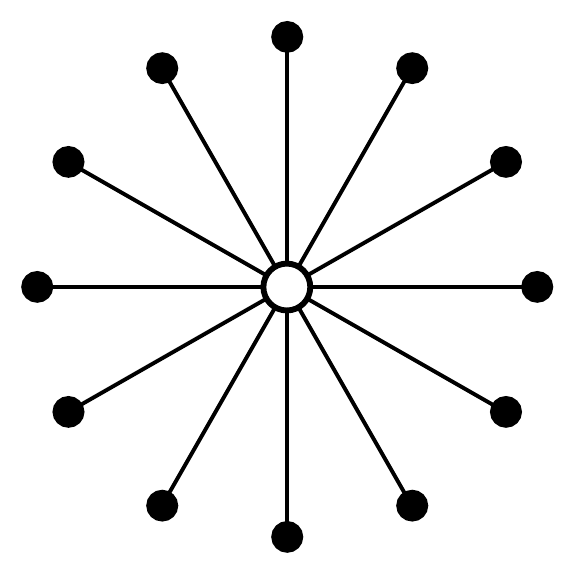}}
\caption[Several universal graphs for quantum computation with matchgates]{Several graphs that are universal for quantum computation with matchgates. White circles represent one possible placement of the $\ket{0}$ ancillas that makes the universality explicit by the arguments in Examples \ref{ex:path} and \ref{ex:leaves}.}
\label{fig:graphexamples}
\end{figure} 

These two examples prove the universality of matchgates in two extremal situations. In \ex{path}, we have a graph that has a very long path as a backbone, whereas in \ex{leaves} the longest path is too short, and we must turn to the many leaves of the graph to store the logical qubits. In fact, in \cite{Brod2012} we used adapted versions of these two ideas to prove the universality in many different graphs of interest, such as those shown in \fig{graphexamples}. Examples include regular lattices, that often arise naturally in solid state systems, the $n$-vertex star graph, that relates to the formalism of ancilla-driven quantum computation (i.e., where the computation is implemented by interaction of every qubit with one single ancilla \cite{Anders2010}), and more exotic cases. However, Examples \ref{ex:path} and \ref{ex:leaves} also raise a natural question: is it always possible to find either a sufficiently large number of leaves or a sufficiently long path, on any given graph, to enable universal computation? In \cite{Brod2013} we showed that the answer to this question is indeed yes. We begin by proving the following Lemma:

\begin{lemma} \label{lem:graph}
Let $T$ be an $n$-vertex tree with $l$ leaves and a longest path of length $p$. Then either (i) $l > \sqrt{n}$ or (ii) $p > \sqrt{n}$.
\end{lemma}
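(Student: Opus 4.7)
The plan is to argue by contrapositive: assuming both $l \leq \sqrt{n}$ and $p \leq \sqrt{n}$, I will reach a contradiction by covering every vertex of $T$ with the $l$ paths that go from a fixed endpoint of a longest path to each leaf of $T$, and then counting vertices.

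First I would fix a longest path $P = v_1 v_2 \ldots v_p$ in $T$. A preliminary observation is that both $v_1$ and $v_p$ must themselves be leaves of $T$, since if either had a neighbor outside $P$ one could prepend that neighbor to $P$ to obtain a strictly longer path. This justifies rooting $T$ at $v_1$ and treating every other vertex as a descendant of $v_1$.

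Next, for each of the $l$ leaves $\ell_i$ of $T$ (which include $v_1$ and $v_p$), let $P_i$ denote the unique path in $T$ from $v_1$ to $\ell_i$. Each $P_i$ has at most $p$ vertices, for otherwise it would itself be a path longer than $P$. The key claim is that every vertex of $T$ lies on at least one $P_i$: given any $v \in T$, greedily descending from $v$ in the rooted tree eventually produces some leaf $\ell$ of $T$, and by construction $v$ then lies on the path from $v_1$ to $\ell$. Since every $P_i$ passes through $v_1$, a standard inclusion--exclusion bound gives
\begin{equation*}
n = \left| \bigcup_{i=1}^{l} P_i \right| \leq \sum_{i=1}^{l} |P_i| - (l-1) \leq lp - (l-1) = l(p-1) + 1,
\end{equation*}
where the correction $(l-1)$ accounts for the shared vertex $v_1$ being counted once in each $P_i$.

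Finally, substituting $l \leq \sqrt{n}$ and $p \leq \sqrt{n}$ into $n \leq l(p-1) + 1$ yields $n \leq \sqrt{n}(\sqrt{n} - 1) + 1 = n - \sqrt{n} + 1$, forcing $\sqrt{n} \leq 1$, which fails for $n \geq 2$. The only nontrivial step is the covering claim that every vertex of $T$ lies on some root-to-leaf path $P_i$, but this is an immediate consequence of rooting at $v_1$; I therefore do not anticipate serious obstacles. The one subtlety worth attending to is the overcounting correction, which is precisely what sharpens the bound from $\geq$ to the strict $>$ stated in the lemma.
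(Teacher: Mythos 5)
Your proof is correct, and it takes a genuinely different route from the paper's. The paper argues by repeatedly deleting, for an arbitrary leaf, the path from that leaf to its nearest branching point (each such step removes one leaf and at most $p-1$ vertices), doing this $l-2$ times until only a path of at most $p$ vertices remains; this produces a vertex-disjoint decomposition and the bound $n \le (l-2)(p-1)+p < lp$, from which $\max\{l,p\} > \sqrt{n}$ follows directly. You instead cover $T$ by the $l$ paths from a fixed endpoint of a longest path to the leaves, bound each by $p$ vertices, correct for the shared root, and conclude from $n \le l(p-1)+1$ by contrapositive. Each approach has mild advantages: the paper's peeling yields disjoint pieces, hence a slightly sharper bound and the strict inequality without any case analysis, and it keeps branching points in view (they are also the pivot of the surrounding universality constructions); your covering argument dispenses with branching points and with the iterative bookkeeping, and is in fact even simpler than you present it --- since any simple path in a tree has at most $p$ vertices, you may root at an arbitrary vertex, so the preliminary observation that $v_1$ and $v_p$ are leaves is not actually needed. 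The only point worth flagging is the degenerate size: your contradiction $\sqrt{n} \le 1$ requires $n \ge 2$, but this is immaterial since the lemma is applied to spanning trees containing a vertex of degree at least $3$.
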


\begin{proof}
Choose any leaf $v$ of $T$. Delete every vertex on the path from $v$ to the nearest branching point, not including the branching point (see \fig{arbitree}). Since, by hypothesis, this path has length smaller than $p$, the result is a subtree of $T$ where one leaf and at most $p-1$ vertices are removed. Repeat this procedure until only a path remains (i.e., $l-2$ times). Finally, delete the remaining path, removing the last two leaves and at most $p$ vertices. This process deletes every vertex in $T$. Therefore $n \leq (l-2)(p-1)+p < lp$, so $\max\{l, p\} > \sqrt{n}$ as claimed.
\end{proof}

\begin{figure}
\capstart
\centering
\includegraphics[width=0.5\textwidth]{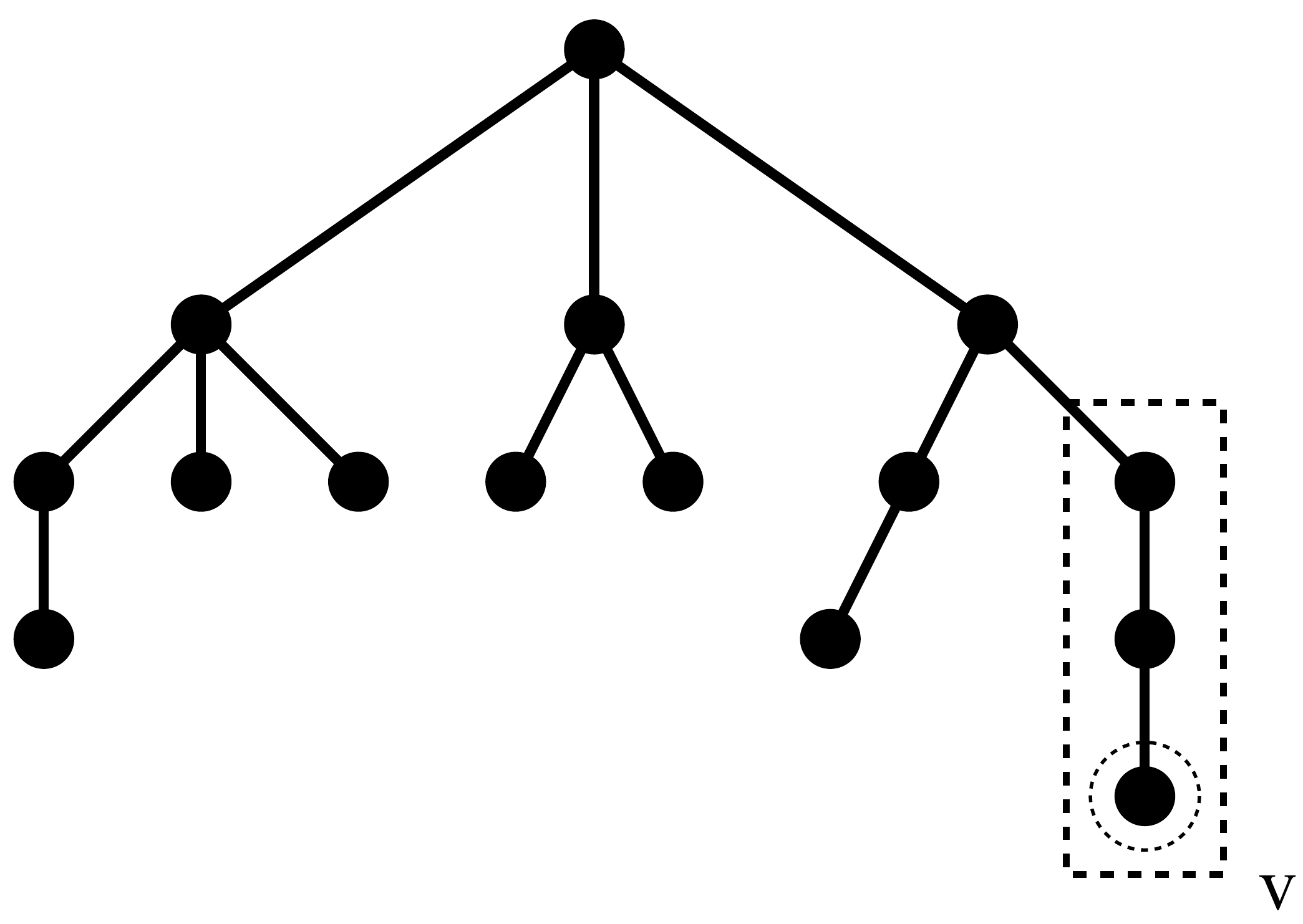}  
\caption[A tree]{A tree. The dashed rectangle indicates the vertices in the path from $v$ to the nearest branching point, which are deleted in the proof of \lem{graph}. Upon deletion of these vertices, the remaining tree has one fewer leaf, and at most $p-1$ vertices have been removed.}	
\label{fig:arbitree}
\end{figure}

The main result follows straightforwardly from \lem{graph} and the previous examples:

\begin{theorem} \label{thm:maintheoarbgraph}
Let $G$ be any $n$-vertex connected graph, other than a path or a cycle, where every vertex represents a qubit and we can implement arbitrary matchgates between neighbors in $G$. Then it is possible to efficiently simulate (i.e., with polynomial overhead in the number of operations) any quantum circuit on $\Omega(\sqrt{n})$ qubits.
\end{theorem}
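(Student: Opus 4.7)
The plan is to reduce to a spanning tree of $G$ and then split into two cases according to \lem{graph}, applying respectively the strategies of \ex{path} and \ex{leaves}.

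First I would argue that it suffices to work inside a spanning tree of $G$. Since $G$ is connected but is neither a path nor a cycle, $G$ must contain a vertex $v$ of degree at least $3$: if every vertex had degree at most $2$, a connected graph on $n$ vertices would necessarily be a single path or a single cycle. Build a spanning tree $T$ of $G$ as a BFS tree rooted at $v$; then $v$ retains all of its $\geq 3$ neighbors in $T$, so $T$ is not a path. Because any matchgate available on an edge of $T$ is also available on $G$, it is enough to prove the theorem for $T$. Applying \lem{graph} to $T$, either (A) the longest path $P$ in $T$ has length $p > \sqrt{n}$, or (B) $T$ has more than $\sqrt{n}$ leaves.

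In case (A), I would implement the strategy of \ex{path}. First, $P$ must contain a branching point: its endpoints are leaves (otherwise $P$ could be extended, violating maximality), and if every internal vertex of $P$ had degree $2$ in $T$, then $P$ would be a connected component of $T$, forcing $T=P$ and contradicting the fact that $T$ is not a path. Fix such a branching point $u\in P$; it has at least two neighbors along $P$ and at least one neighbor off $P$. I initialize a $\ket{0}$ ancilla at one off-$P$ neighbor of $u$, and a second $\ket{0}$ ancilla at $u$'s neighbor on the shorter side of $P$, so that the local picture around $u$ reproduces \fig{branching}. Pair adjacent physical qubits along the longer side of $P$ from $u$ to obtain $\Omega(\sqrt n)$ logical qubits via the encoding \eq{evenencoding}. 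Logical single-qubit gates $A_L$ are implemented by matchgates $G(A,A)$ between adjacent physical pairs, and a logical $\cz$ between two encoded qubits is implemented by routing them next to $u$ via \eq{logicswap} and applying the matchgate sequence \eq{switch}, which realizes $\swap\cdot\fswap=\cz$ with the help of the ancillas.

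In case (B), I would implement the strategy of \ex{leaves}. Place one physical qubit on each of the $\Omega(\sqrt n)$ leaves of $T$, and initialize every internal vertex in $\ket{0}$. The identity \eq{0swap} then allows me to push any leaf-qubit along its unique path in $T$ to become adjacent to any other leaf, at a cost of at most $2\,\mathrm{diam}(T)=O(n)$ $\fswap$ operations, apply an arbitrary matchgate between the resulting adjacent pair, and then reverse the routing. This realizes effective all-to-all matchgate connectivity on the $\Omega(\sqrt n)$ leaf-qubits, and the matchgate-plus-$\swap$ construction recalled in \sec{fermreview_b} then yields encoded universality on $\Omega(\sqrt n)$ logical qubits.

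In both cases, each gate of the original quantum circuit is replaced by at most $\poly(n)$ matchgates on edges of $T$, so the simulation is efficient. The main obstacle I anticipate is the bookkeeping in case (A): in a general tree the immediate neighborhood of $u$ need not match \fig{branching} exactly, so I would need to verify that the two required $\ket{0}$ ancillas can always be placed on edges incident to $u$ as prescribed by \eq{switch} (allowing, if necessary, a short detour into the branch off $P$), and that every physical qubit of $T$ not used for a logical pair or as an ancilla can be left in $\ket{0}$ and decoupled without polluting the encoded computation.
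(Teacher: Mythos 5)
Your proposal is correct and follows essentially the same route as the paper's proof: pass to a spanning tree that is not a path (guaranteed by a degree-$\geq 3$ vertex), invoke \lem{graph}, and then apply the construction of \ex{path} when the longest path exceeds $\sqrt{n}$ and that of \ex{leaves} when the number of leaves does, with the same $\ket{0}$-ancilla routing via \eqs{logicswap}{0swap}. The bookkeeping worry you raise in case (A) is resolved exactly as you suggest and as the paper does—using the branching point together with its off-path neighbor as the two free $\ket{0}$ slots—so no genuinely different ingredient is involved.
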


\begin{proof}
Since $G$ is not a path or a cycle, it has some spanning tree $T$ that is not a path\footnote{A spanning tree of $G$ is a tree that contains all vertices of $G$.}. This holds because $G$ necessarily contains a vertex of degree more than 2 and one can construct a spanning tree that includes all edges adjacent to this vertex. It suffices to show that universal computation can be implemented in $T$, since all edges of $T$ are edges of $G$. By \lem{graph}, either (i) the longest path of $T$ or (ii) the set of all its leaves must have more than $\sqrt{n}$ vertices. 

First, suppose (i) holds. Assign each qubit of a longest path of $T$ as a computational qubit, with the exception of one qubit at a branching point. We also use one qubit adjacent to the branching point and not in the path as an ancilla. All other qubits are ignored. We implement the circuit as shown in \ex{path} of the previous section. Since the longest path has more than $\sqrt{n}$ vertices by hypothesis, this allows the simulation of an arbitrary quantum circuit on $\lfloor (\sqrt{n}-1)/2 \rfloor$ qubits. This simulation uses $O(n)$ $\fswap$ operations for each two-qubit gate.

Otherwise (ii) holds, so $T$ has more than $\sqrt{n}$ leaves. Proceed by assigning every qubit at a leaf as a computational qubit and initializing every other qubit as a $\ket{0}$ ancilla. The intermediate vertices on the (unique) path between any two leaves represent qubits in the $\ket{0}$ state. As in \ex{leaves}, we can use the identity of \eq{0swap} to move the state of any qubit to a vertex adjacent to any other, implement a matchgate, and move it back. Thus we can effectively implement any matchgate between any pair of logical qubits. Since the longest path has length less than $\sqrt{n}$, this simulation uses $O(\sqrt{n})$ $\fswap$ operations for each gate in the original circuit.
\end{proof}

\thm{maintheoarbgraph} proves the universality of matchgates on any graph that is not a path or a cycle, neatly tying all the particular cases from \cite{Brod2012} in one single result. However, the fact that it does not work for the cycle suggests that a degree-3 vertex is necessary for the universality. The proof method also indicates this---both cases (i) and (ii) of the Theorem rely on the existence of branching points to maneuver the states of the qubits around according to \eqs{0swap}{logicswap}. In the next section we show that a branching point, besides being a sufficient condition, is indeed also necessary.

\subsection{Classical simulation of matchgates on the cycle} \label{sec:simul_cyc}

I will now prove that matchgates are classically simulable on the cycle \cite{Brod2013}. In \sec{fermreview_a}, I reproduced a proof from \cite{Knill2001a,Terhal2002,Jozsa2008b} that matchgates are classically simulable on a path, if the input of the circuit is a product state and the output is the measurement of some qubit $k$ in the computational basis (i.e., this corresponds to strong simulation, as described in \sec{introsimul}). The proof was along the following lines. I started by showing that any unitary $U$ corresponding to a circuit of nearest-neighbor matchgates corresponds to a linear transformation on the space of the fermionic operators $c_i$. More specifically, there is some rotation $R \in\SO(2n)$ such that
\begin{equation*}
U^{\dagger} c_i U = \sum_{j=1}^{2n} R_{i,j} c_j,
\end{equation*}
where the $c_i$ are the operators defined in \eq{JWc}. Furthermore, recall that to compute the probability $p_0$ that qubit $k$ will be measured in the $\ket{0}$ state, it suffices to compute the expectation value $\langle Z_k \rangle = p_0 - p_1 = 2p_0 -1$. Since $Z_k = c_{2k-1} c_{2k}$, for any input product state $\ket{\psi}=\ket{\psi_1} \ket{\psi_2}\ldots\ket{\psi_n}$ we can write
\begin{equation*} \tag{\ref{eq:expectedZ}}
\langle Z_k \rangle = -i \bra{\psi} U^{\dagger} c_{2k-1} c_{2k} U \ket{\psi} = -i \sum_{a,b=1}^{n} R_{2k-1, a} R_{2k, b} \bra{\psi} c_a c_b \ket{\psi}.
\end{equation*}
Finally, since any monomial $c_a c_b$ is a tensor product of Pauli operations, each expectation value in the right-hand side of expression above factors into a product $\prod_{i=1}^{n} \bra{\psi_i} \sigma_i \ket{\psi_i}$. Hence, the desired probability can be calculated from a polynomial number of efficiently computable terms, so the action of the circuit is classically simulable.

However, this result does not immediately apply to the case of a cycle, which corresponds to applying periodic boundary conditions to a path, because a matchgate between the first and last qubits does not translate into a Hamiltonian that is quadratic in the $c_i$s, and vice versa. For example,
\begin{equation} \label{eq:notnnmatch}
c_1 c_{2n} = i X_1 X_n \prod_{i=1}^{n} Z_i,
\end{equation}
which is clearly not a matchgate, as it is a unitary operation acting on every qubit in the circuit. 

Note that \thm{quadratic} still applies to the Hamiltonian in \eq{notnnmatch} even though it does not correspond to a matchgate. However, we do not have a straightforward way of writing the operators we need, such as $X_1 X_n$, in terms of these quadratic operators. 

To show that matchgates are simulable in this case nonetheless, first consider the case where the input state $\ket{\psi}$ is a computational basis state. Suppose that $\ket{\psi}$ has even parity (e.g., $\ket{000\ldots0}$). Matchgates preserve parity, so the state at any point in the computation has a well-defined (even) parity. Now notice that $\prod_{i=1}^{n} Z_i$ is the operator that measures overall parity, so it acts as the identity on the even-parity subspace. This means that for any even-parity input we have a correspondence similar to \eq{JWquad2}:
\begin{equation*} 
X_1 X_n = X_1 X_n \prod_{i=1}^{n} Z_i = - i c_1 c_{2n} \quad \text{(even parity)},
\end{equation*}
where the second equality is just \eq{notnnmatch}. The equivalent equations for $Y_1 Y_n$, $X_1 Y_n$, and $Y_1 X_n$ are straightforward. Since we have recovered a correspondence between matchgates on qubits $1$ and $n$ and quadratic Hamiltonians, the simulation can be carried out exactly as in \sec{fermreview_a}. The case of an odd-parity input state (e.g., $\ket{100\ldots0}$) is analogous, except that the operator $\prod_{i=1}^{n} Z_i$ now acts as minus the identity, and we write
\begin{equation*} 
X_1 X_n = - X_1 X_n \prod_{i=1}^{n} Z_i =  i c_1 c_{2n} \quad \text{(odd parity)}
\end{equation*}
and its equivalents for $Y_1 Y_n$, $X_1 Y_n$, and $Y_1 X_n$. 

Now consider a general product input state $\ket{\psi}$. Let $\ket{\psi_{\pm}}$ denote the projections of $\ket{\psi}$ onto the even- and odd-parity subspaces, respectively. The expectation value $\langle Z_K \rangle$, analogous to \eq{expectedZ}, is
\begin{subequations}  \label{eq:simulationcycle}
\begin{align}
\langle Z_k \rangle = & -i \bra{\psi} U^{\dagger} c_{2k-1} c_{2k} U \ket{\psi} \notag \\ 
= & -i \sum_{a,b=1}^{n} ( R_{2k-1, a} R_{2k, b} \bra{\psi_{+}} c_a c_b \ket{\psi_{+}} \notag \\ 
& \qquad\quad + R'_{2k-1, a} R'_{2k, b} \bra{\psi_{-}} c_a c_b \ket{\psi_{-}} ).
\end{align}
\end{subequations} 
Here $R$ and $R'$ correspond to two sets of rotations, where $R'$ includes an extra minus sign for every matchgate applied between qubits $1$ and $n$. The expression above does not contain cross terms such as $\bra{\psi_{-}} c_{a} c_{b} \ket{\psi_{+}}$ because $c_a c_b$ preserves parity.

The sum in \eq{simulationcycle} contains a polynomial number of terms, just as in \eq{expectedZ}, but now each term may not be easy to compute, since $\ket{\psi_{\pm}}$ are not product states in general. However, we have
\begin{align*}
\bra{\psi} c_{a} c_{b} \ket{\psi} & = \bra{\psi_{+}} c_{a} c_{b} \ket{\psi_{+}} + \bra{\psi_{-}} c_{a} c_{b} \ket{\psi_{-}}, \\
\bra{\psi} c_{a} c_{b} \prod_{i=1}^{n} Z_i \ket{\psi} & = \bra{\psi_{+}} c_{a} c_{b} \ket{\psi_{+}} - \bra{\psi_{-}} c_{a} c_{b} \ket{\psi_{-}}.
\end{align*}
We can invert these equations to obtain
\begin{align}
\bra{\psi_{+}} c_{a} c_{b} \ket{\psi_{+}} & = \frac{1}{2} \left[ \bra{\psi} c_{a} c_{b} \ket{\psi}+\bra{\psi} c_{a} c_{b} \prod_{i=1}^{n} Z_i \ket{\psi} \right ], \notag \\
\bra{\psi_{-}} c_{a} c_{b} \ket{\psi_{-}} & = \frac{1}{2} \left[ \bra{\psi} c_{a} c_{b} \ket{\psi}-\bra{\psi} c_{a} c_{b} \prod_{i=1}^{n} Z_i \ket{\psi} \right ]. \label{eq:expectedparity}
\end{align}

The left-hand sides are precisely the two terms of $\langle Z_k \rangle$ that we need, while the right-hand sides are combinations of terms that can be efficiently computed, as both are expected values of products of Pauli operators on product states. Explicitly, if $\ket{\psi}=\ket{\psi_1}\ket{\psi_2}\ldots\ket{\psi_n}$ and $c_a c_b = \sigma_1 \sigma_2 \ldots \sigma_n$, we have
\begin{align*}
\bra{\psi} c_{a} c_{b} \ket{\psi} &= \prod_{i=1}^{n} \bra{\psi_i} \sigma_i \ket{\psi_i}, \\
\bra{\psi} c_{a} c_{b} \prod_{i=1}^{n} Z_i \ket{\psi} &= \prod_{i=1}^{n} \bra{\psi_i} \sigma_i Z_i \ket{\psi_i}.
\end{align*}

Plugging \eq{expectedparity} into \eq{simulationcycle}, we recover an expression that can be efficiently computed in the same manner as \eq{expectedZ}, with only four times as many terms. This gives an efficient classical simulation for matchgates acting on a cycle, as claimed.

The simulability proof of matchgates on a path, shown in \sec{fermreview_a}, was recently exploited \cite{Jozsa2010} to show that circuits of nearest-neighbor matchgates on $n$ qubits (on a path) are equivalent to general quantum circuits on $O(\log n)$ qubits, and subsequently \cite{Kraus2011, Boyajian2013} to show a protocol for ``compressed'' simulations (i.e., with quantum circuits on $O(\log n)$ qubits) of the Ising and XY models of spin systems with open boundary conditions. It is an open question whether the observations made in this section lead to analogous results for systems with periodic boundary conditions.

\subsection{Universality of the XY interaction on arbitrary graphs} \label{sec:XY}

In \sec{match_arbit} and \sec{simul_cyc}, I addressed the computational power of the set of all matchgates on arbitrary graphs. We now consider the computational power of a restricted set of matchgates corresponding to the XY (or anisotropic Heisenberg) interaction on arbitrary graphs, which we introduced in \sec{ferm_reviewXY}. This interaction corresponds to a subset of matchgates generated by the Hamiltonian $H_A := X \otimes X + Y \otimes Y$ (recall from \sec{fermreview_a} that matchgates are generated by the two-qubit Hamiltonians $X \otimes X$, $X \otimes Y$, $Y \otimes X$, $Y \otimes Y$ together with the single-qubit Hamiltonian $Z$). It is easy to see that these interactions form a proper subset of matchgates as, e.g., they act nontrivially only on the odd-parity subspace of the $2$-qubit Hilbert space.

Despite being a proper subset of matchgates, the XY interaction is also known \cite{Kempe2002} to be universal for quantum computation when acting on the graph of \fig{triangladder} (i.e., nearest and next-nearest neighbor interactions on a path). It also follows trivially from \sec{fermreview_a} and \sec{simul_cyc} that the XY interaction is classically simulable on paths and cycles. This prompts the question of whether our results from \sec{match_arbit} can be adapted for the XY interaction on arbitrary graphs.

We now show that the XY interaction alone is universal for quantum computation on any connected graph that is not a path or a cycle \cite{Brod2013} . Since these operations are a subset of matchgates, this result subsumes the one of \sec{match_arbit}. However, the argument given for the XY interaction is less explicit, and the simulation is less efficient in general.

First observe that the XY interaction acts trivially on the even-parity subspace, so the encoding of \eq{evenencoding} cannot be used. A suitable alternative (as used in \cite{Kempe2002}) is
\begin{align}
\ket{0}_L & = \ket{01}, \notag \\
\ket{1}_L & = \ket{10}, \label{eq:oddencoding}
\end{align}
which is simply the corresponding encoding on the odd-parity subspace. 

We also need to adapt some of the identities used in \sec{match_arbit}. The fermionic $\swap$ gate is not available, but we can instead use $\iswap$ (denoted by the shorthand $\is$):
\begin{equation} \label{eq:i-SWAP}
\is := \exp( i \tfrac{\pi}{4} H_A ) = G(I, iX) = \begin{pmatrix}
1 & 0 & 0 & 0 \\
0 & 0 & i & 0 \\
0 & i & 0 & 0 \\
0 & 0 & 0 & 1
\end{pmatrix}.
\end{equation}

For an arbitrary logical state $\ket{\Psi}_L = \alpha \ket{10} + \beta \ket{01}$ in the encoding of \eq{oddencoding}, and for any physical qubit in an arbitrary state $\ket{\phi}$, we have the following identity (already used implicitly in \cite{Kempe2002}):
\begin{equation} \label{eq:logicswap2}
\is_{12} \, \is_{23} \ket{\Psi}_L \ket{\phi} = i \ket{\phi} \ket{\Psi}_L.
\end{equation}
Thus these states can be swapped up to an irrelevant global phase. 

Another useful identity, akin to \eq{0swap}, is given by 
\begin{equation} \label{eq:0iswap}
\is_{12}\ket{0} \ket{\psi} = \left ( P \ket{\psi} \right ) \ket{0},
\end{equation}
where $\ket{\psi}$ is any state and $P := \diag(1,i)$. This identity has a familiar operational interpretation: once more any state can be ``swapped through'' a $\ket{0}$ ancilla, but now with the caveat that the state suffers an unwanted $P$ gate. We must take this into account when using \eq{0iswap} in a simulation, but one can already see that if we only need to swap states through an even number of ancillas at a time, we can cancel out the $P$ gates by alternating $\iswap$ and $\iswap^\dagger$ swapping operations. In fact, a trivial adaptation of \thm{maintheoarbgraph} gives a proof of universality for those graphs that have an odd cycle (i.e., non-bipartite graphs), since then there is always an even-length path between any two vertices. We state this without proof, as the details are not instructive and the result is implied by the general case. Note however that for non-bipartite graphs, one can obtain a universal set of unitary matrices, whereas for general graphs we will only obtain a universal set of orthogonal matrices.

\begin{figure}[t]
\capstart
\centering
\subfloat[]{\includegraphics[width=0.25\textwidth]{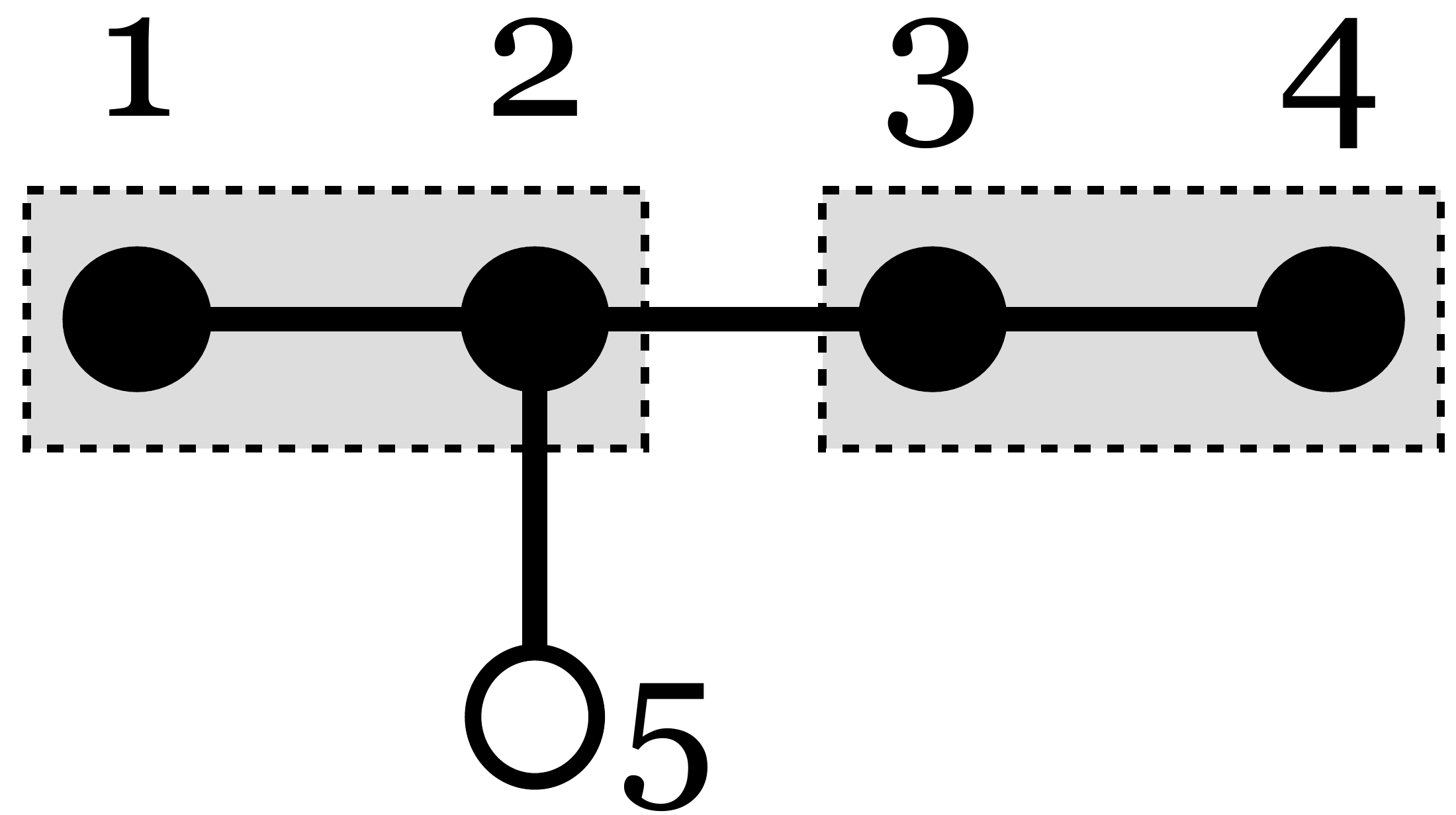}} \qquad
\subfloat[]{\includegraphics[width=0.25\textwidth]{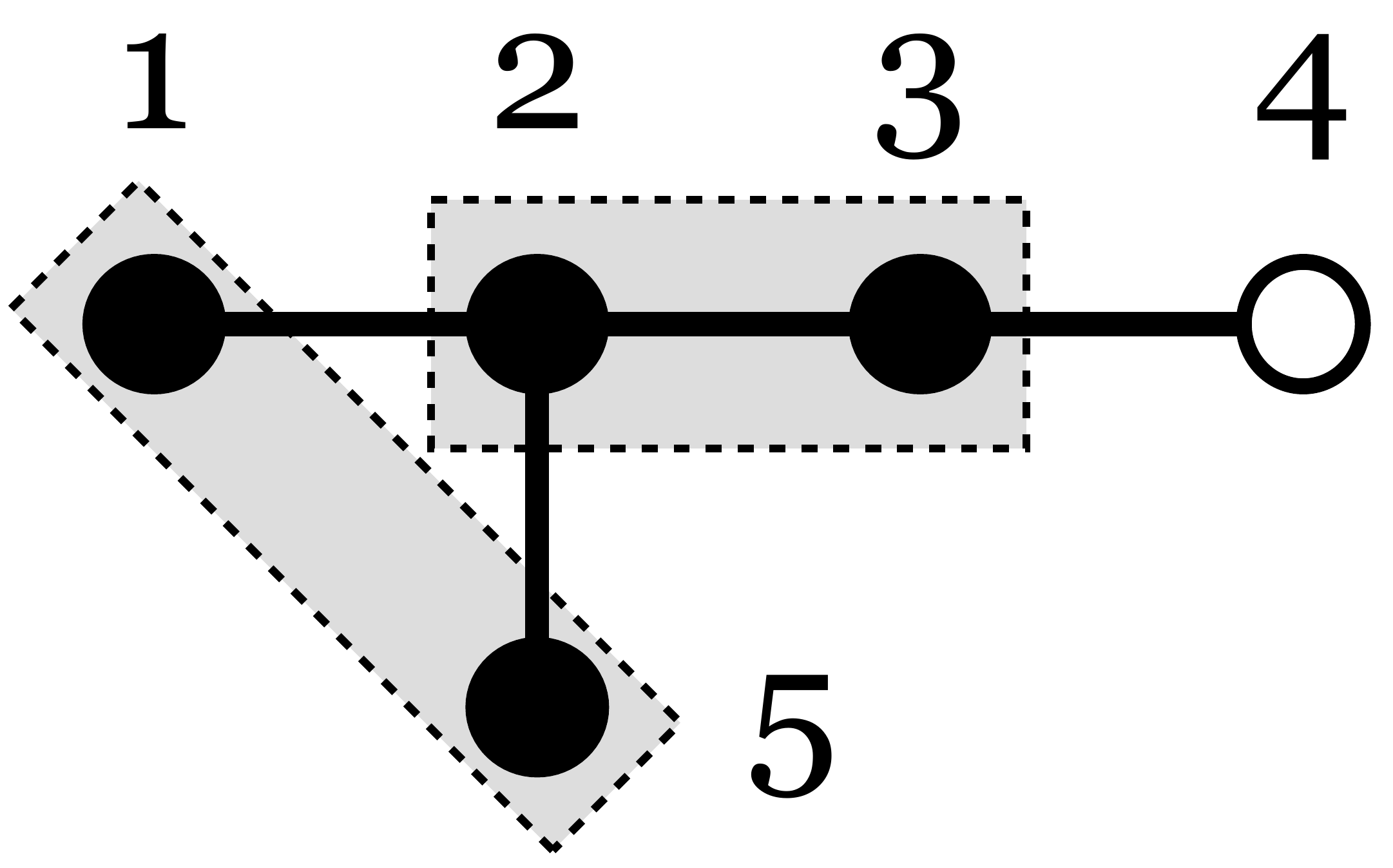}} \qquad
\subfloat[]{\includegraphics[width=0.25\textwidth]{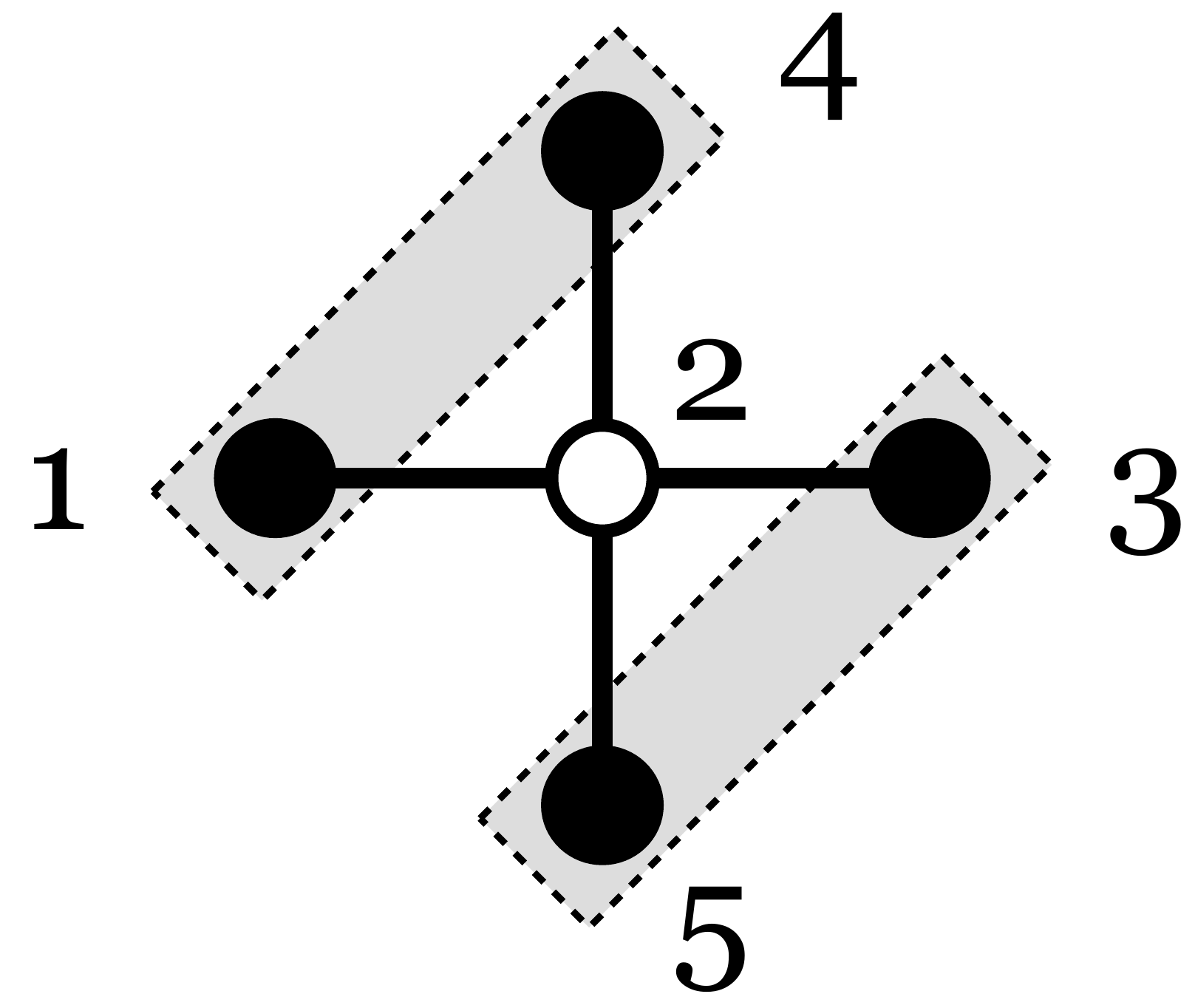}} \\
\caption[5-vertex graphs for implementing a universal set of logical two-qubit gates with the XY interaction.]{5-vertex graphs for implementing a universal set of logical two-qubit gates with the XY interaction. In all figures, gray boxes identify pairs of physical qubits that make up a logical qubit and white vertices represent ancillas initialized as $\ket{0}$.}
\label{fig:5vertex}
\end{figure} 

We first show how to implement a particular set of one- and two-qubit gates on the two 5-vertex graphs of \fig{5vertex}, similar to the simulation in \ex{path} (cf.\ \fig{branching}). Suppose the two logical qubits can be initialized as in \sfig{5vertex}{a}, according to the encoding of \eq{oddencoding}, together with one $\ket{0}$ ancilla. 

Since
\begin{equation*}
H_A = \begin{pmatrix}
0 & 0 & 0 & 0 \\
0 & 0 & 2 & 0 \\
0 & 2 & 0 & 0 \\
0 & 0 & 0 & 0
\end{pmatrix},
\end{equation*}
a logical $X$ rotation on the logical qubit stored in physical qubits $\{1,2\}$ can be implemented by a simple XY interaction:
\begin{equation*}
\exp( i a X_L ) = \exp( i \tfrac{a}{2} H_{A,12} ) = \begin{pmatrix}
1 & 0 & 0 & 0 \\
0 & \cos{a} & i \sin{a} & 0 \\
0 & i \sin{a} & \cos{a} & 0 \\
0 & 0 & 0 & 1
\end{pmatrix}.
\end{equation*}

We can also implement the two-qubit gate $R_{XZ}(a) := \exp(i a \, X \otimes Z)$ on the logical qubits $\{1,2\}$ and $\{3,4\}$ by the following sequence:
\begin{equation} \label{eq:2qubitg}
\is_{25} \, \is_{23} \, \is_{34} \, \left[ \is_{25}^\dagger \, \exp( i \tfrac{a}{2} H_{A,12} ) \, \is_{25} \right ] \, \is_{34}^\dagger \, \is_{23}^\dagger \, \is_{25}^\dagger.
\end{equation}
This sequence works as follows. The first three $\iswap$ gates use \eq{0iswap} to swap the qubits and place them as in \sfig{5vertex}{b}. Notice that the first logical qubit suffers a $P$ gate during this operation. The sequence inside the square brackets implements an effective unitary with Hamiltonian $Y \otimes Z$. This can be verified by explicit multiplication, but can also be understood as follows: the $\is_{25}$ and $\is_{25}^{\dagger}$ swap qubits $2$ and $5$, leaving the first logical qubit encoded in pair $\{1,2\}$, up to some phases that depend upon the states of both qubits. The $H_{A,12}$ Hamiltonian then acts as a logical $X$ rotation on the first qubit. Keeping track of the dependence of the relative phases on the states of both qubits, we see that the overall operation is $Y \otimes Z$. Finally, the last three $\iswap$ gates return the states of all qubits to their original positions, while inducing a $P^{\dagger}$ gate on the first logical qubit. Since $P^{\dagger} Y P=X$, the overall operation on the encoded states is $X \otimes Z$, as claimed.

We now make a brief digression to explain why the set of Hamiltonians 
\begin{equation*}
  \A := \{X \otimes I, I \otimes X, X \otimes Z, Z \otimes X, X \otimes Y, Y \otimes X \}
\end{equation*}
is universal for quantum computation in the usual circuit model. First notice that the Hamiltonians $X \otimes Y$ and $Y \otimes X$ are included; this is without loss of generality, as they can be obtained as simple sequences of the remaining interactions, e.g., $X \otimes Y = U(X \otimes Z)U^{\dagger}$ where $U=\exp[ i \tfrac{\pi}{4} (I \otimes X)]$. By conjugating every element in $\A$ by $P$, we obtain the set 
\begin{equation*}
  \B := \{Y \otimes I, I \otimes Y, Y \otimes Z, Z \otimes Y, X \otimes Y, Y \otimes X \}.
\end{equation*}
These are exactly the generators of the special orthogonal group $\SO(4)$. This can be seen by writing them down explicitly, but also understood by a counting argument, as $\B$ contains six linearly independent, purely imaginary $4 \times 4$ matrices.

Now we recall the well-known fact (see, e.g., \cite{Bernstein1993} and \cite{Rudolph2002}) that universal quantum computation is possible using only orthogonal, rather than general unitary, matrices, with the overhead of one extra ancilla qubit and a polynomial number of operations. Furthermore, any special orthogonal matrix on $n$ qubits [i.e., in $\SO(2^n)$] can be decomposed in terms of $\SO(4)$ gates acting nontrivially only on pairs of qubits, so the set $\B$ is universal for quantum computation. But this means that the set $\A$ is also universal, since we can assume that initialization and measurements are done in the computational basis, so the initial and final single-qubit $\{P,P^{\dagger}\}$ gates do not affect the outcomes. 

While the graph in \sfig{5vertex}{a} may not appear as a subgraph of the given graph, the sequence \eq{2qubitg} can be easily adapted to the graph of \sfig{5vertex}{c}. In that case, we can just use \eq{0iswap} to swap the ancilla with any of the other qubits and obtain a similar arrangement to that of \sfig{5vertex}{b}. The corresponding sequence is
\begin{equation} \label{eq:2qubitgb}
\is_{24} \, \left[ \is_{25}^\dagger \, \exp( i \tfrac{a}{2} H_{12} ) \, \is_{25} \right ] \, \is_{24}^\dagger.
\end{equation}
In this case, every operation described before is obtained up to conjugation by $P$, and the set of available operations is $\B$, rather than $\A$. However, as described above, this still suffices for universal computation. 

It remains to show that, for any graph other than a path or cycle, we can assign sufficiently many vertices as computational qubits and swap them around to one of the arrangements of \fig{5vertex} with a polynomial number of operations. 

\begin{theorem} \label{thm:maintheoarbgraph2}
Let $G$ be any $n$-vertex connected graph, other than a path or a cycle, where every vertex represents a qubit and we can implement the interaction $H=X \otimes X + Y \otimes Y$ between any nearest neighbors in $G$. Then it is possible to efficiently simulate any quantum circuit on $\Omega(\sqrt{n})$ qubits.
\end{theorem}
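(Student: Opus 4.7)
The plan is to mirror the proof of Theorem~\ref{thm:maintheoarbgraph} from Section~\ref{sec:match_arbit}, replacing the building blocks with those already developed for the XY interaction. Since $G$ is not a path or cycle, it must contain a vertex of degree at least $3$, so it admits a spanning tree $T$ that is not a path. Applying Lemma~\ref{lem:graph} to $T$, either (i) the longest path of $T$ has more than $\sqrt{n}$ vertices, or (ii) $T$ has more than $\sqrt{n}$ leaves. In both cases, the objective is to reduce the simulation of a circuit on $\Omega(\sqrt{n})$ logical qubits to the primitives of Eqs.~(\ref{eq:logicswap2}) and (\ref{eq:0iswap}) together with the 5-vertex gadgets of Fig.~\ref{fig:5vertex}, which already implement a universal set for orthogonal quantum computation via the Hamiltonians $\A$ (or $\B$, up to the fixed $P$-conjugation that is harmless because initialization and measurement are in the computational basis).

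In case (i), take a longest path in $T$ and encode logical qubits on disjoint adjacent pairs of its vertices using the odd-parity encoding of Eq.~(\ref{eq:oddencoding}). Because $T$ is not a path, some vertex along this longest path must have a neighbor outside it, which I reserve as an ancilla initialized in $\ket{0}$; this produces a local subgraph isomorphic to Fig.~\ref{fig:5vertex}(a) or Fig.~\ref{fig:5vertex}(c), on which the gadget sequence (\ref{eq:2qubitg}) or (\ref{eq:2qubitgb}) implements a universal two-qubit interaction. To apply such a gate between any two logical qubits, first use Eq.~(\ref{eq:logicswap2}) to shuttle the relevant pairs along the path to this branching point; since that identity swaps a logical qubit through a single physical qubit only up to a global phase, no residual single-qubit errors appear. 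The total overhead is $O(n)$ applications of $\iswap$ per two-qubit gate. In case (ii), I initialize the inner vertices of $T$ as $\ket{0}$ ancillas and use pairs of leaves (joined by the unique path between them in $T$) as the two physical qubits of each logical qubit. Eq.~(\ref{eq:0iswap}) then allows me to transport states through ancillas, and a 5-vertex neighborhood around any branching point (whose existence is guaranteed because $T$ has more than two leaves) again hosts the gadgets of Fig.~\ref{fig:5vertex}.

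The main technical obstacle is the stray $P=\diag(1,i)$ factor produced every time Eq.~(\ref{eq:0iswap}) is used to push a state through a $\ket{0}$ ancilla, which threatens to corrupt the logical state when shuttling qubits over long distances in the leaves case. Two complementary observations defuse it. First, $\iswap^{\dagger}$ swaps through a $\ket{0}$ ancilla with conjugation by $P^{\dagger}$ rather than $P$, so interleaving $\iswap$ and $\iswap^{\dagger}$ along any transport path of even length cancels the accumulated phases exactly; when the path has odd length, one extra ancilla swap with a neighboring free vertex (always available in any tree with a branching point) restores the parity. Second, and more conceptually, even if one ignores these cancellations, the resulting computation is merely $P$-conjugated on each logical qubit, which effectively replaces the set $\A$ by $\B$; since $\B$ generates $\SO(4)$ and orthogonal gates are themselves universal for quantum computation with a constant ancilla overhead, the simulation still succeeds with only a polynomial blow-up. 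Putting these pieces together gives universality of the XY interaction on any connected graph other than a path or a cycle, and the overall overhead is $O(n)$ $\iswap$ operations per gate of the simulated circuit, establishing the claim for $\Omega(\sqrt{n})$ simulated qubits.
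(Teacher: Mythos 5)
Your overall route is the paper's: spanning tree, \lem{graph}, the same two cases, the gadgets of \fig{5vertex}, and the transport identities of \eqs{logicswap2}{0iswap}; case (i) is handled essentially as in the paper. The genuine gap is in your treatment of the stray $P$ factors in the leaf case (ii). Your first fix does not work as stated: any out-and-back detour through a neighboring ancilla adds an \emph{even} number of passes through $\ket{0}$ vertices, so it cannot change the parity of the accumulated phases; to change parity you would have to land the physical qubit on a different vertex of the gadget arrangement, and you give no argument that a suitable alternative vertex exists (in a tree it generally does not). Note also that you cannot simply apply a compensating $P^{\dagger}$ by hand, since single-qubit gates are not in the allowed set. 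Your fallback argument is also flawed: the residual conjugation is \emph{not} uniform---it is a logical $P$ only on those logical qubits whose transport traverses an odd number of ancillas---so the computation is not ``merely $P$-conjugated on each logical qubit''. A two-qubit gadget acting on one conjugated and one unconjugated qubit is conjugated by $P \otimes I$ on a single factor, producing Hamiltonians that lie in neither $\A$ nor $\B$, and you have not shown that this mixed set remains universal.

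The paper resolves exactly this point differently and more simply: each logical qubit belongs to a fixed parity class (it picks up a logical $P$ or it does not, with $P^{\dagger}$'s avoided by transporting with $\iswap^{\dagger}$), and one keeps only the larger of the two classes---still at least $\sqrt{n}/4$ logical qubits, hence $\Omega(\sqrt{n})$---discarding the rest. On the retained qubits the available operations are uniformly $\A$ or uniformly $\B$, and either set is universal. If you add this step (or genuinely prove universality of the partially conjugated gate set), your argument goes through; the paper also treats the star graph separately, replacing sequence \eq{2qubitg} by \eq{2qubitgb} as in \sfig{5vertex}{c}, a case your sketch only gestures at.
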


\begin{proof}
As in \thm{maintheoarbgraph}, it suffices to prove the universality of $H$ on any $n$-vertex tree $T$ that is not a path.

By \lem{graph}, either (i) the longest path of $T$ or (ii) the set of all its leaves must have more than $\sqrt{n}$ vertices. Suppose first that (i) holds. Then the universal construction is directly analogous to case (i) of \thm{maintheoarbgraph}. Simply assign pairs of adjacent vertices on the longest path as logical qubits, and every other as a $\ket{0}$ ancilla. Then, by using \eq{logicswap2}, we can swap any two logical qubits to the closest degree-3 vertex, where we use sequence \eq{2qubitg} to implement the $X \otimes Z$ Hamiltonian as per \sfig{5vertex}{a}. As explained previously, this together with the logical $X$ Hamiltonian on any qubit (given by $H$ on adjacent qubits) enables universal computation with overhead of at most $O(n)$ $\iswap$ operations per orthogonal matrix in the original circuit of \cite{Rudolph2002}. 

Otherwise, (ii) holds. Then, first suppose that $T$ is not a star. Any such $T$ contains the graph of \sfig{5vertex}{a} as a subgraph, so we assign those $5$ vertices as $\ket{0}$ ancillas, together with all non-leaves, and pair the remaining leaves arbitrarily into computational qubits. We can now use \eq{0iswap} to bring the states of any two logical qubits to the structure of \sfig{5vertex}{a}, but with one caveat: this process may induce an overall $P$ gate on some logical qubits, depending on whether an odd or even number of $\ket{0}$ ancillas is traversed. This separates the logical qubits into two disjoint sets, namely those that suffer an overall $P$ gate and those that do not (there is no need to single out the case where the qubits suffer an overall $P^{\dagger}$, as this can be prevented by using $\iswap$$^{\dagger}$, rather than $\iswap$, as the swapping operation). We then take the larger of these two sets, which has at least $\sqrt{n}/4$ logical qubits, and for simplicity we disregard the rest. On the remaining qubits, as argued previously, we can either implement the set of operations $\A$ or its conjugated-by-$P$ version $\B$. Since either set is universal, this gives an universal construction with an overhead of $O(\sqrt{n})$ operations for each gate in the original circuit.

Finally, for the star graph, we replace sequence \eq{2qubitg}, corresponding to \sfig{5vertex}{a}, by the equivalent sequence \eq{2qubitgb} corresponding to \sfig{5vertex}{c}. This enables us to implement the set of Hamiltonians mentioned in the previous paragraph, and concludes the proof.
\end{proof}

\subsection{Discussion} \label{sec:fermnew_b_disc}

We completely characterized the computational power of nearest-neighbor matchgates when the qubits are arranged on an arbitrary graph---the only connected graphs for which matchgates are classically simulable are paths and cycles, whereas on any other connected graph they are universal for quantum computation. Furthermore, the same dichotomy holds when we restrict matchgates to the proper subset described by the XY interaction. 

Once again, as in \sec{fermnew_a}, this dichotomy excludes the possibility that these two sets of interactions (general matchgates and the XY interaction), acting on graphs, could exhibit intermediate computational power such as that displayed by circuits of commuting observables (IQP) \cite{Bremner2011} or noninteracting bosons \cite{Aaronson2013a}. However, this does not rule out such a result for other subsets of matchgates. As one example, consider the set generated by the $X \otimes X$ Hamiltonian acting on some graph. All such operations commute, and this set corresponds to a proper subclass of IQP. 
Furthermore, the set of two-qubit $X \otimes X$ and single-qubit $X$ Hamiltonians is sufficient to implement the non-adaptive measurement-based quantum computation model (cf.\ \sec{introduction_c}) that was also shown \cite{Hoban2013} to be hard to simulate classically, in the same fashion as IQP and noninteracting bosons. It is an open question whether an analogous result can be obtained by further restricting the operation to only the $X \otimes X$ Hamiltonian, or possibly some other proper subset of matchgates, and how the power of such a model depends on the underlying interaction graph.

While the results of this section establish the universality of matchgates on any connected graph that is not a path or a cycle, it should be possible to improve the efficiency of these constructions. These results take an operational approach, where each $\ket{0}$ is seen as an ``empty space'' through which we can move logical qubits, allowing for a simple and unified proof of universality for all graphs. In some cases, such as for the star graph, where all vertices but one are leaves, this construction is optimal. But in many others, our construction could ignore many vertices and/or edges, making it far from optimal. One such case is the binary tree of \fig{binarytree}, where we could have filled most of the non-leaves with logical qubits and used \eq{logicswap} rather than \eq{0swap} whenever it was necessary to ``move'' two logical qubits through each other. Since the bounds of \lem{graph} are tight (e.g., consider the graph obtained from a $\sqrt{n}$-leaf star by subdividing each edge $\sqrt{n}$ times), an optimal simulation presumably requires a more efficient assignment of logical qubits than in \thm{maintheoarbgraph}. While being markedly non-optimal in some cases, the constructions presented in this section nevertheless provide powerful tools for case-by-case optimization. It remains an open question whether there is a way to systematically obtain a more efficient construction, and in particular, whether in every case only a constant fraction of the qubits must be discarded as non-computational.

\section{Conclusions, open questions and relations to other work} \label{sec:fermnew_c}

Throughout this chapter, we have considered variations of the known results regarding matchgates reviewed in \chap{fermreview}. In every case considered, we have observed the jump in computational power to be abrupt. We can start with matchgates on a path, which are classically simulable, and obtain quantum universality simply by adding any parity-preserving gate that is not a matchgate (\sec{fermnew_a}), or by having at least one qubit of the circuit interact with more than two other qubits (\sec{fermnew_b}), and the same is true if we start only with the XY interaction. In what other directions could these results be extended? Could they provide other computational regimes?

One possibility is that of modifying the state preparation or measurement stages of the circuit. Throughout this chapter, we have not considered the use of nontrivial measurements to implement other unitary operations---it has been shown, for example, that noninteracting fermions (i.e., matchgates on a path) become universal if nondestructive charge measurements are allowed \cite{Beenakker2004}. These charge measurements clearly cannot be implemented by combining matchgates and computational basis measurements. It is plausible that a similar result could be obtained by allowing the use of more general two-qubit input states, that somehow require a non-matchgate unitary for their preparation. 

Another approach, already mentioned previously, would be to consider other non-matchgate unitaries, together with a change in the encoding. The parity-preserving gates of \sec{fermnew_a} arise naturally given that we are encoding qubits on a space of well-defined parity, which in turn is a particularly suitable choice for matchgates, but not the only one (for example, see \cite{Kempe2001b} for a general formalism of encoded universality, and a procedure for obtaining valid encodings for universal computation with the exchange interaction). 

Note also that all of these variations can in principle be combined. One example of this was discussed in \sec{fermnew_b_disc}, where it was left as an open question whether other notable restricted subsets of matchgates, such as e.g.\ only the $X \otimes X$ interaction, could have their computational power affected by changing the underlying graphs. Another known example is that of combining preparation of special ancillas (in the $\ket{+}$ state) with changes in the graph. This possibility, which was omitted here, was analyzed in \cite{Brod2012} for several examples. Although preparation of these special input states was not strictly necessary for universality, it did remove the need of encoding in several cases, thus suggesting that these combinations may be relevant for the matter of overall efficiency of the computation, both in terms of time and spatial resources.

Other interesting open questions we can point out relate to the robustness of our results against experimental noise and imperfections. For the results of \sec{fermnew_a}, one could also ask how much noise can be added to the non-matchgate operation before it stops being useful. Similar results are known in the context, for example, of Clifford gates \cite{Dam2009}. There, the authors considered noisy non-Clifford operations (for a particular noise model) and showed that, if the level of noise is too high, that operation falls inside the ``Clifford polytope''. At that point, the operation becomes a convex sum of Clifford gates, and its action can be simulated by a classical probabilistic mixture of Clifford gates, thus providing no quantum speedup. It would be interesting, although it seems mathematically daunting, to characterize the equivalent ``matchgate polytope''\footnote{Not necessarily a polytope, since matchgates are a continuous set of operations, in contrast to Clifford gates.} of convex sums of matchgates. This would allow one to investigate whether a very noisy $Z \otimes Z$ interaction can be simulated in a similar fashion.

For the robustness of the results of \sec{fermnew_b}, it would be interesting to see whether the technique of encoded selective recoupling (cf.\ \sec{ferm_reviewXY}) can be adapted for the XY interaction on arbitrary graphs. Recall that the XY interaction is an idealized model of real-world physical interactions \cite{Imamoglu1999, Quiroga1999, Zheng2000, Mozyrsky2001}. Techniques have been developed to deal with spurious terms in the Hamiltonian of the XY interaction or background fields \cite{Vala2002,Lidar2001}, but these techniques seem to need second-neighbor interactions throughout the whole circuit. It remains an open question whether these realistic models retain the universality of the XY interaction on arbitrary graphs. Also, during the writing of this thesis a new pre-print was released \cite{Herrera2014} with a proposal for matchgate quantum computing using polar molecules trapped in optical lattices. This proposal has a set of tunable interactions consisting of matchgates which is larger than just the XY interaction, and so it might present an interesting first candidate for a more robust version of the results presented here.

Besides the broader question addressed in \sec{fermnew_b} of characterizing the power of matchgates on \emph{any} graph, there are interesting connections between the results proven for \emph{particular} graphs and other results in the literature, obtained by giving the connectivity determined by the graph a new interpretation. There are two particular cases I would like to mention: that of quantum control of spin chains \cite{Burgarth2010}, and that of ancilla-controlled quantum computation \cite{Proctor2013}.

In the formalism of quantum control of spin chains of \cite{Burgarth2010}, the system consists of a path of qubits with an always-on nearest-neighbor interaction (such as e.g.\ an XY interaction), together with arbitrary control of a particular subset of the qubits. Specifically in \cite{Burgarth2010}, one assumes the ability to do arbitrary single-qubit gates on two ancillas at one of the end-points (i.e., the first and second qubits of the path). The computation is done, then, by a careful pulsing of the single-qubit gates on the controllable qubits such that the information is propagated to the remainder of the path by the always-on interaction. This formalism has some analogy to our results for the universality of matchgates (or just the XY interaction) on the graph of \fig{appendedline}, where the extra appended vertex is on the second qubit. In both cases an otherwise-simulable set of operations (the XY interaction) on a path is taken to quantum universality by extra operations at the endpoints, which is where the information is effectively processed (recall that our result is based on bringing the logical qubits to the branching point in order to implement the universal gate set). 

In a recent paper \cite{Proctor2013}, the authors introduce the formalism of ancilla-controlled quantum computation (inspired by the setting of ancilla-driven quantum computation of \cite{Anders2010}), where the system consists of one ancilla that can interact via one particular interaction with all the other computational qubits, which cannot interact between themselves, together with arbitrary single-qubit gates on the ancilla. This is similar to our results of universality for the star graph, which can also be viewed as a computation where the only allowed interactions are between one ancilla and the remaining qubits. In fact, in \cite{Proctor2013} the universality is proven by initializing the ancilla in the $\ket{0}$ state, using the trick of \eq{0swap} to exchange the states of the ancilla and the computational qubits and then implementing the desired operations between the computational qubits, which is a very similar construction to the one we used in \sec{match_arbit}. Besides being completely flexible in terms of changes in the graph, our result is also more general in that universality is achieved either by general matchgates or just by the XY Hamiltonian, with no need for single-qubit gates. However, our result is also more restrictive in the sense that it uses an encoding, which doubles the number of qubits used (although that is not strictly necessary, as discussed in our paper \cite{Brod2012}), and that we, for simplicity, consider a continuous family, rather than a discrete set, of operations. 

Finally, I'd like to address the relationship between the results of this chapter and the fermionic formalism. The results in this chapter take a more algebraic approach, with little concern of the underlying fermionic nature of matchgates. This is done for two main reasons: (i) the questions are much more naturally posed and investigated (e.g., consider the fermionic equivalent of matchgates acting on the complete binary tree), and the results and more readily accessible in various other contexts. That is not to say, however, that some of these results do not have a fermionic parallel. The results of section \sec{fermnew_a}, as already discussed, relate to the fact that $Z \otimes Z$ is the only generator of \PP\ gates that translates, via the Jordan-Wigner transformation, to an interaction. As for the results of \sec{fermnew_b}, consider first \ex{path}. One way of stating this result is saying that universality can be obtained by a circuit of nearest-neighbor matchgates, if additionally (any) one particular qubit can interact with one of its second neighbors. This second-neighbor interaction corresponds to some Hamiltonian of the type $X_k X_{k+2}$ which, by using \eq{JWc}, translates to a quartic operator of the type $c_{2k} c_{2k+1} c_{2k+2} c_{2k+3}$, again a fermionic interaction. In the fermionic formalism, then, the construction of \ex{path} corresponds simply to having a sequence of fermionic modes, occupied by a certain number of fermions, where just two of the modes can interact. In order to interact two distant fermions, we can move them around, place them in these special modes, interact them, then move them back (remark the similarity between this and the use of the $\fswap$ to swap the states of logical qubits close to the branching point in \ex{path}). In fact our claim that a branching point is necessary in the graph may be related precisely to this. However, the claim that a branching point is also \emph{sufficient} has a less clear interpretation. The action of matchgates on the binary tree or on the star graph, for instance, would translate to a very peculiar set of polynomials of the fermionic operators, including polynomials of very high degrees. The question of \ex{leaves} in this case would be very unnatural and unmotivated, and its answer cumbersome.

\newpage
\chapter{New results: Linear optics and BosonSampling} \label{chapter:bosonnew}

In this chapter, I present our new theoretical and experimental results on the BosonSampling model. The experiments reported were done in collaboration with the quantum optics groups of Roberto Osellame (Milan), and Paolo Mataloni and Fabio Sciarrino (Rome), and the results reported here are drawn from Ref.\ \cite{Crespi2013b}, published in \textit{Nature Photonics}, Ref.\ \cite{Spagnolo2013b}, published in \textit{Physical Review Letters}, and Ref. \cite{Spagnolo2013c}, currently submitted for publication. The theoretical results consist mostly of yet-unpublished material, and a more detailed description of the simulations and data analysis done in support of the experiments, which were omitted from \cite{Crespi2013b,Spagnolo2013b, Spagnolo2013c}. This chapter is organized as follows.

First, in \sec{bosonnew_a}, I show that exact BosonSampling remains hard to simulate even if the linear optical circuit is restricted to have constant depth. More specifically, a linear optical device consisting of an input Fock state, followed by four rounds of arbitrary two-mode transformations and a final number-resolving measurement, produces a distribution over the possible outputs that should be hard to simulate classically (either exactly or up to a multiplicative error, as discussed in \sec{bosonreview_b}), unless the polynomial hierarchy collapses to its third level. This result is similar in spirit to the one of Terhal and DiVincenzo \cite{Terhal2004} which shows that arbitrary quantum circuits of depth 4 should be hard to simulate. Curiously, this value is tight for the case of arbitrary quantum circuits---it was also shown that arbitrary quantum circuits of depth 3 can be simulated classically---but not for the case of linear optics, since I show that optical circuits with two layers of beam splitters can be simulated classically, whereas the situation for circuits with three layers remains unresolved. Besides being conceptually interesting, this result should also be of interest to experimentalists since all recent implementations of BosonSampling are based on integrated optics and, at least in the current state of the technology, the number of layers of these devices is heavily responsible for photon losses. One drawback to this result is that, a priori, the constant-depth construction requires beam splitters to act between arbitrary pairs of modes, something that may be hard to do with current experimental designs.

In \sec{bosonnew_b} I present several experimental and theoretical aspects that permeate our experiments. I give an overview of the experimental setup, with emphasis on the workings of the integrated interferometers. This is followed by discussions of the numerical analysis performed during different stages of the experiment: simulation of different unitary ensembles, sampling of the interferometers to be fabricated, tomography of the device, and comparison between experimental and theoretical data.  I reiterate that my contribution, and this thesis, are of a mostly theoretical nature, and thus I will only give brief nontechnical descriptions of the experimental apparatus, focusing on the theoretical and computational efforts.

In \sec{bosonnew_c} I report the data and conclusions drawn from the three experimental papers: (i) a small-scale implementation of the full BosonSampling model, using 3 photons in a uniformly-random 5-port interferometer \cite{Crespi2013b}, (ii) an investigation of the bosonic bunching behavior of 3 photons in several interferometers of increasing size (2--16 modes), including a comparison with behavior expected from the bosonic birthday paradox (cf.\ \sec{bosonreview_c_BBP}) and a new bunching law that was observed experimentally and proven theoretically, and finally (iii) a larger demonstration of BosonSampling on interferometers of 7 and 9 modes, with emphasis on certification of the device, as described in \sec{bosonreview_c_certif}.

Finally, \sec{bosonnew_d} is devoted to concluding remarks.  A discussion on the scalability issues that will affect future experiments is provided, as well as a discussion of several open questions of interest stemming both from experimental and complexity-theoretical aspects of the model. This chapter is also complemented by several appendices at the end of the thesis, consisting of additional tables of numerical and experimental data, and copies of the Mathematica notebooks used in the simulations and device tomography.

\section{Constant-depth (exact) BosonSampling} \label{sec:bosonnew_a}

In this section, I prove that the exact BosonSampling result of \sec{bosonreview_b} holds even if the linear-optical circuits only have a constant number of beam splitter layers. This will follow the general recipe, described in that section, whereby if a certain restricted model of quantum computation, when imbued with post-selection, becomes as powerful as arbitrary quantum computation with post-selection (postBQP), then a weak simulation of that model (even within multiplicative error) cannot be efficiently performed only by classical means unless the polynomial hierarchy collapses to its third level. We begin by showing how this reasoning applies to a family of circuits that is in the intersection of constant-depth quantum circuits \cite{Terhal2004}, IQP \cite{Bremner2011}, and non-adaptive measurement-based quantum computation (cf.\ \sec{introduction_c}), in \sec{constQC}. We believe such a unified proof may also be of independent pedagogical interest. We then proceed to show, in \sec{constLO}, how to adapt it for constant-depth linear optics via the KLM scheme reviewed in \sec{bosonreview_a}.

\subsection{Constant-depth quantum computing} \label{sec:constQC}

Let us begin with an alternative proof of the result found in \cite{Terhal2004}. In that paper, the authors prove that efficient classical simulation of quantum circuits of depth $4$ would imply a collapse of the polynomial hierarchy. They also show that this depth is the smallest for which this holds by giving an explicit simulation for circuits of depth $\leq 3$. For the present purposes, the depth of a quantum circuit is defined as the number of layers of arbitrary two-qubit gates, plus one final round of measurements. We consider that gates acting on qubit pairs that share a common qubit (e.g.,  $\{1,2\}$ and $\{2,3\}$) cannot be done simultaneously, even if they commute. Single-qubit gates do not contribute to the depth count since we can consider them as part of the nearest two-qubit gate which means, in particular, that we can choose initialization and measurements in any single-qubit basis without loss of generality.

Our starting point is an universal construction in the measurement-based model of quantum computing (MBQC) \cite{Raussendorf2001, Raussendorf2012}. We will restrict ourselves to a description of how the computation is performed, with no discussion on how its universality is proved---that would lie much beyond the scope of this thesis, and we refer to \cite{Broadbent2009} and references therein.

Let $G$ be the family of brickwork graphs, such as in \sfig{brickwork}{a}, parameterized by some size $n$ (i.e.\ the graph has poly$(n)$ vertices). Consider, then, the corresponding graph state $\ket{G}$, built as follows: (i) for each vertex in $G$ prepare a qubit in the $\ket{+}$ state and (ii) for each edge in $G$ apply a $\cz$ gate between the two corresponding qubits. This generates a highly-entangled multi-qubit state. The computation then proceeds by a sequence of single-qubit measurements on the qubits of $\ket{G}$. Each measurement is done in one of a discrete set of bases, and their outcomes determine the bases of future measurements. The complete set of measurements, including the order in which they are performed and the dependence of some measurements on the results of others, is known as a \emph{measurement pattern}. 

\begin{figure}[t]
\capstart
\centering
\subfloat[]{\centering \includegraphics[width=0.55\textwidth]{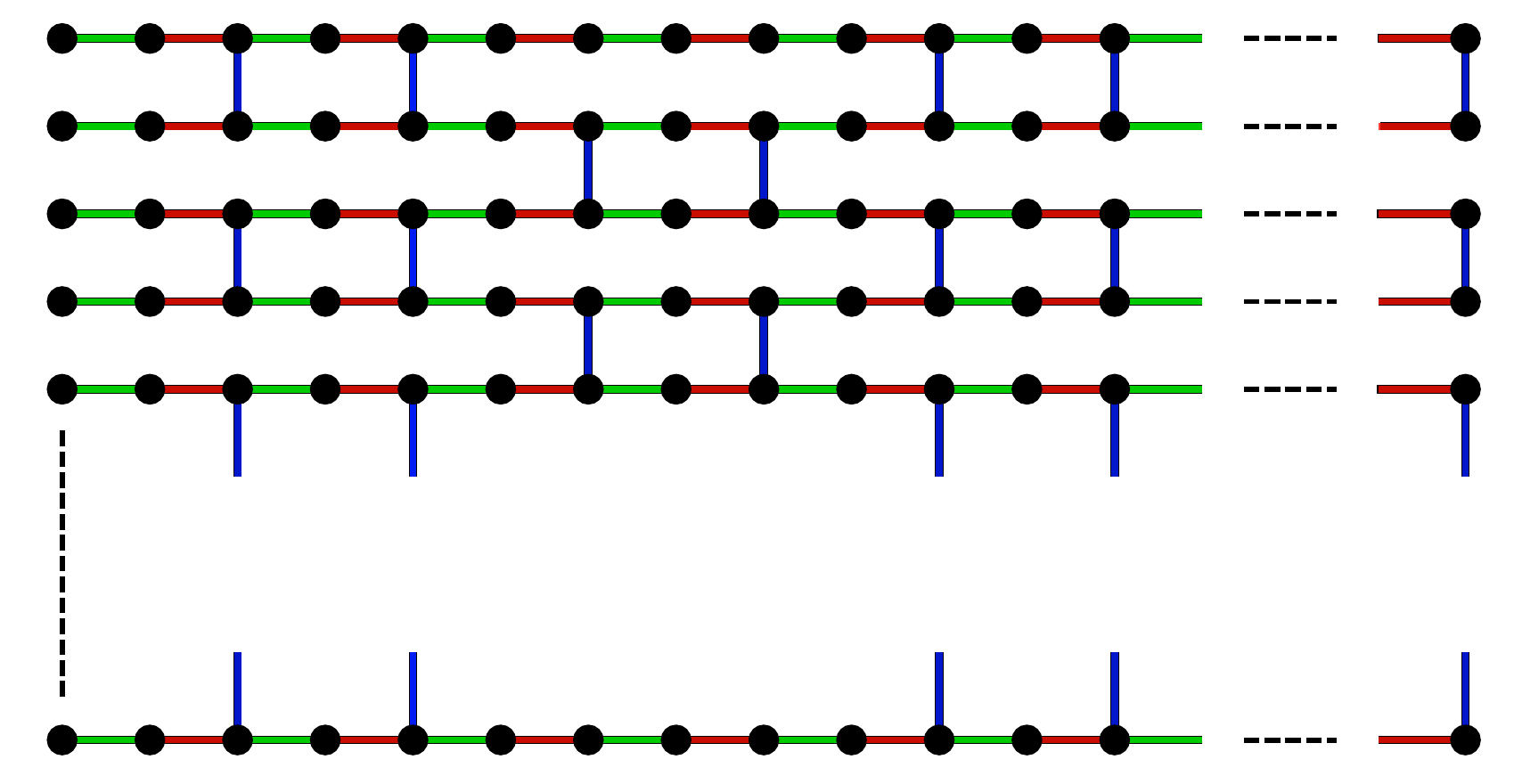}} \qquad
\subfloat[]{\centering \includegraphics[width=0.3\textwidth]{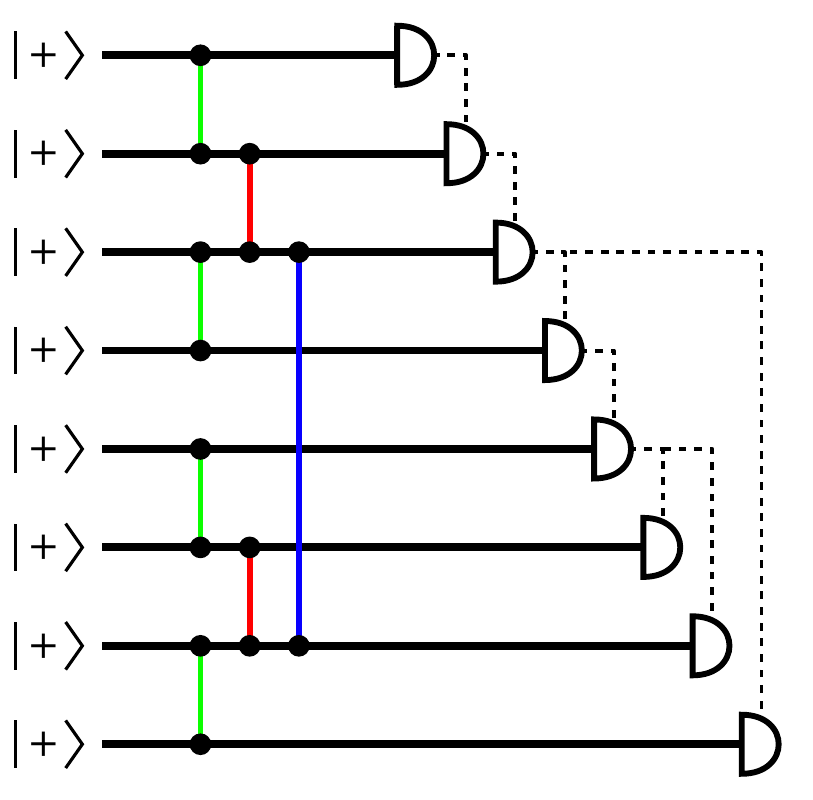}}
\caption[The brickwork graph state.]{(a) The brickwork graph state. (b) Representation of the MBQC protocol as a circuit. Dashed lines conceptually represent the dependence of some measurements bases on the others. The color coding shows how, in the translation from (a) to (b), preparation of the state corresponds to only three rounds of $\cz$ gates, while almost all of the temporal structure resides on the measurement adaptation.}
\label{fig:brickwork}
\end{figure} 

We now restate the following Theorem, which is taken from \cite{Broadbent2009}\footnote{Although we follow the result of \cite{Broadbent2009} for convenience, the brickwork state was known to be universal for MBQC before that, see e.g.\ \cite{Childs2005}.}:

\begin{theorem} \label{thm:MBQC}
Let $G$ be the brickwork graph shown in \fig{brickwork}. The corresponding graph state is universal for measurement-based quantum computation. The computation is performed as follows:
\begin{itemize}
\item[(i)] Initialize a qubit in the $\ket{+}$ state for every vertex of $G$;
\item[(ii)] Apply $\cz$ gates according to the edges of $G$;
\item[(iii)] Sequentially measure each qubit of $G$ in one the bases 
\begin{equation*}
\left \{\ket{\pm_\theta}=\ket{0} \pm \textrm{e}^{i \theta} \ket{1} | \theta = 0, \pm \frac{\pi}{4}, \pm \frac{\pi}{2}, \pm \frac{3\pi}{4}, \pi \right \}.
\end{equation*}
\end{itemize}
Any poly-sized quantum circuit on $n$ qubits can be simulated in this way using some graph $G$ of poly$(n)$ size, and by a suitable choice of the measurement pattern which, furthermore, can be computed efficiently classically.
\end{theorem}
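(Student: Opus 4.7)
The plan is to prove universality by reducing to a standard finite universal gate set such as $\{H, T, \cz\}$ and showing that any poly-sized circuit over this set can be simulated by a measurement pattern on a brickwork graph of polynomial size. The central object is the one-bit teleportation gadget: if an ancilla $A$ in state $\ket{+}$ is entangled by $\cz$ with a data qubit $B$ carrying an unknown state $\ket{\psi}$, and $A$ is then measured in the basis $\{\ket{\pm_\theta}\}$ with outcome $s\in\{0,1\}$, the post-measurement state on $B$ equals $X^{s} H R_z(\theta)\ket{\psi}$. Concatenating this gadget along a horizontal row of the brickwork state implements a sequence of operations of the form $H R_z(\theta_k)$ interleaved with byproduct Paulis, and by an Euler-angle decomposition any single-qubit unitary on the row's logical qubit can be synthesized this way.

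The next step is to interpret the ``bricks'' (the vertical rungs that join pairs of adjacent rows at prescribed intervals) as the resource for a logical two-qubit entangling gate. The point is that the graph-state construction places an extra $\cz$ between the two qubits shared by the rung; once the horizontal teleportation carries each logical qubit onto the rung, this residual $\cz$ becomes a $\cz$ on the two logical qubits. Combining this with the single-qubit analysis above gives an encoded $\{H, T, \cz\}$ gate set, which is universal for quantum computation.

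The third step is handling the Pauli byproducts $X^{s}$ (and the $Z$'s that they generate when commuted past $\cz$). Tracking them requires only a classical side-computation, because the Paulis pass cleanly through Clifford gates and the Clifford frame update remains efficient. The key identity $X R_z(\theta) X = R_z(-\theta)$ means that a pending $X$-byproduct on a wire can be compensated by flipping the sign of the next measurement angle; this is exactly why the admissible angle set is closed under $\theta \mapsto -\theta$. One then verifies that the specific set $\{0,\pm \pi/4,\pm \pi/2,\pm 3\pi/4,\pi\}$ is enough: $\theta=\pi/4$ realizes the $T$ gate, $\theta=\pi/2$ gives the phase gate $P$, $\theta=0,\pi$ implement Hadamard (modulo Pauli) and computational-basis readout, and the sign options handle all possible byproducts. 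Efficiency of the classical pattern generator follows because both the graph and the byproduct update rules can be computed in polynomial time from the target circuit.

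The hard part, in my view, is not any single ingredient but rather the bookkeeping through the bricks: when a row-measurement produces an $X$-byproduct just before a rung, commuting it past the residual $\cz$ spawns a $Z$-byproduct on the \emph{neighbouring} logical row, and one must check that the prescribed temporal ordering of measurements together with the closure of the angle set is genuinely sufficient to absorb all such cross-row corrections. Making this propagation explicit, and showing that no measurement ever requires a basis outside the allowed eight, is where the brickwork geometry is doing real work, and is the step I would spend the most care on.
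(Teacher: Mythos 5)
Your overall route is the standard one and, in fact, the same one the paper relies on: the thesis does not prove this theorem itself, but imports it from Broadbent, Fitzsimons and Kashefi, noting only that explicit measurement patterns for the universal set $\{\cnot,H,T\}$ are given there. Your sketch---teleportation gadgets producing $HR_z(\theta)$ along a row, Euler decomposition for arbitrary single-qubit gates, the bricks' extra $\cz$ supplying the entangling gate, and Pauli-byproduct tracking absorbed by the closure of the angle set under $\theta\mapsto-\theta$---is precisely the content of that cited proof, so you are supplying detail the thesis deliberately omits rather than taking a genuinely different route.

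However, your central gadget is stated backwards, and as written the identity is false. If the ancilla $A$ is prepared in $\ket{+}$, entangled by $\cz$ to the data qubit $B$ carrying $\ket{\psi}=\alpha\ket{0}+\beta\ket{1}$, and then $A$ is measured in $\{\ket{\pm_\theta}\}$, the (unnormalized) state left on $B$ is $\alpha(1\pm e^{-i\theta})\ket{0}+\beta(1\mp e^{-i\theta})\ket{1}$: a non-unitary filter, not $X^{s}HR_z(\theta)\ket{\psi}$, except at the special angles $\theta=\pm\pi/2$. The correct gadget measures the qubit currently holding the logical state (here $B$) in the rotated basis, and the state $X^{s}HR_z(-\theta)\ket{\psi}$ is teleported onto the freshly entangled $\ket{+}$ neighbour $A$; this orientation is what makes the left-to-right flow along each row of the brickwork state work, and the sign $-\theta$ is harmless precisely because the admissible angle set is closed under negation. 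With that correction (and the minor remark that computational-basis readout of a logical wire is effected by one more $\theta=0$ gadget, since only equatorial bases are allowed), your argument goes through, including the cross-row byproduct bookkeeping you rightly single out as the step where the brickwork geometry and the adaptive ordering do real work.
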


A formal proof of this result, found in \cite{Broadbent2009} and references therein, gives explicit measurement patterns corresponding to gates in the universal set $\{ \cnot, H, T \}$, and thus an explicit procedure to simulate any quantum computation. Omitting further details from this proof, we would just like to point out that this protocol, described as a quantum circuit (see e.g.\ \sfig{brickwork}{b}), inherits a temporal structure almost exclusively from the ``classical'' adaptation of measurements. In other words, preparation of the graph state only takes a few rounds of two-qubit gates, and the depth of the circuit stems from the fact that some qubits \emph{must} be measured prior to others.

Now consider what happens if we replace adaptive measurements by post-selection. That is, rather than making a measurement and conditioning future measurements on its outcome, we just \emph{postselect} each measurement to a given outcome such that adaptation is unnecessary---this is akin to what we did with the KLM scheme in \sec{bosonreview_b}. As an example, suppose we measure one qubit in basis $M_1$, with two possible outcomes labeled $+$ and $-$, and must measure a second qubit on either basis $M_2^{+}$ or basis $M_2^{-}$ depending on the outcome of $M_1$. This can be replaced simply by post-selecting $M_1$ to the outcome $+$ and simultaneously measuring the second qubit in basis $M_2^{+}$. Doing this for every measurement effectively flattens out the temporal structure of the protocol, allowing us to perform all measurements in a single round\footnote{If this sounds too good to be true, recall from the discussion of \sec{introduction_d} that post-selection is indeed an extremely unrealistic ``power''.}. 

Our goal is now basically reached, as the procedure described already constitutes a quantum circuit of depth 4. To see this, consider the circuit description of the MBQC protocol, as shown in \fig{brickwork}. In this circuit, the depth count goes as follows\footnote{Recall that the $H$ gates necessary to initialize the qubits in the $\ket{+}$ state do not count for the depth.}: (i) a first round of $\cz$ gates (green edges of \sfig{brickwork}{a}); (ii) a second round of $\cz$ gates (red); (iii) a third round of $\cz$ gates (blue), followed by a round of single-qubit gates to prepare the measurement bases; and (iv) final measurement in the computational basis. Note that only three rounds of $\cz$ gates suffice because vertices of the brickwork state have degree at most 3, as show in \fig{brickwork}. It should be clear by our arguments and \thm{MBQC} that circuits of this form, when imbued with the power of post-selection, can implement any computation in postBQP. We then conclude, by the previous discussion, that an efficient classical simulation of its output would imply collapse of the polynomial hierarchy.

It is interesting that the MBQC approach is very well suited for this proof, since almost all the temporal structure lies in the adaptive measurements, which is precisely what we replace by post-selection. Curiously, all information about the computation \emph{also} lies in the measurements, since the brickwork state does not depend at all on the underlying quantum computation (except, of course, in its size). Thus, it seems that all ``computational power'' (to abuse the terminology) of the constant-depth quantum circuit resides on the combinations of possible choices of measurement bases. An interesting open question is whether there is a natural way to randomly choose the measurement bases such that the final simulation is a provably hard instance, similar to the worst-case/average-case equivalence conjectured for the approximate BosonSampling scenario, as discussed in \sec{bosonreview_c_interf}.

Another curious aspect of this proof is that it encompasses several similar proofs for different models. The resulting circuit obtained by flattening out the adaptivity of MBQC has depth 4, is a non-adaptive measurement-based protocol (by definition), and is in IQP, since it only uses gates diagonal in the $X$ basis. This is interesting, but maybe not specially surprising---the concept of gate teleportation is present, in one form or another, in all of these results, and is closely related to the historical origin of MBQC. We should also point out that, while this proof unifies several results, it does not provide concrete relations between these models---the resulting circuit lies at their intersection, but each model may have circuits that perform tasks outside of this intersection.

\subsubsection{Quantum circuits of depth 3} \label{sec:constQC3}

Besides proving that depth 4 is sufficient for the hardness result, the authors in \cite{Terhal2004} also prove that it is necessary. This is done by giving an explicit simulation of any depth 3 circuit, which we briefly reproduce here.

Consider an arbitrary depth-3 quantum circuit. That is, the qubits are initialized in the computational basis, undergo a first round of arbitrary two-qubit gates, followed by a second round of arbitrary two-qubit gates, and a final round of measurements. The simulation becomes straightforward if we reinterpret this as a two-round computation: first the qubits are prepared in arbitrary two-qubit states, and then they are measured in arbitrary two-qubit bases. The classical simulation then proceeds as follows: 

\begin{itemize}
\item[(i)] Choose one particular measurement $M_i$. This is a two-qubit measurement where each qubit may be entangled with some other qubit. Thus, its outcome probabilities only depend on a four-qubit state and are trivially easy to compute in constant time.
\item[(ii)] Compute the outcome probabilities and simulate $M_i$ by classically sampling a two-bit string from the corresponding distribution and fixing the output of the measurement accordingly.
\item[(iii)] Project the two measured qubits onto the fixed bit string, and update the description of the two qubits entangled to them accordingly. These two qubits are now in an arbitrary known two-qubit state.
\item[(iv)] The next measurement involving one of the qubits updated in (iii) is again a two-qubit measurement that depends only on an arbitrary four-qubit state. 
\item[(v)] Repeat this procedure until all measurements have been fixed.
\end{itemize}

It is easy to see that this simulation samples from the same distribution as the corresponding quantum circuit. It also consists of a weak simulation (cf.\ \sec{introsimul}), rather than strong, since at no point do we compute the probabilities associated with the full output of the circuit, only small two-bit strings at a time. As a consistency check, one could attempt to generalize this simulation for circuits of depth 4. This would naturally fail, because the first measurement in step (i) would depend on an 8-qubit state, hence step (iii) would collapse the unmeasured qubits to an arbitrary 6-qubit state. But then, the next time we returned to step (i) the measurement might depend on the state of 14 qubits, and the iteration is clearly disrupted.

\subsection{Constant-depth linear-optical circuits} \label{sec:constLO}

We now combine the previous results into one---the computational complexity of exact BosonSampling with a constant-depth linear optical circuit. 

The first step is a rather straightforward concatenation of previously reviewed results. Consider a constant-depth circuit obtained from some universal MBQC protocol by replacing adaptive measurements with post-selection, as in \sec{constQC}. Now map the resulting quantum circuit to a linear-optical circuit using the constructions of \fig{KLMscheme}. Finally, replace the adaptive measurements of the KLM protocol by post-selection. The resulting circuit is naturally as strong as postBQP, and thus it cannot have an efficient classical simulation unless the polynomial hierarchy collapses. The fact that there are actually three rounds of post-selection is immaterial.

Let us now count the depth of the resulting optical circuit. The single-qubit gates cannot be ``absorbed'' into the two-qubit gates anymore, as was done in \sec{constQC}, due to the dual-rail encoding. More specifically, in this encoding the single qubit gates involve the two modes of a single qubit, while two-qubit gates involve two modes of different qubits. We will, however, absorb phase shifters into the closest beam splitter, and count layers of arbitrary two-mode transformations. The total count then, goes as follows: one layer of balanced beam splitters for the preparation of the $\ket{+}$ states, followed by six layers for the preparation of the entangled state (i.e.\ two for each layer of $\cz$ gates of the brickwork state, using the $\cz$ of \sfig{KLMscheme}{c}). Finally, one last layer of beam splitters and phase shifters for the preparation of measurement bases, and a round of measurements. This amounts to a total depth of 9.

We can now use two tricks to reduce this depth count. First note that, according to \sfig{KLMscheme}{c}, when we perform a $\cz$ gate, only one mode of each encoded qubit is involved, while the other remains idle. By a simple adaptation of the circuit of \sfig{KLMscheme}{c}, we can use the second mode of each qubit to simultaneously implement the second round of $\cz$s. To do this, we include two $\pi$ phase shifters before the circuit of \sfig{KLMscheme}{c}, resulting in a gate which, up to a global phase, only adds a minus sign to the photonic $\ket{00}$ state. It is clear that, if we act with this gate on the modes encoding the $\ket{0}$ state of two qubits, we obtain precisely a logical $\cz$ gate. By a clever alternation of which mode we use for each $\cz$ gate, we can build the long paths of the brickwork state of \fig{brickwork} in one round of $\cz$s (i.e., two rounds of beam splitters). This reduces the overall depth to 7.

The second trick is to use a procedure similar to the usual gate teleportation \cite{Gottesman1999b}, which can be found in \cite{Knill2001b}, to parallelize the third layer of $\cz$s. The original idea is that we want to implement some faulty (i.e.\ probabilistic) gate $U$ on a computational qubit, but without endangering the information stored in that qubit. We can then do the following: we prepare an auxiliary two-qubit state $\ket{\phi^+}=\frac{1}{\sqrt{2}}(\ket{01}+\ket{10})$, implement $U$ on the second qubit of this auxiliary state, and, if $U$ succeeds, project the first qubit of the auxiliary state together with our original qubit on the $\bra{\phi^+}$ state (see \sfig{teleportation}{a}). Of course, in standard quantum computation (i.e.\ BQP) we cannot guarantee that a two-qubit measurement will project our qubits on the $\bra{\phi^+}$ state, but for our purposes we can just post-select on observing this outcome. Now notice that, if rather than being a two-qubit state, $\ket{\phi^+}$ actually represents a state of one photon in two modes, the mathematical structure is exactly the same, and so it allows us to use an equivalent scheme to teleport the states of the modes in the KLM scheme. Since each $\cz$ gate only involves one mode of each qubit, as in \sfig{KLMscheme}{c}, this allows us to implement the gate teleportation by teleporting only one mode, rather than a complete qubit, as shown in \sfig{teleportation}{a}. Thus, we can perform all the $\cz$ gates for the brickwork state in parallel, and just teleport the computational states around, in a very similar spirit to the original result of \cite{Terhal2004}. This scheme, and how it can used to reduce the depth, is illustrated in \sfig{teleportation}{b}. This reduces the total depth count to 5.

\begin{figure}[t]
\capstart
\centering
\subfloat[]{\centering \includegraphics[width=0.3\textwidth]{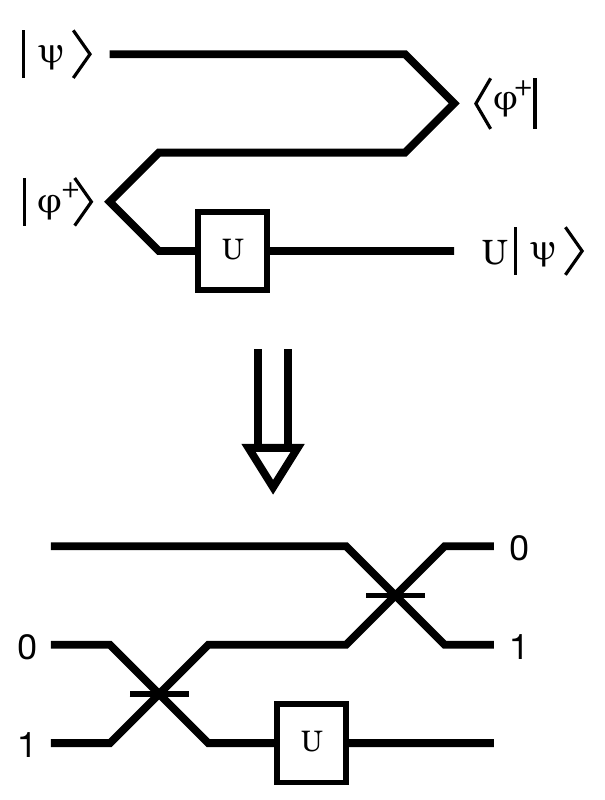}}
\subfloat[]{\centering \includegraphics[width=0.65\textwidth]{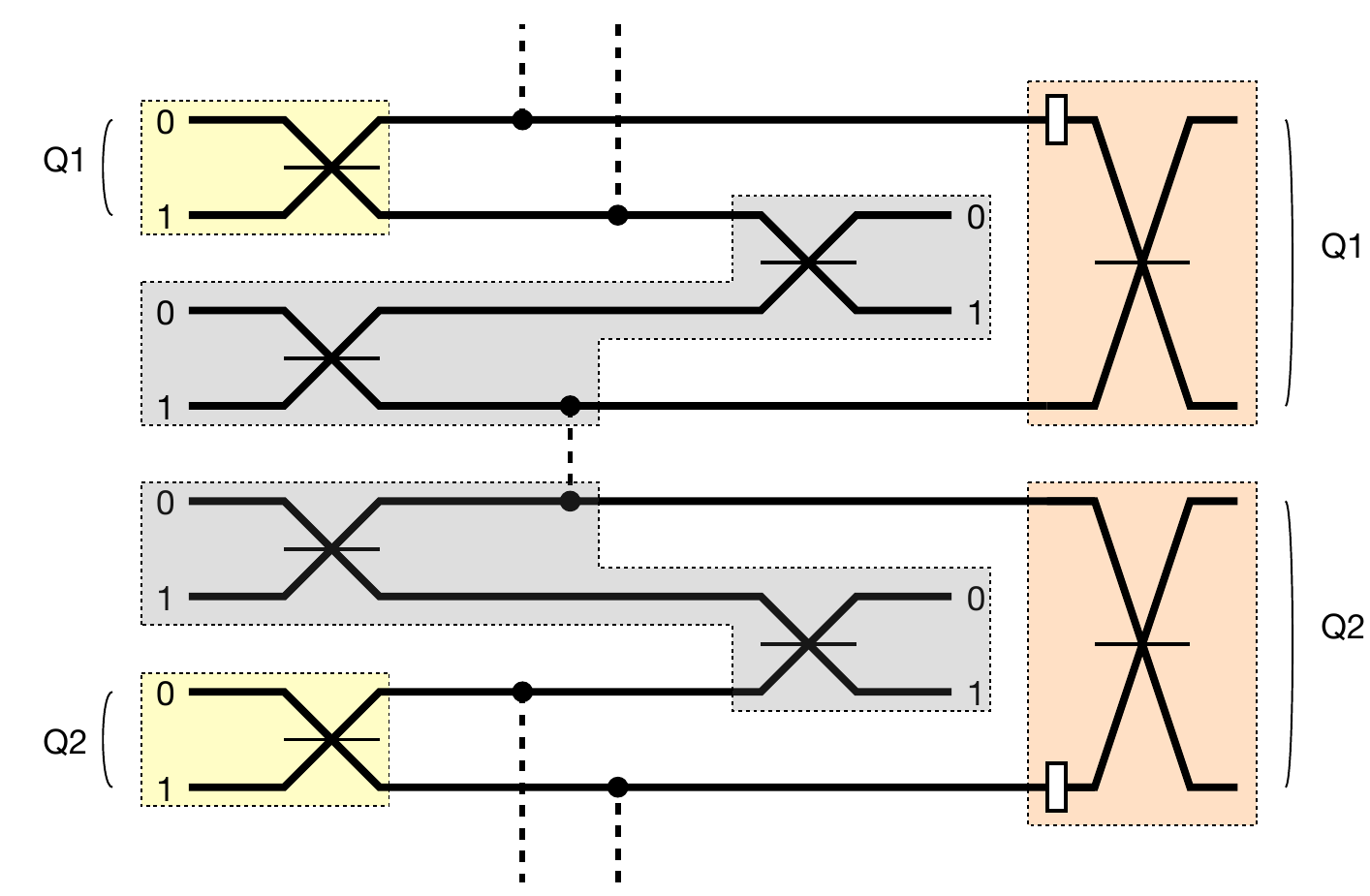}}
\caption[Constant-depth BosonSampling]{(a) The standard gate teleportation scheme in the quantum circuit model, and its linear-optical version, replacing the entangled two-qubit state $\ket{\phi^+}$ by the corresponding photonic state $\frac{1}{\sqrt{2}}(\ket{01}+\ket{10})$, which can be mapped to the Fock basis by a balanced beam splitter. (b) The resulting scheme for constant-depth (exact) BosonSampling. Every beam splitter in this figure is balanced, and white rectangles represent phase shifters. Dashed lines represent the probabilistic $\cz$ gates of the usual KLM scheme, constructed as in \sfig{KLMscheme}{c}, shown implicitly so as to not clutter the figure. The color coding is intended only to aid in the mapping from the construction of \sec{constQC}: yellow regions represent preparation of the $\ket{+}$ states, gray regions represent the gate teleportation scheme shown in (a), and orange regions represent the single-qubit measurements. The depth bottleneck of the model is represented here as the two central modes. Each is involved in four beam splitters: one for state preparation (explicit), two in the $\cz$ gate (implicit), and one for the arbitrary single-qubit measurement (explicit), resulting in a circuit of depth 5.}
\label{fig:teleportation}
\end{figure} 

Note that, of the two tricks mentioned, the second would already suffice to reduce the depth count to 5. We presented the first as it may be of independent interest for the parallelization of linear-optical circuits even in future implementations of the full KLM scheme. Also, since here we do not have to worry about the gates failing anyway, it would be wasteful to do all the $\cz$'s in fresh ancilla modes rather than directly on the qubit themselves whenever possible.

\subsubsection{Depth-3 linear optics} \label{sec:constLO3}

We now ask whether the two depth bounds meet. That is, can we prove that 5 layers are not only sufficient, but also necessary? Alas, unfortunately so far the answer is no. A direct adaptation of the simulation of \sec{constQC3} only proves that depth-3 linear optical circuits are classically simulable.

The main difference between the simulation of \sec{constQC3} for quantum circuits and the analogous for linear optics is the following: assuming that each mode is initialized with either one or zero photons, the state after the first layer of beam splitters can have some modes occupied by two photons. Thus, one could argue that, when we do the iterative step of the simulation, the sample space changes from one step to the next, possibly disrupting the iteration. But, if we observe that no mode can ever have more than four photons, this can easily be accommodated into the simulation scheme as follows:

\begin{itemize}
\item[(i)] Choose a measurement $M_i$. This is a two-mode measurement, where each mode has up to two photons and may be ``entangled'' with some other mode. Thus, this measurement only depends on the states of four modes, each with up to two photons, and its probabilities can be computed in constant time.
\item[(ii)] Compute the outcome probabilities and simulate $M_i$ by sampling from the corresponding distribution and fixing the output of the measurement accordingly. Note that the space of outcomes can have from zero to four photons distributed in any way among the two measured modes.
 \item[(iii)] Project the two measured modes onto the sampled state, and update the description of the two unmeasured modes accordingly. They are now in an arbitrary, but known, two-mode state. By photon-number preservation, the outcome fixed in (ii) determines the number of photons in the collapsed state, but it is easy to check that it can have at most two photons per mode.
\item[(iv)] The next measurement involving one of the modes collapsed in (iii) is again a two-mode measurement that depends only on a four-mode state. One caveat is that these four modes may now contain up to eight photons overall (up to two per mode) depending on previously fixed outcomes. Nonetheless, any subsequent two-mode measurement can only detect 4 of these photons, so step (ii) can always be applied.
\item[(v)] Iterate this procedure until all measurements have been fixed.
\end{itemize}

As should be clear, since we have only two layers of beam splitters, no mode can ever have more than $2$ photons, except immediately prior to a measurement, in which it can have up to $4$ photons. This guarantees that, even though the first measurement differ from subsequent measurements in the set of allowed states, this does not disrupt the iterative procedure. Also note that this simulation seems to break down for more layers. If the optical circuit has depth 4, say, then a first measurement could depend on the state of eight modes, and collapse the unmeasured modes into an arbitrary state of six modes. But then the next measurement involving these modes could depend on 14 modes, and so forth. Since each new choice of measurement may depend on a larger number of modes, the iteration step fails.

\subsubsection{Discussion}

In this section, I showed that the exact BosonSampling result reviewed in \sec{bosonreview_b} already holds if the linear optical circuit has depth 5. I also showed that it cannot hold for depth 3 or lower. This seems to be a consequence of the fact that a probabilistic linear-optical $\cz$ gate must be constructed with at least two layers of beam splitters. This is natural, since it is known that ancilla modes are necessary \cite{Knill2002} for the construction of the gate (even probabilistically), and so a single layer of beam splitters would clearly not suffice to interact both the two computational modes and the ancilla modes in question. It cannot be ruled out, of course, that an ingenious use of post-selection might provide a construction for a hard-to-simulate depth-4 exact BosonSampling instance, via completely different means.

Besides the corresponding result for depth 4, we leave another open question: is it possible to somehow close the depth gap between this result and the approximate BosonSampling result reviewed in \sec{bosonreview_c}? For approximate BosonSampling, the minimal known depth is $m$log$(n)$, for $n$ photons in $m$ modes, as shown in \cite{Aaronson2013a}, and for exact BosonSampling the minimal known depth, as we showed, is 5. There are two natural routes to close this gap: either prove that the approximate BosonSampling result can be simplified, using a further parallelized scheme or possibly a matrix ensemble other than the Haar ensemble, or give an efficient classical algorithm for approximate constant-depth BosonSampling (thus showing that constant-depth and regular BosonSampling are indeed fundamentally different). The latter would not be too surprising: there exist problems for which the exact solution is hard (e.g.\ in $\#$P), while an approximate solution is easy (i.e.\ in P), such as the permanent of positive matrices, as discussed in \sec{bosonreview_c_simulation}. Another possibility would be to explicitly investigate the permanents of constant-depth interferometers. However, the unitary of a depth-5 interferometer has at most $32$ nonzero entries in each row or column, simply because a photon that enters one mode cannot leave in more than $32$ different modes, and similarly a photon that exits in a particular mode cannot have entered in more than $32$ different input modes. Thus, these matrices obey a very special structure, very different from the Haar ensemble, and a worst-case/average-case equivalence allowing us to condition the hardness of their permanents on some plausible conjecture seems quite unlikely.

An approximate result for constant-depth BosonSampling would also have important consequences for the corresponding complexity classes. For example, it can contribute to the more general program of relating the different restricted models described so far. Recall, from \sec{bosonreview_b}, that an arbitrary $n$-photon, $m$-mode linear-optical circuit can be simulated in BQP simply by treating each mode as an $n$-level system, and using a standard mapping from $m$ qudits to O$(m$ log $n)$ qubits \cite{Aaronson2013a,LivroNielsen}. As discussed in \cite{Terhal2004}, this translation induces an O$($log $n)$ depth increase, simply because an arbitrary single-qudit gate will translate to an arbitrary $($log $n)$-qubit circuit. This logarithmic depth increase suggests that arbitrary linear-optics might not be simulable by constant-depth quantum circuits. However, the same is not true for constant-depth linear optics. In an optical circuit consisting of a no-collision Fock state input (e.g.\ $\ket{1 1 1 \ldots 1 0 \ldots 0}$) followed by a constant number of beam splitter layers, it is easy to see that no mode can have more than a constant number of photons. For example, if the circuit has depth 5, no mode ever has more than 32 photons, which means that this system can be viewed simply as a collection of $m$ 32-level systems. Consequentially, simulating it on a quantum computer does not induce a logarithmic depth increase and it can, in fact, be simulated in constant depth. So we see that constant-depth linear optics is provably contained within constant-depth quantum computing (although the constant will probably not be 5), and so an approximate result for the former, along the lines of \sec{bosonreview_c}, might suggest an approximate result for the latter. This would be the first evidence that the other hardness results are in fact as robust as BosonSampling.

Finally, we point out that this result might also have important implications for experiments. As we will discuss shortly, all BosonSampling experiments so far have been performed using integrated photonic circuits. In these devices, the depth of the circuit is one of the leading causes of photon loss and attenuation of the signal, so any optimization of the model in terms of depth should reduce the experimental efforts. A drawback of the proof given here, however, is that it requires beam splitters acting between arbitrarily distant modes, a requirement that is not well-suited for these devices. Nevertheless, we believe our proof may pave the way for more practical simplifications of the proposed experiments.

\section{Experimental design and chip tomography} \label{sec:bosonnew_b}

In this Section, we describe in more detail several aspects relevant to the experiments reported in \sec{bosonnew_c}. In \sec{bosonnew_b_integr} we describe the general concept of integrated photonic chips, including the physical principle of the device, the fabrication procedure, as well as a novel technique to control independently the phase shifter and beam splitter parameters, necessary for the construction of an arbitrary interferometer. In \sec{bosonnew_b_experiment} we give a conceptual description of the experimental procedure, from the photon sources to the detectors. In \sec{bosonnew_b_ensembles} we describe the numerical analysis behind the sampling of the random interferometers---namely, the sampling of a uniformly-random unitary and the numerical simulations of the behavior of other (non-uniform and of easier fabrication) ensembles, in order to investigate how well they reproduce the uniform ensemble for several figures of merit of interest. Finally, \sec{bosonnew_b_tomography} is dedicated to the reconstruction of the unitary matrix from single- and two-photon data. We describe a tomography algorithm due to Laing and O'Brien \cite{Laing2012}, and how we refined it to improve the agreement between the reconstructed unitary and the experimental data. 

These tools and techniques, both theoretical and experimental, are present in one way or another in all the experiments, which is why we concentrated them on this section, and will devote \sec{bosonnew_c} exclusively to the results and analysis of each experiment.

\subsection{Integrated photonics}  \label{sec:bosonnew_b_integr}

The most conventional approach for implementing an arbitrary linear-optical transformation is via the setup of optical elements---such as beam splitters, phase shifters, wave plates, etc---on an optical table, and propagation of photons through free space. However, this suffers from severe scalability issues, mainly due to mechanical instabilities. As mentioned in \sec{bosonreview_a}, to obtain a a $\cz$ gate that works with probability greater than $95\%$, one would require thousands of beam splitters and phase shifters. The idea of aligning thousands of beam splitters (per two-qubit gate!) on an optical table raised natural skepticism over whether linear-optical quantum computing would ever be really feasible. Of course, this drove a theoretical effort to simplify the scheme, reducing this requirement by up to two orders of magnitude. On the experimental side, several groups adopted a more promising approach, based on \emph{integrated} photonics. 

The physical workings of integrated photonic devices are based on the concept of optical confinement. The well-known phenomenon of total internal reflection happens when light, whilst propagating in a medium of refractive index $n_1$, strikes the boundary with a medium of lower refractive index $n_2<n_1$ at an angle larger than $\textrm{arcsin}(n_2/n_1)$, causing it to be completely reflected. In turn, an optical waveguide consists of a strip or cylinder of material embedded in another of lower refractive index, in which case the phenomenon of total internal reflection can be exploited to create a conduit for light: the light is led along the waveguide, propagating by repeated reflections at the boundaries\footnote{This is the principle behind, for example, the optical fiber.}. 

Consider now what happens when two of these waveguides are embedded in the medium. Although light suffers total internal reflection in each waveguide, there is an evanescent wave that reaches into the external medium with an exponential decay. If the two waveguides are brought sufficiently close to each other, such that the evanescent waves have non-negligible overlap, a fraction of the light can ``leak'' from one waveguide to another. More specifically, let the waveguide separation be $s$, and let the power in waveguides 1 and 2 be $P_1(z)$ and $P_2(z)$, respectively, as a function of the interaction length $z$ (i.e.\ the distance through which they interact). Then, it can be shown (see \cite{LivroSaleh,Szameit2007,Yariv1973}) that, if $P_1(0)=P_0$ and $P_2(0)=0$,
\begin{align}\label{eq:waveguideshifting}
P_1(z)=&P_0 \cos^2 (\kappa z), \\
P_2(z)=&P_0 \sin^2 (\kappa z),
\end{align}
where $\kappa$ is a coupling coefficient that decays exponentially with $s$ as
\begin{equation}\label{eq:expodecay}
\kappa = \kappa_0 e^{-s/s_0},
\end{equation}
in which $\kappa_0$ and $s_0$ are constants. The values $\kappa_0$ and $s_0$ may have a complicated dependency on the shape of the waveguides, the difference between the refractive indices $n_1$ and $n_2$, the light frequency, etc, and it is often more practical to obtain them experimentally, as we will discuss in \sec{parametercontrol}. If the two waveguides are infinite in length, the power periodically flows from one to the other, as \eq{waveguideshifting} shows. However, if they only interact for a given length $Z$, a specific fraction of the light will flow---this configuration is known as a directional coupler. In the quantum regime, this coupling translates to an amplitude that a photon will hop from one waveguide to another---in other words, for our purposes, it will behave precisely as a beam splitter\footnote{In fact, from now on we interchangeably refer to these devices as beam splitters and directional couplers.}, and we will identify $\sin{\kappa z}$ as the beam splitter's transmissivity (recall, from \sec{twomode}, that we use transmissivity to refer to the square root of the transmission probability). For a comprehensive introduction to the theory of waveguides, see \cite{LivroSaleh}.

As a generalization of a directional coupler, one can construct larger optical circuits, consisting of several waveguides whose relative distances are varied so as to apply sequences of beam splitters. This is precisely the setting in which several recent experiments, including those reported later in this chapter, were performed. The use of integrated devices, rather than free-space optics, has led to miniaturization of the devices\footnote{Not unlike the shift from vacuum tubes to transistors in modern day electronics.}, which not only reduces errors from misalignment of optical elements, vibrations from external sources, etc, but also allows packing of a larger numbers of beam splitters than would be possible in a typical optical table (in some cases, up to 50 beam splitters in a few centimeters of glass). For a (noncomprehensive) list of recent experiments using these techniques, see \cite{Sansoni2012,Peruzzo2011,Meany2012,Spagnolo2013a, Broome2013, Crespi2013b,Spring2013,Tillmann2013,Spagnolo2013b,Carolan2013,Spagnolo2013c}.

\subsubsection{Femtosecond laser micromachining of integrated devices}

There are several techniques for the fabrication of these integrated devices, the most conventional ones consisting of lithographic methods. However, the chips constructed for the experiments reported here were built using the femtosecond laser micromachining technique, which has recently gathered increasing attention (for an extensive review of this method and how it compares to others, see \cite{Valle2009}). This writing technique consists of focusing femtosecond laser pulses into the volume of a transparent material, such as borosilicate glass. By focusing the ultrashort pulses in a very small region of the material, the field intensities become high enough to allow the medium to absorb energy, in a localized volume near the focus, via several nonlinear processes. This localized energy absorption, in turn, induces a permanent change in the refractive index of the material. By translating the sample relative to the laser beam at a uniform velocity, one can inscribe a variety of configurations of optical waveguides directly into the sample.

Writing waveguides with this femtosecond laser technique has several advantages over more conventional methods. Since the writing is done directly in the interior of the material by the laser, it does not require a mask (necessary for lithographic methods), and it is an inherently 3D technology, since the changes in refractive index can be induced at different depths of the material (typically 10 $\mu$m to 10 mm). As such, this technique allows a very fast and cost-effective prototyping of devices, and its intrinsic 3D nature allows implementation of novel layouts (impossible with conventional lithography) such as a tritter, where three waveguides are coupled at once \cite{Spagnolo2013a}, or the independent control of linear optical parameters we will describe in \sec{parametercontrol}. On the down side, devices produced by this technique also have lower waveguide quality, in terms of propagation loss, than other methods, which is why it is expected to be a complementary, rather than alternative, technology.

\subsubsection{Arbitrary integrated interferometers} \label{sec:parametercontrol}

Suppose now that we want to implement an arbitrary $m$-mode linear transformation $U$ on a photonic chip. This has three main requirements, namely (i) the ability to decompose $U$ in terms of beam splitters and phase shifters, and the ability to independently implement (ii) arbitrary phase shifters and (iii) arbitrary beam splitters. I will now describe how each of these is done, in turn.

To decompose $U$ into two-mode optical elements we use a procedure due to Reck \textit{et al.}\ \cite{Reck1994}. An example of the circuit that results from such a decomposition, and its implementation as an integrated circuit, can be seen in \fig{Reck}, for $m=5$. This procedure is akin to Gaussian elimination, where we start with unitary $U$ and sequentially obtain (unitary) matrices acting nontrivially only on two levels that ``cancel out'' off-diagonal matrix elements of $U$. By doing this for every off-diagonal element of $U$, we are left with an equation of the type $U_1 U_2 U_3 \ldots U_f U = I$, where $f=m(m-1)/2$ and I is the $m \times m$ identity. By inverting this equation, we obtain a decomposition of $U$ in terms of the $U_i$'s, that are two-level matrices. The explicit form of this decomposition can be found in \cite{Reck1994}, but also on \cite{LivroNielsen} where it is used to decompose an $n$-qubit unitary into two-qubit gates. The resulting optical circuit is also analogous to normal form given for matchgate circuits in \sec{fermreview_a}---in fact, the circuit of \fig{normalform} is nothing more than the ``fermionic'' version of \sfig{Reck}{a}. 

\begin{figure}[t]
\capstart
\centering
\subfloat[]{\centering \includegraphics[width=0.45\textwidth]{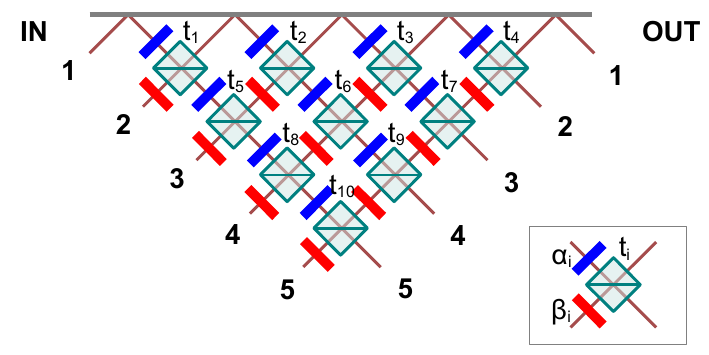}} \qquad
\subfloat[]{\centering \includegraphics[width=0.45\textwidth]{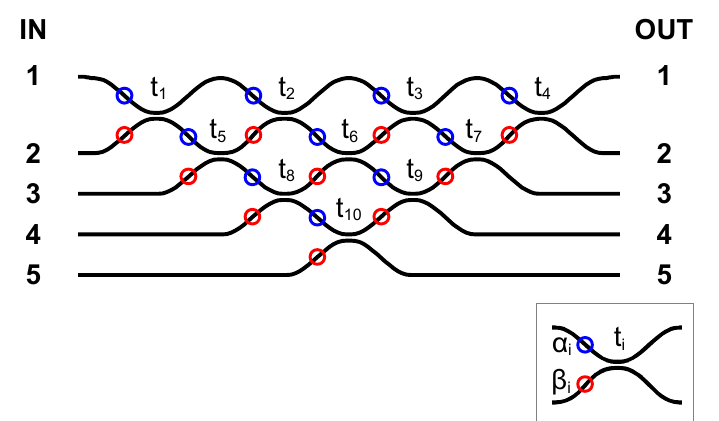}}
\caption[Arbitrary 5-port interferometer]{(a) Schematics of an arbitrary 5-mode interferometer composed by a circuit of 10 beam splitters with transmissivities $t_i$ and 20 phase shifters with phases $\alpha_i$ and $\beta_i$ (blue and red rectangles). (b) Realization of the same scheme using integrated photonics, where beam splitters are replaced by directional couplers, and phase shifters are replaced by S-bends (see main text).}
\label{fig:Reck}
\end{figure} 

It is also necessary to control the beam splitters and phase shifters. Ideally, we want to control the two types of parameters independently. However, recall that the effective transmissivity of a directional coupler depends on parameters such as distance between the waveguides and coupling length. And it is clear that attempting to change either value for one particular beam splitter, e.g.\ $t_2$ in \sfig{Reck}{b}, would change the relative optical path between the two modes that are involved in that coupler (1 and 2) and the three that are not (3, 4, and 5). Unfortunately, this would also result in an effective phase shift between modes 1 and 2 relative to modes 3, 4, and 5, not to mention it might misalign the circuit as a whole. Fortunately, there is a much more convenient way to control these parameters, which exploits a 3D architecture. Our paper \cite{Crespi2013b} is the first to use this idea to independently control all parameters needed for a completely arbitrary linear interferometer (although similar ideas were used in previous work by the same group, see e.g.\ \cite{Sansoni2012, Spagnolo2013a}).

\begin{figure}
\capstart
\centering
\includegraphics[width=0.7\textwidth]{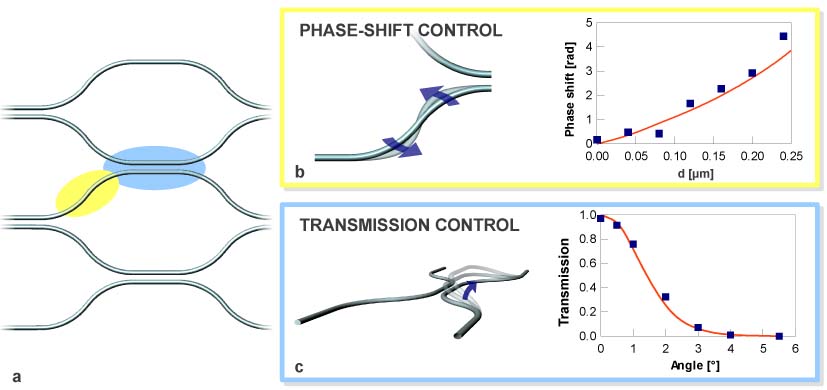}  
\caption[Independent control of integrated beam splitters and phase shifters]{(a) Controlled deformation of the S-bend at the inputs of each directional coupler and the 3D geometry allow independent control over the phase shift and transmissivity. (b) The drawing shows an undeformed S-bend together with a deformed one. The deformation is carefully chosen to increase the optical path without inducing kinks or abrupt bends. The plot shows experimental (squares) and theoretical (solid line) dependence of the induced phase shift on the deformation parameter $d$ of the S-bend, for light of wavelength $\lambda=806$ nm. (c) Control over the transmission probability ($T$) of the directional coupler is performed by changing the waveguide distance. This is obtained via a rotation of one arm of the directional coupler out of the main circuit plane. The plot shows the experimental (squares) and theoretical (solid line) transmission dependence on the rotation angle at $\lambda = 806$ nm.}	
\label{fig:LOelements}
\end{figure}

The phase shifters are controlled by deforming the S-bent waveguides at the input of each beam splitter, which has the effect of changing the path length (see \sfig{LOelements}{a-b}). The profile of the S-bend is chosen carefully so as to stretch the curve smoothly, otherwise abrupt bends would increase losses in the device. More specifically, an undeformed S-bend is described by a sinusoidal function of the kind
\begin{equation}
y = -\frac{h}{2} \cos \left( \frac{2 \pi}{L} x \right),
\end{equation}
where $L$ and $h$ correspond to longitudinal and transversal (relative to the waveguide direction) extensions of the waveguide, such as shown in \fig{cell}. The deformation of the S-bend is then obtained by a coordinate transformation
\begin{equation}
x' = x + d \sin  \left( \frac{2 \pi}{L} x\right),
\end{equation}
where $d$ parameterizes the transformation. This procedure ensures that the deformation has all the necessary smoothness properties, and the resulting length of the S-bend (as a function of $d$) can be obtained numerically. The calibration of this technique showed that it is possible to introduce a phase shift of up to $\pi$ in this way without causing additional losses, and is shown in \sfig{LOelements}{b}. The root mean square deviation of the phase shift measured experimentally relative to the predicted theoretically is of 0.25 rad.

\begin{figure}
\capstart
\centering
\includegraphics[width=0.7\textwidth]{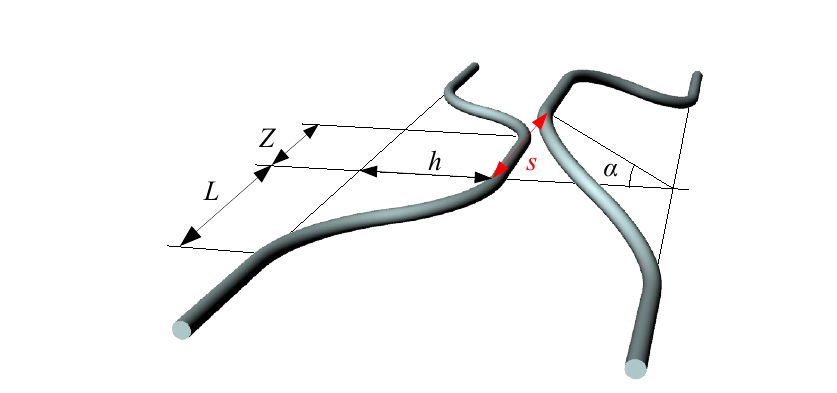}  
\caption[A unitary waveguide cell.]{Scheme of the smallest cell of a directional coupler, displaying all the relevant geometric parameters.}	
\label{fig:cell}
\end{figure}

The beam splitters, on the other hand, are controlled by exploiting the 3D fabrication capability of the femtosecond laser writing technique to rotate the waveguide out of the main plane of the circuit (see \sfig{LOelements}{c}). This provides a way to change the distance between the waveguides without changing the path length and introducing unwanted phase shifts. Recall from \eqs{waveguideshifting}{expodecay} that the coupling coefficient decays exponentially with the distance between the two waveguides, with two parameters $\kappa_0$ and $s_0$ that depend on particularities of the waveguides. The dependence of the angle $\alpha$ on the desired transmission probability $T=t^2$ can be obtained analytically (see the Supplementary Material of \cite{Crespi2013b}) and is plotted in \sfig{LOelements}{c}, although we omit its explicit expression as it is not very enlightening. Calibration of the technique allows fitting of the parameters, providing $\kappa_0 = 42$ mm$^{-1}$ and $s_0 = 2.4$  $\mu$m, as well as showing that that a rotation by a few degrees already allows for the whole range of transmissivities.

The calibrations of both types of optical elements were done by fabricating several single instances of each, using the same fabrication technique of the larger interferometers. One might naturally worry whether the concatenation of several of these devices might induce additional errors beyond those observed during calibration. We will return to this question in \sec{bosonnew_c}, where we will show that the agreement with the single-, two- and three-photon distributions observed in larger devices are, in fact, compatible with an error-per-parameter rate equivalent to observed in the single instances.

\subsection{Experimental setup}  \label{sec:bosonnew_b_experiment}

All the experiments we will discuss consist, in essence, on the observation of one, two or three photons interfering in a multimode integrated chip followed by a detection of coincidence outcomes at the output. The general setup is exemplified in \fig{setup} for the case of a 5-port interferometer. Let us now discuss separately aspects of the photon preparation and measurements stages.

\begin{figure}
\capstart
\centering
\includegraphics[width=0.7\textwidth]{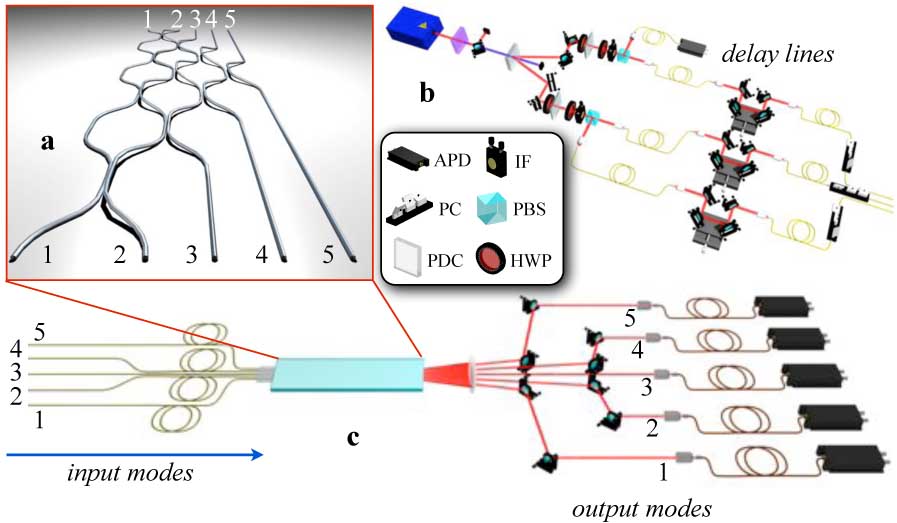}  
\caption[Experimental setup for integrated interferometry.]{General experimental set-up for most of the reported experiments, exemplified with a 5-mode chip. (a) Schematic representation of a 5-mode chip, fabricated by the procedure described in \sec{bosonnew_b_integr}. (b) One-photon, two-photon and three-photon states, generated by parametric downconversion, are injected into the interferometer. Spatial delay lines are used to synchronize the three photons. Legend: APD, avalanche photodiode; IF, interferential filter; PC, polarization controller; PBS, polarizing beamsplitter; PDC, parametric downconversion; HWP, half wave plate. (c) Single-, two- and threefold coincidence detection is performed at the output ports of the chip to reconstruct the probability of obtaining a given output state.}	
\label{fig:setup}
\end{figure}

The single photons used in the experiment are generated at 785nm by the second order process of a parametric downconversion source \cite{Kwiat1995,Crespi2013b}. Four photons are generated, two from each parametric downconversion event, and are routed to the experiment. One is coupled into a single-mode fiber and detected by a single-photon counting detector, and acts as a trigger for the coincidence events. The other three photons are coupled into single-mode fibers, and propagate through different delay lines before being coupled into the chip. The delay lines control the time overlap between the photons, and thus allow for a continuous control over their relative distinguishability. Finally, the output of the chip is coupled into multimode fibers and detected using single-photon avalanche photodiodes. The three-photon outcomes are signaled by a fourfold coincidence between the trigger and three of the detectors.

\subsubsection{Photon distinguishability} \label{sec:sources}

The photon distinguishability can be controlled by the use of delay lines, as depicted in \sfig{setup}{b}. Ideally, this would allow for a continuous variation between the regimes where the photons are perfectly distinguishable (which we call the classical regime) to that where they are perfectly identical (the quantum regime). However, the photons also present an inherent level of partial distinguishability, due to a nonperfect overlap between the spectral functions of photons generated in different downconversion events. This can be modeled by describing the input state as a mixture of the two regimes. For example, an input state of the type $\ket{10101}$, used in the experiments with the 5-mode chip, is actually described as follows:
\begin{equation}
\rho = r \ket{10101}\bra{10101} + (1-r) \ket{1_a \, 0 \, 1_b \, 0 \, 1_a}\bra{1_a \, 0 \, 1_b \, 0 \, 1_a}.
\end{equation}
Here $r$ encodes the partial distinguishability factor, which is proportional to the overlap between the the spectral functions of the three photons, while indexes $a$ and $b$ label the fact that photons input on modes 1 and 5 are generated by the same source, whereas the photon input on mode 3 is generated by a different source, and is twin to the trigger. The factor $r$ can be written as $r=p^2$, where $p$ is the indistinguishability factor of two photons belonging to different pairs. Recall, as mentioned in \sec{HOMeffect}, that two perfectly identical photons impinging on a 50:50 beam splitter always exit  together, in the same mode, which is the Hong-Ou-Mandel (HOM) effect. Thus, the fraction of coincidence outcomes observed in a HOM experiment when photons are \emph{not} perfectly distinguishable gives a measure of their partial distinguishability. By this method, it was possible to obtain $p=0.63 \pm 0.03$.

It is also worthy of note that the parametric downconversion has a nonzero probability of actually producing a six-photon event. The final data reported in \sec{bosonnew_c} takes into account both this (implicitly) and the partial distinguishability (explicitly).

\subsubsection{Output measurements} \label{sec:measurements}

All the experiments involved mainly three types of output detection: single-photon, two-photon visibilities and three-photon coincidences.

For a single photon input on mode $i$ of an $m$-mode interferometer $U$, the probability that it will exit in output mode $L$\footnote{Throughout this section we use lower case letters to represent inputs and upper case to represent outputs.} is simply determined by the corresponding matrix element of $U$ as
\begin{equation}
P^1_{i;L} = |U_{Li}|^2.
\end{equation}
This means that, conversely, the absolute values of $U$ can be determined using only single-photon measurements. By inputting single photons in the chip it was possible to obtain $P^1_{i;L}$ for all combinations of $i$ and $L$, and these values were used: (i) as a first test of the fabrication quality; (ii) in conjunction with two-photon data for an algorithmic reconstruction \cite{Laing2012} of $U$, which we will explain in \sec{bosonnew_b_tomography}; and (iii) to compute the expected output probabilities for two and three photons in the \emph{classical} regime, using \eq{MUclassical}.

For a two-photon input on modes $(i,j)$ and output on modes $(L,M)$, detection consisted on performing HOM interferences to obtain the corresponding visibilities. The visibility is defined as 
\begin{equation} \label{eq:visibs}
V_{i,j;L,M}=\frac{P^{2,c}_{i,j;L,M}-P^{2,q}_{i,j;L,M}}{P^{2,c}_{i,j;L,M}}
\end{equation}
where $P^{2,c}_{i,j;L,M}$ and $P^{2,q}_{i,j;L,M}$ correspond to the classical and quantum two-photon probabilities, respectively. The visibility can be directly obtained from a HOM curve as the one depicted in \fig{HOMvisib}. Note that, while the ``standard'' HOM curve described in \sec{HOMeffect} consists of observing two photons in a 50:50 beam splitters, a general HOM curve is actually a transition between the probabilities associated with the two regimes, and can be a dip or a peak depending on which probability is larger. Also note that the visibilities are independent of photon losses in the device, assuming that losses affect both classical and quantum regimes equally.

\begin{figure}
\capstart
\centering
\includegraphics[width=0.7\textwidth]{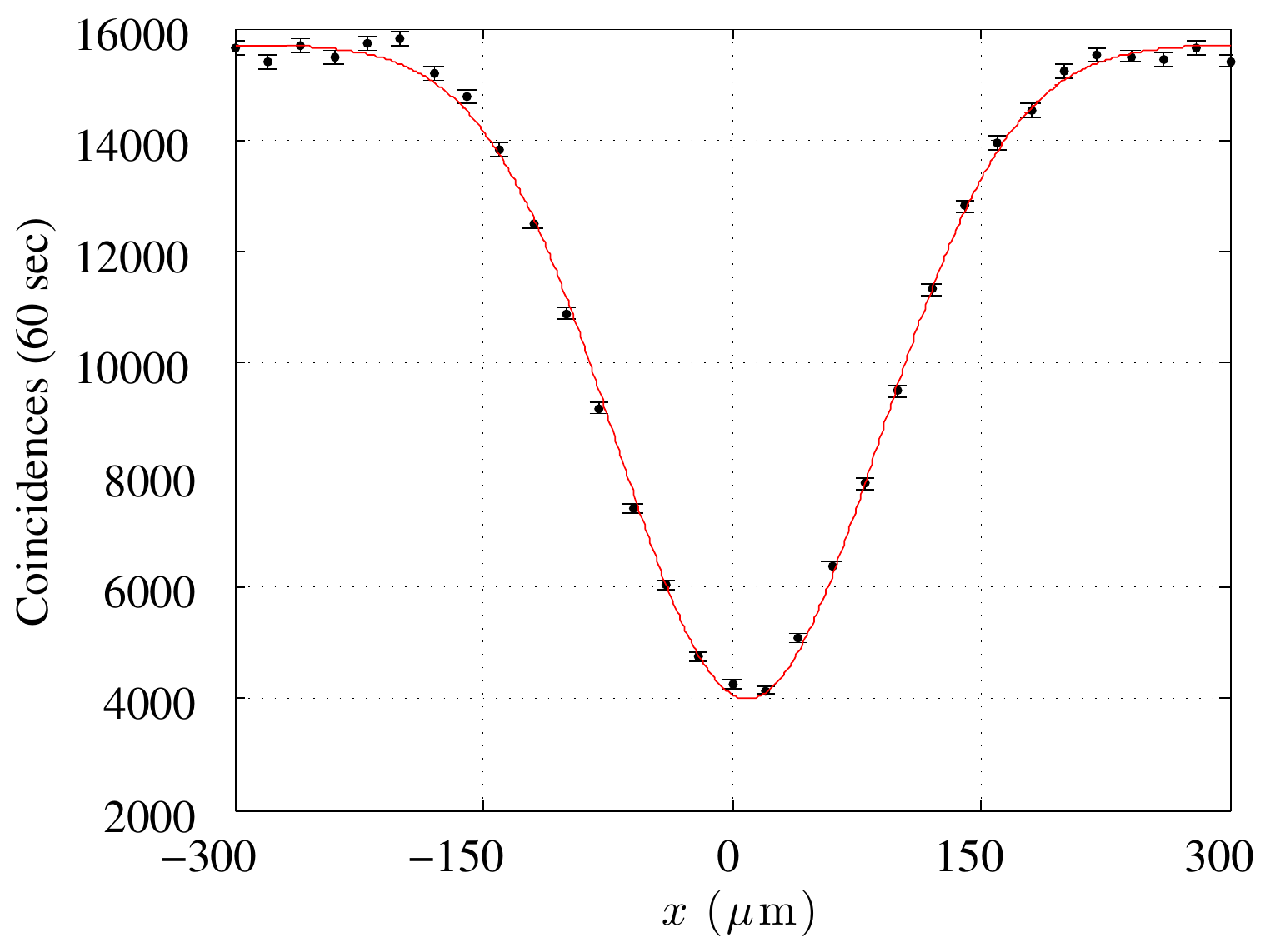}  
\caption[HOM curve for measurement of two-photon visibilities.]{Typical HOM curve for experimental measurement of two-photon visibilities. Note that the visibility corresponds only to the \emph{relative} height of the dip (or peak).}	
\label{fig:HOMvisib}
\end{figure}

The two-photon probabilities $P^{2,q}_{i,j;L,M}$ can be obtained from the visibilities by inverting \eq{visibs} and computing $P^{2,c}_{i,j;L,M}$ from the corresponding single-photon probabilities. Interestingly, the tomography algorithm described in \sec{bosonnew_b_tomography} relies directly on the visibilities, rather than on the probabilities themselves, and thus has the convenient feature of reconstructing a unitary matrix independently of photon losses. As we will see, this algorithm requires the measurement of only a subset of all visibilities, however the measurement of the complete set helps to avoid those values of $V_{i,j;L,M}$ with larger experimental errors. For a 5-mode chip, the set of all visibilities requires the measurement of 100 HOM curves, whereas for a 7-mode chip this number is raised to 441---measuring these many HOM curves is no simple task, which is why these full sets of measurements were only performed on chips of size 5 and 7. For larger chips, either an indirect certification of the quality of the device was performed by using the validation algorithm of \sec{bosonreview_c_certif}, as well as standard statistical tests, to compare ideal and experimental three-photon distributions, or else other properties were being investigated that did not heavily depend on the actual expression of $U$.

Both single- and two-photon probabilities were obtained using twin photons from a single downconversion event, but for the three-photon measurements two sources and a trigger were required, as described in the previous section. The corresponding probabilities for input $\{i,j,k\}$ and output $\{L,M,N\}$ are denoted $P^{3,q}_{i,j,k;L,M,N}$, and in a sense are the crux of the experiments. The comparison is made between the corresponding probabilities corrected for the $r$ distinguishability factor, which are denoted as $P^{3,r}_{i,j,k;L,M,N}$ and which can be computed with the aid of single- and two-photon data, and the experimental probabilities $P^{3,e}_{i,j,k;L,M,N}$. This makes up a large part of the results reported in \sec{bosonnew_c}.

Finally, another important quantity that was measured in one of the experiments was the bunching fraction, corresponding to the probability $p_q$ that \emph{at least} two photons will exit in the same output mode. This is the figure of merit of the bosonic birthday paradox, described in \sec{bosonreview_c_BBP}. To measure this quantity, a setup was arranged with the 3 input photons and 3-fold coincidence measurements in which both classical and quantum regimes were alternated, each running for the same time interval, and the overall number of detected events per interval was recorded. Since the bunching fraction is the complement to the total fraction of coincidence outcomes, this gives a direct estimate for $(1-p_q)/(1-p_c)$, where $p_c$ is the corresponding classical bunching fraction. Since $p_c$ can be computed directly from the single-photon measurements, this allows a good estimate for $p_q$, without the much-less-feasible requirement of identifying multiphoton outcomes. 

\subsection{Random ensembles of unitary matrices}  \label{sec:bosonnew_b_ensembles}

Given the ability to implement an arbitrary $m$-mode integrated interferometer, the question arises of what should be its specifications. As discussed in \sec{bosonreview_c_interf}, the choice that most faithfully reproduces the setting where BosonSampling was defined is a uniformly random unitary matrix. An $m \times m$ Haar-random unitary $U$ can be obtained by the following procedure:

\begin{itemize}
\item[(i)] Sample an $m \times m$ matrix $Z$ from the standard complex Gaussian ensemble.
\item[(ii)] Apply some QR decomposition to $Z$. In other words, obtain a decomposition $Z=QR$ for some orthogonal matrix $Q$ and upper-triangular matrix $R$. Notice that this decomposition is not unique.
\item[(iii)] Define the matrix $R' = \textrm{diag}\left(\frac{R_{11}}{|R_{11}|}, \frac{R_{22}}{|R_{22}|} \ldots \frac{R_{nn}}{|R_{nn}|}\right)$. This fixes the non-uniqueness of the QR decomposition.
\item[(iv)] The matrix $U=QR'$ is randomly distributed according to the Haar measure.
\end{itemize}

For a discussion on why this method works, see \cite{Mezzadri2007}. Alternatively, the Haar ensemble can also be simulated by a choice of decomposition such as \sfig{Reck}{a}, together with a suitable random sampling of each individual parameter. However, the distribution satisfied by each reflectivity depends on the beam splitters' position within the circuit, and the resulting procedure is considerably more cumbersome than simply using the method above and then the decomposition of \cite{Reck1994}.

Although it is closer in spirit to the original BosonSampling proposal, the uniform ensemble has some drawbacks, especially in terms of losses at the S-bends. At each S-bend, there is a non-negligible probability that a photon will escape tangentially (these can be measured experimentally, and are typically around $7\%$ per S-bend), which has two main consequences for the experiment. The first is that these losses induce an overall attenuation of the signal, related to the smallest number of S-bends a photon must transverse. Although simulations suggest that lossy BosonSampling devices may retain their hard-to-simulate properties \cite{Rohde2012}, this nonetheless presents one of the major scalability obstacles to these techniques since the overall signal actually decays exponentially with the circuit depth\footnote{This was the primary motivation for the constant-depth result proved at the beginning of the chapter.}, and we will return to this subject at the end-of-chapter discussion. Secondly, the asymmetry in the layout of \sfig{Reck}{b} causes losses to be biased relative to different paths within the chip. More specifically, if we factor out the overall losses, each output mode is subject to a different loss factor. To exemplify, consider the 5-mode circuit of \sfig{Reck}{b}. A photon will transverse two more S-bends if it leaves in mode 3 than if it leaves in mode 5, independently of its input. This can be corrected by multiplying the count of each output mode by a fixed factor, but is nonetheless a source of extra biased error in the experimental data.

\begin{figure}[t]
\capstart
\centering
\includegraphics[width=0.7\textwidth]{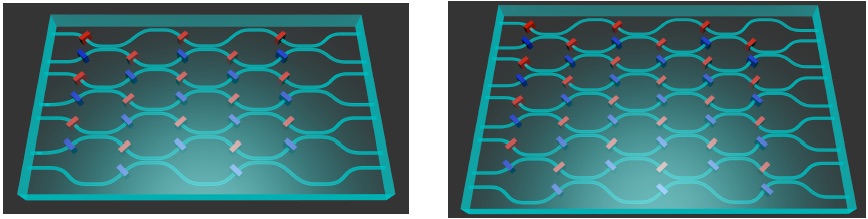}  
\caption[Random phases ensemble.]{Schematic representation of the random phases ensemble. The first example has $m=7$ modes and $L=5$ layers of 50:50 beam splitters, and the second has $m=9$ and $L=6$. Red and blue rectangles represent phase shifts sampled uniformly from the $\left [ 0, \pi \right ]$ interval.}	
\label{fig:walkinterf}
\end{figure}

With these two concerns in mind, we numerically investigated an alternative ensemble, described by $m$-mode chips consisting of $L$ layers of alternated 50:50 beam splitters\footnote{As recently shown in \cite{Bouland2013}, any unitary linear transformation can be efficiently approximated by a sequence of beam splitters of any fixed reflectivity. Thus, contrary to what one might expect, this restriction on the beam splitters does not make the ensemble trivial.} between neighboring modes, interspersed with random phase shifts sampled uniformly from the $\left [ 0, \pi \right ]$ interval. Examples of these interferometers for $m=7$ and $m=9$, with $L=5$ and $L=6$ respectively, are depicted in \fig{walkinterf}. Henceforth, we denominate this the random phases ensemble.

The random phases ensemble was defined for practical, rather than mathematical, reasons. Its properties are not as well-studied as the uniform ensemble, and a conjecture that their permanents are $\#$P-hard does not seem as natural as \conj{BSishard}. Nonetheless, chips built according to this ensemble have the nice feature of being symmetric, in order to avoid biased losses. Also, if we restrict ourselves to using only the central 3 input modes and if $L \geq (m+3)/2$, the chip is ``fully connected'', in the sense that any of the photons has a nonzero probability of exiting at any output mode. This is a more feasible depth range for those chips with $m \geq 7$, if compared with arbitrary unitaries, where generally $L=m-1$. Since the first experiment that was performed, with $m=5$, demonstrated a good control over the technique for fabricating arbitrary unitaries \cite{Crespi2013b}, in subsequent experiments \cite{Spagnolo2013b, Spagnolo2013c} we decided to focus on other technical aspects such as certification of the device, for which the random phase ensemble suffices. Thus, this ensemble is well-motivated for its physical, if not complexity-theoretical, properties.

\begin{figure}[p!]
\capstart
\centering
\subfloat[]{\centering \includegraphics[width=1\textwidth]{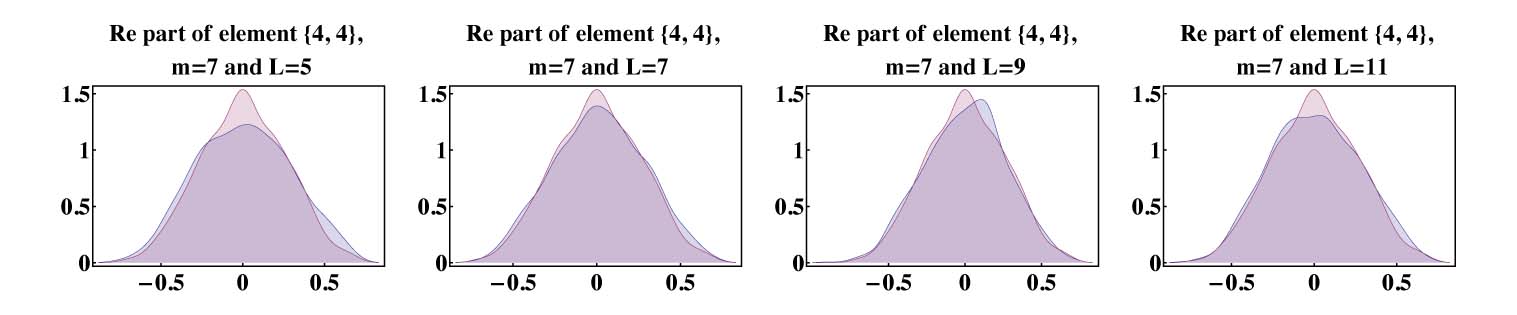}} \\
\subfloat[]{\centering \includegraphics[width=1\textwidth]{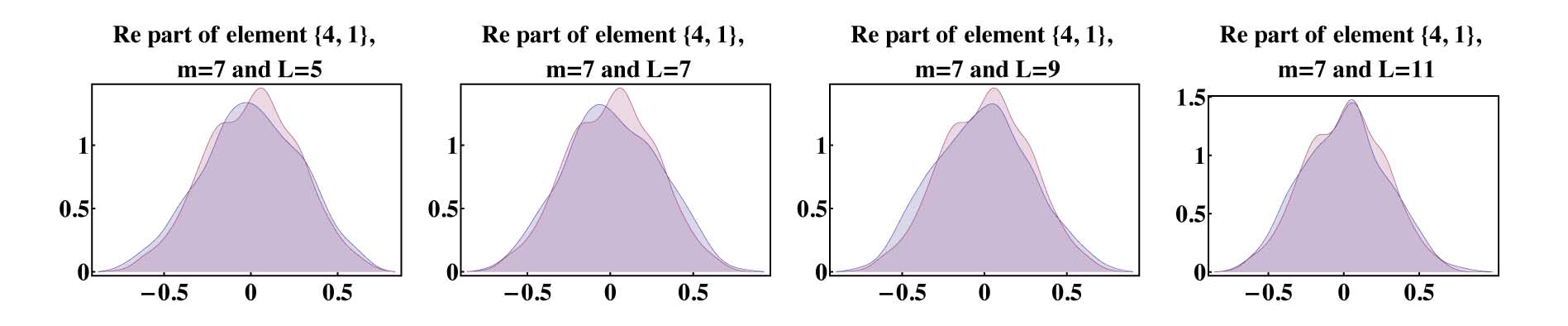}} \\
\subfloat[]{\centering \includegraphics[width=1\textwidth]{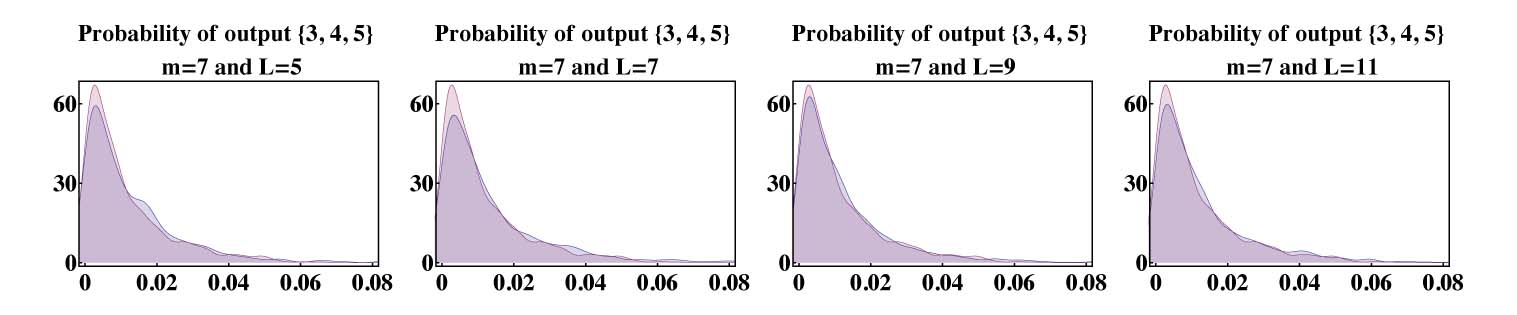}} \\
\subfloat[]{\centering \includegraphics[width=1\textwidth]{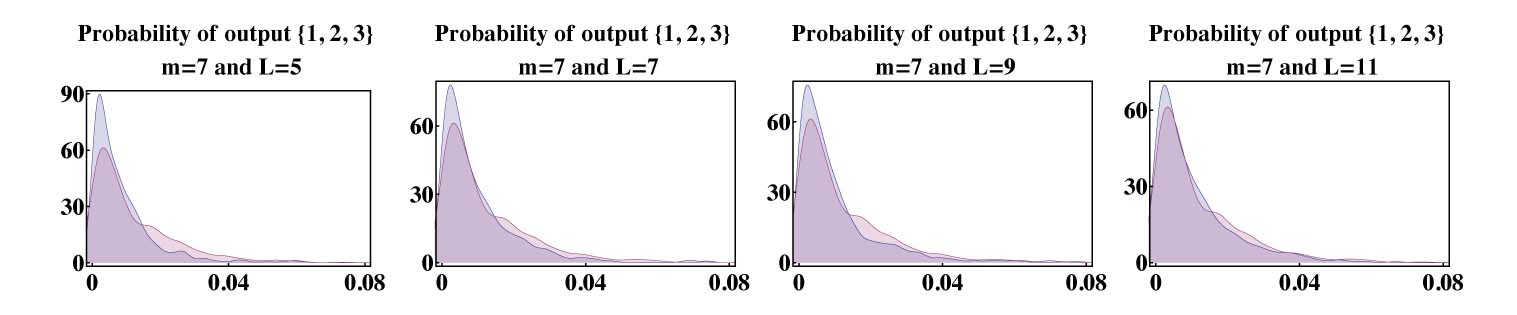}} \\
\subfloat[]{\centering \includegraphics[width=1\textwidth]{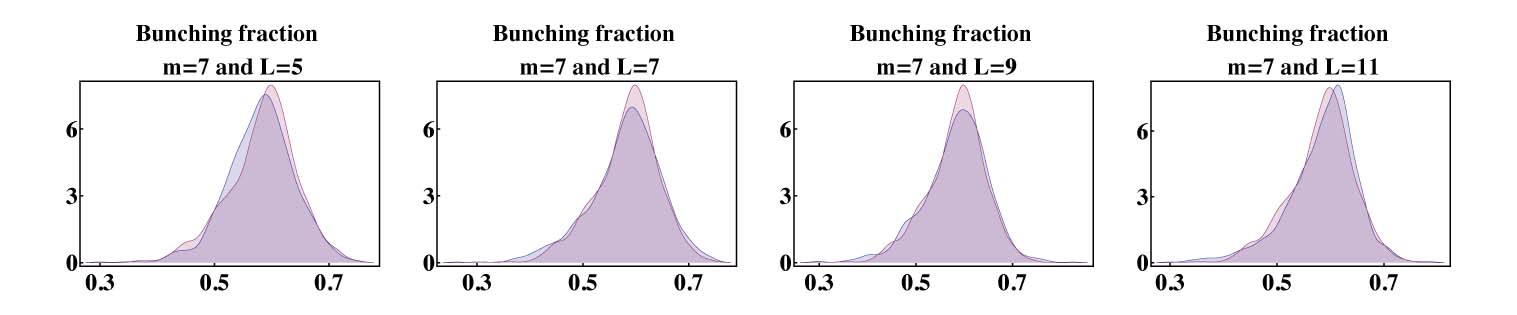}}
\caption[Simulation of random phase ensemble for $m=7$]{Simulations for the random phase (blue) and Haar ensembles (magenta), for $m=7$ and several values of $L$. (a) Real part of the amplitude connecting the central input to the central output. (b) Same as (a) but for an endpoint output. (c) Probability connecting three photons input on central modes to the central output modes. (d) Same as (c), but for the upper three output modes. (e) Bunching fraction.}
\label{fig:simuls7}
\end{figure} 

\begin{figure}[p!]
\capstart
\centering
\subfloat[]{\centering \includegraphics[width=1\textwidth]{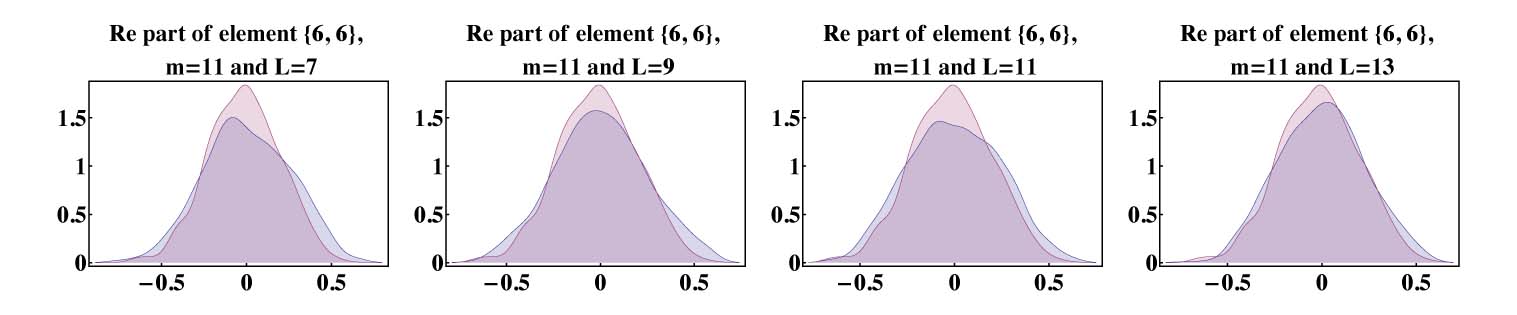}} \\
\subfloat[]{\centering \includegraphics[width=1\textwidth]{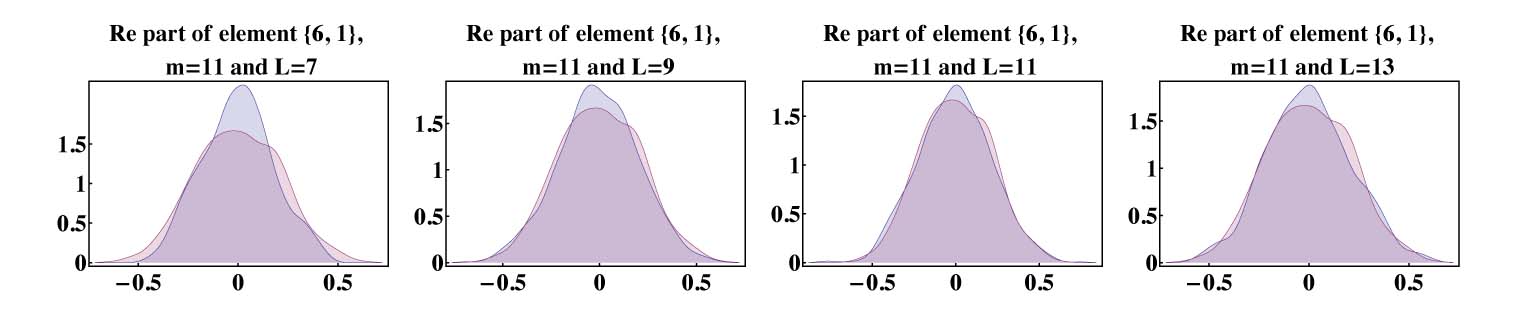}} \\
\subfloat[]{\centering \includegraphics[width=1\textwidth]{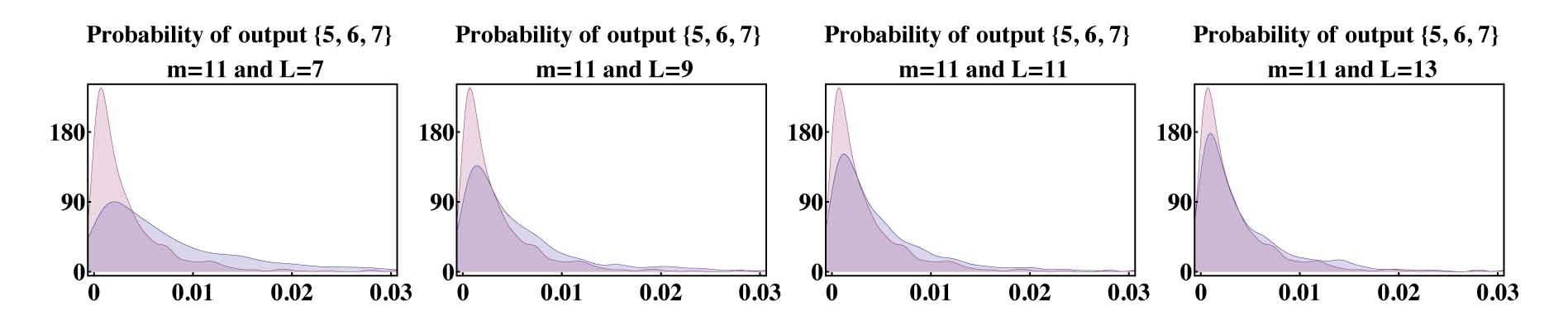}} \\
\subfloat[]{\centering \includegraphics[width=1\textwidth]{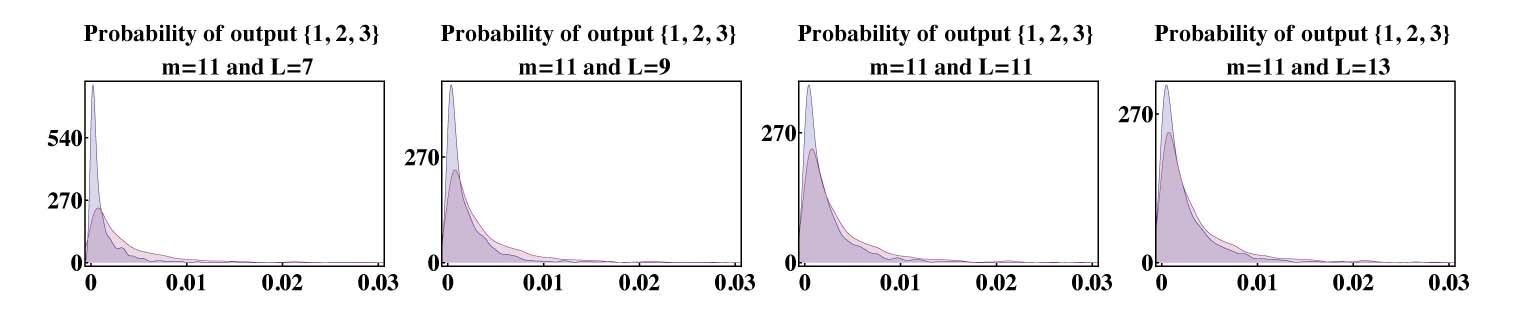}} \\
\subfloat[]{\centering \includegraphics[width=1\textwidth]{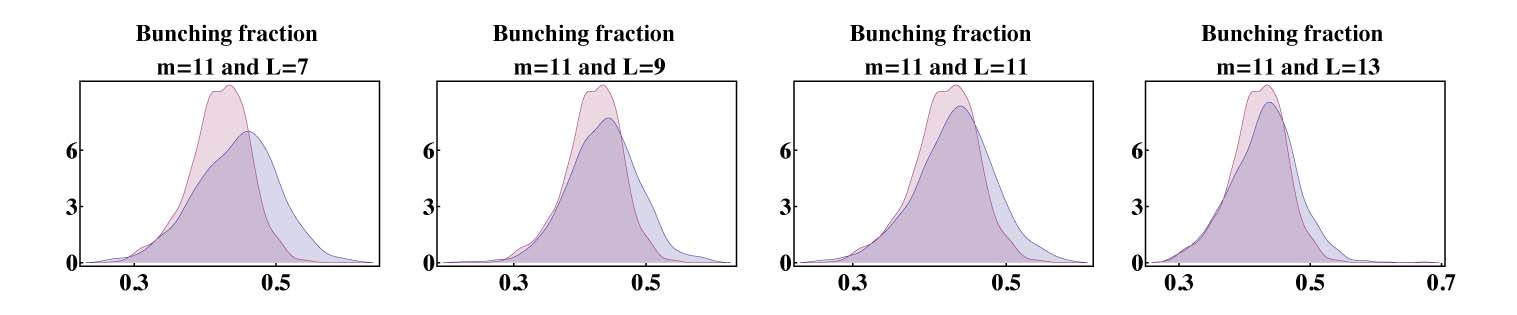}}
\caption[Simulation of random phase ensemble for $m=11$]{(a)-(e) Same as \fig{simuls7}, but for interferometers with $m=11$.}
\label{fig:simuls11}
\end{figure} 

In \fig{simuls7} and \fig{simuls11} we show the distributions of several figures of merit of interest over 10000 samples from the random phase and from the uniform ensembles, for $m=7$ and $m=11$ and several values of $L$. We plotted the real part of matrix elements $\{\frac{m+1}{2}, \frac{m+1}{2}\}$ and $\{\frac{m+1}{2}, 1\}$ of $U$. These matrix elements correspond to the transition amplitudes for photons entering in the central mode to exit in the central and endpoints modes, respectively. The plots for the imaginary parts of these elements have a similar appearance and are not reported here. We also plotted the permanents of $3 \times 3$ matrices associated with the transition of photons entering on the three central inputs and exiting either in the three central or three endpoint modes. We chose these two sets of outputs since they display the extremal influence of the chip's constraints on the transition probabilities---if the depth is smaller than $(m+3)/2$ some photons simply cannot transition from a central mode to one of the endpoint modes, and the corresponding matrix elements are zero, which is why one might expect these matrix elements to be very biased at $L$ close to $(m+3)/2$. Finally, we also simulated the behavior of the bunching fraction (cf.\ \sec{bosonreview_c_BBP}). These simulations were done on Mathematica$^{\copyright}$ software, and the corresponding files can be found in \appdx{app1}. 

All simulations suggest that relevant figures of merit approach the corresponding distributions for the Haar ensemble surprisingly fast as $L$ becomes larger than $(m+3)/2$. This provides evidence that this ensemble, despite its limitations, still has ``enough randomness'' to motivate its use for small-scale implementations of BosonSampling in order to benchmark the several experimental techniques involved.

\subsection{Device tomography}  \label{sec:bosonnew_b_tomography}

A final important aspect of data analysis that warrants a separate discussion, due to its ubiquity in analysis of the experiments, is \emph{chip tomography}. The implementation of a previously-known tomography algorithm \cite{Laing2012} on Mathematica$^{\copyright}$ software, as well as several refinements of this algorithm to obtain better agreement with experimental data, were one of my main contributions to these experimental papers. 

The starting point is an algorithm due to Laing and O'Brien \cite{Laing2012}, which we will describe shortly, that allows the reconstruction of the chip's unitary $U$ from the set of single-photon probabilities and two-photon visibilities (cf.\ \sec{measurements}). Rigorously speaking, this does not consist of complete process tomography (see e.g.\ \cite{Mohseni2008} and references therein). Recall that the Fock space of an $n$-photon, $m$-mode system has dimension $\binom {m+n-1} {n}$, and that $U$ induces an unitary $U_F$ in this space, in the manner described in \sec{intro_secondquant}. However, the real-world matrix $U_F$ is not completely determined by $U$, but rather involves several other spurious nonlinear and non-unitary processes. In this sense, complete process tomography would consist on reconstructing $U_F$, and this would require an unfeasible amount of experimental effort and technology. Instead, the tomography we will describe consists simply of reconstructing $U$, under the idealized assumption that it completely determines $U_F$. 

Under this assumption that all processes are linear, one may wonder why is it necessary to use both single- and two-photon data. In principle, one could perform the reconstruction by describing a single photon in an $m$-mode chip as an $m$-mode system, and performing some suitable well-known single-qudit process tomography. The reason is that single-qudit tomography would require inputting the photons in several well-controlled superposition states, corresponding, for example, to some set of $m-1$ mutually unbiased bases \cite{Wootters1989}, which in turn requires a very precise control on the relative phases between different input modes. However this would be experimentally challenging, since random input phases are introduced e.g.\ by the coupling between the single-mode fibers and the chip. The algorithm we will describe, in contrast, only depends on the transition probabilities between Fock states, at the cost of requiring some two-photon measurements, but which are nonetheless more feasible than  preparing the arbitrary single-photon superpositions. As we will see, a drawback of this algorithm, foreshadowed by this discussion, is that we will only obtain $U$ up to a round of phase shifters at the beginning and at the end. This disadvantage, however, is compensated by the fact that BosonSampling itself depends on matrix permanents only via such transition probabilities, and thus is also insensitive to input and output phases. In other words: we cannot experimentally determine what phases are induced at the input and output of the chip due to the coupling with the fibers and other factors, but this is irrelevant since they have no observable effect on the subsequent experiments.

We should point out that there are other methods for reconstructing $U$. Most notably, it is possible to reconstruct $U$ using only coherent states, as shown in \cite{Rahimi-Keshari2013}. Since in our experiments the measurement of single- and two-photon data was required to take into account the partial distinguishability of photon sources (cf.\ \sec{sources} and \sec{measurements}), we opted for the simpler choice of using this data to also perform the tomography. It has been claimed that determining the expected theoretical behavior of a 3-photon experiment using a reconstructed matrix obtained from single- and two-photon experiments consist of circular reasoning. To illustrate this claim, suppose we perform an experiment with $n=50$ photons. The experiment would verify that the corresponding laws of Physics hold for $n=50$, but only assuming that they already hold for $n=2$ since we would be using the reconstructed $U$ as the theoretical prediction. However, I do not agree that this reasoning is circular: the HOM curves (such as depicted in \fig{HOMvisib}) have been perfected through the last three decades, and the laws of Physics have been very well verified for $n=2$, and it makes perfect sense to consider this data trustworthy enough to base our theoretical predictions for $n=50$.

We will now give a brief description of the Laing-O'Brien algorithm, as first described in \cite{Laing2012}, followed by a discussion on its features and shortcomings and how it can be refined.

\subsubsection{The Laing-O'Brien algorithm}

Let each matrix element of $U$ be written as $\tau_{k;J} e^{i \alpha_{k;J}}$, where lower case letters label columns and upper case letters label rows, to maintain consistency with our previously adopted notation. Let also $P^{1,e}_{k;J}$ correspond to the experimental single-photon transition probability connecting input mode $k$ to output mode $J$, and $V^{e}_{k,h;J,G}$ be the two-photon visibility connecting input pair $\{k,h\}$ to output pair $\{J,G\}$. Let us also define the following auxiliary quantities:
\begin{align}
x_{k,h;J,G}&:=\sqrt{\frac{P^{1,e}_{k;J} P^{1,e}_{h;G}}{P^{1,e}_{h;J}P^{1,e}_{k;G}}}, \\
y_{k,h;J,G}&:=x_{k,h;J,G}+x^{-1}_{k,h;J,G}.
\end{align}
Note that $x_{k,h;J,G}$ and $y_{k,h;J,G}$ are completely determined by experimental data. It can be shown, then, that the matrix parameters $\tau_{i;J}$ and $\alpha_{i;J}$ are related to experimentally accessible quantities ($x_{k,h;J,G}$, $y_{k,h;J,G}$ and $V^{e}_{k,h;J,G}$) as follows \cite{Laing2012}:
\begin{subequations} \begin{align}
x_{k,h;J,G}&=\frac{\tau_{k;J} \tau_{h;G}}{\tau_{h;J} \tau_{k;G}}, \label{eq:constraintsa} \\
V^{e}_{k,h;J,G} \; y_{k,h;J,G}&=-2 \cos(\alpha_{k;J}-\alpha_{h;J}-\alpha_{k;G}+\alpha_{h;G}). \label{eq:constraintsb}
\end{align} \end{subequations}
The algorithm, as we will describe shortly, consists of inverting these two equations to sequentially obtain every parameter $\tau_{k;J}$ and $\alpha_{k;J}$, for $k$ and $J$ greater than 2, in terms of the experimental values and the parameters of the first row and column (i.e.\ $\tau_{1;J}$, $\tau_{k;1}$, $\alpha_{1;J}$, and $\alpha_{k;1}$), which we will call ``reference values'', and then impose unitarity of $U$ to obtain equations that completely determine these reference values. Note that \eq{constraintsb} does not allow one to completely obtain $\alpha_{h;G}$, since the sign for the argument is unknown. However, if the absolute value $|\alpha_{h;G}|$ is known, and the other three reference angles are completely known, the sign of $\alpha_{h;G}$ can be determined by
\begin{align} \label{eq:signalpha}
\textrm{sgn} [\alpha_{h;G}] = & \text{sgn} [ |\beta_{k,h;J,G} - |\alpha_{k;J} - \alpha_{h;J} - \alpha_{k;G}-|\alpha_{h,G}||| \notag \\
& - |\beta_{k,h;J,G} - |\alpha_{k;J} - \alpha_{h;J} - \alpha_{k;G}+|\alpha_{h,G}||| ],
\end{align}
where $\beta_{k,h;J,G}:=\alpha_{k;J} - \alpha_{h;J} - \alpha_{k;G}+\alpha_{h,G}$.

In order to make the sequential replacement possible, we must make the following assumptions. First, the elements of the first column and row are real (i.e.\ $\alpha_{1;J}=\alpha_{k;1}=0$ for all $J$ and $k$). This can be done without loss of generality, since these can be fixed by a round of phase shifters at the input and output which, as we discussed, the algorithm is insensitive to. Second, the element $\{2,2\}$ has positive imaginary part (i.e.\ $\alpha_{2;2} \geq 0$). This can be done since the photon probabilities are also insensitive to the symmetry $U \rightarrow U^{*}$.

We now start from the initially unknown matrix $M_1$, parameterized as
\begin{equation}
M=
\begin{pmatrix}
\tau_{1;1} & \, \tau_{1;2} & \, \cdots & \, \tau_{1,m} \\[8pt] 
\tau_{2;1} & \, \tau_{2;2} \, e^{ i \alpha_{2;2}} & \, \cdots & \, \tau_{2;m} e^{ i \alpha_{2;m}} \\[8pt]
\vdots & \, \vdots & \, & \, \vdots \\[8pt]
\tau_{m;1} & \, \tau_{m;2} e^{ i \alpha_{m;2}} & \, \cdots & \, \tau_{m;m} e^{ i \alpha_{m;m}}
\end{pmatrix},
\label{eqM1}
\end{equation}
 and proceed to replace the parameters as as follows:
\begin{itemize}
\item[(i)] Fix $\{j,K\}=\{1,1\}$ in \eq{constraintsa} and rewrite every $\tau_{h;G}$ (for $h,G \geq 2$) in terms of $\tau_{1;1}$, $\tau_{1;G}$ and $\tau_{h;1}$ and $x_{1,h;1,G}$.
\item[(ii)] Do the same as (i), but in \eq{constraintsb}. Since the $\alpha_{h;1}=\alpha_{1;G}=0$, this allows directly obtaining the absolute value of $\alpha_{h,G}$ as arccos$\left( \frac{1}{2} V^{e}_{1,h;1,G} \; y_{1,h;1,G} \right)$.
\item[(iii)] Finally, recall that the sign of $\alpha_{2,2}$ is defined to be positive. Thus, by fixing $\{k,J\}=\{1,2\}$ in \eq{signalpha} one can determine the sign of all $\alpha$ in the second column, by fixing $\{k,J\}=\{2,1\}$ one can determine the signs of $\alpha$ in the second row, and fixing $\{k,J\}=\{2,2\}$ determines the signs of the remaining phases.
\end{itemize}
After this replacement, the matrix is completely determined up to the values of $\tau_{k;J}$ on the first row and column. Finally, these values are determined by imposing that the first column and row are normalized, and that they are orthogonal to the remaining rows and columns, respectively\footnote{Note that these do not correspond to all constraints due to unitarity, since orthogonality between remaining columns or rows is not imposed.}. If the experimental data corresponds to a perfect unitary matrix, the resulting matrix obtained after this step is already the correct one. However, experimental data typically does not correspond perfectly to some unitary matrix, both because the real-world process is not really unitary and because even then the measurements themselves would be noisy, and a polar decomposition must be performed to obtain $U$. The polar decomposition consists in writing $M$ as $M = U P$, where $U$ is unitary and $P$ is a positive-semidefinite Hermitian matrix. Intuitively, this decomposition resembles the polar decomposition for complex numbers: $U$ represents a ``direction'' in space, or a rotation, while $P$ represents a ``stretching'' along that direction. It can be shown that $U$ is in fact the closest unitary matrix to the reconstructed $M$, which justifies this step.

The resulting unitary matrix, $U$, is the best obtained by this procedure that approximates the experimental data. Of course, the experimental data cannot be completely consistent with some particular unitary in any realistic setting, and the authors of \cite{Laing2012} also report a numerical investigation on the effect of noisy data on the reconstruction. More specifically, sets of noisy data were simulated starting from the distribution expected from some chosen unitary, and the gate fidelity [$=\frac{1}{m} \text{Tr}(U V^{\dagger})$] between the original and the reconstructed matrices was plotted as function of the noise level and the interferometer size. As they report, a noise of up to $5\%$ in a $4$-mode experiment causes the reconstructed unitary to have an average $95\%$ fidelity with the original one, and this value drops to $85\%$ in a 20-mode experiment with noise rate of $0.25\%$. 

\subsubsection{Refining the Laing-O'Brien algorithm}

We now describe two refinements made to the Laing-O'Brien algorithm that provided us with a reconstructed unitary displaying considerably better agreement with experimental data.

To quantify the quality of the reconstructed matrices, we use two figures of merit for the distance between reconstructed and experimental data points: (i) the total variation distance between the distributions (cf.\ definition in \sec{introsimul}), both for single- and two-photon measurements, and each averaged over all possible choices of input; and (ii) the $\chi^2$ between the complete data sets, defined as 
\begin{equation}
\chi^2 = \sum_{i} \left( \frac{x_i^r - x_i^e}{\sigma x_i} \right)^2,
\end{equation}
where $x_i^r$ and $x_i^e$ are the reconstructed and experimental values, respectively, $\sigma x_i$ is the corresponding error bar, and the sum runs over the complete data set. In these experiments, we cannot ascribe the usual significance to the value of the $\chi^2$. Typically, a model that gives a good description for a certain data set should present a $\chi^2$ of roughly 1 per data point---this represents the fact that each data point is roughly at most one deviation from the expected theoretical value. In our data set, with the best reconstructed unitaries, $\chi^2$ ranges around values of 30 per data point. However, the usual interpretation of this value is only meaningful when the data describes independent variables, and our data most certainly does not. There are correlations both on the fact that the data corresponds to probability distributions (and thus are normalized), but also that they come from a same unitary matrix $U$, and there is a great deal of correlations across data from different inputs. As an example, note that the absolute values of $U$, which determine single-photon probabilities, must be normalized along both rows \textbf{and} columns. The correlations between the various two-photon data points are even more intricate. In simpler terms: a variation of one standard deviation in one data point must induce a corresponding variation on several of the other points just to maintain normalization, and this causes the $\chi^2$ to blow up. Nonetheless, it still provides a measure of distance between experimental and theoretical data, moreover one that takes into account our confidence on the experimental values via their estimated errors.

We also compute the gate fidelity $F$ between the ideal ($U^t$) and reconstructed ($U^r$) unitaries, given by
\begin{equation}
F = \frac{1}{m} |\textrm{Tr} (U^{t \, \dagger} U^r)|.
\end{equation}
Given that the algorithm is insensitive to a round of arbitrary input and output phase shifters and to the symmetry $U \rightarrow U^{*}$, but $F$ is not, we fix these symmetries so as to maximize the value of $F$. Of course, this fidelity measures the distance from the reconstructed unitary to the theoretical one, not to the experimental data, and as such it is not a measure of quality of the algorithm itself, but rather of the quality of the fabrication. A matrix with smaller value of $\chi^2$ or total variation distance, but greater $F$, has better agreement with experimental data, and if it does not agree with the original theoretical unitary it is most likely because the fabrication procedure did not implement the desired parameters perfectly.

The first way we refine the Laing-O'Brien algorithm is simply by varying the reference row and column used in the original algorithm. Notice that, for an $m \times m$ unitary, the complete set of experimental data consists of $m^2$ single-photon and $\frac{1}{4} m^2(m-1)^2$ two-photon data points. However, the Laing-O'Brien algorithm only uses $(m-1)^2$ single-photon and $(m-1)^2$ two-photon data points\footnote{Plus an additional $m(m+4)$ points to fix the signs of the phases. We do not include these into the count since each only determines a discrete information, i.e.\ a sign, so the final reconstructed matrix typically only depends very mildly on their values.}, a consequence of the choice of the first row and column as the references. In the original paper the authors raise the question of whether the algorithm can be adapted to produce a matrix that uses all the available data set for a more robust result, or whether it is better to focus experimental efforts on obtaining the highest possible precision on a minimal subset of data points.  In two of the experiments (for 5- and 7-mode chips) we will report in the next section, the complete data set was measured, and this allowed us to use different subsets in the reconstruction algorithm. More specifically, what we did was obtain $m^2$ different reconstructed matrices, each obtained from a different choice of reference row and column. This was achieved by permuting the data set labels prior to implementing the Laing-O'Brien algorithm as is.

By running the reconstruction over all possible permutations for the available experimental data, we observed that the quantities of interest vary greatly. In \tabl{permutation5} and \tabl{permutation7} we report the best and worst permutation from the point of view of each of these figures of merit. Note that the records for each figure of merit are not held all by the same choice of permutation, so one may choose different permutations depending on what they wish to optimize. This argument does not answer the authors' open question, as it does not really use the redundant data for increased robustness, but nonetheless it allows us to sidestep certain pathological data sets that provide below-average reconstructed matrices.

\begin{table}[h]\centering
\begin{tabular}{|c||c|c|c|c|}
\hline 
$\{k,J\}$ & $F$ & $\chi^2$ & Single-photon TVD & Two-photon TVD \\
\hline
$\{2,4\}$ & 0.945 & \textcolor{blue}{\textbf{4952}} & 0.0714 & 0.0744 \\
\hline
$\{0,4\}$ & \textcolor{blue}{\textbf{0.966}} & 6758 & 0.087 & 0.082 \\
\hline
$\{3,4\}$ & 0.953 & 5701 & \textcolor{blue}{\textbf{0.0711}} & \textcolor{blue}{\textbf{0.069}} \\
\hline
$\{3,2\}$ & 0.833 & \textcolor{red}{\textbf{77510}} & 0.22 & 0.18 \\
\hline
$\{0,3\}$ & \textcolor{red}{\textbf{0.804}} & 58546 & 0.224 & 0.201 \\
\hline
$\{1,1\}$ & 0.811 & 74894 & \textcolor{red}{\textbf{0.238}} & \textcolor{red}{\textbf{0.228}} \\
\hline
Refined from $\{2,4\}$ & 0.953 & 3556 & 0.066 & 0.067 \\
\hline
\end{tabular}
\caption{Illustrative values of $F$, $\chi^2$, and total variation distance relative to single and two photon data, for several matrices obtained by permutations of the Laing-O'Brien algorithm, for the 5-mode experiment. The first column describes the permutation: specifically, a value of $\{k,J\}$ means that the reference row and column were $J+1$ and $k+1$ respectively. The $\chi^2$ are reported for the complete set of 125 data points. Values in blue and red are the best values of the corresponding figure of merit among the 25 possible permutations. The last row represents the best value output by a 15 min run of the numerical search on a standard 2.20 GHz quadcore laptop, with 8Gb RAM, running the algorithm shown in \appdx{app2} on Mathematica$^{\copyright}$ software.}
\label{tab:permutation5}
\end{table}

\begin{table}[h]\centering
\begin{tabular}{|c||c|c|c|c|}
\hline 
$\{k,J\}$ & $F$ & $\chi^2$ & Single-photon TVD & Two-photon TVD \\
\hline
$\{3,4\}$ & 0.96 & \textcolor{blue}{\textbf{$3.6 \times 10^6$}} & 0.09 & 0.124 \\
\hline
$\{2,2\}$ & \textcolor{blue}{\textbf{0.974}} & $4.9 \times 10^6$ & 0.080 & 0.102 \\
\hline
$\{2,3\}$ & 0.967 & $7.4 \times 10^6$ & \textcolor{blue}{\textbf{0.073}} & \textcolor{blue}{\textbf{0.095}} \\
\hline
$\{5,3\}$ & 0.789 & \textcolor{red}{\textbf{$6 \times 10^8$}} & 0.204 & 0.256 \\
\hline
$\{5,6\}$ & \textcolor{red}{\textbf{0.788}} & $6.9 \times 10^6$ & 0.232 & 0.257 \\
\hline
$\{0,5\}$ & 0.844 & $4.1 \times 10^7$ & \textcolor{red}{\textbf{0.221}} & \textcolor{red}{\textbf{0.276}} \\
\hline
Refined from $\{3,4\}$ & 0.975 & $2.38 \times 10^4$ & 0.074 & 0.086 \\
\hline
\end{tabular}
\caption{Similar values as for \tabl{permutation5}, for the 7-mode chip. The $\chi^2$ are reported for the complete set of 490 data points, and there are 49 possible permutations. The last row represents the best value output by a 8 hr run of the numerical search.}
\label{tab:permutation7}
\end{table}

The second refinement was a numerical procedure to search for better fits, in the region close to the best matrix obtained from the standard Laing-O'Brien algorithm. Among the several techniques attempted, including genetic algorithms, gradient searches, brute force searches, the method that produced the best and fastest results was the following:

\begin{itemize}
\item[(i)] Start from $U_0$, the matrix with best $\chi^2$ from the Laing-O'Brien algorithm.
\item[(ii)] Compute the expected set of all data (single- and two-photon), $\{x_i\}$.
\item[(iii)] Replace each $x_i$ by $x^{'}_i = x_i + \delta_i$, where $\delta_i$ is sampled from a Gaussian distribution of mean 0 and variance equal to the experimental error associated with $x_i$.
\item[(iv)] From the new simulated data set $\{x^{'}_i\}$, obtain a new set of $m^2$ matrices using all permutations of the Laing-O'Brien algorithm.
\item[(v)] If any of the new reconstructed matrices, say $U_1$, has a better value of $\chi^2$ to the experimental data than $U_0$, replace $U_0$ with $U_1$ and start from (i).
\item[(vi)] If no matrix has a better $\chi^2$, discard them and restart from (i).
\end{itemize}

The intuition behind this algorithm is the following. Given that each experimental data point is expected to lie within a certain range of values due to its associated error, even if the real process was perfectly unitary (which it is not) its distributions probably would not correspond exactly to the center of each uncertainty interval, but rather some set of values close to it. Thus, the experimental error bars offer a natural measure of how close we expect the real-world unitary to be from the reconstructed one obtained by the Laing-O'Brien algorithm, and serve as a guide for a random search in the space of unitaries. 

This algorithm was run for the 5- and 7-mode chips, since in those experiments the complete data set was available. The Mathematica$^{\copyright}$ files that implement this algorithm can be found in \appdx{app2}. In \tabl{permutation5} and \tabl{permutation7} we show a comparison of how this algorithm improves on the best obtained matrix from the Laing-O'Brien algorithm. For the 5-mode chip, a 15 min run of the algorithm improved the $\chi^2$ by 1396 (11 per data point), while improving the gate fidelity and variation distances for single and two photon data, upon which the algorithm saturated. For the 7-mode chip the improvement is more impressive: an 8 hr run of the algorithm improved the $\chi^2$ by two orders of magnitude, while considerably improving the single- and two-photon variation distances, upon which the algorithm saturated. The algorithm was run several times, starting from different permutations, displaying a consistent performance.

These results suggest that the algorithm has a practical application as a refinement of the Laing-O'Brien algorithm. Considering that any experimental imperfections will only be amplified on experiments with three (and more) photons, a reliable method for estimating the fabricated unitary matrix is essential for the scalability of the proposal. In this sense, a complete brute-force search is completely unfeasible---the $m$-mode chip has typically $m^2$ free parameters, which quickly makes the parameter space too large for this approach. Thus, an efficient tomography algorithm, of which the Laing-O'Brien is an example, is an indispensable tool for experiments on progressively larger systems. However, this algorithm, while having the good feature of being immune to device losses, still suffers considerable variations from noise in experimental data stemming from non-unitarity of the device and statistical errors. As such, I believe that the refinement proposed helps to mitigate these variations and, furthermore, receives additional feedback from the experiments, in the form of error estimates that indicate the more reliable subset of data, which the algorithm should strive to approximate more closely. At present I do not have a rigorous theoretical analysis of the performance of the algorithm, nor why it seems to work much more efficiently than other well-known algorithms such as gradient search, only the empirical observation that it is so. Nonetheless, I believe they provide compelling evidence for the usefulness of the algorithm for any experiment that may be performed in the near future.

Finally, I would like to point out that the output of the numerical searches reported in \tabl{permutation5} and \tabl{permutation7} are not the same as those I will report on the next section. This is because I opted for presenting the experimental results that follow in a manner faithful to the papers they were originally published in, but in which an outdated version of this algorithm was used, which is why the results will be slightly worse.

The Mathematica$^{\copyright}$ files for these algorithms can be found in \appdx{app2}.

\section{Experimental results} \label{sec:bosonnew_c}

In this section, we reproduce the results reported in references \cite{Crespi2013b, Spagnolo2013b,Spagnolo2013c}. The first of these was published in 2013 in \textit{Nature Photonics} \cite{Crespi2013b} and consists of one of the first demonstrations of a small-scale implementation of BosonSampling, where we observed one, two and three photons interfering in a $5$-mode unitary sampled from the uniform ensemble. The second paper, published in 2013 in \textit{Physical Review Letters} \cite{Spagnolo2013b}, consists of the observation of bosonic bunching in larger interferometers. Besides observing a decrease in the fraction of bunching events as the size of the chips increases, which is compatible with the bosonic birthday paradox described in \sec{bosonreview_c_BBP}, we also observed experimentally and proved theoretically a new bunching law, relating the classical and quantum probabilities of full-bunching events (i.e.\ those where all photons exit at the same modes). Finally, the third manuscript, which is still undergoing the review process as of the writing of this thesis, focuses on the certification of BosonSampling devices \cite{Spagnolo2013c}. More specifically, we sampled two interferometers from the random phase ensemble described in \sec{bosonnew_b_ensembles}, of 7 and 9 modes respectively, and applied the certification algorithm due to Aaronson and Arkhipov to efficiently distinguish the distribution generated by the device from the uniform distribution, as discussed in \sec{bosonreview_c_certif}.

\textbf{Notation:} The notation used throughout this section differs slightly from that used in the rest of the chapter. This is to conform with the published versions, specially of the figures. However, the meaning of each notation will be clear from context.

\subsection{BosonSampling in arbitrary integrated photonic chips} \label{sec:BosonSamplingExp}

In this section we report on the experimental implementation of a small instance of the BosonSampling proposal, using up to three photons interfering in a randomly chosen, 5-mode integrated photonic chip, first reported in \cite{Crespi2013b}. The unitary describing the chip, $U_{5}^t$, was sampled from the uniform distribution (cf.\ \sec{bosonnew_b_ensembles}), and decomposed in phase shifters and beam splitters according to the method of Reck \textit{et al.}\ described in \cite{Reck1994}. The resulting circuit schematics is depicted in \sfig{Reck}{b}, and a table with the corresponding parameters $t_i$, $\alpha_i$ and $\beta_i$, as well as $U_{5}^t$, can be found in \appdx{app3}. Each parameter was independently controlled using the novel three-dimensional technique depicted in \fig{LOelements}. This was the first demonstration of such a random chip where the unitary was sampled numerically and the chip was fabricated as specified, rather than delegating the randomness to lack of fabrication control. The chip was used in single, two, and three photon experiments that made use of the sources and detection apparatuses described in \fig{setup}. The input-output transmission of the device, in other words the fraction of the overall signal that is not lost inside the device, is 30\%. 

As a first step we characterized the 5-mode chip by injecting single photons in each input port $i$ and measuring the probability $P^{1}_{exp}(i,K)$ of detecting it in output mode $K$. The probability distribution obtained experimentally is shown in \sfig{ExpData1}{a}, together with the theoretical prediction $P^1_{t}(i,K)$ of the sampled unitary $U_{5}^t$. Each output probability corresponds to the absolute value squared of one matrix element of $U_{5}^t$. To quantify the agreement between theory and experiment we calculated the variation distance between the distributions, as defined in \sec{introsimul}. We obtained $d^1_{exp,t}=0.158\pm0.003$ by averaging over the values corresponding to all the different inputs; this small distance provides a first confirmation of the proper functioning of the device. 

\begin{figure}[t]
\capstart
\centering
\includegraphics[width=1\textwidth]{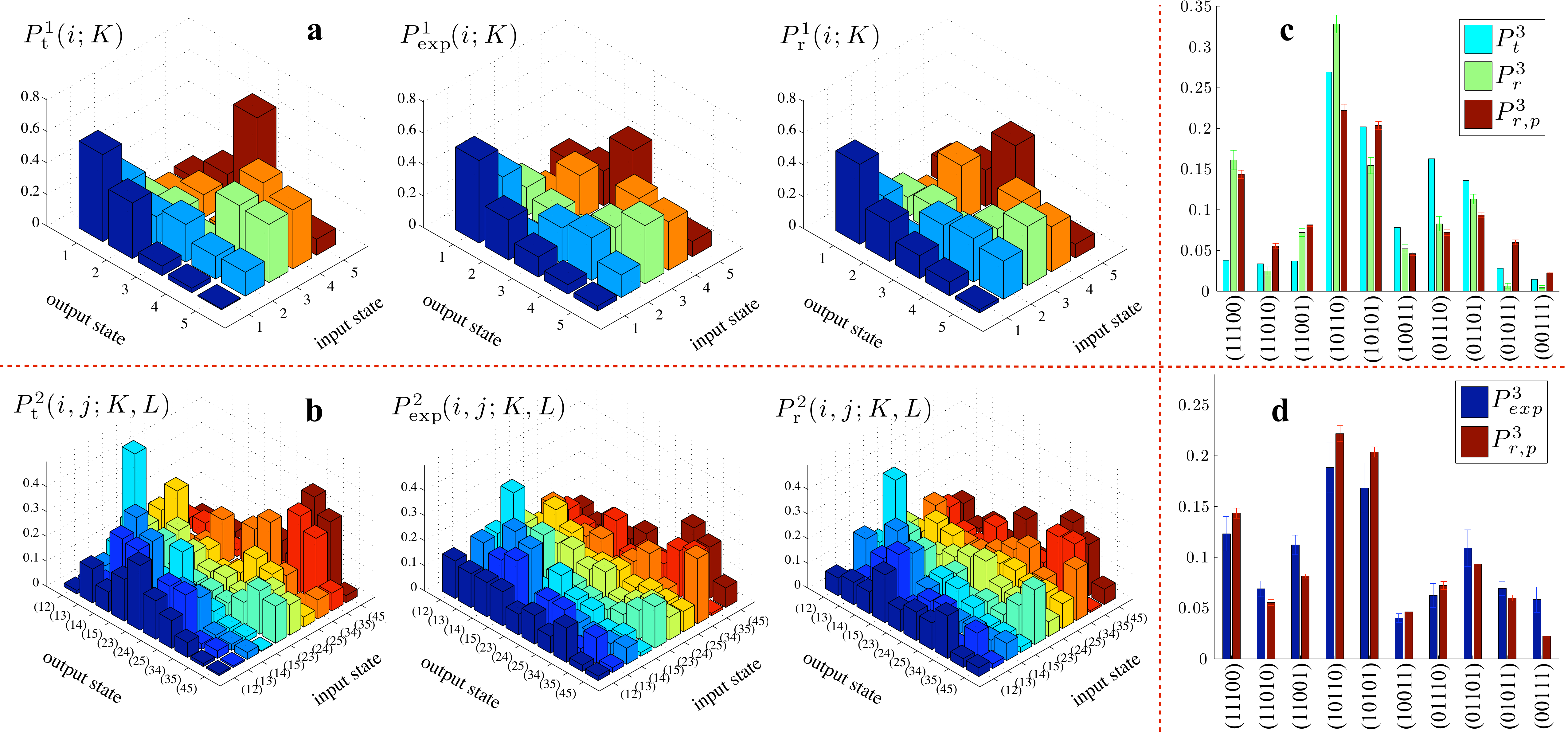}  
\caption[Experimental results for 5-mode chip.]{(a) One-photon probability distribution: theoretical distribution $P^{1}_{t}(i,K)$, experimental distribution $P^{1}_{exp}(i,K)$ and reconstructed distribution $P^{1}_{r}(i,K)$. (b) Two-photon probability distributions: theoretical distribution $P^{2}_{t}(i,j,K,L)$, experimental distribution $P^{2}_{exp}(i,j,K,L)$ and reconstructed distribution $P^{2}_{r}(i,j,K,L)$. Expected three-photon probability distribution for input state $\ket{10101}$: (c) theoretical distribution $P^{3}_{t}$, reconstructed distribution $P^{3}_{r}$ and reconstructed distribution with partial distinguishability $P^{3}_{r,p}$, and (d) experimental distribution $P^{3}_{exp}$ and reconstructed distribution with partial distinguishability $P^{3}_{r,p}$. Error bars in the experimental data are due to Poissonian statistics of the measures events, while error bars on theoretical predictions were obtained from a Monte Carlo simulation.}
\label{fig:ExpData1}
\end{figure}

A complete characterization of the implemented interferometer was performed by simultaneously injecting two single photons on all ten combinations of two different input modes $\{i,j\}$. For each input combination we measured the ten corresponding visibilities from the HOM curves, as described in \sec{measurements}, for the photons exiting in two distinct modes $\{K,L\})$. From the visibilities we were able to compute the experimental two-photon probability distributions $P^2_{exp}(i,j,K,L)$, reported in \sfig{ExpData1}{b} together with the theoretical distribution $P^2_{t}(i,j,K,L)$ expected from $U_{5}^t$. We observe a good agreement between the experimentally and theoretical probabilities, as evidenced by the variation distance $d^2_{exp,t}=0.221\pm0.013$ (averaged over all the inputs), thus confirming good control over the chip's fabrication parameters.

Having measured the two-photon visibilities, we also applied the refined version of the Laing-O'Brien algorithm described in \sec{bosonnew_b_tomography} to obtain a reconstructed unitary $U_{5}^r$. The average variation distance between the predictions of $U_{5}^r$ and our experimental data is $d^1_{exp,r}=0.065\pm0.003$ (single photon experiments) and $d^2_{exp,r}=0.103\pm0.013$ (two-photon experiments), which indicates a good characterization of the unitary implemented experimentally. We also obtained a gate fidelity between of $U_{5}^t$ and $U_{5}^r$ of $F=0.950\pm0.002$. As explained in \appdx{app3}, we performed a Monte Carlo simulation to estimate how errors in the phase shifts and transmissivities affect the overall unitary. Our simulations show the gate fidelity observed is consistent with the average error rate observed in the calibrations shown in \fig{LOelements}. This, in turn, provides evidence that the precision over each parameter is maintained even during the fabrication of larger circuits.

Finally, we have also probed the chip's behavior in the multi-photon regime, by injecting three single photons into modes 1, 3 and 5 of our interferometer (this choice was random) and measuring all probabilities of coincidence outcomes. In \sfig{ExpData1}{c} we compare three distributions: the ideal distribution $P^3_t$ obtained from $U_{5}^t$; the distribution $P^3_r$ arising from our reconstructed $U_{5}^r$ and the one $P^3_{r,p}$ taking into consideration the partial indistinguishability $p$ of the photons (cf.\ \sec{sources}). \sfig{ExpData1}{d} shows a good agreement between the distribution $P^3_{r,p}$ and our experimental results $P^3_{exp}$ as quantified by the variation distance between these two distributions $d^3_{exp,rp}=0.105\pm0.024$. This is an experimental confirmation of the permanent formula [cf.\ \eqbeg{DUquantum}] in the three-photon, five-mode regime. 

\subsection{Bosonic bunching in multimode interferometers} \label{sec:BosonicBunchingExp}

In \sec{introduction_b}, we discussed how bosons and fermions exhibit distinctly different statistical behaviors. For fermions, the required wave-function anti-symmetrization results in the Pauli exclusion principle. Bosons, on the other hand, tend to occupy the same state more often than fermions or classical particles do, which is reflected in phenomena such as Bose-Einstein condensation, and the HOM effect described in \sec{HOMeffect}. The HOM effect consists in a suppression of the probability that two identical photons incident on separate ports of a 50:50 beam splitter will exit on separate modes, and is observed as a curve (e.g.\ \fig{HOMdip} and \fig{HOMvisib}) describing the dependency of this probability on the partial distinguishability between the photons. However, as discussed in \sec{measurements}, in larger interferometers a HOM curve connecting input $\{i,j\}$ and output $\{K,L\}$ may be a dip or a peak, since there is obviously no constraint on which regime, classical or quantum, must present a larger probability of transition between two arbitrary states. Nonetheless, there are signatures of photonic bunching in systems with larger interferometers and more particles. Three examples are: (i) recent suppression laws demonstrated for $m$-photon interference in particular $m$-mode interferometers \cite{Tichy2010}; (ii) the bosonic birthday paradox\footnote{Interestingly, while the bosonic birthday paradox paper \cite{Arkhipov2012} shows that, on average, there is an enhancement in the bunching fraction in the quantum relative to the classical regime, the authors' main point is that this behavior is not as strong as one could expect, and the bunching fraction become negligible in the limit $m \rightarrow \infty$.}, describing the average behavior of bunching outcomes in uniformly random interferometers in the regime $m \gg n$ (cf.\ \sec{bosonreview_c_BBP}); and (iii) a rule showing that the transition to a full-bunching output (e.g.\ $\ket{n,0,\ldots,0}$), if nonzero, is always enhanced in the quantum relative to the classical regime, which we proved theoretically and verified experimentally in \cite{Spagnolo2013b}. The experiments reported in this section consist on a comprehensive observation of bosonic bunching, most notably effects (ii) and (iii), in interferometers of sizes ranging from $m=2$ to $m=16$, and designs as depicted in \fig{BBPchips}. 

\begin{figure}[t]
\capstart
\centering
\includegraphics[width=0.5\textwidth]{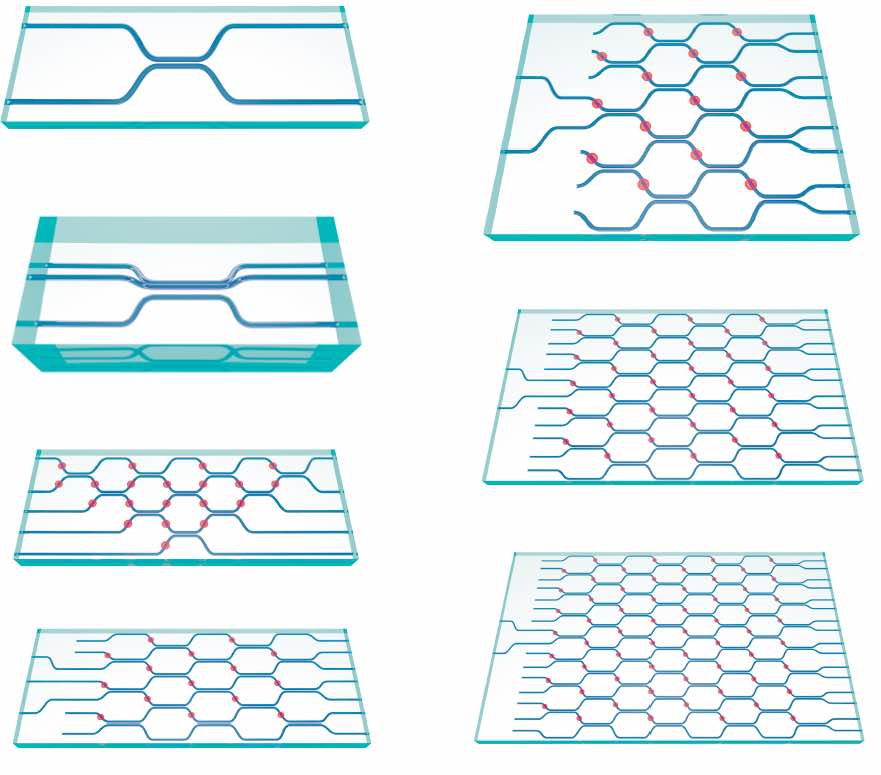}
\caption[Design schematics for chips of 2-16 modes.]{Layouts of the chips used in the experiment. Red spots represent phase shifters, and directional couplers perform as described in \sec{parametercontrol}. For more details on the chips' design, see the Supplementary Material of \cite{Spagnolo2013b}.}
\label{fig:BBPchips}
\end{figure}

Let us now state and prove the aforementioned rule governing bosonic full-bunching probabilities\footnote{During the writing of this thesis, it was brought to our attention that this result was also reported independently in the PhD thesis of Malte Tichy \cite{Tichy2011}.} of $n$ photon in arbitrary $m$-mode interferometers. 

\begin{theorem} 
Let $t_k$ denote the occupation number  of input mode $k$. Let us denote the probabilities that all $n$ bosons leave the interferometer in mode $j$ by $q_c(j)$ (distinguishable bosons) and  $q_q(j)$ (indistinguishable bosons). Then the ratio of full-bunching probabilities $r_{fb}=q_q(j)/q_c(j)=n!/\prod_k{t_k!}$, independently of $U,m$, and $j$.
\end{theorem}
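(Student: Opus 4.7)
My plan is to directly apply Lemmas \ref{lem:bosonperm} and \ref{lem:fermdet}... wait, actually \lem{bosonperm} for the quantum case and \eq{MUclassical} for the classical case, and exploit the very special structure of the sub-matrix $U_{S,T}$ that appears when the output is a full-bunching state.

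First I would write down the sub-matrix $U_{S,T}$ explicitly for the full-bunching output $\ket{S} = \ket{0,\ldots,0,n,0,\ldots,0}$ (with the $n$ in position $j$) and input $\ket{T} = \ket{t_1,\ldots,t_m}$. Since $s_j = n$ and $s_i = 0$ for $i \neq j$, the rule ``take $s_i$ copies of row $i$ and $t_k$ copies of column $k$'' produces an $n \times n$ matrix in which every single row is identical to row $j$ of $U$, with entry $U_{j,k}$ repeated $t_k$ consecutive times. The key observation is then that for any matrix $A$ whose rows are all equal to a common vector $(a_1,\ldots,a_n)$, one has $\mathrm{perm}(A) = n!\,\prod_\ell a_\ell$, since every term in the permanent sum is the same product $\prod_\ell a_\ell$, and there are $n!$ permutations.

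Applying this, the quantum amplitude's permanent becomes $\mathrm{perm}(U_{S,T}) = n!\,\prod_k U_{j,k}^{t_k}$, so \lem{bosonperm} gives
\begin{equation*}
q_q(j) = \frac{|\mathrm{perm}(U_{S,T})|^2}{n!\,\prod_k t_k!} = \frac{n!}{\prod_k t_k!}\,\prod_k |U_{j,k}|^{2 t_k}.
\end{equation*}
For the classical case, the same row-repetition structure holds for the entrywise squared matrix $|U_{S,T}|^2$, giving $\mathrm{perm}(|U_{S,T}|^2) = n!\,\prod_k |U_{j,k}|^{2 t_k}$, and \eq{MUclassical} yields
\begin{equation*}
q_c(j) = \frac{\mathrm{perm}(|U_{S,T}|^2)}{n!} = \prod_k |U_{j,k}|^{2 t_k},
\end{equation*}
noting that the only nonvanishing output-occupation factorial is $s_j! = n!$.

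Taking the ratio, the $\prod_k |U_{j,k}|^{2 t_k}$ factors cancel, leaving $r_{fb} = q_q(j)/q_c(j) = n!/\prod_k t_k!$, independently of $U$, $m$ and $j$, as claimed. I do not anticipate any real obstacle in this proof: the whole argument rests on the elementary identity for the permanent of a rank-one-pattern matrix (all rows equal), and the rest is bookkeeping of the normalization factorials in \lem{bosonperm} and \eq{MUclassical}. The conceptual point worth emphasizing in the write-up is that the bosonic enhancement for full-bunching is a purely combinatorial factor counting the number of ways the $n$ input photons can be assigned to the $n$ row-slots of $U_{S,T}$, and is universal in $U$.
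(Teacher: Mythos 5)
Your proof is correct and follows essentially the same route as the paper's: apply \lem{bosonperm} and \eq{MUclassical} to the full-bunching output, note that $U_{S,T}$ has all rows equal so its permanent is $n!$ times the product of one row's entries, and take the ratio so the $\prod_k |U_{j,k}|^{2t_k}$ factors cancel. The only difference is notational (you track multiplicities via exponents $t_k$, the paper via the repeated-index tuple $r$), with identical bookkeeping of the normalization factorials.
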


\begin{proof} Recall from \lem{bosonperm} in \sec{twomode} that, for $n$ photons interfering in an $m$-mode interferometer described by unitary $U$, the probability amplitude associated with input $\ket{T}=\ket{t_1 t_2 \ldots t_m}$ and output $\ket{S} = \ket{s_1 s_2 \ldots s_m}$ is given by
\begin{equation} \label{permanent}
\bra{S} U_F\ket{T} = \frac{\textrm{perm}(U_{S,T})}{\sqrt{t_1! \ldots t_m! s_1! \ldots s_m!}},
\end{equation}
where $U_{S,T}$ is the matrix obtained by repeating $t_i$ times the $i^{th}$ column of $U$, and $s_j$ times its $j^{th}$ row. Recall from \sec{bosonreview_c_simulation} that, for distinguishable bosons, the analogous equation holds:
\begin{equation}\label{dist}
p_{S,T}=\frac{\textrm{perm}(|U_{S,T}|^2)}{\prod_i{s_i!}},
\end{equation}
where $|U_{S,T}|^2$ is the matrix obtained by taking the absolute value squared of each corresponding element of $U_{S,T}$.

Let us now introduce an alternative, convenient way of representing the input occupation numbers. Define a $n$-tuple of $m$ integers $r_i$ so that the first $t_1$ integers are 1, followed by a sequence of $t_2$ 2's, and so on until we have $t_m$ $m$'s. As an example, input occupation numbers $t_1=2, t_2=1, t_3=0, t_4=3$ would give $r=(1,1,2,4,4,4)$. 
Using Eq. (\ref{permanent}) we can evaluate the probability $q_q(j)$ that the $n$ indistinguishable bosons will all exit in mode $j$:
\begin{equation}
q_q(j)=\frac{|\textrm{per}(A)|^2}{n!\prod_k{t_k!}},
\end{equation}
where $A$ is a $n \times n$ matrix with elements $A_{i,k}=U_{j,r_k}$. Since all rows of $A$ are equal, $\textrm{perm}(A)$ is a sum of $n!$ identical terms, each equal to $\prod_k U_{j,r_k}$. Hence
\begin{equation}
q_q(j)=\frac{|n! \prod_k U_{j,r_k}|^2}{n!\prod_k{t_k!}} = n! \frac{|\prod_k U_{j,r_k}|^2}{\prod_k{t_k!}}.
\end{equation}

Using Eq. (\ref{dist}), we can calculate the probability $q_c(j)$ that $n$ distinguishable bosons will leave the interferometer in mode $j$: $q_c(j)=\textrm{perm}(B)/n!$, where $B$ has elements $B_{i,k}=|A_{i,k}|^2=|U_{j,r_k}|^2$. Hence
\begin{equation}
q_c(j)=n! \frac{\prod_k |U_{j,r_k}|^2}{n!}= \prod_k |U_{j,r_k}|^2.
\end{equation}
Our new bosonic full-bunching rule establishes the value of the quantum/classical full-bunching ratio, which we can now calculate to be
\begin{equation}
r_{fb}=\frac{q_q(j)}{q_c(j)}=\frac{n!}{\prod_k{t_k!}}.
\end{equation}
\end{proof}

This bosonic full-bunching rule generalizes the HOM effect into a universal law, now applicable to any interferometer and any number of photons $n$. Despite becoming exponentially rare as $n$ increases, as discussed in \sec{bosonreview_c}, full-bunching events are enhanced by a factor as high as $n!$ when at most one boson is injected into each input mode, as in our experiments.

The apparatus for this experiment consists mostly on that described throughout \sec{bosonnew_b}, including the sources described in \sec{sources} and the single, two and three photon measurements described in \sec{measurements}. Randomness was purposefully incorporated in the designs of the chips in various ways, including sampling from the Haar and the random phase ensembles of \sec{bosonnew_b_ensembles}. In particular, the chip with $m=5$ was the same as used in the experimented reported in \sec{BosonSamplingExp}, taken from the Haar ensemble. The $7$-mode chip was taken from the random phase ensemble. The remaining chips were taken from experiments performed previously by the same quantum optics group which I was not a part of. For more details see the Supplementary Material of \cite{Spagnolo2013b}, together with the provided references for the chip with $m=2$ \cite{Sansoni2010}, the two chips with $m=3$ \cite{Spagnolo2013a}, one chip with $m=8$ \cite{Sansoni2012}, and the remaining chips with $m=8,12,16$ \cite{Crespi2013a}.

\begin{figure}[t]
\capstart
\centering
\includegraphics[width=1.\textwidth]{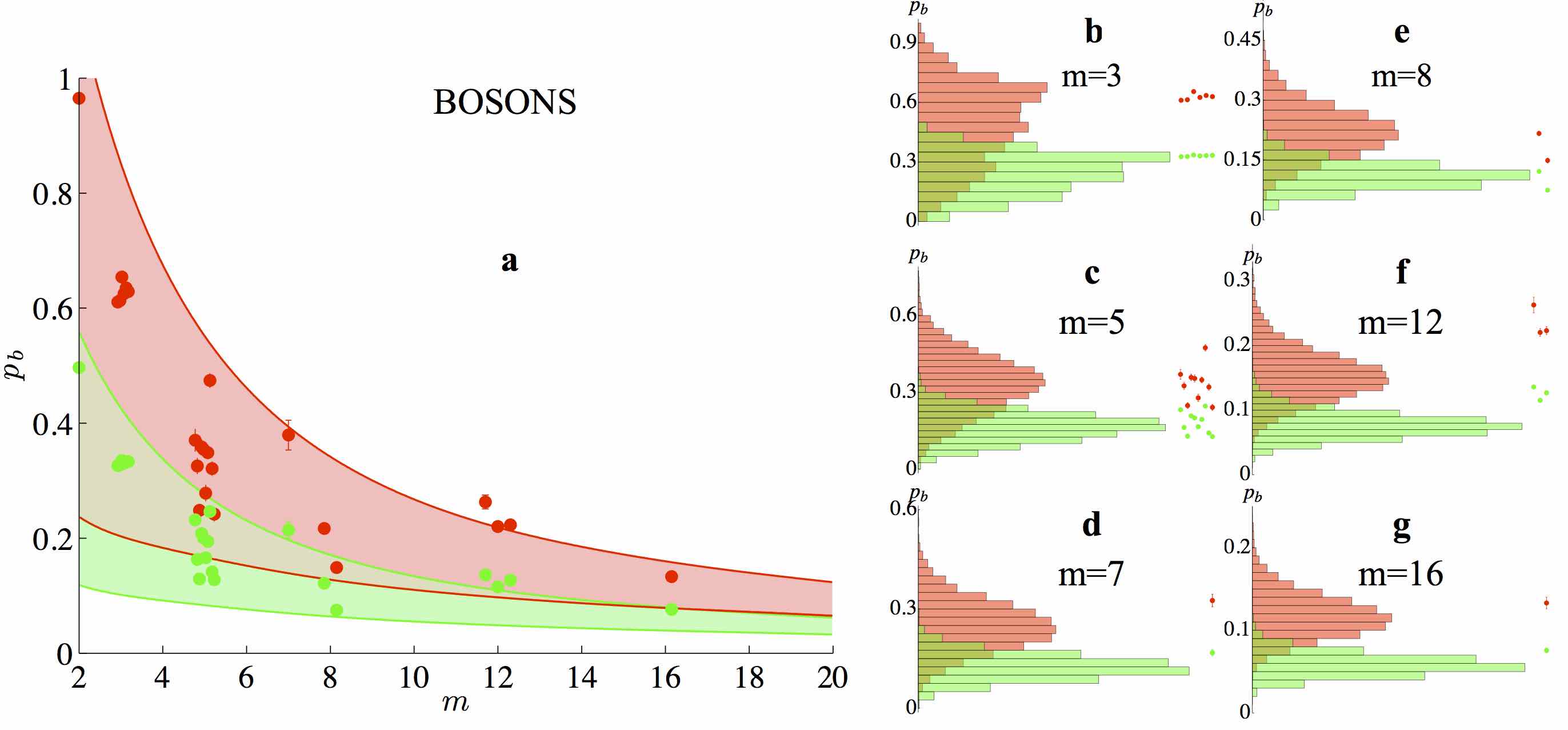}
\caption[Two-photon bunching data.] {(a) Bunching fraction for two indistinguishable photons ($p_{b}^{(q)}$, red points) and two distinguishable photons ($p_{b}^{(c)}$, green points). Experiments were performed on different unitaries ($m=3$, $m=8$, $m=12$) or different input states ($m=3$, $m=5$). Shaded areas correspond to of 1.5 standard deviation around the mean, obtained from sampling over $10000$ uniformly random unitaries. (b)-(g), Experimental results (points) together with histograms obtained from the numerical simulations, for several values of $m$. Error bars in the experimental data are due to the Poissonian statistics of the measured events, and where not visible are smaller than the symbol.}
\label{fig:twophotons}
\end{figure}

A first set of experiments aimed at measuring the bunching probabilities $p_b^{(q)}$ and  $p_b^{(c)}$ respectively of quantum (i.e.\ indistinguishable) and classical (i.e.\ distinguishable) photons. We note that these probabilities depend both on the interferometer's design and the input state used. A bunching event involves, by definition, the overlap of at least two photons in a single output mode. The classical bunching probability $p_b^{(c)}$ is obtained from single-photon experiments that characterize the transition probabilities between each input/output combination. An estimate of $p_b^{(q)}$ was obtained from the procedure described at the end of \sec{measurements}, where the ratio $t:=(1-p_b^{(q)})/(1-p_b^{(c)})$ is estimated by running the experiment on the classical and quantum regimes for equal periods of time, without the need for distinguishing different multi-photon states at each output.

We summarize the experimental results for a number of different photonic chips in \fig{twophotons} (two-photon experiments) and \fig{threephotons} (three-photon experiments). The results are in good agreement with theory, taking into account the partial indistinguishability of the photon source. For all the employed interferometers we find that indistinguishable photons display a higher coincidence rate than distinguishable photons do ($p_b^{(q)}>p_b^{(c)}$); this is known to be true for averages \cite{Arkhipov2012}. Furthermore, $p_b^{(q)}$ falls as $m$ increases, in the manner predicted by the bosonic birthday paradox. It is important to point out that the distinguishable-photon distribution plotted in \fig{twophotons} and \fig{threephotons} does not correspond to the classical birthday paradox described in \sec{bosonreview_c_BBP}, more specifically \eq{classicalBP}. This is because the usual classical birthday paradox consists of sampling $n$ people independently from the uniform distribution, whereas here all distinguishable photons in the experiment are subject to the \emph{same} random unitary. The correlations between matrix elements due to unitarity actually make a distinguishable-photon experiment have typically a lower bunching fraction than that described by the classical birthday paradox.

\begin{figure}[t]
\capstart
\centering
\includegraphics[width=0.6\textwidth]{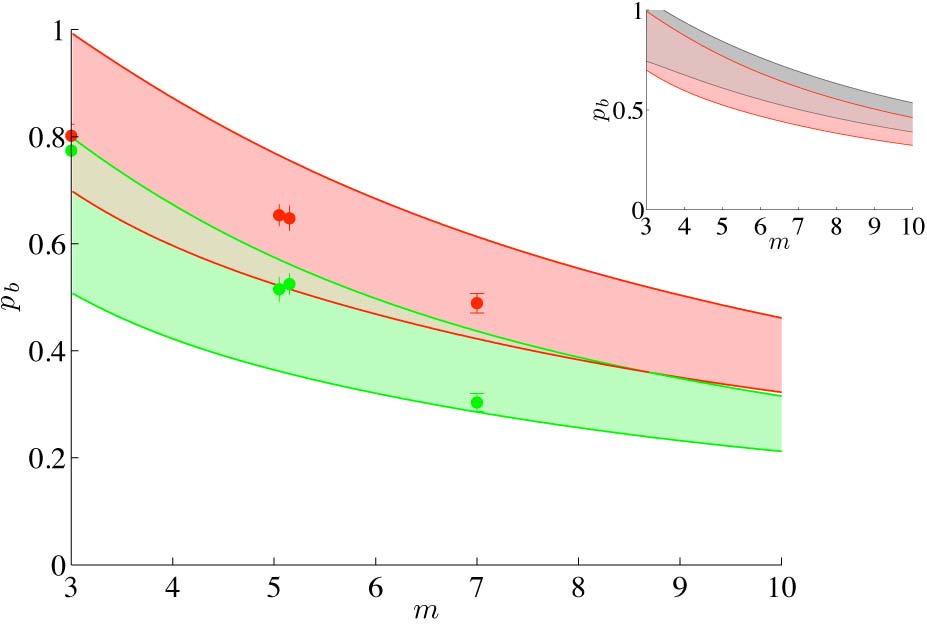}
\caption[Three-photon photonic bunching data.] {Experimental results for the three-photon photonic bunching in the quantum ($p_{b}^{(q)}$, red points) and classical ($p_{b}^{(c)}$, green points) regimes. Shaded areas correspond to 1.5 standard deviation around the mean, obtained from sampling over $10000$ uniformly random unitaries. Red area: simulation taking into account partial photon indistinguishability. Error bars in the experimental data are due to the Poissonian statistics of the measured events, and where not visible are smaller than the symbol. Inset: numerical simulation of the effect of partial photon distinguishability on the bunching probability $p_{b}$, where the gray area represents perfectly indistinguishable photons.}
\label{fig:threephotons}
\end{figure}

\begin{figure}[t]
\capstart
\centering
\includegraphics[width=0.6\textwidth]{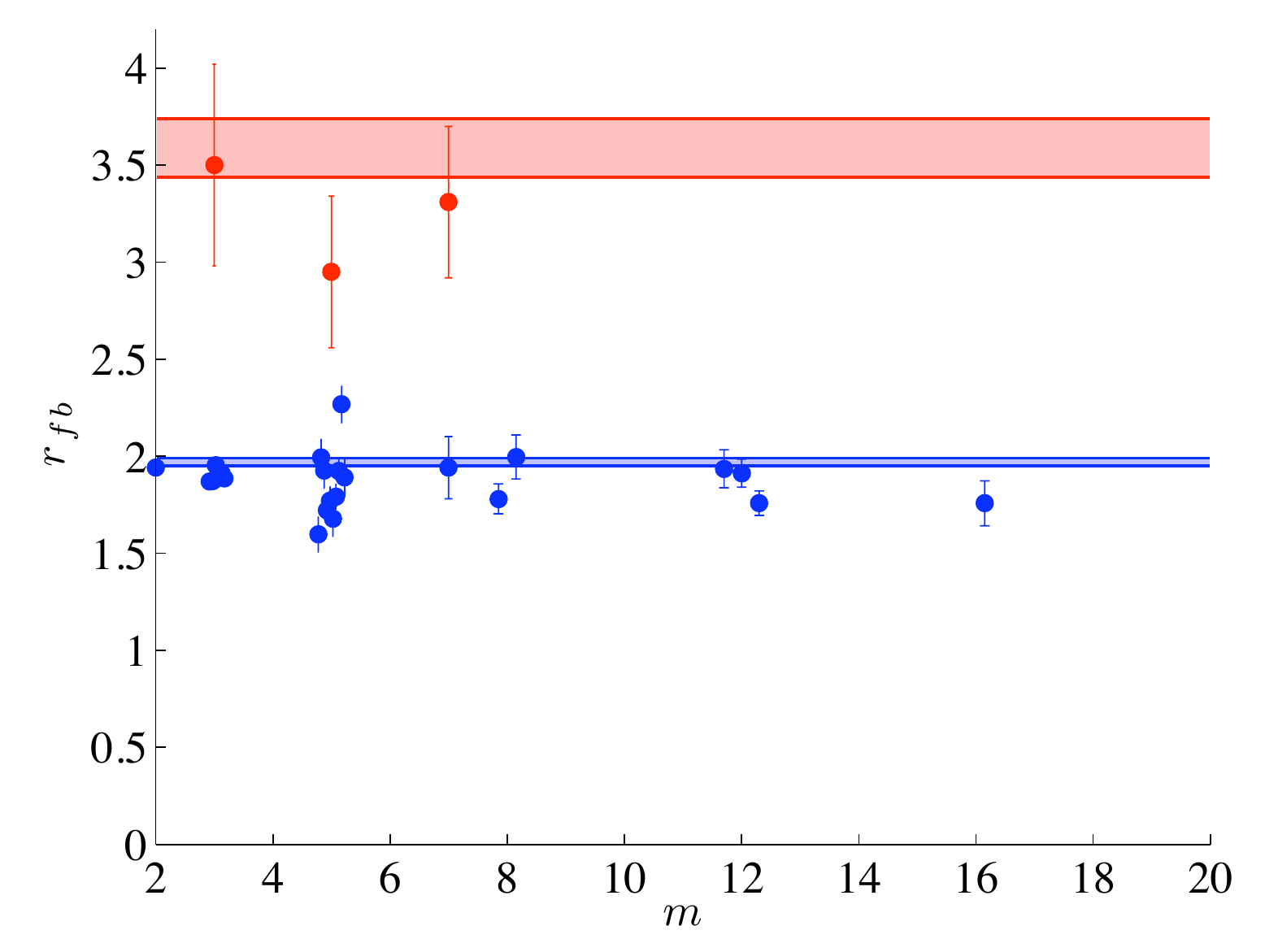}
\caption[Ratio between quantum and classical full-bunching probabilities]{Here we report $r_{fb}$ for two-photon (blue) and three-photon (red) experiments on a number of photonic chips. Shaded regions: expected values for two-photon (blue) and three-photon (red) taking into account partial photon distinguishability. Error bars in the experimental data are due to the Poissonian statistics of the measured events, and where not visible are smaller than the symbol.}
\label{fig:nfactorial}
\end{figure}

We now turn to experiments that test our bosonic full-bunching rule. We estimated the quantum to classical full-bunching probability ratio $r_{fb}$ by introducing delays to change the distinguishability regime, and performing photon counting measurements in selected output ports, using fiber beam-splitters and multiple single-photon detectors. In \fig{nfactorial} (blue data) we plot the full-bunching ratio for all two-photon experiments referred to in \fig{twophotons}, and find good agreement with the predicted quantum enhancement factor of $2!=2$. Note that in two-photon experiments every bunching event is also a full-bunching event, which means that when $n=2$ the ratio $r_{fb}=r_b=2$, independently of the number of modes $m$. We have also measured three-photon, full-bunching probabilities in random interferometers with number of modes $m=3,5,7$. Perfectly indistinguishable photons would result in the predicted $3!=6$-fold quantum enhancement for full-bunching probabilities. The partial indistinguishability  $\alpha=0.63 \pm 0.03$ of our three injected photons reduces this quantum enhancement to a factor $r_{fb}=\alpha^2 \, 3! + (1-\alpha^2) \, (3-1)!=3.59 \pm 0.15$.  The results can be seen in \fig{nfactorial} (red data), showing good agreement with the predicted value.

In conclusion, these experiments characterize the bunching behavior of up to three photons evolving in a variety of integrated multimode circuits. The results are in agreement with the predictions of the bosonic birthday paradox (c.f.\ \sec{bosonreview_c_BBP}). We have also proved a new rule that sharply discriminates quantum and classical behavior, by focusing on events in which all photons exit the interferometer bunched in a single mode, and have obtained experimental confirmation of this new full-bunching law. 

\subsection{Experimental validation of BosonSampling} \label{sec:validation}

In this section, we report an experiment focused on validation of BosonSampling devices. Recall, from \sec{bosonreview_c_certif}, that recent criticism \cite{Gogolin2013} of the BosonSampling model showed that no symmetric algorithm can efficiently distinguish a BosonSampling device from one that samples from the trivial uniform distribution over all outcomes. This prompted a response from the authors of the original BosonSampling paper, arguing that symmetric algorithms are overly restrictive and proving that an efficient non-symmetric algorithm exists to distinguish between these two hypotheses \cite{Aaronson2013b}. Here we report new BosonSampling experiments on photonic chips of 5, 7, and 9 modes, and analyze the data using the algorithm of \cite{Aaronson2013b}, which we described in \sec{bosonreview_c_certif}. We show that the test successfully validates small experimental data samples against the hypothesis that they are uniformly distributed. We also show how to discriminate data arising either from indistinguishable or distinguishable photons, albeit in a non-scalable manner. 

The experiments were performed using the same techniques described throughout this chapter. The 5-mode chip is the same used for the BosonSampling experiment described in \sec{BosonSamplingExp}, while the 7- and 9-mode chips were drawn from the random phases ensemble described in \sec{bosonnew_b_ensembles}. Their parameter specifications are reported in \appdx{app3}. The 7-mode chip was also reconstructed using the algorithm described in \sec{bosonnew_b_tomography}, and the result of this reconstruction is also reported in \appdx{app3}.

Let us now discuss how a certifier can validate small sets of BosonSampling data generated by an agent we call the BosonSampler (BS), against the hypothesis that they were generated by Uniform Sampler (US), an agent that samples from the uniform distribution. The verifier succeeds by using the efficiently-computable estimator $R$ described in \sec{bosonreview_c_certif} on small experimental data sets. $R$ is correlated to the outcome probabilities just enough to reflect some of the structure of the BosonSampling distribution, allowing us to determine with great confidence that it is not simple gibberish produced by US (but not correlated enough to approximate the values of the probabilities themselves). We applied this test to multiple, random experimental data sets of varying sizes. This enabled us to gauge the trade-off between set size and success rate, which has been theoretically studied in the asymptotic limit \cite{Aaronson2013b}. The results are shown in \fig{analysis}. For experiments with the $5$-, $7$-, and $9$-mode chips, the verifier reaches a $95\%$ average success rate with very modest set sizes of just $\sim 100$ events. This establishes experimentally the usefulness of the R estimator of \cite{Aaronson2013b} for the analysis of small-scale experiments.

\begin{figure}[p]
\capstart
\centering
\includegraphics[width=0.9\textwidth]{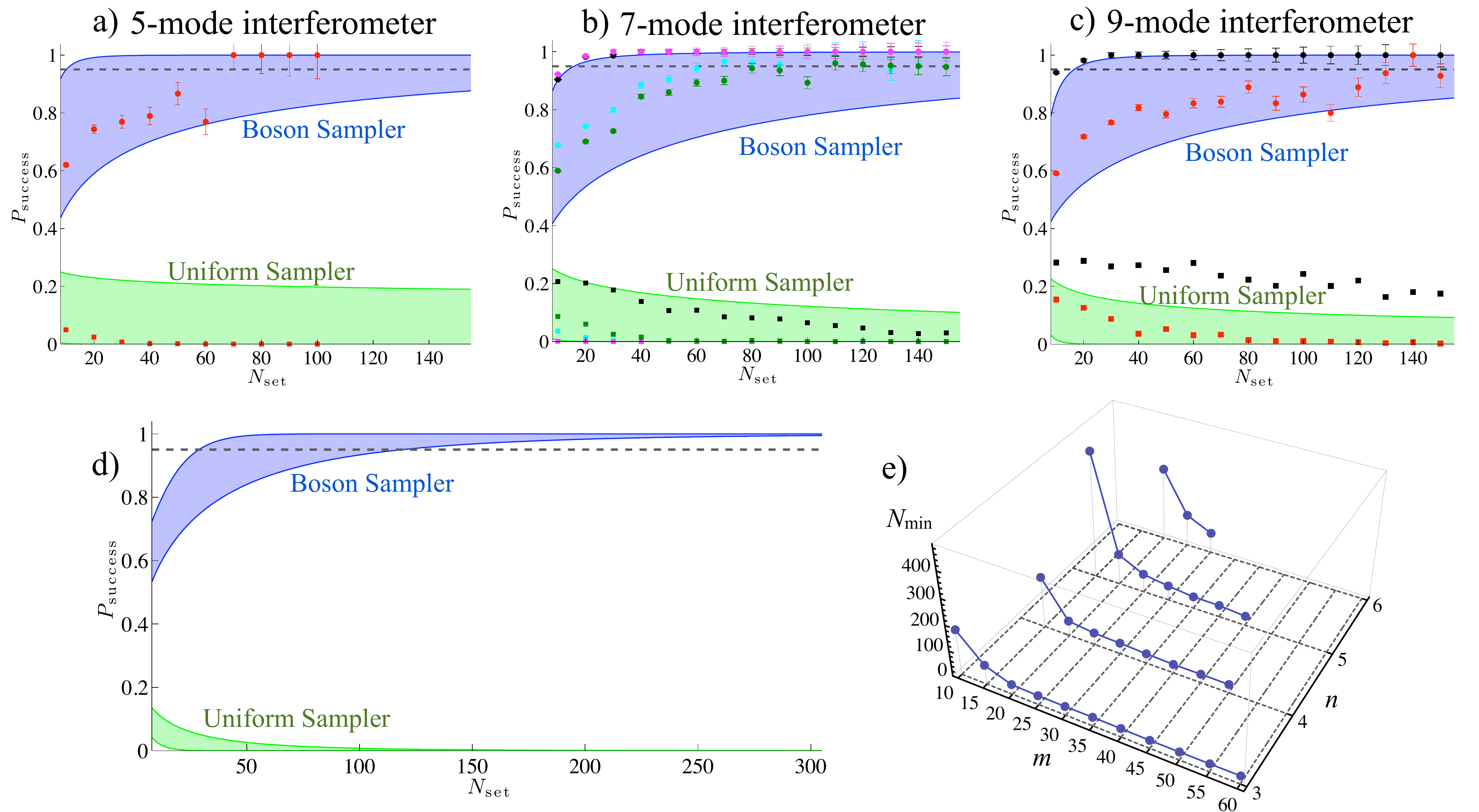}
\caption[Experimental validation of BosonSampling.]{Efficient validation test using experimental data sets of varying sizes. We show the success rate as a function of sample set size $N_{\mathrm{set}}$, in experiments using (a) one Haar-random 5-mode interferometer (red circles: input state $\ket{01110}$); (b) a random 7-mode interferometer with input states: $\ket{0011100}$ (green), $\ket{0001110}$ (cyan), $\ket{0101010}$ (black), and $\ket{1110000}$ (magenta); (c) a random 9-mode interferometer with input states: $\ket{000111000}$ (red) and $\ket{001100010}$ (black). Grey dashed line: level for $95\%$ success probability. Squares: numerical simulations, averaged over 1000 data sets, of the validation test using data generated by US. In all plots, blue shaded regions correspond to theoretical prediction for validation of BS, shown as $1.5$ standard deviation over 1000 Haar-random unitaries. Simulations exclude cases where success rate does not reach $95\%$ even with $N_{\mathrm{set}}=5000$. The number of cases with the (asymptotically proven) correct behavior was $434$ ($m=5$), 573 ($m=7$), and 822 ($m=9$). Green shaded regions correspond to theoretical prediction for the validation of US. (d) Simulated performance for BosonSampling experiments with $n=5$ photons and $m=25$, averaged over in 100 Haar-random interferometers. (e) Minimum data set size for both $>95\%$ success probability using BS data and $<5\%$ using US data, as a function of $n$ and $m$ obtained through numerical simulation. For each point, the simulation is averaged over 50 or 100 Haar-uniform unitaries.}
\label{fig:analysis}
\end{figure}

To show that the test will also work in as-yet unperformed, larger-scale experiments, in \sfig{analysis}{d} we simulate the test's success rate for BosonSampling experiments with $n=5$ and $m=20$. Additionally, in \sfig{analysis}{e} we numerically determine the minimum data set size $N_{\mathrm{min}}$ for which the test based on the $R$ estimator discriminates BosonSampling data from the uniform distribution (and vice-versa) with a success rate $>95\%$. Not only is $N_{\mathrm{min}}$ small for all experiments we simulated, it actually decreases as we increase $m$. Despite proving successful for all the interferometers we implemented experimentally, our numerical simulations revealed that the test fails for some interferometers if the ratio $m/n$ is too low.

\begin{figure}[p]
\capstart
\centering
\includegraphics[width=0.9\textwidth]{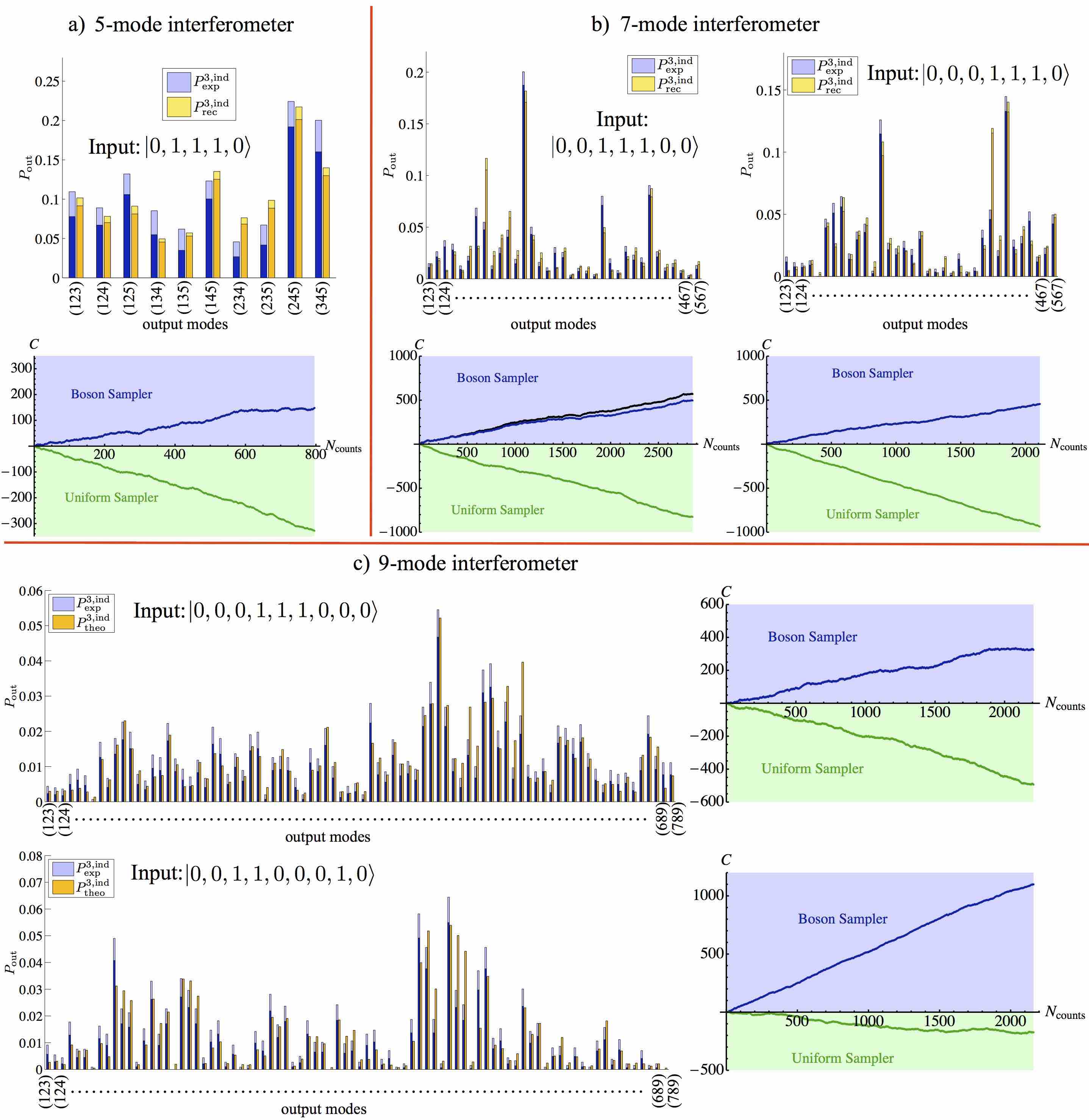}
\caption[Full validation of the BosonSampling experiments.]{Experimentally measured (blue) and theoretical (yellow) probabilities for (a) Haar-random 5-mode chip with input state $\ket{01110}$; (b) random 7-mode chip with input states $\ket{0011100}$ and $\ket{0001110}$; (c) random 9-mode chip with input states $\ket{000111000}$ and $\ket{001100010}$. Lighter regions of the blue bars correspond to experimental error due to Poissonian statistics. Lighter regions of the yellow bars correspond to error in the reconstruction process, retrieved by Monte Carlo simulation. Bottom figures of panels (a) and (b), right figures of panel (c): application of the $R$-estimator test to the full set of experimental data. $C$ is a counting variable that is increased by 1 for each event assigned to BS, and decreased by 1 for each event assigned to US. Blue points: test applied on the experimental data by exploiting the ideal unitaries. Black points: test applied on the experimental data by exploiting the reconstructed unitary (for the $m=5,7$ chips only). Green points: test applied on simulated data generated by US. For states $\ket{01110}$ ($m=5$) and $\ket{0001110}$ ($m=7$) blue and black points present a large overlap and superimpose in the figures.}
\label{fig:histogram}
\end{figure}

In the probed regime with $n=3$ photons and interferometers with up to $m=9$ modes it is possible to perform a full validation of the BosonSampling experiments by reconstructing all probabilities associated with no-collision events. This requires recording experimental data sets of a larger size; for the $m=7$ chip, for example, we recorded $\sim 2100$ events. The experimentally reconstructed probabilities are then compared with the theoretical prediction. For the chips with $m=5,7$, we compared the experimental data with the theoretical obtained from the reconstructed corresponding unitaries, while for $m=9$ we used the ideal, theoretical unitary for comparison. The results are shown in \fig{histogram}, and the good agreement between the experiments and the predictions is quantified by the variation distance $d=1/2 \sum_{k} \vert p_k - q_k \vert$, which reaches values $d_{\mathrm{exp,r}}^{(2,3,4)}= 0.104 \pm 0.022$ ($m=5$), $d_{\mathrm{exp,r}}^{(3,4,5)}=0.168 \pm 0.016$ and $d_{\mathrm{exp,r}}^{(4,5,6)}=0.133 \pm 0.017$ ($m=7$), $d_{\mathrm{exp,t}}^{(4,5,6)}=0.113 \pm 0.017$  and $d_{\mathrm{exp,t}}^{(3,4,8)}=0.167 \pm 0.020$ ($m=9$). Furthermore, we have applied the $R$-estimator test to the full data set. We remark that the high fidelity of our devices relative to the designed ones allows us to apply this test successfully by using either the reconstructed or the ideal unitaries.

In addition, for small-scale experiments we can perform simple statistical tests to validate BosonSampling data against probability distributions which are more natural in experimental settings than the uniform distribution, such as that produced by the corresponding BosonSampling experiments with distinguishable photons. These tests are inefficient, in the sense that they require computing the probabilities, and thus the $\#P$-hard permanents which are the cornerstone of the original BosonSampling result. Nonetheless, at least until theoretical advances are made concerning efficient validation of BosonSampling against these alternative distributions, these inefficient statistical tests will be the best way to validate the physical device and will remain feasible for experiments involving up to $30-40$ photons. Since, as we will see, these statistical tests only require computing a small number of permanents (more specifically, one for each outcome that is observed in the experiment), they presumably will be considerably more efficient than performing a classical simulation as described in \sec{bosonreview_c_simulation}.

\begin{figure}[p]
\capstart
\centering	
\includegraphics[width=0.9\textwidth]{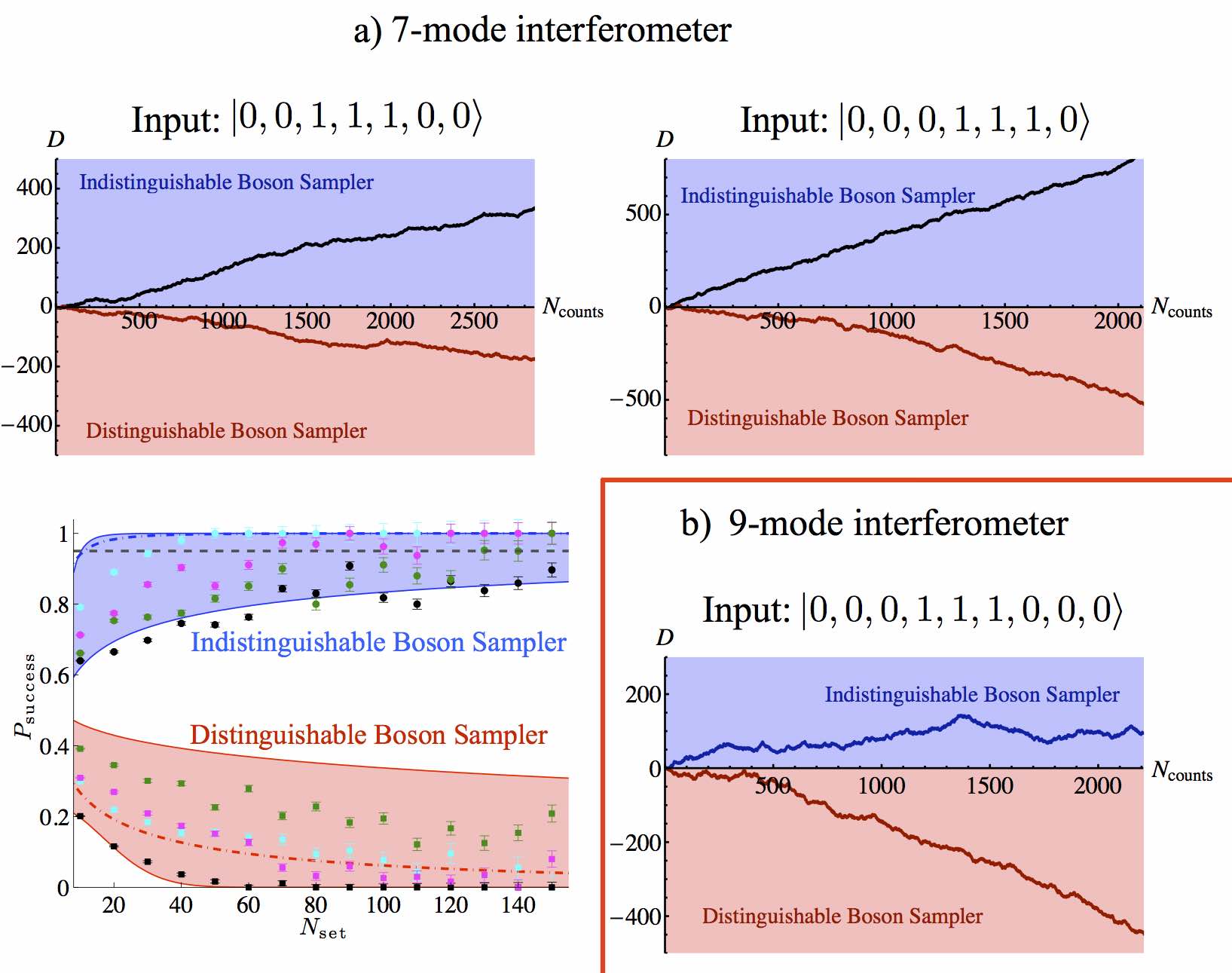}
\caption[Discrimination between alternative distributions]{(a) Experimental results of the discrimination between BosonSampling distributions in the classical and quantum regimes for the 7-mode chip. The protocol is applied by using the reconstructed unitary. Top figures: evaluation of the $D$ parameter for two different input states. Black data: indistinguishable photons. Red data: distinguishable photons. Bottom figure: success probability $P_{\mathrm{success}}$ of the discrimination protocol as a function of the data set size $N_{\mathrm{set}}$. Circles correspond to input states $\ket{0011100}$ (green), $\ket{0101010}$ (blue), $\ket{0001110}$ (cyan), $\ket{1110000}$ (magenta). Squares: corresponding success probability for ``false positive'' events with distinguishable photons. Blue shaded region: numerical simulation of the success probability of discrimination test for indistinguishable photons, taking into account the partial photon distinguishability. Blue dash-dotted line: average behaviour for perfectly indistinguishable photons. Red shaded region: numerical simulation for distinguishable photons. Red dash-dotted line: average behaviour without taking into account partial photon-distinguishability. (b) Experimental results of the discrimination between BosonSampling distributions in the classical and quantum regimes for the 9-mode chip with input state $\ket{000111000}$. The protocol is applied by using the probability distributions obtained from the ideal unitary. Blue data: indistinguishable photons. Red data: distinguishable photons.}
\label{fig:indvsdis}
\end{figure}

In our experiment, the statistical test used was an adaptation of a standard likelihood ratio test \cite{LivroCover}, with added thresholds to better take into consideration experimental imperfections. More specifically, let $p^{\mathrm{ind}}_{i}$ and $q^{\mathrm{dis}}_{i}$ be the probabilities associated with indistinguishable and distinguishable photons for measured outcome $i$, and let $D$ be a discrimination parameter, initialized to the value $D=0$. For each experimental outcome, we calculate the ratio of the expected probabilities for indistinguishable and distinguishable photons. If the ratio is close to one, up to to a threshold  $k_{1} < p^{\mathrm{ind}}_{i}/q^{\mathrm{dis}}_{i} < 1/k_{1}$, the event is considered to be inconclusive and $D$ is left unchanged. These inconclusive events, however, are still counted as a resource and do contribute to the effective number of events required to discriminate the two distributions. If $1/k_{1} \leq p^{\mathrm{ind}}_{i}/q^{\mathrm{dis}}_{i} < k_{2}$, the event is assigned to the Indistinguishable BosonSampler by adding $+1$ to $D$. If the ratio between the two probabilities is high, $p^{\mathrm{ind}}_{i}/q^{\mathrm{dis}}_{i} \geq k_{2}$, the event is assigned to the Indistinguishable BosonSampler by adding $+2$ to $D$, thus reflecting the higher level of confidence in this case. Conversely, if $1/k_{2} < p^{\mathrm{ind}}_{i}/q^{\mathrm{dis}}_{i} \leq k_{1}$ and $p^{\mathrm{ind}}_{i}/q^{\mathrm{dis}}_{i} \leq 1/k_{2}$ the event is assigned to the Distinguishable BosonSampler by adding $-1$ and $-2$ to $D$ respectively. Finally, after $N$ experimental outcomes, if $D>0$ the whole data set is assigned to the Indistinguishable BosonSampler, and conversely if $D<0$. In our analysis we set $k_{1} = 0.9$ and $k_{2} = 1.5$.

In conclusion, our experiments have shown how to leverage available information about the BosonSampling experiment to distinguish experimental data from the uniform distribution, using the scalable test proposed by Aaronson and Arkhipov \cite{Aaronson2013b}. Our analysis shows that this test works even for small instances of BosonSampling experiments, and provides experimental support for the recent theoretical refutation \cite{Aaronson2013b} of a recent criticism on BosonSampling experiments \cite{Gogolin2013}. We have also certified the experimental data using a test that distinguishes them from a similar experiment done with distinguishable photons. An independent paper \cite{Carolan2013} with some results similar to ours was reported online shortly after ours.

\section{Discussion and concluding remarks} \label{sec:bosonnew_d}

In this chapter, we presented both theoretical and experimental results on BosonSampling. On the theoretical side we showed that exact BosonSampling is hard, under the standard complexity assumptions, even if the optical circuits consists only of a constant number of beam splitter layers (\sec{bosonnew_a}). We also proved a new rule governing bosonic bunching in multi-photon multimode experiments, stating that there is always an enhancement for the probability of full-bunching outcomes in the quantum relative to classical regime (\sec{BosonicBunchingExp}). Numerically, we performed simulations to better characterize the random unitary ensemble described by layers of 50:50 beam splitters alternated with uniformly random phase shifters, and provided evidence that this ensemble displays a sufficiently rich behavior to make it a physically-motivated alternative to the Haar ensemble (\sec{bosonnew_b_ensembles}). Finally, we showed how the Laing-O'Brien algorithm for reconstruction of the unitaries from single- and two-photon data can be further refined by the use of numerical searches, consistently providing better agreement with experimental data (\sec{bosonnew_b_tomography}).

On the experimental side, we reported work performed in collaboration with the Quantum Optics groups of Rome and Milan, where interference of one, two and three photons was observed in integrated photonic chips. We experimentally verified that the formula describing multiphoton transitions in terms of matrix permanents holds well in the 3-photon, 5-mode regime (\sec{BosonSamplingExp}). We investigated effects of bosonic bunching of 3 photons in chips of up to 16 modes, observing a behavior consistent with the bosonic birthday paradox, and with our new full-bunching rule (\sec{BosonicBunchingExp}). Finally, we performed 3-photon experiments in chips of 5, 7 and 9 modes in order to apply a recent algorithm that distinguishes a BosonSampling distribution from the uniform distribution (\sec{validation}), and also applied standard statistical tests which, albeit asymptotically inefficient, distinguish BosonSampling from the corresponding classical-photon distribution using a small number of experimental samples.

Much like standard quantum computing, BosonSampling still has a lot of ground to cover before it provides concrete evidence that quantum systems are in fact outperforming classical computers in a given task. The most prominent advantage of BosonSampling is that this evidence will, presumably, be less demanding to reach---an experiment with $50-100$ photons already consists of something expected to be unfeasible with classical computers, whereas a useful quantum computation (such as Shor's algorithm), might require systems several orders of magnitude larger. However, in some sense the BosonSampling model is still in its infancy, and there are several theoretical and experimental obstacles that must be tackled if it is to stand on ground as solid as standard quantum computation. 

The experimental challenges consist mostly in showing that a given experiment can be scaled efficiently, at least in principle. For the sake of discussion, let us consider only integrated interferometers---free space linear optics has its own obvious scalability problems, while BosonSampling with other types of bosons is a much less compelling scenario \cite{Shen2014}. For a scalable integrated optical experiment, the main technological advances that should be pursued are (i) near-deterministic sources of nearly-indistinguishable photons, (ii) scalable chips, and (iii) better detectors. For (i) and (iii), it might help to have integrated sources and detectors, i.e., some method for generating and measuring the photons from within the chip itself, to reduce losses in the coupling with fibers. However, in my view, point (i) still has a serious fundamental problem: as long as sources only produce photons probabilistically, the overall coincidence signal will decay exponentially with the desired number of photons, not only because an $n$-fold coincidence event is exponentially unlikely, but even when it happens photons will likely have a large distinguishability factor\footnote{Recently, in the blog of Scott Aaronson (\url{http://www.scottaaronson.com/blog/?p=1579}), an idea was proposed, credited to Steve Kolthammer, called Scattershot BosonSampling, which seems to circumvent (i) somewhat. It is based on the following argument: rather than having $n$ photon sources and waiting for all $n$ of them to produce a photon at the same time, we can use $N > n$ sources and wait for any subset of $n$ of them to produce a photon each. In this way, the probability of an $n$-fold coincidence is boosted arbitrarily high using only a polynomial overhead in the number of sources. This, however, is still a preliminary idea, and furthermore it is unclear how this would help with the partial photon distinguishability factor.}. 

The major obstacle for point (ii) currently seems to be the fact that the signal decays exponentially with the depth of the circuit. This could be resolved with the development of novel fabrication techniques where losses scale more favorably, or by showing that the result of \sec{bosonnew_a} can be generalized for approximate BosonSampling with sub-polynomial-depth circuits. I do not consider that errors in the control of beam splitter transformations pose a greater obstacle to (ii) than losses, mainly because experimentalists have efficient reconstruction algorithms at their disposal, such as described in \sec{bosonnew_b_tomography}, which allow them to always know with very good precision what the implemented unitary is. Of course, from a complexity-theoretic point-of-view, this argument does not hold---$U$ is the input to the BosonSampling task, and any deviation from $U$ in the experiment must be considered an error. However, from a physical point of view, the unitary $U^{'}$ obtained from Haar-random $U$ by shifting each internal parameter by a small random amount is still typically random.

The main theoretical challenges of BosonSampling are those related to simplifying experimental efforts, such as a depth-wise optimization of the model, and those of a more fundamental nature, such as device certification---or, in other words, an answer to the question ``I have a BosonSampling device, now how do I prove it?''. Note that universal quantum computation also presents similar certification issues: any problem in BQP $\cap$ NP of course can be verified in polynomial time, but it is still an open problem whether any task outside this intersection---including any BQP-complete problem---has an efficient procedure that convinces a purely classical verifier of its correctness. Even leaving aside the usual ``paranoia'' of complexity theory, where a BosonSampling device would be asked to prove its authenticity against a miriad of conceivable ``classical'' distributions, it would already be an impressive result to show that BosonSampling can be distinguished, in polynomial time, from any classical distribution natural to the experimental apparatus, such as e.g.\ the distributions generated by distinguishable photons. This might also serve as a partial replacement for tomography if the experimental data showed that, despite imperfections in the fabrication of the device, the output distribution is still close enough to the ideal one as to make it distinguishable from other suitable candidates. This, in a sense, is what we did with the 9-mode interferometer of \sec{validation}: regardless of the fact that we did not perform the reconstruction of the unitary since we did not have all two-photon data, we can deduce that our fabricated chip must have been sufficiently close to the ideal one since the statistical tests were successful.

To conclude this chapter, I would like to say that even standard quantum computation has its skeptics. However, since universal quantum computers have the power of fault-tolerance and error correction, a serious skeptic must come up with quite artificial models to explain why some fundamental physical principle disallows quantum computing whilst allowing the plethora of quantum-mechanical phenomena observed in the last several decades. BosonSampling, unfortunately, does not share this feature, as it does not have (yet) a notion of fault-tolerance. It will be exciting to see, in the future, whether BosonSampling follows in the footsteps of universal quantum computers, where increasing theoretical progress makes skeptics' jobs harder every day, or if it will follow the unsuccessful path of classical analog computers, where physical imperfections necessarily wash out any supra-classical computational speedup.

\newpage
\chapter{Final Remarks} \label{chapter:conclusions}

There are several fascinating aspects that drive much of the current research in quantum computation and information. On the practical side, quantum computers hold the promise to outperform classical computers in several tasks of interest, the most notable of which is integer factoring. Given that the hardness-of-factoring assumption is the basis for one of the most widely used public-key cryptosystems of our days, this is naturally a very strong driving force for development of the field, both theoretically and experimentally. However, practical large-scale quantum computing remains safely in the distant future, and so another strong motivation is of a more fundamental nature. Quantum computing is not the first case of a connection between Physics and Computer Science, but it is one of the most compelling, and that has gathered the greatest amount of attention. Physics and Computer Science originated within different kinds of human activities and developed with very distinct purposes, and the whole concept of quantum computing seems to suggest an \textit{a posteriori} connection between them that is surprisingly robust: there are several different routes (i.e.\ models) that provide universal quantum computation, but any mechanism that reduces a universal quantum system to a classical regime (defined on reasonable physical grounds), such as e.g.\ decoherence, seems to enact a corresponding reduction in the computational power. This did not need to be the case: if a universal quantum computer retained its power when driven to a classical regime by some physical mechanism, this would provide a ``quantum proof'' that certain computational problems, such as factoring, are in fact classically easy. In my opinion, one of the most notable examples of such a connection is the difference in the computational complexity of simulating the dynamics of free fermions and free bosons. An experimental setup corresponding to arguably the simplest nonclassical bosonic/fermionic behavior---the particles are only required to evolve according to identical particle statistics defined by the symmetry of the wave function, with no need for controllable interactions---provides natural computational tasks that have a sub-classical (fermions) and a supra-classical (bosons) simulation complexity.

In this thesis, I reported several results that extend the known regimes of computational power for the two restricted models corresponding to free-particle dynamics: fermions (matchgates) and bosons (BosonSampling). 

In \chap{fermnew} I studied the computational power of matchgates. Matchgates, when acting on nearest-neighboring qubits aligned on a path, correspond to free fermions and are classically simulable. I investigated how their computational power changes when we modify the underlying restrictions, moving the model away from a fermionic description. I showed that adding any parity-preserving two qubit gate (that is not a matchgate) to the set suffices to bridge the gap between (sub-)classical and quantum computational power. Alternatively, I showed that this can also be achieved by almost any change in the connectivity: matchgates acting on nearest neighbors on a path or cycle correspond to free fermion dynamics and are classically simulable, whereas matchgates acting on any other graph are universal for quantum computing and no longer correspond to free fermions. Notice how this echoes a claim made in the previous paragraph: it was conceivable that matchgates would become universal for quantum computation acting on a given graph, whilst not disrupting their efficient classical simulation scheme. The fact that this does not happen provides additional evidence that the distinction between quantum and classical regimes, in terms of computational power, is meaningful. I also showed that the exact same dichotomy holds for the XY interaction. In some sense, the XY interaction already captures all the computational power of the complete set of matchgates, despite explicitly being a proper subset. 

A list of more technical open questions on this subject and relations to other work can be found in \sec{fermnew_c}. At a high level, the main open questions we leave are more quantitative than qualitative. We completely characterized the computational regimes of matchgates in arbitrary graphs, but we have not directed our efforts to determining the more efficient way to perform a given quantum computation (i.e., that is more robust to noise, uses less additional operations and/or qubits, etc). Such an effort might be of interest to experimentalists working with physical systems where matchgates arise naturally. Most proposals in this category that I am aware of are over a decade old \cite{Imamoglu1999, Quiroga1999, Zheng2000, Mozyrsky2001}, and I am not aware how they compare, in terms of feasibility, with other well-known implementations. Nonetheless, during the writing of this thesis a new proposal arose \cite{Herrera2014} for matchgate quantum computing using polar molecules trapped in optical lattices. I hope that the results contained here may provide an extensive flexibility to these constructions, especially as they show that only a rudimentary control over the geometrical arrangement of the qubits might already suffice for a nontrivial computation. Finally, our results also have some relation to qualitatively different quantum computing models, such as e.g.\ ancilla-controlled quantum computing \cite{Proctor2013}, and they might suggest improvements for these known models, or even lead to completely new ones.

In \chap{bosonnew} I presented several results, both experimental and theoretical, on the BosonSampling model of quantum computation. BosonSampling is a restricted model that is not expected to be universal for quantum computation---as a matter of fact, it possesses no practical application so far, and there are no known decision problems solvable by this model (that are not also solvable classically). Nonetheless, it has been gathering increasing attention, mostly due to how it lends itself so naturally to experimental implementation. Rather than imposing a (quite artificial) qubit structure on the photon states, followed by a yet-prohibitively-large complex network of optical elements, auxiliary photons, and adaptive measurements, which make up the KLM scheme, a BosonSampling device consists of nothing more than observing photons in their ``natural habitat''. By inputting an $n$-photon, $m$-mode Fock state into a random garden-variety linear interferometer and measuring the output distribution over the Fock basis, one is already performing a task that is strongly conjectured to be classically hard \cite{Aaronson2013a}. Furthermore, the running time of the best known classical algorithms to perform this task grows very fast with $n$---an experiment with $n$ between 50 and 100 would already surpass the capacities of modern day classical computers. As such, it is extremely natural that great efforts be focused on this model, since interesting computational regimes will be experimentally feasible for BosonSampling much earlier than for other quantum computing tasks, such as factoring. 

My main theoretical contribution to this area is a proof that exact BosonSampling is hard even if the optical circuit has only constant depth, under similar complexity assumptions as standard BosonSampling. I believe this may be an important first step in further simplifying the experimental requirements for a convincing implementation of the model. For this to actually unfold, it will be necessary to show that the hardness of the \emph{approximate} BosonSampling in the constant-depth regime also reduces to some natural hardness conjecture\footnote{This can turn out to not be the case: constant-depth BosonSampling might be hard to perform exactly, but easy to perform approximately. This would also be a very interesting result.}. From a more conceptual point of view, this result also connects two different restricted models, namely constant-depth quantum computing and constant-depth BosonSampling, by showing that the latter is contained in the former. This containment is not known to be true for standard BosonSampling, and could be used to translate results between the models (e.g., if approximate constant-depth BosonSampling is also hard, this might suggest a similar proof for approximate simulation of constant-depth quantum circuits, providing the first proof that other restricted models are as robust as BosonSampling).

I have also contributed to experiments that were among the first small-scale implementations of BosonSampling using integrated photonics, as part of an ongoing collaboration with quantum optics groups in Rome and Milan. In these experiments, we observed three-photon interference in chips of 5 to 16 modes, and a good agreement with the description of bosonic dynamics in terms of matrix permanents, which is at the heart of the BosonSampling result. These experiments also showed that fabrication techniques allow a high level of control over the chips' parameters (transmissivities and phase shifts). This is fundamental not only for progressively larger BosonSampling experiments, but also benchmarks the experimental techniques for use in other quantum information processing tasks, and ultimately in a large-scale implementation of the KLM scheme. My main contribution to these experiments was essentially theoretical: I helped bridge the gap between the highly-technical BosonSampling paper with the real-world available experimental setups; I provided simulations of the behavior of interferometer ensembles inspired on experimental limitations; and I helped to numerically improve known techniques for interferometer tomography. 

A more detailed discussion on these experiments and the main theoretical and experimental open questions regarding BosonSampling can be found in \sec{bosonnew_d}. In the big picture, however, I believe that BosonSampling has, today, a similar status that quantum computing had in the beginning of the '90s. It is not very clear what practical applications it may have, and it still has several ``loopholes'', so to speak, that raise a lot of skepticism\footnote{With the added disadvantage that techniques that have been extensively developed to deal with similar loopholes of quantum computing, such as error correction, cannot be applied directly to BosonSampling.}. The most serious issues, in my opinion, are: (i) the serious scalability issues of current experimental implementations, (ii) the lack of a practical application for BosonSampling, other than the (arguably) purely academic one of providing evidence that quantum systems can outperform classical ones, (iii) a lack of a purely linear-optical fault-tolerant construction, or a satisfying argument that one is unnecessary, and (iv) the elusive matter of efficient verification of the device. 

Issue (i) can be closed with a combination of theoretical and experimental efforts. Experimentally, one might perfect techniques for deterministic photon generation and detection, develop fabrication procedures that allow integration of sources and detectors into the photonic device, reduce losses due to circuit depth, etc. Theoretically, one could improve the known results to show that they hold even in simpler experimental setups. Examples might include the constant-depth regime mentioned previously, or considering the complexity of BosonSampling with variable inputs\footnote{We mentioned an example in \sec{bosonnew_d}, which was called Scattershot BosonSampling. This proposal would require many more photon sources, which might mean that it will not be feasible in the near future, at least in its current preliminary form, but it shows that BosonSampling can be scaled up efficiently using probabilistic sources.}. Issues (ii) to (iv), on the other hand, involve mostly theoretical research. Of greater importance (and corresponding challenge) is (iv), since the lack of an efficient validation procedure allows a skeptic to permanently question whether a real-world device is in fact performing any nontrivial task (in fact, this issue might turn out to be fundamentally unresolvable, as it is not ruled out that, for any BosonSampling distribution, there exist some classical distribution that is indistinguishable from it in polynomial time). Note that standard quantum computing does suffer from this problem, but to a lesser extent. Any quantum algorithm that solves some NP problem, such as factoring, can be efficiently verified (by definition), but any task that is outside of NP, in principle, cannot. An interesting partial solution to this problem was given by \cite{Aharonov2008a, Broadbent2009}, who showed that an arbitrary untrustworthy quantum computer can be verified efficiently if the verifier has access to a small trusted quantum resource. A similar result for BosonSampling is not ruled out, for example in a setting where the verifier uses a small linear-optical resource to verify the functioning of an arbitrarily larger device. However, it is unclear how such a protocol would work within the BosonSampling model itself, since introducing intermediate measurements and adaption already takes the model all the way to universal quantum computing via the KLM scheme.

Finally, there are other interesting questions that are suggested by our work, which however are not direct extensions of the questions answered here. For example, it would be interesting to investigate the computational power associated with noninteracting exotic particles. In our real (three-dimensional) world, bosons and fermions are the only two kinds of fundamental particles, corresponding to a symmetry/antisymmetry of the wave function, respectively. However, other possibilities arise as quasi-particles in two-dimensional systems, such as Abelian anyons (where an interchange of two particles induces a complex phase, rather than just a $\pm 1$), which are the basis for the model of topological quantum computation, and the more abstract generalization known as quons \cite{Greenberg1991, Bardek1994}, which provide several ways to continuously interpolate between bosons and fermions. Although this sounds like an extremely daunting challenge, it would be interesting to see how the inherent simulation complexity of these systems change as we interpolate between fermions (classically easy) and bosons (classically hard), assuming that this question is even completely well-defined. I conjecture that fermions are a pathological case of trivial computational power, since the experience with matchgates showed that very small deviations from pure free-fermionic behavior along several different directions already disrupts the classical simulation.

As a final thought, I point out that all aspects discussed here follow the connection from Physics to Computer Science by using the former to shed light on the latter. However, if there really is a connection that runs deeply between the two fields, one may also expect Computer Science to provide insights into physical phenomena. Quantum computing skeptics may suggest that any connection is illusory: Physics studies natural phenomena, while Computer Science is concerned with artificial constructs that arise mostly from the human mind, and it is arrogant to believe that any apparent connection between them is not simply an artifact of the way humans describe the world around them. The verdict for these philosophical lines is still open, but there is no question that quantum computing has deeply impacted our way of describing and thinking about the physical world. Quantum mechanics is a theory of a highly counterintuitive nature, so much so that, regardless of impressive experimental confirmations, its interpretation remains subject of a heated debate that has lasted almost a century. In this sense, new ways of describing the quantum world in terms of computational and information-theoretic constructs might just be a welcome addition.
\newpage

\begin{appendices}
\chapter{Simulation of random phases ensemble} \label{appdx:app1}

This Appendix contains the Mathematica$^\copyright$ code for the simulation of the random phases ensemble and comparison with the Haar ensemble, described in \sec{bosonnew_b_ensembles}. This code was written in Mathematica 9.0, Student Edition. 

\noindent \textbf{Notebook for simulation of random phases ensemble}

{\small \noindent \textbf{Preliminary definitions}}
\begin{lstlisting}
(* Definitions of Permanent, Beam splitters and phase shifters *)
Permanent3x3[U_]:=U[[1,1]] U[[2,2]] U[[3,3]]+U[[1,2]]U[[2,3]]U[[3,1]]+U[[1,3]]U[[2,1]]U[[3,2]]+U[[1,3]] U[[2,2]] U[[3,1]]+U[[1,2]]U[[2,1]]U[[3,3]]+U[[1,1]]U[[2,3]]U[[3,2]];
beamSplitter[k_,m_]:=Normal[SparseArray[{{k,k}->1/Sqrt[2],{m,k}->I/Sqrt[2],{k,m}->I/Sqrt[2],{m,m}->1/Sqrt[2] ,{l_,l_}-> 1}, {size,size},0]];
phaseShifter[\[Phi]_,m_]:=Normal[SparseArray[{{m,m}->Exp[I \[Phi]],{l_,l_}-> 1}, {size,size},0]];

(* Defining optical elements as pairs of balanced beamsplitter plus two random phase shifters in [0,\[Pi]] range. *)
optEle[{k_,m_}]:=beamSplitter[k,m].phaseShifter[RandomReal[\[Pi]],k].phaseShifter[RandomReal[\[Pi]],m];
optEle[{k_}]:=phaseShifter[RandomReal[\[Pi]],k];

(* Routine to sample Haar-random unitary RU[n] *)
RR:=RandomReal[NormalDistribution[0,1]];
RC:=RR+I*RR;
RG[n_]:=Table[RC,{n},{n}];
RU[n_]:=Module[{Q,R,r,L},{Q,R}=QRDecomposition[RG[n]];
r=Diagonal[R];
L=DiagonalMatrix[r/Abs[r]];
Q.L]; 

(* UPS[S,L] = Unitary of random phases with S modes and L layers *)
PSLayer1[S_]:=Block[{size},size=S;Dot@@Map[optEle,Partition[Range[size,1,-1],2,2,-1,{}]]]
PSLayer2[S_]:=Block[{size},size=S;Dot@@Map[optEle,Partition[Range[size,1,-1],2,2,1,{}]]]
UPS[S_,L_]:=Dot@@(Extract[{A,B},Reverse[Partition[-(-1)^Range[L],1]]])/.{A:>PSLayer1[S],B:>PSLayer2[S]}
\end{lstlisting}
{\small \noindent \textbf{Output probabilities}}
\begin{lstlisting}
(* Definitions for calculation of probabilities. *)
subU[U_,out_List,IN_List]:=U[[out,IN]]
probCl[U_,out_List,IN_List]:=Permanent3x3[Abs[subU[U,out,IN]]^2]/(Times@@((Tally[out]/.{_,x_}->x)!))
probQu[U_,out_List,IN_List]:=Abs[Permanent3x3[subU[U,out,IN]]]^2/(Times@@((Tally[out]/.{_,x_}->x)!))
\end{lstlisting}
{\small \noindent \textbf{Simulations}}
\begin{lstlisting}
(* runs determines number of sampled matrices each size and depth *)
(* {dmin, dmax, dstep}, define depth values to be simulated, {smin, smax, sstep} does the same for sizes *)
(* Note that size and Size are two different variables used in internal working of the code. *)
runs=10000;
{dmin,dmax,dstep}={(Size+3)/2,(Size+3)/2+7,2};{smin,smax,sstep}={7,11,4};
SeedRandom[]

(* Samples 'runs' Haar-random matrices of sizes {smin,smax,sstep}. *)
(* For each, samples one matrix from random phase ensemble of each depth {dmin,dmax,dstep}. *)
(* Computed quantities as follows
RePartHaarmidmid and RePartPSmidmid: matrix element at position {central, central};
RePartHaar1mid and RePartPS1mid: matrix elements at position {central, endpoint};
Per3x3Haarmid and Per3x3PSmid: permanent of 3x3 submatrix of 3 central inputs and 3 central outputs;
Per3x3Haarend and Per3x3PSend: permanent of 3x3 submatrix of 3 central inputs and 3 endpoint outputs;
BunchHaar and BunchWalk: Bunching fractions. *)

Do[Do[

Haar=RU[Size];

Haarmid=subU[Haar,{(Size-1)/2,(Size+1)/2,(Size+3)/2},{(Size-1)/2,(Size+1)/2,(Size+3)/2}];
Haarend=subU[Haar,{1,2,3},{(Size-1)/2,(Size+1)/2,(Size+3)/2}];
RePartHaarmidmid[Size,t]=Re[Haar[[(Size+1)/2,(Size+1)/2]]];
RePartHaar1mid[Size,t]=Re[Haar[[1,(Size+1)/2]]];
Per3x3Haarmid[Size,t]=Abs[Permanent3x3[Haarmid]]^2;
Per3x3Haarend[Size,t]=Abs[Permanent3x3[Haarend]]^2;
BunchHaar[Size,t]=1-Total[Flatten[Table[probQu[Haar,{i,j,k},{(Size-1)/2,(Size+1)/2,(Size+3)/2}],{i,1,Size},{j,i+1,Size},{k,j+1,Size}]]];

Do[
WalkPS=UPS[Size,Dept];

WalkPSmid=subU[WalkPS,{(Size-1)/2,(Size+1)/2,(Size+3)/2},{(Size-1)/2,(Size+1)/2,(Size+3)/2}];
WalkPSend=subU[WalkPS,{1,2,3},{(Size-1)/2,(Size+1)/2,(Size+3)/2}];
RePartPSmidmid[Size,Dept,t]=Re[WalkPS[[(Size+1)/2,(Size+1)/2]]];
RePartPS1mid[Size,Dept,t]=Re[WalkPS[[1,(Size+1)/2]]];
Per3x3PSmid[Size,Dept,t]=Abs[Permanent3x3[WalkPSmid]]^2;
Per3x3PSend[Size,Dept,t]=Abs[Permanent3x3[WalkPSend]]^2;
BunchWalk[Size,Dept,t]=1-Total[Flatten[Table[probQu[WalkPS,{i,j,k},{(Size-1)/2,(Size+1)/2,(Size+3)/2}],{i,1,Size},{j,i+1,Size},{k,j+1,Size}]]];

,{Dept,dmin,dmax,dstep}]

,{Size,smin,smax,sstep}],{t,1,runs}]
\end{lstlisting}
{\small \noindent \textbf{Histograms}}
\begin{lstlisting}
(* Exports histograms in pdf files as found in Thesis *)
SetDirectory[NotebookDirectory[]]
Do[
Print[Style["***"<>ToString[Size]<>"MODES***","Subtitle",TextAlignment->Center]];
Export["walk"<>ToString[Size]<>"a.pdf",GraphicsRow[Table[SmoothHistogram[{Table[RePartPSmidmid[Size,Dept,t],{t,1,runs}],Table[RePartHaarmidmid[Size,t],{t,1,runs}]},PlotLabel->Style["Re part of element "<>ToString[{Size/2+1/2,Size/2+1/2}]<>", \n m="<>ToString[m=Size]<>" and L="<>ToString[Dept],30,Bold],ImageSize->600,Frame->True,FrameTicks-> {{{0,0.5,1,1.5},None},{{-0.5,0,0.5},None}},LabelStyle->{30,Bold},Filling->Axis,AxesOrigin->{-1,0},FrameStyle->Thick],{Dept,dmin,dmax,dstep}],-120]];
Export["walk"<>ToString[Size]<>"b.pdf",GraphicsRow[Table[SmoothHistogram[{Table[RePartPS1mid[Size,Dept,t],{t,1,runs}],Table[RePartHaar1mid[Size,t],{t,1,runs}]},PlotLabel->Style["Re part of element "<>ToString[{Size/2+1/2,1}]<>", \n m="<>ToString[m=Size]<>" and L="<>ToString[Dept],30,Bold],ImageSize->600,Frame->True,FrameTicks-> {{{0,0.5,1,1.5},None},{{-0.5,0,0.5},None}},LabelStyle->{30,Bold},Filling->Axis,AxesOrigin->{-1,0},FrameStyle->Thick],{Dept,dmin,dmax,dstep}],-120]];
Export["walk"<>ToString[Size]<>"c.pdf",GraphicsRow[Table[SmoothHistogram[{Table[Per3x3PSmid[Size,Dept,t],{t,1,runs}],Table[Per3x3Haarmid[Size,t],{t,1,runs}]},PlotLabel->Style["Probability of output "<>ToString[{(Size-1)/2,(Size+1)/2,(Size+3)/2}]<>" \n m="<>ToString[m=Size]<>" and L="<>ToString[Dept],30,Bold],ImageSize->600,FrameTicks-> {{(1-(7-Size)/2){0,30,60,90},None},{(1/2-(Size-7)/16){0,0.04,0.08,0.12,0.16},None}},Frame->True,FrameTicks->Automatic,LabelStyle->{30,Bold},Filling->Axis,AxesOrigin->{-1,0},PlotRange->{{0,0.03+(11-Size)*0.05/4},All},FrameStyle->Thick ],{Dept,dmin,dmax,dstep}],-120]];
Export["walk"<>ToString[Size]<>"d.pdf",
GraphicsRow[Table[SmoothHistogram[{Table[Per3x3PSend[Size,Dept,t],{t,1,runs}],Table[Per3x3Haarend[Size,t],{t,1,runs}]},PlotLabel->Style["Probability of output "<>ToString[{1,2,3}]<>" \n m="<>ToString[m=Size]<>" and L="<>ToString[Dept],30,Bold],ImageSize->600,Frame->True,LabelStyle->{30,Bold},FrameTicks-> {{(2+4(Size-7)){0,15,30,45,120,150,180},None},{(1-3(Size-7)/16){0,0.04,0.08,0.12,0.16},None}},Filling->Axis,AxesOrigin->{-1,0},PlotRange->{{0,0.03+(11-Size)*0.05/4},All},FrameStyle->Thick ],{Dept,dmin,dmax,dstep}],-120]];
Export["walk"<>ToString[Size]<>"e.pdf",GraphicsRow[Table[SmoothHistogram[{Table[BunchWalk[Size,Dept,t],{t,1,runs}],Table[BunchHaar[Size,t],{t,1,runs}]},FrameTicks-> {{{0,3,6},None},{{0,0.3,0.5,0.7},None}},PlotLabel->Style["Bunching fraction \n m="<>ToString[m=Size]<>" and L="<>ToString[Dept],30,Bold],ImageSize->600,Frame->True,LabelStyle->{30,Bold},Filling->Axis,AxesOrigin->{-1,0},FrameStyle->Thick],{Dept,dmin,dmax,dstep}],-120]];
,{Size,smin,smax,sstep}]
\end{lstlisting}
\chapter{Reconstruction algorithm} \label{appdx:app2}

This Appendix contains the Mathematica$^\copyright$ code for the numerical refinement of the Laing-O'Brien algorithm, described in \sec{bosonnew_b_tomography}. This code was written in Mathematica 9.0, Student Edition. 

\noindent \textbf{Notebook for Reconstruction Algorithm}

{\small \noindent \textbf{Preliminary definitions}}
\begin{lstlisting}
(* Number of modes of the chip *)
Size=7;

(* Definition of mxm permanent for m\[GreaterEqual]3 *)
Permanent[m_List]:=With[{v=Array[x,Length[m]]},Coefficient[Times@@(m.v),Times@@v]] 

(* More efficient definition of 2x2 permanent *)
Permanent2x2[U_]:=U[[1,1]]*U[[2,2]]+U[[2,1]]*U[[1,2]] 

(* Preliminary definitions for calculation of gate fidelity *)
(* maximized over round of input and output phase shifters and symmetry U \[Rule] U\[Conjugate] *)
PhaseShifter[A_,m_]:=Normal[SparseArray[{{m,m}->Exp[I A],{l_,l_}-> 1}, {Size,Size},0]]; 

ArbPhase[U0_,\[Alpha]_List,\[Beta]_List]/;(Length[\[Alpha]]==Size&&Length[\[Beta]]==Size):=Chop[(Dot@@(PhaseShifter[\[Alpha][[#]],#]&/@Range[Size])).U0.(Dot@@(PhaseShifter[\[Beta][[#]],#]&/@Range[Size]))];
GateFid[U_,V_]:=Chop[Abs[Tr[U.V\[ConjugateTranspose]]/Size]];
MaxGateFid[U_,V_]:=Max[(First[FindMaximum[GateFid[U,ArbPhase[#,\[Alpha]/@Range[Size],\[Beta]/@Range[Size]]],Flatten[{\[Alpha]/@Range[Size],\[Beta]/@Range[Size]}]]]&)/@{V,V\[Conjugate]}]//Quiet 

(* Definition of \[Chi]2 *) 
Chi2[U_]:=Total[(TeoSingle[U]-ExpSingle)^2/ErrorsSingle^2]+Total[(TeoVisib[U]-ExpVisib)^2((1)/(ErrorsVisib)^2)];
Chi2Single[U_]:=Total[(TeoSingle[U]-ExpSingle)^2/ErrorsSingle^2];
Chi2Visib[U_]:=Total[(TeoVisib[U]-ExpVisib)^2((1)/(ErrorsVisib)^2)];

(* Definition of Total Variation Distances *)
TVDSingle[U_]:=1/2 Total[Flatten[Abs[RExp-RTeo[U]]]]/Size
TVDVisib[U_]:=1/2 Total[Flatten[Table[Total[Abs[ExpProb2[i,j]
-TeoProb2[U,i,j]]],{i,2,Size},{j,1,i-1}]]]/(Size*(Size-1)/2)
\end{lstlisting}
{\small \noindent \textbf{Importing External data}}
\begin{lstlisting}
SetDirectory[StringJoin[NotebookDirectory[],"\\****"]];

(* Importing of ideal unitary from separate .dat files with real and imaginary parts*)
ImUIdeal=ToExpression[Import["****.dat"]]; 
ReUIdeal=ToExpression[Import["****.dat"]];
UIdeal=ReUIdeal+I ImUIdeal;

(* Importing of experimental data from .dat files *)
RExp=Import["****.dat","Table"]; (* Single-photon data. *)
\[Sigma]RExp=Import["****.dat","Table"]; (* Single-photon error bars. *)
VIMPORT=Import[StringJoin["****.dat"],"Table"]; (* Two-photon data. *)
\[Sigma]VIMPORT=Import[StringJoin["****.dat"],"Table"]; (* Two-photon error bars. *)
VExp=Table[0,{j,1,Size},{g,1,Size},{h,1,Size},{k,1,Size}];
\[Sigma]VExp=Table[0,{j,1,Size},{g,1,Size},{h,1,Size},{k,1,Size}];

(* Redefinition of visibilities indexed as for use in Laing-O'Brien algorithm *)
ROW=0;
COLUMN=0;
Do[Do[++COLUMN;ROW=0;
Do[Do[++ROW;
VExp[[k,h,j,g]]=VExp[[h,k,j,g]]=VExp[[k,h,g,j]]=VExp[[h,k,g,j]]=VIMPORT[[ROW,COLUMN]];
\[Sigma]VExp[[k,h,j,g]]=\[Sigma]VExp[[h,k,j,g]]=\[Sigma]VExp[[k,h,g,j]]=\[Sigma]VExp[[h,k,g,j]]=\[Sigma]VIMPORT[[ROW,COLUMN]];
,{k,h+1,Size}],{h,1,Size-1}],{g,j+1,Size}],{j,1,Size-1}];

(* Definition of theoretical variables from unitary matrix U *)
RTeo[U_]:=Abs[U]^2 (* Matrix that stores classical light amplitudes  *)
QTeo[U_,k_,h_,j_,g_]:=Abs[Det[{{1,-1},{1,1}}*U[[{j,g},{k,h}]]]]^2  (* Quantum probabilities for 2 photons entering at k,h and exiting at g,j *)
ClTeo[U_,k_,h_,j_,g_]:=Det[{{1,-1},{1,1}}*Abs[U[[{j,g},{k,h}]]]^2 ];   (* Classical probabilities for 2 photons entering at k,h and exiting at g,j *)
VTeo[U_,k_,h_,j_,g_]:=Chop[1-(Abs[Det[{{1,-1},{1,1}}*U[[{j,g},{k,h}]]]]^2)/(Det[{{1,-1},{1,1}}*Abs[U[[{j,g},{k,h}]]]^2 ]+10^(-20))];(* Two-photon visibilities *)

(* Definition of auxiliary vectors containing the complete data set. *)
ExpSingle=Flatten[RExp]; (* 1D Vector with experimental single-photon data *)
ErrorsSingle=Flatten[\[Sigma]RExp];  (* 1D Vector with experimental single-photon error bars *)
ExpVisib=Flatten[Table[Table[Table[Table[VExp[[k,h,j,g]],{k,h+1,Size}],{h,1,Size-1}],{g,j+1,Size}],{j,1,Size-1}]]; 

(* 1D Vector with experimental visibilities *)
ErrorsVisib=Flatten[Table[Table[Table[Table[\[Sigma]VExp[[k,h,j,g]],{k,h+1,Size}],{h,1,Size-1}],{g,j+1,Size}],{j,1,Size-1}]]; 

(* 1D Vector with visibilitiy error bars *)
ExpProb2[k_,h_]:=Flatten[Table[Table[(1-VExp[[k,h,j,g]])*(RExp[[j,k]]*RExp[[g,h]]+RExp[[j,h]]*RExp[[g,k]]),{g,j+1,Size}],{j,1,Size-1}]]; (* 1D Vector with experimental 2-photon probabilities *)
TeoProb2[U_,k_,h_]:=Flatten[Table[Table[QTeo[U,k,h,j,g],{g,j+1,Size}],{j,1,Size-1}]];  (* 1D Vector with theoretical 2-photon probabilities *)
TeoSingle[U_]:=Flatten[RTeo[U]];  (* 1D Vector with theoretical single-photon probabilities *)
TeoVisibSuppt=Table[Table[Table[Table[{{j,g},{k,h}},{h,k+1,Size}],{k,1,Size-1}],{g,j+1,Size}],{j,1,Size-1}];
TeoVisib[U_]:=Flatten[Map[1-(Abs[Permanent2x2[U[[Sequence@@#]]]]^2)/(10^(-20)+Permanent2x2[Abs[U[[Sequence@@#]]]^2 ])&,TeoVisibSuppt,{-3}]]; (* 1D Vector with theoretical visibilities *)
\end{lstlisting}
{\small \noindent \textbf{Laing-O'Brien Subroutine}}
\begin{lstlisting}
(* Vis and Coup are the matrices with visibilities and single-photon data.*)
(* Set REFj and REFk to use row REFj+1 and column REFk+1 as references *)

LaingOBrien[REFj_,REFk_,Vis_,Coup_]:=Block[{V,R,x,y,\[Beta],SN,T,A,A0,\[Tau],\[Alpha],\[Alpha]0,M2,M1,M,u,v,w,M0},

(* Definition of auxiliary variables *)
V[k_,h_,j_,g_]:=Vis[[Mod[k+REFk,Size,1],Mod[h+REFk,Size,1],Mod[j+REFj,Size,1],Mod[g+REFj,Size,1]]];
 R[j_,k_]:=Coup[[Mod[j+REFj,Size,1],Mod[k+REFk,Size,1]]];
x[k_,h_,j_,g_]:=Sqrt[R[j,k]R[g,h]/(R[j,h]R[g,k])];
y[k_,h_,j_,g_]:=x[k,h,j,g]+1/(x[k,h,j,g]);
\[Beta][k_,h_,j_,g_]:=Abs[ArcCos[-V[k,h,j,g]y[k,h,j,g]/2]];
SN[k_,h_,j_,g_]:=Sign[Abs[\[Beta][k,h,j,g]-Abs[Mod[A[[j,k]]-A[[j,h]]-A[[g,k]]-A0[[g,h]],2\[Pi],-\[Pi]]]]-Abs[\[Beta][k,h,j,g]-Abs[Mod[A[[j,k]]-A[[j,h]]-A[[g,k]]+A0[[g,h]],2\[Pi],-\[Pi]]]]];
T=Table[\[Tau][i,j],{i,1,Size},{j,1,Size}];
A=Table[\[Alpha][i,j],{i,1,Size},{j,1,Size}];
A0=Table[\[Alpha]0[i,j],{i,1,Size},{j,1,Size}];
A[[All,1]]=A[[1,All]]=0^Range[Size];

(* Sequential replacement of absolute values and phases of matrix elements *)
A0[[2;;Size,2;;Size]]=Table[Abs[ArcCos[-V[1,h,1,g]y[1,h,1,g]/2]],{g,2,Size},{h,2,Size}];
A[[2,2]]=A0[[2,2]];
A[[3;;Size,2]]=Table[SN[1,2,2,m]A0[[m,2]],{m,3,Size}];
A[[2,3;;Size]]=Table[SN[2,m,1,2]A0[[2,m]],{m,3,Size}];
A[[3;;Size,3;;Size]]=Table[SN[2,h,2,g]A0[[g,h]],{g,3,Size},{h,3,Size}];
M2=Simplify[Table[x[1,j,1,i],{i,1,Size},{j,1,Size}]]*Exp[I A];

(* Linear system to solve for absolute values of elements in reference row and column *)
T[[All,1]]=Sqrt[LinearSolve[M2\[ConjugateTranspose],UnitVector[Size,1]]];
T[[1,All]]=Sqrt[LinearSolve[M2,UnitVector[Size,1]]];
T[[2;;Size,2;;Size]]=Table[x[1,h,1,g]T[[1,h]]T[[g,1]]/T[[1,1]],{g,2,Size},{h,2,Size}];
M1=Simplify[T*Exp[I A]];
M0=Chop[M1[[Mod[#1-REFj,Size,1]&/@Range[Size],Mod[#1-REFk,Size,1]&/@Range[Size]]]];

(* Singular Value Decomposition *)
{u,w,v}=SingularValueDecomposition[M0];
MTomo[REFj,REFk]=u.ConjugateTranspose[v];

(* Outputs a vetor with \[Chi]2, Gate fidelity and permutation coordinates. Also stores reconstructed matrix as MTomo[REFj,REFk]. *)
{Chi2[MTomo[REFj,REFk]],MaxGateFid[MTomo[REFj,REFk],UIdeal],{REFj,REFk}}]
\end{lstlisting}
{\small \noindent \textbf{Permutations of Laing-O'Brien}}
\begin{lstlisting}
(* All matrices output by the Laing-O'Brien algorithm, and outputs the best one in terms of \[Chi]2. *) 
Recordist=Sort[Flatten[Table[LaingOBrien[i,j,VExp,RExp],{i,0,Size},{j,0,Size}],1]][[1]]

(* Defines the starting point for the numerical search. *)
URecon=MTomo[Sequence@@(Recordist[[3]])]; 
BestChiSq=Chi2[URecon]
BestGFid=MaxGateFid[URecon,UIdeal]
\end{lstlisting}
{\small \noindent \textbf{Numerical Search}}
\begin{lstlisting}
(* RUNS encodes the number of runs of the algorithm. *)
(* STEP enables fine-tuning when the algorithm saturates. *)
RUNS=1000;
STEP=1;
Do[
Clear[MTomo,RSam,VSam,BestIndex,BestTomo];

(* Samples new set of single- and two-qubit data from Gaussian distribution centered on previous values. *)
SeedRandom[];
RSam=((#/Total[#]&)/@((RTeo[URecon]+Map[RandomReal[NormalDistribution[0,#/STEP]]&,\[Sigma]RExp,{2}])\[Transpose]))\[Transpose];
VSam=Table[0,{j,1,Size},{g,1,Size},{h,1,Size},{k,1,Size}];
Do[Do[Do[Do[
VSam[[k,h,j,g]]=VSam[[h,k,j,g]]=VSam[[k,h,g,j]]=VSam[[h,k,g,j]]=RandomReal[NormalDistribution[VTeo[URecon,k,h,j,g],\[Sigma]VExp[[k,h,j,g]]/STEP]];
,{k,h+1,Size}],{h,1,Size-1}],{g,j+1,Size}],{j,1,Size-1}];

(* Laing-O'Brien algorithm, with permutations, on simulated data set. *)
BestIndex=Sort[Flatten[Table[LaingOBrien[i,j,VSam,RSam],{i,0,Size},{j,0,Size}],1]][[1]];

(* If the \[Chi]2 improved, redefine all variables, print updated values as {\[Chi]2, Fidelity} and start again.*) 
(* Otherwise just start again. *)
If[BestIndex[[1]]<BestChiSq,

BestTomo=MTomo[Sequence@@BestIndex[[3]]];
BestGFid=MaxGateFid[BestTomo,UIdeal];
BestChiSq=BestIndex[[1]];
URecon=BestTomo;
Print[{BestChiSq,BestGFid}]];
,{RUN,1,RUNS}]
\end{lstlisting}
\chapter{Sampled and reconstructed unitaries} \label{appdx:app3}

\section{5-mode chip}

The matrix $U_{5}^{t}$ was sampled from the uniform, Haar distribution over $5 \times 5$ unitary matrices: 
\scriptsize
\begin{equation*}
U_{5}^{t} = \begin{pmatrix}
 0.212 & -0.018+0.165 i & -0.238-0.18 i & -0.429+0.32 i & -0.715+0.2 i \\
 -0.193-0.388 i & -0.045-0.379 i & 0.19 +0.311 i & 0.328 -0.269 i & -0.594+0.03 i \\
 -0.723+0.363 i & 0.087 -0.09 i & -0.076-0.155 i & 0.206 +0.443 i & -0.153-0.193 i \\
 -0.092+0.045 i & -0.148-0.645 i & -0.588+0.184 i & -0.369-0.086 i & 0.167 +0.025 i \\
 0.318 -0.009 i & -0.144-0.594 i & 0.452 -0.405 i & 0.037 +0.387 i & 0.071 +0.025 i
\end{pmatrix},
\end{equation*}
\normalsize
where the global phase was fixed so as to make the upper-left element of $U_{5}^{t}$ real. We then decomposed $U_{5}^{t}$ as a product of matrices that act nontrivially on two modes only, each representing a set of one beam splitter and two phase shifters in the range $[0,\pi]$, as described in \cite{Reck1994}. The use of $[0,\pi]$ phase shifters was an adaptation of the decomposition of \cite{Reck1994}, which originally used phase shifters in the $[0,2\pi]$ range. This was done to limit the phase shift (and waveguide deformation) introduced by each element, as this may lead to losses. In this new decomposition, only one of the two phase shifters at each beam splitter input branch needs to introduce a nonzero phase shift. \tabl{Parameters5} reports the parameters obtained in this decomposition.

\begin{table}[ht]\centering
\begin{tabular}{|c||c|c|c|}
\hline 
$i$ & $t_{i}$ & $\alpha_{i}$ [rad] & $\beta_{i}$ [rad] \\
\hline
1 & 0.19 & 0 & 0 \\
\hline
2 & 0.40 & 0.64 & 0 \\
\hline
3 & 0.48 & 0 & 1.37 \\
\hline
4 & 0.44 & 0 & 1.10 \\
\hline
5 & 0.55 & 2.21 & 0 \\
\hline
6 & 0.54 & 0 & 1.02 \\
\hline
7 & 0.51 & 2.93 & 0 \\
\hline
8 & 0.76 & 1.08 & 0 \\
\hline
9 & 0.99 & 2.58 & 0 \\
\hline
10 & 0.95 & 0 & 0\\
\hline
\end{tabular}
\caption{Transmissivities, $t_i$, and phases, $\alpha_i$ and $\beta_i$, related to the layouts in \sfig{Reck}{b}, that result from the decomposition of the sampled unitary matrix.}
\label{tab:Parameters5}
\end{table}

Since all experimental outcomes are invariant under multiplication of $U_{5}^{t}$ by a phase shifter at each input and output port, we have set $\alpha_{1}, \beta_{1}, \beta_{5},\beta_{8}$ and $\beta_{10}$ equal to zero in \tabl{Parameters5}. 

We have also found a unitary that fits well the single- and two-photon data, following the refinement of the Laing-O'Brien algorithm described in \sec{bosonnew_b_tomography}.  Since the reconstruction method only obtains the unitary up to a round of arbitrary phases at the input and output modes, we have multiplied each row and column of the optimized unitary so as to obtain the highest gate fidelity with $U_{5}^{t}$, making the matrices easier to compare. The reconstructed unitary $U_{5}^{r}$ is found to be:

\scriptsize
\begin{equation*}
U_{5}^{r} = \begin{pmatrix}
 0.370 	 & 0.007 +0.151 i & -0.164-0.31 i & -0.442+0.138 i & -0.702+0.099 i \\
 -0.109-0.465 i & -0.013-0.585 i & 0.121 +0.381 i & 0.076 -0.134 i & -0.474-0.147 i \\
 -0.677+0.180 i & 0.134 -0.027 i & -0.283-0.133 i & 0.036 +0.498 i & -0.206-0.319 i \\
 -0.039+0.240 i & -0.080 -0.572 i & -0.496-0.046 i & -0.475-0.220 i & 0.265 +0.125 i \\
 0.262 +0.133 i & 0.090 -0.524 i & 0.479 -0.377 i & 0.055 +0.486 i & 0.143 +0.007 i
\end{pmatrix}.
\end{equation*}
\normalsize

We have implemented a Monte Carlo simulation to check the consistency of our reconstruction method. We used $U^r$ to simulate 1000 complete, new data sets for single- and two-photon experiments, with error bars compatible with those of our real experimental data. We then applied the Laing-O'Brien method to obtain a reconstructed unitary $U_{5}^{r\,\prime}$ for each data set. The standard deviation in the results serves as our estimated error bars for any quantity of interest. 

The error bars for the real and for the imaginary parts or $U_{5}^r$ are respectively:
\scriptsize
\begin{equation*}
\Delta \textrm{Re}(U_5^{r}) = \begin{pmatrix}
0.011 &	0.010 & 0.010	& 0.008	& 0.007 \\
0.015	& 0.011 &	0.010	& 0.010	& 0.008 \\
0.008	& 0.008	& 0.010	& 0.013	& 0.007 \\
0.011	& 0.012	& 0.010	& 0.007 &	0.008 \\
0.010	& 0.013	& 0.009	& 0.011	& 0.010
\end{pmatrix},
\end{equation*}
\normalsize
and
\scriptsize
\begin{equation*}
\Delta \textrm{Im}(U_5^{r}) = \begin{pmatrix}
0 &	0.009 & 0.009	& 0.013	& 0.015 \\
0.008	& 0.005 &	0.010	& 0.011	& 0.011 \\
0.014	& 0.012	& 0.012	& 0.006	& 0.009 \\
0.009	& 0.007	& 0.013	& 0.010 &	0.011 \\
0.009	& 0.008	& 0.010	& 0.009 & 0.011
\end{pmatrix}.
\end{equation*}
\normalsize
The similarity between the sampled $U_{5}^t$ and the reconstructed $U_{5}^r$ can then be quantified by the gate fidelity $F=|Tr(U_5^t U_5^{r\dagger})|/5 =0.950\pm0.002$.

Such a value of the gate fidelity $F$ between the sampled unitary $U_{5}^t$ and the reconstructed one $U_{5}^r$ has to be related to the fabrication tolerances estimated from the data shown in \fig{LOelements}. To analyze the effect of fabrication errors in the full device, we sampled $N=10000$ random unitaries from a gaussian distribution centered around $U_{5}^t$. More specifically, the parameters of the internal phase shifters and beam-splitters have been randomly picked from a gaussian distribution centered at the parameter values of $U_{5}^t$, with standard deviation equal to the fabrication error. We obtained a value for the average gate fidelity between $U_{5}^t$ and the sampled unitaries equal to: $F=0.968 \pm 0.020$. Such a value is compatible within one standard deviation with the experimental value of $F=0.950\pm0.002$ obtained between $U_{5}^t$ and $U_{5}^r$. 

\section{7-mode chip}

The matrix $U_{7}^{t}$, used in \sec{validation}, was sampled from the random phases ensemble described in \sec{bosonnew_b_ensembles}:

\scriptsize
\begin{equation*}
\textrm{Re}(U_7^{t}) = \begin{pmatrix}
 0.4425 & -0.1165 & -0.1488 & 0.4638 & 0.1579 & 0.0794 & 0. \\
 -0.1399 & -0.4259 & -0.1446 & 0.0255 & -0.0794 & 0.1579 & 0. \\
 -0.0407 & 0.0883 & 0.5283 & 0.2971 & -0.1533 & -0.0246 & 0.1383 \\
 0.6001 & 0.3919 & -0.205 & -0.4029 & -0.2782 & -0.1281 & 0.2082 \\
 -0.1749 & 0.0259 & -0.2427 & 0.1622 & -0.1493 & -0.2798 & -0.0683 \\
 -0.0259 & -0.1749 & 0.073 & -0.1255 & 0.2164 & -0.3516 & 0.1798 \\
 0. & 0. & -0.0576 & -0.2433 & -0.4469 & 0.1336 & -0.5942 \\
\end{pmatrix},
\end{equation*}
\normalsize
and
\scriptsize
\begin{equation*}
\textrm{Im}(U_7^{t})= \begin{pmatrix}
 0 & -0.271 & -0.6244 & 0.1661 & -0.0794 & 0.1579 & 0 \\
 0.5437 & -0.5791 & 0.2894 & 0.0161 & -0.1579 & -0.0794 & 0 \\
 -0.253 & -0.3246 & -0.0445 & -0.3622 & -0.1533 & -0.4588 & 0.2082 \\
 0.0265 & -0.2588 & 0.1053 & -0.0722 & -0.1775 & -0.1004 & -0.1383 \\
 0.0259 & 0.1749 & -0.1019 & 0.4779 & -0.2428 & -0.6334 & -0.2194 \\
 -0.1749 & 0.0259 & 0.1096 & 0.1911 & -0.6708 & 0.2806 & 0.3694 \\
 0 & 0 & -0.2433 & 0.0576 & -0.0615 & -0.0134 & 0.548 \\
\end{pmatrix},
\end{equation*}
\normalsize
where the global phase was fixed so as to make the upper-left element of $U_{7}^{t}$ real. The parameter specifications are given in \tabl{Parameters7}, where L and M label the position of each phase shifter within the interferometer as follows: in the ordering of \fig{walkinterf}, the modes are labeled sequentially as $\textrm{M}_1$, $\textrm{M}_2$, and so on, from top to bottom, while the layers of phase shifters are labeled as $\textrm{L}_1$, $\textrm{L}_2$ etc., from left to right. Note that the chip has an odd numbers of modes, so one mode is idle in every layer of 50:50 beam splitters. The convention is that the first layer has beam splitters between modes 1 and 2, 3 and 4, 5 and 6, and so on, and the bottom mode is idle. 

\begin{table}[ht]\centering \scriptsize
\begin{tabular}{|c||c|c|c|c|}
\hline 
 & L1 & L2 & L3 & L4\\
\hline
M1 & 1.5253 & 0.6993 & 2.7776 & 1.8087 \\
\hline
M2 & 2.6182 & 1.5267 & 2.956 & 1.6449 \\
\hline
M3 & 1.9217 & 2.8131 & 0.6705 & 1.7497 \\
\hline
M4 & 0.7217 & 0.4718 & 0.9392 & 1.5706 \\
\hline
M5 & 2.9256 & 1.3138 & 2.1079 & 2.8032 \\
\hline
M6 & 0.4974 & 1.4759 & 0.3152 & 0.7684 \\
\hline
M7 & 2.0089 & 2.9217 & 1.347 & 2.025 \\
\hline
\end{tabular}
\caption{Specification of phase shifts (in rad) for the 7-mode chip from the random phases ensemble. The labeling notation is defined in the text.}
\label{tab:Parameters7}
\end{table}

We have also applied the refined Laing-O'Brien algorithm described in \sec{bosonnew_b_tomography} to obtain a reconstructed matrix $U_7^{r}$. Since the reconstruction method only obtains the unitary up to a round of arbitrary phases at the input and output modes, we have multiplied each row and column of the optimized unitary so as to obtain the highest gate fidelity with $U_{7}^{t}$, making the matrices easier to compare. The reconstructed unitary $U_{7}^{r}$ is given by:

\scriptsize
\begin{equation*}
\textrm{Re}(U_7^{r}) = \begin{pmatrix}
 0.4452 & -0.1619 & -0.0803 & 0.3911 & 0.1092 & 0.0209 & -0.0081 \\
 -0.1373 & -0.4556 & -0.1317 & 0.0791 & -0.0755 & 0.0837 & 0.0001 \\
 -0.0416 & 0.0536 & 0.4685 & 0.3431 & -0.1754 & -0.055 & 0.0912 \\
 0.6524 & 0.3148 & -0.1802 & -0.3467 & -0.2841 & -0.2602 & 0.2024 \\
 -0.1626 & -0.0474 & -0.2752 & 0.1311 & -0.1485 & -0.2213 & -0.0441 \\
 -0.0704 & -0.1444 & 0.0106 & -0.106 & 0.3 & -0.3859 & 0.1749 \\
 0.0001 & -0.0067 & -0.0768 & -0.2394 & -0.4011 & 0.0415 & -0.6603 \\
\end{pmatrix},
\end{equation*}
\normalsize
and
\scriptsize
\begin{equation*}
\textrm{Im}(U_7^{r}) = \begin{pmatrix}
 0 & -0.2345 & -0.7213 & 0.1138 & 0.0039 & 0.1245 & 0.0016 \\
 0.3705 & -0.7084 & 0.2681 & -0.0592 & -0.1077 & -0.1061 & 0.0001 \\
 -0.4088 & -0.1521 & -0.0212 & -0.3293 & -0.2072 & -0.4918 & 0.179 \\
 0.0715 & -0.1923 & 0.1248 & -0.1292 & -0.15 & -0.1598 & -0.1038 \\
 0.0324 & 0.154 & -0.0879 & 0.5484 & -0.0376 & -0.6209 & -0.292 \\
 -0.1169 & 0.0211 & 0.0282 & 0.2579 & -0.7122 & 0.2034 & 0.2517 \\
 0 & -0.007 & -0.1605 & 0.0906 & -0.1107 & -0.0338 & 0.5391 \\
\end{pmatrix}.
\end{equation*}
\normalsize

We have implemented a Monte Carlo simulation to check the consistency of our reconstruction method, in the same way as for the 5-mode chip described earlier in this Appendix. The error bars for the real and for the imaginary parts or $U_{7}^r$ are respectively:
\scriptsize
\begin{equation}
\Delta \textrm{Re}(U_7^{r}) = \begin{pmatrix}
0.011	& 0.021	& 0.025	& 0.011	& 0.008	& 0.008	& 0.008 \\
0.02	& 0.02	& 0.012	& 0.012	& 0.009	& 0.008	& 0.005 \\
0.019	& 0.012	& 0.009	& 0.01	& 0.012	& 0.016	& 0.014 \\
0.009	& 0.01	& 0.018	& 0.01	& 0.01	& 0.012	& 0.01 \\
0.008	& 0.008	& 0.011	& 0.016	& 0.014	& 0.021	& 0.019 \\
0.011	& 0.007	& 0.008	& 0.016	& 0.022	& 0.013	& 0.018 \\
0.006	& 0.007	& 0.012	& 0.016	& 0.014	& 0.016	& 0.018
\end{pmatrix},
\end{equation}
\normalsize
and
\scriptsize
\begin{equation}
\Delta \textrm{Im}(U_7^{r}) = \begin{pmatrix}
0.	& 0.019	& 0.008	& 0.018	& 0.01	& 0.01	& 0.009 \\
0.014	& 0.017	& 0.017	& 0.014	& 0.009	& 0.008	& 0.006 \\
0.01	& 0.013	& 0.018	& 0.013	& 0.013	& 0.01	& 0.014 \\
0.023	& 0.012	& 0.011	& 0.016	& 0.013	& 0.011	& 0.012 \\
0.01	& 0.009	& 0.012	& 0.012	& 0.013	& 0.01	& 0.017 \\
0.007	& 0.009	& 0.009	& 0.009	& 0.012	& 0.015	& 0.014 \\
0.007	& 0.007	& 0.01	& 0.018	& 0.018	& 0.016	& 0.022
\end{pmatrix}.
\end{equation}
\normalsize
The similarity between the sampled $U_7^t$ and the reconstructed $U_7^r$ can then be quantified by the gate fidelity $F=|Tr(U_7^t U_7^{r\dagger})|/7 =0.9750 \pm 0.0022$.

\section{9-mode chip}

The matrix $U_{9}^{t}$, used in \sec{validation}, was sampled from the random phases ensemble described in \sec{bosonnew_b_ensembles}:
\scriptsize
\begin{equation*}
\textrm{Re}(U_9^{t}) = \begin{pmatrix}
 0.1737 & 0.3764 & -0.2099 & 0.0618 & 0.156 & 0.0832 & 0. & 0. & 0. \\
 0.2093 & -0.1192 & -0.193 & 0.188 & -0.1875 & 0.0353 & -0.1083 & -0.0624 & 0. \\
 -0.3979 & -0.2573 & 0.1942 & 0.2589 & 0.1764 & -0.0378 & 0.0624 & -0.1083 & 0. \\
 -0.0198 & 0.1469 & 0.3669 & -0.1077 & -0.2168 & 0.4123 & 0.1051 & -0.2844 & -0.146 \\
 -0.0195 & 0.3428 & -0.5808 & -0.2214 & -0.0762 & 0.2985 & 0.319 & -0.0395 & 0.0997 \\
 -0.0433 & -0.1173 & -0.0321 & 0.0354 & 0.2657 & 0.2604 & -0.2291 & -0.535 & 0.4547 \\
 0.1173 & -0.0433 & -0.0772 & -0.2149 & 0.2085 & -0.3808 & 0.0738 & 0.2366 & 0.0849 \\
 0. & 0. & -0.1216 & 0.0289 & 0.362 & -0.3342 & 0.5234 & -0.4918 & -0.0882 \\
 0. & 0. & -0.0289 & -0.1216 & -0.0368 & 0.1977 & -0.0343 & -0.1498 & -0.6324 \\
\end{pmatrix},
\end{equation*}
\normalsize
and
\scriptsize
\begin{equation*}
\textrm{Im}(U_9^{t}) = \begin{pmatrix}
 0 & -0.6425 & -0.3237 & -0.4474 & -0.0832 & 0.156 & 0 & 0 & 0 \\
 0.7566 & -0.2241 & -0.0309 & 0.4142 & -0.0529 & 0.0619 & 0.0624 & -0.1083 & 0 \\
 0.3374 & -0.1194 & 0.3813 & -0.4619 & 0.2872 & 0.194 & 0.1083 & 0.0624 & 0 \\
 -0.1551 & -0.1639 & 0.0947 & 0.2141 & -0.0766 & 0.2954 & 0.5338 & 0.0884 & 0.0997 \\
 0.1591 & 0.3216 & 0.2771 & -0.1147 & 0.186 & 0.0874 & 0.0228 & 0.0695 & 0.146 \\
 -0.1173 & 0.0433 & -0.1696 & 0.1752 & 0.1475 & 0.2771 & -0.3511 & 0.0202 & 0.0303 \\
 -0.0433 & -0.1173 & -0.0718 & 0.2822 & 0.4033 & 0.3424 & 0.1966 & 0.495 & -0.1054 \\
 0 & 0 & 0.0289 & 0.1216 & -0.3066 & -0.1135 & 0.1344 & -0.0901 & -0.2633 \\
 0 & 0 & -0.1216 & 0.0289 & 0.4462 & 0.0092 & -0.2397 & -0.1164 & -0.4843 \\
\end{pmatrix},
\end{equation*}
\normalsize
where the global phase was fixed so as to make the upper-left element of $U_{9}^{t}$ real. The parameter specifications are given in \tabl{Parameters9}, where L and M label the position of each phase shifter within the interferometer in the same manner as for the 7-chip of the previous section of this Appendix. We did not apply the reconstruction method for the 9-mode chip, so every analysis was done directly with $U_9^t$.

\begin{table}[ht]\centering \scriptsize
\begin{tabular}{|c||c|c|c|c|c|}
\hline 
 & L1 & L2 & L3 & L4 & L5 \\
\hline
M1 & 2.6429 & 2.306 & 1.9381 & 0.857 & 1.5198 \\ \hline
M2 & 2.6199 & 0.5196 & 0.8464 & 2.9378 & 2.4199 \\ \hline
M3 & 0.4416 & 2.8766 & 2.4005 & 2.7948 & 0.2681 \\ \hline
M4 & 2.6066 & 0.6619 & 3.0257 & 2.1596 & 1.8241 \\ \hline
M5 & 2.9507 & 0.5332 & 0.274 & 2.8834 & 1.8991 \\ \hline
M6 & 0.9698 & 2.6471 & 1.2586 & 1.9846 & 2.3428 \\ \hline
M7 & 0.3772 & 2.3326 & 0.1867 & 0.8455 & 2.7346 \\ \hline
M8 & 0.5191 & 0.021 & 0.3205 & 1.3059 & 0.771 \\ \hline
M9 & 1.5579 & 1.9952 & 0.2173 & 2.8286 & 1.127 \\
\hline
\end{tabular}
\caption{Specification of phase shifts (in rad) for the 9-mode chip from the random phases ensemble. The labeling notation is equivalent to that of \tabl{Parameters7}.}
\label{tab:Parameters9}
\end{table}

\end{appendices}

\bibliographystyle{plainshort}

\begin{thebibliography}{100}

\bibitem{CompZoo}
Complexity Zoo
  (https://complexityzoo.uwaterloo.ca/Complexity\ensuremath{\_}Zoo).

\bibitem{Aaronson2005}
S.~Aaronson.
\newblock Quantum computing, postselection, and probabilistic polynomial-time.
\newblock {\em {Proceedings of the Royal Society of London. Series A:
  Mathematical, Physical and Engineering Sciences}} {\bf 461}, 3473--3482,
  (2005).

\bibitem{Aaronson2013b}
S.~Aaronson and A.~Arkhipov.
\newblock BosonSampling is far from uniform.
\newblock {\em {arXiv:1309.7460 \ensuremath{[}quant-ph\ensuremath{]}}} (2013).

\bibitem{Aaronson2013a}
S.~Aaronson and A.~Arkhipov.
\newblock The computational complexity of linear optics.
\newblock {\em {Theory of Computing}} {\bf 4}, 143--252, (2013).

\bibitem{Aaronsonprivcomm}
S.~Aaronson and A.~Arkhipov.
\newblock Private communication, (2013).

\bibitem{Aharonov2008a}
D.~Aharonov, M.~Ben-Or, and E.~Eban.
\newblock Interactive proofs for quantum computations.
\newblock {\em {arXiv:0810.5375 \ensuremath{[}quant-ph\ensuremath{]}}} (2008).

\bibitem{Aharonov2008b}
D.~Aharonov and M.~Ben-Or.
\newblock Fault-Tolerant Quantum Computation with Constant Error Rate.
\newblock {\em {SIAM Journal on Computing}} {\bf 38}, 1207--1282, (2008).

\bibitem{Aharonov2007}
D.~Aharonov, W.~van Dam, J.~Kempe, et~al.
\newblock Adiabatic Quantum Computation is Equivalent to Standard Quantum
  Computation.
\newblock {\em {SIAM Journal on Computing}} {\bf 37}, 166--194, (2007).

\bibitem{Anders2010}
J.~Anders, D.~K.~L. Oi, E.~Kashefi, D.~E. Browne, and E.~Andersson.
\newblock Ancilla-driven universal quantum computation.
\newblock {\em {Physical Review A}} {\bf 82}, 020301(R), (2010).

\bibitem{Arkhipov2012}
A.~Arkhipov and G.~Kuperberg.
\newblock The bosonic birthday paradox.
\newblock {\em {Geometry \& Topology Monographs}} {\bf 18}, 1--7, (2012).

\bibitem{livroBallentine}
L.~E. Ballentine.
\newblock {\em Quantum Mechanics: A Modern Development}.
\newblock {World Scientific Publishing Co.}, (1998).

\bibitem{Bardek1994}
V.~Bardek, M.~Dore\u{s}i\'{c}, and S.~Meljanac.
\newblock Anyons as quon particles.
\newblock {\em {Physical Review D}} {\bf 49}, 3059--3062, (1994).

\bibitem{Barreiro2005}
J.~T. Barreiro, N.~K. Langford, N.~A. Peters, and P.~G. Kwiat.
\newblock Generation of hyperentangled photon pairs.
\newblock {\em {Physical Review Letters}} {\bf 95}, 260501, (2005).

\bibitem{Beals2013}
R.~Beals, S.~Brierley, O.~Gray, et~al.
\newblock Efficient distributed quantum computing.
\newblock {\em {Proceedings of the Royal Society A: Mathematical, Physical and
  Engineering Science}} {\bf 469}, 20120686, (2013).

\bibitem{Beenakker2004}
C.~Beenakker, D.~DiVincenzo, C.~Emary, and M.~Kindermann.
\newblock Charge detection enables free-electron quantum computation.
\newblock {\em {Physical Review Letters}} {\bf 93}, 020501, (2004).

\bibitem{Bernstein1993}
E.~Bernstein and U.~Vazirani.
\newblock Quantum complexity theory.
\newblock In {\em {Proceedings of the Twenty-fifth Annual ACM Symposium on
  Theory of Computing}}, STOC '93, pp. 11--20, New York, NY, USA, (1993).
  {ACM}.

\bibitem{LivroBollobas}
B.~Bollob\'{a}s.
\newblock {\em Modern Graph Theory}.
\newblock {Springer}, (1998).

\bibitem{Bouland2013}
A.~Bouland and S.~Aaronson.
\newblock Generation of Universal Linear Optics by Any Beamsplitter.
\newblock {\em {arXiv:1310.6718 \ensuremath{[}quant-ph\ensuremath{]}}} (2013).

\bibitem{Boyajian2013}
W.~L. Boyajian, V.~Murg, and B.~Kraus.
\newblock Compressed simulation of evolutions of the XY model.
\newblock {\em {Physical Review A}} {\bf 88}, 052329, (2013).

\bibitem{Bravyi2002}
S.~B. Bravyi and A.~Y. Kitaev.
\newblock Fermionic quantum computation.
\newblock {\em {Annals of Physics}} {\bf 298}, 210--226, (2002).

\bibitem{Bremner2002}
M.~J. Bremner, C.~Dawson, J.~Dodd, et~al.
\newblock Practical scheme for quantum computation with any two-qubit
  entangling gate.
\newblock {\em {Physical Review Letters}} {\bf 89}, 247902, (2002).

\bibitem{Bremner2011}
M.~J. Bremner, R.~Jozsa, and D.~J. Shepherd.
\newblock Classical simulation of commuting quantum computations implies
  collapse of the polynomial hierarchy.
\newblock {\em {Proceedings of the Royal Society of London. Series A:
  Mathematical, Physical and Engineering Sciences}} {\bf 467}, 459--472,
  (2011).

\bibitem{Broadbent2009}
A.~Broadbent, J.~Fitzsimons, and E.~Kashefi.
\newblock Universal blind quantum computation.
\newblock In {\em {50th Annual IEEE Symposium on Foundations of Computer
  Science (FOCS)}}, pp. 517--526, (2009).

\bibitem{Brod2013}
D.~J. Brod and A.~M. Childs.
\newblock The computational power of matchgates and the XY interaction on
  arbitrary graphs.
\newblock {\em {arXiv:1308.1463 \ensuremath{[}quant-ph\ensuremath{]}, to appear
  in Quant. Inf. Comp.}} (2013).

\bibitem{Brod2011}
D.~J. Brod and E.~F. Galv\~{a}o.
\newblock Extending matchgates into universal quantum computation.
\newblock {\em {Physical Review A}} {\bf 84}, 022310, (2011).

\bibitem{Brod2012}
D.~J. Brod and E.~F. Galv\~{a}o.
\newblock Geometries for universal quantum computation with matchgates.
\newblock {\em {Physical Review A}} {\bf 86}, 052307, (2012).

\bibitem{Broome2013}
M.~A. Broome, A.~Fedrizzi, S.~Rahimi-Keshari, et~al.
\newblock Photonic Boson Sampling in a Tunable Circuit.
\newblock {\em {Science}} {\bf 339}, 794--798, (2013).
\newblock PMID: 23258411.

\bibitem{Browne2005}
D.~Browne and T.~Rudolph.
\newblock Resource-efficient linear optical quantum computation.
\newblock {\em {Physical Review Letters}} {\bf 95}, 010501, (2005).

\bibitem{Burgarth2010}
D.~Burgarth, K.~Maruyama, M.~Murphy, et~al.
\newblock Scalable quantum computation via local control of only two qubits.
\newblock {\em {Physical Review A}} {\bf 81}, 040303(R), (2010).

\bibitem{Carolan2013}
J.~Carolan, J.~D.~A. Meinecke, P.~Shadbolt, et~al.
\newblock On the experimental verification of quantum complexity in linear
  optics.
\newblock {\em {arXiv:1311.2913 \ensuremath{[}quant-ph\ensuremath{]}}} (2013).

\bibitem{Childs2005}
A.~Childs, D.~Leung, and M.~Nielsen.
\newblock Unified derivations of measurement-based schemes for quantum
  computation.
\newblock {\em {Physical Review A}} {\bf 71}, 032318, (2005).

\bibitem{LivroCohen}
C.~Cohen-Tannoudji, B.~Diu, and F.~Lalo\"{e}.
\newblock {\em Quantum Mechanics}, volume~2.
\newblock {Wiley-VCH}, (1992).

\bibitem{LivroCover}
T.~M. Cover and J.~A. Thomas.
\newblock {\em Elements of Information Theory (Wiley Series in
  Telecommunications and Signal Processing)}.
\newblock {Wiley-Interscience}, (1991).

\bibitem{Crespi2013a}
A.~Crespi, R.~Osellame, R.~Ramponi, et~al.
\newblock Anderson localization of entangled photons in an integrated quantum
  walk.
\newblock {\em {Nature Photonics}} {\bf 7}, 322--328, (2013).

\bibitem{Crespi2013b}
A.~Crespi, R.~Osellame, R.~Ramponi, et~al.
\newblock Integrated multimode interferometers with arbitrary designs for
  photonic boson sampling.
\newblock {\em {Nature Photonics}} {\bf 7}, 545--549, (2013).

\bibitem{daSilva2011}
R.~D. da~Silva, E.~F. Galv\~{a}o, and E.~Kashefi.
\newblock Closed timelike curves in measurement-based quantum computation.
\newblock {\em {Physical Review A}} {\bf 83}, 012316, (2011).

\bibitem{Deutsch1985}
D.~Deutsch.
\newblock Quantum Theory, the Church-Turing Principle and the Universal Quantum
  Computer.
\newblock {\em {Proceedings of the Royal Society of London. Series A:
  Mathematical, Physical and Engineering Sciences}} {\bf 400}, 97--117, (1985).

\bibitem{Deutsch1991}
D.~Deutsch.
\newblock Quantum mechanics near closed timelike lines.
\newblock {\em {Physical Review D}} {\bf 44}, 3197--3217, (1991).

\bibitem{DiVincenzo2000a}
D.~P. DiVincenzo, D.~Bacon, J.~Kempe, G.~Burkard, and K.~B. Whaley.
\newblock Universal quantum computation with the exchange interaction.
\newblock {\em {Nature}} {\bf 408}, 339--342, (2000).

\bibitem{Dzyaloshinsky1958}
I.~Dzyaloshinsky.
\newblock A thermodynamic theory of ``weak'' ferromagnetism of
  antiferromagnetics.
\newblock {\em {Journal of Physics and Chemistry of Solids}} {\bf 4}, 241--255,
  (1958).

\bibitem{Feynman1982}
R.~P. Feynman.
\newblock Simulating physics with computers.
\newblock {\em {International Journal of Theoretical Physics}} {\bf 21},
  467--488, (1982).

\bibitem{Fredkin1982}
E.~Fredkin and T.~Toffoli.
\newblock Conservative logic.
\newblock {\em {International Journal of Theoretical Physics}} {\bf 21},
  219--253, (1982).

\bibitem{LivroGarey}
M.~R. Garey and D.~S. Johnson.
\newblock {\em Computers and Intractability: A Guide to the Theory of
  NP-Completeness}.
\newblock {Freeman}, (1979).

\bibitem{Gilchrist2007}
A.~Gilchrist, A.~J.~F. Hayes, and T.~C. Ralph.
\newblock Efficient parity-encoded optical quantum computing.
\newblock {\em {Physical Review A}} {\bf 75}, 052328, (2007).

\bibitem{Gogolin2013}
C.~Gogolin, M.~Kliesch, L.~Aolita, and J.~Eisert.
\newblock BosonSampling in the light of sample complexity.
\newblock {\em {arXiv:1306.3995 \ensuremath{[}quant-ph\ensuremath{]}}} (2013).

\bibitem{LivroGoldreich}
O.~Goldreich.
\newblock {\em Computational Complexity: A Conceptual Perspective}.
\newblock {Cambridge University Press, Cambridge}, (2008).

\bibitem{Gottesman1999a}
D.~Gottesman.
\newblock The Heisenberg representation of quantum computers.
\newblock In {\em
  {Proceedings of the XXII International Colloquium on Group Theoretical
  Methods in Physics}} (editors S.~P. Corney, R.~Delbourgo, and P.~D. Jarvis), 32--43. {Cambridge, MA, International Press},
  (1999).

\bibitem{Gottesman1999b}
D.~Gottesman and I.~L. Chuang.
\newblock Demonstrating the viability of universal quantum computation using
  teleportation and single-qubit operations.
\newblock {\em {Nature}} {\bf 402}, 390--393, (1999).

\bibitem{Gottesman2009}
D.~Gottesman.
\newblock An introduction to quantum error correction and fault-tolerant
  quantum computation.
\newblock In {\em {Quantum Information Science and Its Contributions to
  Mathematics, Proceedings of Symposia in Applied Mathematics}}, volume~68,
  pp.~13, (2009).

\bibitem{Greenberg1991}
O.~W. Greenberg.
\newblock Particles with small violations of Fermi or Bose statistics.
\newblock {\em {Physical Review D}} {\bf 43}, 4111--4120, (1991).

\bibitem{Grover1996}
L.~K. Grover.
\newblock A Fast Quantum Mechanical Algorithm for Database Search.
\newblock In {\em {Proceedings of the Twenty-eighth Annual ACM Symposium on
  Theory of Computing}}, STOC '96, pp. 212--219, New York, NY, USA, (1996).
  {ACM}.

\bibitem{Han1997}
Y.~Han, L.~Hemaspaandra, and T.~Thierauf.
\newblock Threshold computation and cryptographic security.
\newblock {\em {SIAM Journal on Computing}} {\bf 26}, 59--78, (1997).

\bibitem{Hayes2004}
A.~J.~F. Hayes, A.~Gilchrist, C.~R. Myers, and T.~C. Ralph.
\newblock Utilizing encoding in scalable linear optics quantum computing.
\newblock {\em {Journal of Optics B: Quantum and Semiclassical Optics}} {\bf
  6}, 533, (2004).

\bibitem{Herrera2014}
F.~Herrera, Y.~Cao, S.~Kais, and K.~B. Whaley.
\newblock Infrared-dressed entanglement of cold open-shell polar molecules for
  universal matchgate quantum computing.
\newblock {\em {arXiv:1402.0381 \ensuremath{[}physics,
  physics:quant-ph\ensuremath{]}}} (2014).

\bibitem{Hirata2011}
Y.~Hirata, M.~Nakanishi, S.~Yamashita, and Y.~Nakashima.
\newblock An effcient conversion of quantum circuits to a linear nearest
  neighbor architecture.
\newblock {\em {Quantum Information and Computation}} {\bf 11}, 142--166,
  (2011).

\bibitem{Hoban2013}
M.~J. Hoban, J.~J. Wallman, H.~Anwar, et~al.
\newblock On the hardness of sampling and measurement-based classical
  computation.
\newblock {\em {arXiv:1304.2667 \ensuremath{[}quant-ph\ensuremath{]}}} (2013).

\bibitem{Hoffman1972}
D.~K. Hoffman, R.~C. Raffenetti, and K.~Ruedenberg.
\newblock Generalization of Euler angles to N-dimensional orthogonal matrices.
\newblock {\em {Journal of Mathematical Physics}} {\bf 13}, 528--533, (1972).

\bibitem{Hong1987}
C.~K. Hong, Z.~Y. Ou, and L.~Mandel.
\newblock Measurement of subpicosecond time intervals between two photons by
  interference.
\newblock {\em {Physical Review Letters}} {\bf 59}, 2044--2046, (1987).

\bibitem{Imamoglu1999}
A.~Imamo\={g}lu, D.~D. Awschalom, G.~Burkard, et~al.
\newblock Quantum information processing using quantum dot spins and
  cavity-QED.
\newblock {\em {Physical Review Letters}} {\bf 83}, 4204--4207, (1999).

\bibitem{Jerrum2004}
M.~Jerrum, A.~Sinclair, and E.~Vigoda.
\newblock A Polynomial-time Approximation Algorithm for the Permanent of a
  Matrix with Nonnegative Entries.
\newblock {\em {Journal of the ACM}} {\bf 51}, 671--697, (2004).

\bibitem{Jordan1928}
P.~Jordan and E.~Wigner.
\newblock \"{U}ber das Paulische \"{A}quivalenzverbot.
\newblock {\em {Z. Phys.}} {\bf 47}, 631--651, (1928).

\bibitem{Jouguet2013}
P.~Jouguet, S.~Kunz-Jacques, A.~Leverrier, P.~Grangier, and E.~Diamanti.
\newblock Experimental demonstration of long-distance continuous-variable
  quantum key distribution.
\newblock {\em {Nature Photonics}} {\bf 7}, 378--381, (2013).

\bibitem{Jozsa2010}
R.~Jozsa, B.~Kraus, A.~Miyake, and J.~Watrous.
\newblock Matchgate and space-bounded quantum computations are equivalent.
\newblock {\em {Proceedings of the Royal Society of London. Series A:
  Mathematical, Physical and Engineering Sciences}} {\bf 466}, 809--830,
  (2010).

\bibitem{Jozsa2008b}
R.~Jozsa and A.~Miyake.
\newblock Matchgates and classical simulation of quantum circuits.
\newblock {\em {Proceedings of the Royal Society of London. Series A:
  Mathematical, Physical and Engineering Sciences}} {\bf 464}, 3089--3106,
  (2008).

\bibitem{Jozsa2003}
R.~Jozsa and N.~Linden.
\newblock On the role of entanglement in quantum-computational speed-up.
\newblock {\em {Proceedings of the Royal Society of London. Series A:
  Mathematical, Physical and Engineering Sciences}} {\bf 459}, 2011--2032,
  (2003).

\bibitem{Jozsa2014}
R.~Jozsa and M.~Van~den Nest.
\newblock Classical simulation complexity of extended Clifford circuits.
\newblock {\em {Quantum Information and Computation}} {\bf 14}, 633--648,
  (2014).

\bibitem{Kalai2009}
G.~Kalai.
\newblock Quantum Computers: Noise Propagation and Adversarial Noise Models.
\newblock {\em {arXiv:0904.3265 \ensuremath{[}quant-ph\ensuremath{]}}} (2009).

\bibitem{Kalai2011}
G.~Kalai.
\newblock How Quantum Computers Fail: Quantum Codes, Correlations in Physical
  Systems, and Noise Accumulation.
\newblock {\em {arXiv:1106.0485 \ensuremath{[}quant-ph\ensuremath{]}}} (2011).

\bibitem{Kasteleyn1961}
P.~Kasteleyn.
\newblock The statistics of dimers on a lattice.
\newblock {\em {Physica}} {\bf 27}, 1209--1225, (1961).

\bibitem{Kempe2001b}
J.~Kempe, D.~Bacon, D.~DiVincenzo, and K.~Whaley.
\newblock Encoded universality from a single physical interaction.
\newblock {\em {Quantum Information and Computation}} {\bf 1}, 33--55, (2001).

\bibitem{Kempe2001a}
J.~Kempe, D.~Bacon, D.~A. Lidar, and K.~Whaley.
\newblock Theory of decoherence-free fault-tolerant universal quantum
  computation.
\newblock {\em {Physical Review A}} {\bf 63}, 042307, (2001).

\bibitem{Kempe2002}
J.~Kempe and K.~B. Whaley.
\newblock Exact gate sequences for universal quantum computation using the XY
  interaction alone.
\newblock {\em {Physical Review A}} {\bf 65(5)}, 052330, (2002).

\bibitem{Khaneja2001}
N.~Khaneja, R.~Brockett, and S.~J. Glaser.
\newblock Time optimal control in spin systems.
\newblock {\em {Physical Review A}} {\bf 63}, 032308, (2001).

\bibitem{Kitaev1997}
A.~Y. Kitaev.
\newblock Quantum computations: algorithms and error correction.
\newblock {\em {Russian Mathematical Surveys}} {\bf 52}, 1191--1249, (1997).

\bibitem{Knill2001a}
E.~Knill.
\newblock Fermionic linear optics and matchgates.
\newblock {\em {arXiv:quant-ph/0108033}} (2001).

\bibitem{Knill2002}
E.~Knill.
\newblock Quantum gates using linear optics and postselection.
\newblock {\em {Physical Review A}} {\bf 66}, 052306, (2002).

\bibitem{Knill1998}
E.~Knill and R.~Laflamme.
\newblock Power of one bit of quantum information.
\newblock {\em {Physical Review Letters}} {\bf 81}, 5672--5675, (1998).

\bibitem{Knill2001b}
E.~Knill, R.~Laflamme, and G.~J. Milburn.
\newblock A scheme for efficient quantum computation with linear optics.
\newblock {\em {Nature}} {\bf 409}, 46--52, (2001).

\bibitem{LivroKok}
P.~Kok and B.~W. Lovett.
\newblock {\em Introduction to Optical Quantum Information Processing}.
\newblock {Cambridge University Press, Cambridge}, (2010).

\bibitem{Kok2007}
P.~Kok, W.~J. Munro, K.~Nemoto, et~al.
\newblock Linear optical quantum computing with photonic qubits.
\newblock {\em {Reviews of Modern Physics}} {\bf 79}, 135--174, (2007).

\bibitem{Kraus2011}
B.~Kraus.
\newblock Compressed quantum simulation of the Ising model.
\newblock {\em {Physical Review Letters}} {\bf 107}, 250503, (2011).

\bibitem{Kraus2001}
B.~Kraus and J.~I. Cirac.
\newblock Optimal creation of entanglement using a two-qubit gate.
\newblock {\em {Physical Review A}} {\bf 63}, 062309, (2001).

\bibitem{Kwiat1995}
P.~G. Kwiat, K.~Mattle, H.~Weinfurter, et~al.
\newblock New High-Intensity Source of Polarization-Entangled Photon Pairs.
\newblock {\em {Physical Review Letters}} {\bf 75}, 4337--4341, (1995).

\bibitem{Laing2012}
A.~Laing and J.~L. O'Brien.
\newblock Super-stable tomography of any linear optical device.
\newblock {\em {arXiv:1208.2868 \ensuremath{[}quant-ph\ensuremath{]}}} (2012).

\bibitem{Leverrier2013}
A.~Leverrier and R.~Garc\'{\i}a-Patr\'{o}n.
\newblock Does Boson Sampling need fault-tolerance?
\newblock {\em {arXiv:1309.4687 \ensuremath{[}quant-ph\ensuremath{]}}} (2013).

\bibitem{Lidar2001}
D.~A. Lidar and L.-A. Wu.
\newblock Reducing constraints on quantum computer design by encoded selective
  recoupling.
\newblock {\em {Physical Review Letters}} {\bf 88}, 017905, (2001).

\bibitem{Lloyd1996}
S.~Lloyd.
\newblock Universal Quantum Simulators.
\newblock {\em {Science}} {\bf 273}, 1073--1078, (1996).

\bibitem{Lloyd2011}
S.~Lloyd, L.~Maccone, R.~Garcia-Patron, et~al.
\newblock Closed Timelike Curves via Postselection: Theory and Experimental
  Test of Consistency.
\newblock {\em {Physical Review Letters}} {\bf 106}, 040403, (2011).

\bibitem{Lund2013}
A.~P. Lund, A.~Laing, S.~Rahimi-Keshari, et~al.
\newblock Boson Sampling from Gaussian states.
\newblock {\em {arXiv:1305.4346 \ensuremath{[}quant-ph\ensuremath{]}}} (2013).

\bibitem{Ma2012}
X.-S. Ma, T.~Herbst, T.~Scheidl, et~al.
\newblock Quantum teleportation over 143 kilometres using active feed-forward.
\newblock {\em {Nature}} {\bf 489}, 269--273, (2012).

\bibitem{Makhlin2002}
Y.~Makhlin.
\newblock Nonlocal properties of two-qubit gates and mixed states and
  optimization of quantum computations.
\newblock {\em {Quantum Information Processing}} {\bf 1}, 243--252, (2002).

\bibitem{Martinlopez2012}
E.~Mart\'{\i}n-L\'{o}pez, A.~Laing, T.~Lawson, et~al.
\newblock Experimental realization of Shor's quantum factoring algorithm using
  qubit recycling.
\newblock {\em {Nature Photonics}} {\bf 6}, 773--776, (2012).

\bibitem{Meany2012}
T.~Meany, M.~Delanty, S.~Gross, et~al.
\newblock Non-classical interference in integrated 3D multiports.
\newblock {\em {Optics Express}} {\bf 20}, 26895, (2012).

\bibitem{Mezzadri2007}
F.~Mezzadri.
\newblock How to generate random matrices from the classical compact groups.
\newblock {\em {Notices of the AMS}} {\bf 54}, 592--604, (2007).

\bibitem{Mohseni2008}
M.~Mohseni, A.~T. Rezakhani, and D.~A. Lidar.
\newblock Quantum-process tomography: Resource analysis of different
  strategies.
\newblock {\em {Physical Review A}} {\bf 77}, 032322, (2008).

\bibitem{Moriya1960}
T.~Moriya.
\newblock Anisotropic superexchange interaction and weak ferromagnetism.
\newblock {\em {Physical Review}} {\bf 120}, 91--98, (1960).

\bibitem{Motes2013}
K.~R. Motes, J.~P. Dowling, and P.~P. Rohde.
\newblock Spontaneous parametric down-conversion photon sources are scalable in
  the asymptotic limit for boson sampling.
\newblock {\em {Physical Review A}} {\bf 88}, 063822, (2013).

\bibitem{Mozyrsky2001}
D.~Mozyrsky, V.~Privman, and M.~L. Glasser.
\newblock Indirect interaction of solid-state qubits via two-dimensional
  electron gas.
\newblock {\em {Physical Review Letters}} {\bf 86}, 5112--5115, (2001).

\bibitem{Nayak2008}
C.~Nayak, S.~H. Simon, A.~Stern, M.~Freedman, and S.~Das~Sarma.
\newblock Non-Abelian anyons and topological quantum computation.
\newblock {\em {Reviews of Modern Physics}} {\bf 80}, 1083--1159, (2008).

\bibitem{LivroNielsen}
M.~A. Nielsen and I.~L. Chuang.
\newblock {\em Quantum computation and quantum information}.
\newblock {Cambridge University Press, Cambridge}, (2000).

\bibitem{Nielsen2004}
M.~A. Nielsen.
\newblock Optical quantum computation using cluster states.
\newblock {\em {Physical Review Letters}} {\bf 93}, 040503, (2004).

\bibitem{LivroPapadimitriou}
C.~H. Papadimitriou.
\newblock {\em Computational Complexity}.
\newblock {Addison-Wesley}, (1994).

\bibitem{Peruzzo2011}
A.~Peruzzo, A.~Laing, A.~Politi, T.~Rudolph, and J.~L. O'Brien.
\newblock Multimode quantum interference of photons in multiport integrated
  devices.
\newblock {\em {Nature Communications}} {\bf 2}, 224, (2011).

\bibitem{Proctor2013}
T.~J. Proctor, E.~Andersson, and V.~Kendon.
\newblock Universal quantum computation by the unitary control of ancilla
  qubits and using a fixed ancilla-register interaction.
\newblock {\em {Physical Review A}} {\bf 88}, 042330, (2013).

\bibitem{Quiroga1999}
L.~Quiroga and N.~F. Johnson.
\newblock Entangled Bell and Greenberger-Horne-Zeilinger states of excitons in
  coupled quantum dots.
\newblock {\em {Physical Review Letters}} {\bf 83}, 2270--2273, (1999).

\bibitem{Rahimi-Keshari2013}
S.~Rahimi-Keshari, M.~A. Broome, R.~Fickler, et~al.
\newblock Direct characterization of linear-optical networks.
\newblock {\em {Optics Express}} {\bf 21}, 13450--13458, (2013).

\bibitem{Ramelow2010}
S.~Ramelow, A.~Fedrizzi, A.~M. Steinberg, and A.~G. White.
\newblock Matchgate quantum computing and non-local process analysis.
\newblock {\em {New Journal of Physics}} {\bf 12}, 083027, (2010).

\bibitem{Raussendorf2012}
R.~Raussendorf and T.-C. Wei.
\newblock Quantum computation by local measurement.
\newblock {\em {Annual Review of Condensed Matter Physics}} {\bf 3}, 239--261,
  (2012).

\bibitem{Raussendorf2001}
R.~Raussendorf and H.~J. Briegel.
\newblock A one-way quantum computer.
\newblock {\em {Physical Review Letters}} {\bf 86}, 5188--5191, (2001).

\bibitem{Reck1994}
M.~Reck, A.~Zeilinger, H.~J. Bernstein, and P.~Bertani.
\newblock Experimental realization of any discrete unitary operator.
\newblock {\em {Physical Review Letters}} {\bf 73}, 58--61, (1994).

\bibitem{Rohde2012}
P.~P. Rohde and T.~C. Ralph.
\newblock Error tolerance of the boson-sampling model for linear optics quantum
  computing.
\newblock {\em {Physical Review A}} {\bf 85}, 022332, (2012).

\bibitem{Rudolph2002}
T.~Rudolph and L.~Grover.
\newblock A 2 rebit gate universal for quantum computing.
\newblock {\em {arXiv:quant-ph/0210187}} (2002).

\bibitem{LivroSaleh}
B.~E.~A. Saleh and M.~C. Teich.
\newblock {\em Fundamentals of photonics}.
\newblock {Wiley-Interscience}, Hoboken, N.J., 2nd edition, (2007).

\bibitem{Sansoni2012}
L.~Sansoni, F.~Sciarrino, G.~Vallone, et~al.
\newblock Two-particle bosonic-fermionic quantum walk via integrated photonics.
\newblock {\em {Physical Review Letters}} {\bf 108}, 010502, (2012).

\bibitem{Sansoni2010}
L.~Sansoni, F.~Sciarrino, G.~Vallone, et~al.
\newblock Polarization Entangled State Measurement on a Chip.
\newblock {\em {Physical Review Letters}} {\bf 105}, 200503, (2010).

\bibitem{Scheel2004}
S.~Scheel.
\newblock Permanents in linear optical networks.
\newblock {\em {arXiv:quant-ph/0406127}} (2004).

\bibitem{Scheel2003}
S.~Scheel, K.~Nemoto, W.~Munro, and P.~Knight.
\newblock Measurement-induced nonlinearity in linear optics.
\newblock {\em {Physical Review A}} {\bf 68}, 032310, (2003).

\bibitem{Shchesnovich2013}
V.~Shchesnovich.
\newblock Sufficient bound on the mode mismatch of single photons for
  scalability of the boson sampling computer.
\newblock {\em {arXiv:1311.6796 \ensuremath{[}quant-ph\ensuremath{]}}} (2013).

\bibitem{Shen2014}
C.~Shen, Z.~Zhang, and L.-M. Duan.
\newblock Scalable Implementation of Boson Sampling with Trapped Ions.
\newblock {\em {Physical Review Letters}} {\bf 112}, 050504, (2014).

\bibitem{Shi2003}
Y.~Shi.
\newblock Both Toffoli and controlled-NOT need little help to do universal
  quantum computation.
\newblock {\em {Quantum Information and Computation}} {\bf 3}, 84--92, (2003).

\bibitem{Shor1996}
P.~Shor.
\newblock Fault-tolerant quantum computation.
\newblock In {\em {Proceedings of the IEEE Symposium on Foundations of Computer
  Science (FOCS)}}, pp.~56, (1996).

\bibitem{Shor1997}
P.~W. Shor.
\newblock Polynomial-time algorithms for prime factorization and discrete
  logarithms on a quantum computer.
\newblock {\em {SIAM Journal on Computing}} {\bf 26}, 1484--1509, (1997).

\bibitem{Shor1995}
P.~W. Shor.
\newblock Scheme for reducing decoherence in quantum computer memory.
\newblock {\em {Physical Review A}} {\bf 52}, R2493--R2496, (1995).

\bibitem{Simon1997}
D.~R. Simon.
\newblock On the Power of Quantum Computation.
\newblock {\em {SIAM Journal on Computing}} {\bf 26}, 1474--1483, (1997).

\bibitem{Simon1990}
R.~Simon and N.~Mukunda.
\newblock Minimal three-component SU(2) gadget for polarization optics.
\newblock {\em {Physics Letters A}} {\bf 143}, 165--169, (1990).

\bibitem{Spagnolo2013a}
N.~Spagnolo, C.~Vitelli, L.~Aparo, et~al.
\newblock Three-photon bosonic coalescence in an integrated tritter.
\newblock {\em {Nature Communications}} {\bf 4}, 1606, (2013).

\bibitem{Spagnolo2013c}
N.~Spagnolo, C.~Vitelli, M.~Bentivegna, et~al.
\newblock Efficient experimental validation of photonic BosonSampling against
  the uniform distribution.
\newblock {\em {arXiv:1311.1622 \ensuremath{[}physics,
  physics:quant-ph\ensuremath{]}}} (2013).

\bibitem{Spagnolo2013b}
N.~Spagnolo, C.~Vitelli, L.~Sansoni, et~al.
\newblock General rules for bosonic bunching in multimode interferometers.
\newblock {\em {Physical Review Letters}} {\bf 111}, 130503, (2013).

\bibitem{Spring2013}
J.~B. Spring, B.~J. Metcalf, P.~C. Humphreys, et~al.
\newblock Boson Sampling on a Photonic Chip.
\newblock {\em {Science}} {\bf 339}, 798--801, (2013).
\newblock PMID: 23258407.

\bibitem{Steane1996}
A.~Steane.
\newblock Multiple-Particle Interference and Quantum Error Correction.
\newblock {\em {Proceedings of the Royal Society of London. Series A:
  Mathematical, Physical and Engineering Sciences}} {\bf 452}, 2551--2577,
  (1996).

\bibitem{Svetlichny2011}
G.~Svetlichny.
\newblock Time Travel: Deutsch vs. Teleportation.
\newblock {\em {International Journal of Theoretical Physics}} {\bf 50},
  3903--3914, (2011).

\bibitem{Szameit2007}
A.~Szameit, F.~Dreisow, T.~Pertsch, S.~Nolte, and A.~T\"{u}nnermann.
\newblock Control of directional evanescent coupling in fs laser written
  waveguides.
\newblock {\em {Optics Express}} {\bf 15}, 1579--1587, (2007).

\bibitem{Temperley1961}
H.~Temperley and M.~E. Fisher.
\newblock Dimer problems in statistical mechanics\textemdash{}an exact result.
\newblock {\em {Philosophical Magazine}} {\bf 6}, 1061--1063, (1961).

\bibitem{Terhal2002}
B.~M. Terhal and D.~P. DiVincenzo.
\newblock Classical simulation of noninteracting-fermion quantum circuits.
\newblock {\em {Physical Review A}} {\bf 65}, 032325, (2002).

\bibitem{Terhal2004}
B.~M. Terhal and D.~P. DiVincenzo.
\newblock Adaptive quantum computation, constant depth quantum circuits and
  Arthur-Merlin games.
\newblock {\em {Quantum Information and Computation}} {\bf 4}, 134--145,
  (2004).

\bibitem{Tichy2011}
M.~C. Tichy.
\newblock {\em Entanglement and interference of identical particles}.
\newblock Phd thesis, {University of Freiburg}, (2011).

\bibitem{Tichy2010}
M.~C. Tichy, M.~Tiersch, F.~de~Melo, F.~Mintert, and A.~Buchleitner.
\newblock Zero-Transmission Law for Multiport Beam Splitters.
\newblock {\em {Physical Review Letters}} {\bf 104}, 220405, (2010).

\bibitem{Tillmann2013}
M.~Tillmann, B.~Dakic, R.~Heilmann, et~al.
\newblock Experimental boson sampling.
\newblock {\em {Nature Photonics}} {\bf 7}, 540--544, (2013).

\bibitem{Toda1989}
S.~Toda.
\newblock On the computational power of PP and \ensuremath{\oplus}P.
\newblock {\em {Proceedings of the IEEE Symposium on Foundations of Computer
  Science (FOCS)}} pp. 514--519, (1989).

\bibitem{Troyansky1996}
L.~Troyansky and N.~Tishby.
\newblock Permanent uncertainty: On the quantum evaluation of the determinant
  and the permanent of a matrix.
\newblock In {\em {Proceedings of PhysComp}}, (1996).

\bibitem{Vala2002}
J.~Vala and K.~B. Whaley.
\newblock Encoded universality for generalized anisotropic exchange
  Hamiltonians.
\newblock {\em {Physical Review A}} {\bf 66}, 022304, (2002).

\bibitem{Valiant1979}
L.~G. Valiant.
\newblock The complexity of computing the permanent.
\newblock {\em {Theoretical Computer Science}} {\bf 8}, 189--201, (1979).

\bibitem{Valiant2002}
L.~G. Valiant.
\newblock Quantum circuits that can be simulated classically in polynomial
  time.
\newblock {\em {SIAM Journal on Computing}} {\bf 31}, 1229--1254, (2002).

\bibitem{Valle2009}
G.~D. Valle, R.~Osellame, and P.~Laporta.
\newblock Micromachining of photonic devices by femtosecond laser pulses.
\newblock {\em {Journal of Optics A: Pure and Applied Optics}} {\bf 11},
  013001, (2009).

\bibitem{Dam2009}
W.~van Dam and M.~Howard.
\newblock Tight noise thresholds for quantum computation with perfect
  stabilizer operations.
\newblock {\em {Physical Review Letters}} {\bf 103}, 170504, (2009).

\bibitem{vanDam2001}
W.~van Dam, M.~Mosca, and U.~Vazirani.
\newblock How powerful is adiabatic quantum computation?
\newblock In {\em {42nd Annual IEEE Symposium on Foundations of Computer
  Science (FOCS)}}, pp. 279--287, (2001).

\bibitem{Nest2011a}
M.~van~den Nest.
\newblock Quantum matchgate computations and linear threshold gates.
\newblock {\em {Proceedings of the Royal Society of London. Series A:
  Mathematical, Physical and Engineering Sciences}} {\bf 467}, 821--840,
  (2011).

\bibitem{Nest2011b}
M.~van~den Nest.
\newblock Simulating quantum computers with probabilistic methods.
\newblock {\em {Quantum Information and Computation}} {\bf 11}, 784--812,
  (2011).

\bibitem{Nest2013}
M.~Van~den Nest.
\newblock Universal Quantum Computation with Little Entanglement.
\newblock {\em {Physical Review Letters}} {\bf 110}, 060504, (2013).

\bibitem{Wootters1989}
W.~K. Wootters and B.~D. Fields.
\newblock Optimal state-determination by mutually unbiased measurements.
\newblock {\em {Annals of Physics}} {\bf 191}, 363--381, (1989).

\bibitem{Yariv1973}
A.~Yariv.
\newblock Coupled-mode theory for guided-wave optics.
\newblock {\em {IEEE Journal of Quantum Electronics}} {\bf 9}, 919--933,
  (1973).

\bibitem{Yin2012}
J.~Yin, J.-G. Ren, H.~Lu, et~al.
\newblock Quantum teleportation and entanglement distribution over
  100-kilometre free-space channels.
\newblock {\em {Nature}} {\bf 488}, 185--188, (2012).

\bibitem{Yoran2003}
N.~Yoran and B.~Reznik.
\newblock Deterministic linear optics quantum computation with single photon
  qubits.
\newblock {\em {Physical Review Letters}} {\bf 91}, 037903, (2003).

\bibitem{Zanardi2000}
P.~Zanardi, C.~Zalka, and L.~Faoro.
\newblock Entangling power of quantum evolutions.
\newblock {\em {Physical Review A}} {\bf 62}, 030301, (2000).

\bibitem{Zhang2003}
J.~Zhang, J.~Vala, S.~Sastry, and K.~Whaley.
\newblock Geometric theory of nonlocal two-qubit operations.
\newblock {\em {Physical Review A}} {\bf 67}, 042313, (2003).

\bibitem{Zheng2000}
S.-B. Zheng and G.-C. Guo.
\newblock Efficient scheme for two-atom entanglement and quantum information
  processing in cavity QED.
\newblock {\em {Physical Review Letters}} {\bf 85}, 2392--2395, (2000).

\end{thebibliography}

\end{document}